\documentclass[UKenglish,cleveref,numberwithinsect,thm-restate,letter]{no-lipics-v2019}

\bibliographystyle{habbrv}

%Footnotes
\interfootnotelinepenalty=10000

%Packages
\RequirePackage{doi}
\usepackage{hyperref}
\usepackage{blkarray}
\usepackage{bm}
\usepackage{rotating}
\usepackage{afterpage}
\usepackage{booktabs}
\allowdisplaybreaks

\theoremstyle{definition}
\newtheorem{defn}[theorem]{Definition}

%%%%%%%%%%%
%%Marcros%%
%%%%%%%%%%%

%Sets and functions

\newcommand{\Kgroup}{\ensuremath{\mathcal{K}}}

%Complexity classes
\newcommand{\NP}{\ensuremath{\mathsf{NP}}}
\newcommand{\ccP}{\ensuremath{\mathsf{P}}}
\newcommand{\ccFPT}{\ensuremath{\mathsf{FPT}}}
\newcommand{\W}[1]{\ensuremath{\mathsf{W[#1]}}}

%Graph operations

%Homs, Embs,...

\newcommand{\homs}[2]{\mbox{\ensuremath{\mathsf{Hom}(#1 \to #2)}}}

\newcommand{\embs}[2]{\mbox{\ensuremath{\mathsf{Emb}(#1 \to #2)}}}

\newcommand{\subs}[2]{\mbox{\ensuremath{\mathsf{Sub}(#1 \to #2)}}}

\newcommand{\indsubs}[2]{\mbox{\ensuremath{\mathsf{IndSub}(#1 \to #2)}}}
\newcommand{\edgesubs}[2]{\mbox{\ensuremath{\mathsf{EdgeSub}(#1 \to #2)}}}
\newcommand{\auts}[1]{\ensuremath{\mathsf{Aut}(#1)}}
%Colorful variants

\newcommand{\cphoms}[2]{\ensuremath{\mathsf{cp}\text{-}\mathsf{Hom}}(#1 \to #2)}
\newcommand{\necphoms}[3]{\ensuremath{\mathsf{Hom}_{\mathsf{cp}}}(#1 \to_{#2} #3)}
\newcommand{\ecphoms}[2]{\ensuremath{\mathsf{Hom}_{\mathsf{cp}}}(#1 \to #2)}

\newcommand{\necpembs}[3]{\ensuremath{\mathsf{Emb}_{\mathsf{cp}}}(#1 \to_{#2} #3)}

\newcommand{\ncoledgesubs}[3]{\mbox{\ensuremath{\mathsf{ColEdgeSub}(#1 \to_{#2} #3)}}}

%Problems
\newcommand{\clique}{\text{\sc{Clique}}}
\newcommand{\homsprob}{\text{\sc{Hom}}}
\newcommand{\cphomsprob}{\text{\sc{cp-Hom}}}

\newcommand{\edgesubsprob}{\text{\sc{EdgeSub}}}

\newcommand{\coledgesubsprob}{\text{\sc{ColEdgeSub}}}
\newcommand{\subsprob}{\text{\sc{Sub}}}
\newcommand{\embsprob}{\text{\sc{Emb}}}

%paper-specific

\newcommand{\torus}{\ensuremath{\pmb{\circledcirc}}}

\newcommand{\tu}{\ensuremath{\vartriangle}}
\newcommand{\td}{\ensuremath{\triangledown}}
\newcommand{\tl}{\ensuremath{\vartriangleleft}}
\newcommand{\tr}{\ensuremath{\vartriangleright}}

\newcommand{\ztwol}{\ensuremath{\mathbb{Z}^2_\ell}}

\newcommand{\coltkg}{\ensuremath{\mathsf{col}\text{-}\widehat{T}^k_G}}
\newcommand{\coltlg}{\ensuremath{\mathsf{col}\text{-}\widehat{T}^{2\ell^2}_{\star}}}

\newcommand{\cfunction}{coefficient~function}

\def\pr#1{\ensuremath{\mathsf{Pr}\!\bm{\left[}\,#1\,\bm{\right]}}}

\def\fracture#1#2{\ensuremath{#1\raisebox{.2ex}{\rotatebox[origin=c]{-15}{$\sharp$}}#2}}

%Reductions
\newcommand{\fptred}{\ensuremath{\leq^{\mathrm{fpt}}_{\mathrm{T}}}}

\usetikzlibrary{conference}
\usetikzlibrary{cd}

\title{Detecting and Counting Small Subgraphs, and Evaluating a Parameterized Tutte
Polynomial: Lower Bounds via Toroidal Grids and Cayley Graph Expanders}
\titlerunning{Detecting and Counting Small Subgraphs, and a Parameterized Tutte Polynomial}

\author{Marc Roth}{Merton College, University of Oxford, United Kingdom}{marc.roth@merton.ox.ac.uk}{https://orcid.org/0000-0003-3159-9418}{}

\author{Johannes Schmitt}{Mathematical Institute, University of Bonn, Germany}{schmitt@math.uni-bonn.de}{https://orcid.org/0000-0001-5774-3508}{}

\author{Philip Wellnitz}{Max Planck Institute for Informatics, Saarland Informatics Campus
    (SIC), Saarbrücken,
Germany}{wellnitz@mpi-inf.mpg.de}{https://orcid.org/0000-0002-6482-8478}{}

\authorrunning{M. Roth, J. Schmitt, and P. Wellnitz}

\Copyright{Marc Roth, Johannes Schmitt, and Philip Wellnitz}%LIPIcs license is "CC-BY";  http://creativecommons.org/licenses/by/3.0/

\ccsdesc[500]{Theory of computation~Parameterized complexity and exact algorithms}
\ccsdesc[300]{Theory of computation~Problems, reductions and completeness}
%\ccsdesc[300]{Mathematics of computing~Combinatorics}
%\ccsdesc[300]{Mathematics of computing~Graph theory}

\keywords{Counting complexity, parameterized complexity, Tutte polynomial, graph homomorphisms, Tutte polynomial}

\acknowledgements{
    We thank Alina Vdovina and Norbert Peyerimhoff for explaining
    %taking the time to elaborate on
    their construction of $2$-group Cayley graph expanders~\cite{PeyerimhoffV11}.
    We thank Johannes Lengler and Holger Dell for helpful discussions on early
    drafts of this work.\\
    The second author was supported by the SNF early postdoc mobility grant 184245 and
    thanks the Max Planck Institute for Mathematics in Bonn for its hospitality.}

\nolinenumbers %uncomment to disable line numbering
%\hideLIPIcs

%\renewcommand{\ast}{\star}

%\category{Track A: Algorithms, Complexity and Games}

\begin{document}

\setlength\marginparwidth{67pt}
\setlength\marginparsep{2pt}

\maketitle

\begin{abstract}
    Given a graph property $\Phi$, we consider the problem $\edgesubsprob(\Phi)$,
    where the input is a pair of a graph $G$ and a positive integer $k$,
    and the task is to decide whether $G$ contains a $k$-edge subgraph that satisfies~$\Phi$.
    Specifically, we study the parameterized complexity
    of $\edgesubsprob(\Phi)$ and of its counting problem $\#\edgesubsprob(\Phi)$ with
    respect to both approximate and exact counting.
    We obtain a complete picture for minor-closed properties $\Phi$:
    the decision problem $\edgesubsprob(\Phi)$ always admits an FPT
    (``fixed-parameter tractable'') algorithm  and the counting
    problem $\#\edgesubsprob(\Phi)$ always admits an FPTRAS (``fixed-parameter tractable
    randomized approximation scheme''). For exact counting, we present an exhaustive and
    explicit criterion on the property~$\Phi$ which, if satisfied, yields fixed-parameter
    tractability and otherwise $\#\W1$-hardness.
    Additionally, most of our hardness results come with an almost tight conditional lower
    bound under the so-called Exponential Time Hypothesis,
    ruling out algorithms for $\#\edgesubsprob(\Phi)$ that run in time $f(k)\cdot
    |G|^{o(k/\log k)}$ for any computable function $f$.

    As a main technical result, we gain a complete understanding
    of the coefficients of toroidal grids and selected Cayley
    graph expanders in the homomorphism basis of $\#\edgesubsprob(\Phi)$.
    This allows us to establish hardness of exact counting using the
    Complexity Monotonicity framework due to Curticapean, Dell and Marx (STOC'17).
    This approach does not only apply to $\#\edgesubsprob(\Phi)$ but also to the more
    general problem of computing weighted linear combinations of subgraph counts.
    As a special case of such a linear combination, we introduce a
    parameterized variant of the Tutte Polynomial $T^k_G$ of a graph $G$, to which many
    known combinatorial interpretations of values of the (classical) Tutte Polynomial can
    be extended.
    As an example, $T^k_G(2,1)$ corresponds to the number of $k$-forests in the graph $G$.
    Our techniques allow us to completely understand the
    parameterized complexity of computing the evaluation of $T^k_G$
    at every pair of rational coordinates $(x,y)$.
    In particular, our results give a new proof for the
    $\#\W1$-hardness of the problem of counting $k$-forests in a graph.
\end{abstract}

\section{Introduction}
%% The Motivation

Be it searching for cliques in social networks or understanding protein-protein
interaction networks, many interesting real-life problems boil down to finding (or
counting) small patterns in large graphs.
Hence, to no surprise,
finding (and counting) small patterns in large graphs are among the most well-studied
computational problems in the fields of database
theory~\cite{ChandraM77,GroheSS01,DurandM15,ChenM16,DellRW19icalp}, molecular biology and
bioinformatics~\cite{GrochovK07,Nogaetat08,Rahatetal09,Schilleretal15}, and network
science~\cite{SchreiberS05,Miloetal02,Miloetal04}.
In fact, already in the 1970s, the relevance of \emph{finding} patterns became apparent
in the context of finding cliques, finding Hamiltonian paths, or finding specific
subgraphs in general~\cite{CorneilG70,Cook71,Ullmann76,ChandraM77}.
However, with the advent of motif counting
for the frequency analysis of small structures in complex networks~\cite{Miloetal02,Miloetal04},
it became evident that detecting the existence of a pattern graph
is not enough; we also need to \emph{count} all of the occurrences of the pattern.

In this work, our patterns are (not necessarily induced) edge subgraphs that satisfy a
certain graph property: for instance, given a graph, we want to count all occurrences of edge
subgraphs that are are planar or connected.

From a classical point of view, often the problem of \emph{finding} patterns is already \NP-hard:
prime examples include the aforementioned problems of finding (maximum) cliques or
Hamiltonian paths.
However, for the task of network motif counting, the patterns are (almost) always much smaller
than the network itself (see \cite{Miloetal02,Miloetal04,Nogaetat08}).
This motivates a \emph{parameterized} view:
can we obtain fast algorithms to compute the number of occurrences of ``small'' patterns?
If we cannot, can we at least obtain fast (randomized) algorithms to compute an
\emph{estimate} of this number? And if we cannot even do this,
can we at least obtain fast algorithms to \emph{detect} an occurrence?
In this work, we completely answer all of the above questions for patterns that are specified by
\emph{minor-closed} graph properties (such as planarity) or selected other graph properties
(such as connectivity).

As it turns out, the techniques we develop for answering the above questions are quite
powerful: they easily generalize to a parameterized version of the Tutte polynomial.
Specifically, our techniques allow us to completely understand at which rational
points we can  evaluate said parameterized Tutte polynomial in reasonable time,
and at which rational points this is not feasible.
This dichotomy turns out to be similar, but not equal, to the complexity landscape of the
classical Tutte polynomial due to Jaeger et al.~\cite{JaegerVW90}.

\paragraph*{Parameterized Counting and Hardness}
%% The Setting

By now, \emph{counting complexity theory} is a well established subfield of theoretical
computer science.
Already in the~1970s, Valiant started a formal study of counting problems when
investigating the complexity of the permanent~\cite{Valiant79,Valiant79b}: counting the number of perfect
matchings in a graph is {\sf \#P}-complete, and hence harder than any problem in the
polynomial-time hierarchy {\sf PH} by Toda's Theorem~\cite{Toda91}.
In contrast, \emph{detecting} a perfect matching in a graph is much easier
and can be done in polynomial time~\cite{Edmonds65}.
Hence, counting problems can be much harder than their decision problem counterparts.

As an attempt to overcome the hardness of counting problems in general, the focus shifted
to a multivariate or \emph{parameterized} view on these problems.
Consider for example the following problem:
given a query $\varphi$ of size $k$ and a database $B$ of size $n$,
we want to count the number of answers to $\varphi$ in $B$.
If we make the very reasonable assumption that $k$ is much smaller than~$n$,
then we may consider an algorithm running in time $O(2^k \cdot n)$ as
\emph{tractable}. Note that in particular, such an algorithm may even outperform an
algorithm running in time $O(n^2)$. Also consider~\cite{Grohe02} for a more detailed
and formal discussion.

%\clearpage
\noindent Formally, given a problem $P$ and a \emph{parameterization} $\kappa$ that maps each
instance $I$ of $P$ to a parameter~$\kappa(I)$, we say that $P$
is \emph{fixed-parameter tractable} (FPT) with respect to $\kappa$, if there is an algorithm that solves each
instance $I$ of size $n$ in time $f(\kappa(I)) \cdot n^{O(1)}$, for some computable
function $f$.
This notion was introduced by Downey and Fellows in the early
1990s~\cite{DowneyF95,DowneyF95b} and has itself spawned a rich body of
literature (see \cite{FlumG06,DowneyF13,CyganFKLMPPS15}).
In the context of the problems of detecting and counting small patterns in large networks,
we {parameterize} by the size of the pattern:
given a pattern of size $k$ and a network of size~$n$, we aim for
algorithms that run in time $f(k)\cdot n^{O(1)}$, for some computable function $f$.
However, for some patterns, even this goal is too ambitious:
it is widely believed that even finding a clique of size~$k$ is not fixed-parameter
tractable; in particular, an FPT algorithm for finding a clique of size $k$ would
also imply a breakthrough result for the Satisfiability Problem and
thereby refute the widely believed Exponential Time Hypothesis~\cite{Chenetal06,Chenetal09}.
If a problem $P$ is at least as hard as finding a clique (or counting all cliques)
of size $k$, we say that $P$ is $\W1$-hard (or $\#\W1$-hard, respectively).

For such a $(\#)\W1$-hard problem, the hope is to (significantly) improve upon the naive
brute-force algorithm, which runs in time $n^{O(k)}$ for the problems considered in this work.
However, in view of the aforementioned reduction from the Satisfiability Problem to the
problem of finding cliques of size $k$~\cite{Chenetal05,Chenetal06}, we can see that for
finding cliques this, too, would require a breakthrough for the Satisfiability Problem,
which, again, is believed to be unlikely~\cite{ImpagliazzoP01}.
In our paper, via suitable reductions from the problem of finding cliques, we establish
that exact algorithms significantly faster than the brute-force algorithms are unlikely for the problems
we study.

\paragraph*{Parameterized Detection and Counting of Edge Subgraphs}

\emph{Vertex-induced} subgraphs as patterns are notoriously hard to
detect or to count. The long line of research on this problem~\cite{KhotR02,ChenTW08,JerrumM15,JerrumM15density,Meeks16,JerrumM17,CurticapeanDM17,RothS18,DorflerRSW19,RothSW20}
showed that this holds even if the patterns are significantly smaller than the
host graphs, as witnessed by $\W{1}$ and $\#\W{1}$-hardness results and almost tight
conditional lower bounds. In case of exact counting, it is in fact an open question
whether there are non-trivial instances of induced subgraph counting that admit efficient
algorithms; recent work~\cite{RothSW20} supports the conjecture that no such instances exist.

In search for fast algorithms, in this work, we hence consider a related, but different
version of network-motif counting:
for a computable graph property~$\Phi$, in the problem $\#\edgesubsprob(\Phi)$ we are
given a graph~$G$ and a positive integer~$k$, and the task is to compute the number of
(not necessarily induced) edge subgraphs\footnote{Recall that an edge subgraph $G'$ of a
graph $G$ may have fewer edges than the subgraph of $G$ that is induced by the vertices
of~$G'$.} with $k$ edges in $G$ that satisfy $\Phi$.
Similarly, we write $\edgesubsprob(\Phi)$ for the corresponding decision problem.
Then, in contrast to the case of counting vertex-induced subgraphs, for
$(\#)\edgesubsprob(\Phi)$, we identify non-trivial properties $\Phi$
for which $(\#)\edgesubsprob(\Phi)$ is fixed-parameter tractable; we discuss this in more
detail later. First, however, let us take a detour to elaborate more on what is known
already for $(\#)\edgesubsprob(\Phi)$.

If the property $\Phi$ is satisfied by at most a single graph for each value of the
parameter~$k$, the decision problem $\edgesubsprob(\Phi)$ becomes the
subgraph isomorphism problem.
Hence, naturally there is a vast body of known techniques and results for special
properties $\Phi$:
for FPT algorithms, think of the Colour-Coding technique by Alon, Yuster and Zwick~\cite{AlonYZ95},
the  ``Divide and Colour''-technique~\cite{Chenetal09}, narrow sieving~\cite{Bjorklundetal17},
representative sets~\cite{FominLS14}, or ``extensor-coding''~\cite{BrandDH18} to name but
a few.
For hardness results, apart from the aforementioned example of detecting a clique,
Lin quite recently established that detecting a $k$-biclique is also
$\W1$-hard~\cite{Lin18}.
However, a complete understanding of the parameterized decision version of the
subgraph isomorphism is one of the major open problems of parameterized complexity
theory~\cite[Chapter~33.1]{DowneyF13}, that is still to be solved.

%\clearpage
%Nevertheless, we show that the problem can be resolved in the restricted setting of minor-closed properties.

%\noindent
In the setting of parameterized \emph{counting}, the situation is much better understood:
Flum and Grohe~\cite{FlumG04} proved $\#\edgesubsprob(\Phi)$ to be $\#\W{1}$-hard when~$\Phi$
is the property of being a cycle, or the property of being a path.
Curticapean~\cite{Curticapean13} established the same result for the property of being a
matching. In~\cite{CurticapeanM14}, Curticapean and Marx established a complete
classification in case $\Phi$ does not hold on two different graphs with the same number
of edges, which is essentially the parameterized subgraph counting problem. In particular,
they identified a bound on the matching number as the tractability criterion.
In a later work, together with Dell~\cite{CurticapeanDM17}, they presented what is now
called the framework of Complexity Monotonicity, which can be considered to be one of the most
powerful tools in the field of parameterized counting problems.
Note that this does not classify the decision version, as $\#\W1$-hardness for a counting
problem does not imply $\W1$-hardness for the corresponding decision problem.

In contrast to the parameterized subgraph detection/counting problems, the problem
$(\#)\edgesubsprob(\Phi)$ allows to search for more general patterns.
For example, while the (parameterized) complexity of counting all subgraphs of a graph $G$
isomorphic to a \emph{fixed} connected graph $H$ with $k$ edges is fully
understood~\cite{CurticapeanM14}, the case of counting all connected $k$-edge subgraphs of
a graph $G$ remained open so far. As one of our main results, we completely understand the
problem $\#\edgesubsprob(\Phi)$ for the property $\Phi = \text{connectivity}$. In what follows, we present our results, followed by an exposition of the most important techniques.

\subsection*{Main Results}
In a first part, we present our results on $(\#)\edgesubsprob(\Phi)$; we continue with a
definition and our results for a parameterized Tutte polynomial in a second part.

Our main results on $(\#)\edgesubsprob(\Phi)$ can be categorized in roughly three
categories: (1) exact algorithms and hardness results for the counting problem;
(2) approximation algorithms for the counting problem; and (3) algorithms for the decision
problem. For minor-closed properties $\Phi$, we obtain exhaustive
results for all three categories, for other (classes of) properties that we study, we obtain partial criteria. For an overview over our results
on $\#\edgesubsprob(\Phi)$, also consider~\cref{table:results}; we go into more detail in
the following.

\begin{table}[p]
    \begin{tabularx}{\textwidth}{Xccc}
        \toprule
        Property $\Phi$ & Exact Counting & Apx. Counting & Decision\\\toprule
        \begin{tabular}{l}\!\!\!\!Minor-closed$^\dag$\\{\small (e.g. $\Phi =$ planarity)}
        \end{tabular}&  \begin{tabular}{c}$\#\W{1}$-hard\\not in
            $f(k)\cdot |G|^{o(k/\log k)}$\end{tabular} & FPTRAS & FPT\\
        & {\small (\cref{thm:minclose-hard})} &
        {\small (\cref{thm:minclose-hard})}
        & {\small (\cref{thm:minclose-hard})}\\[.8ex]
        $\Phi =$ connectivity &  \begin{tabular}{c}$\#\W{1}$-hard\\not in
            $f(k)\cdot |G|^{o(k/\log k)}$\end{tabular} & FPTRAS & FPT\\
        & {\small (\cref{cor:intro_further})} &
        {\small (follows from~\cite{DellLM20})}
        & {\small (easy)}\\[.8ex]
        $\Phi =$ Hamiltonicity &  \begin{tabular}{c}$\#\W{1}$-hard\\not in
            $f(k)\cdot |G|^{o(k/\log k)}$\end{tabular} &
        unknown & unknown\\
                & {\small (\cref{cor:intro_further})}\\[.8ex]
        $\Phi =$ Eulerianity &  \begin{tabular}{c}$\#\W{1}$-hard\\not in
            $f(k)\cdot |G|^{o(k/\log k)}$\end{tabular} &
        unknown & unknown\\
                & {\small (\cref{cor:intro_further})}\\[.8ex]
        $\Phi =$ claw-freeness &  \begin{tabular}{c}$\#\W{1}$-hard\\not in
            $f(k)\cdot |G|^{o(k/\log k)}$\end{tabular} &
        unknown & unknown\\
        & {\small (\cref{cor:intro_further})}\\
        \midrule
        Bounded matching & FPT & FPTRAS & FPT\\
        number & {\small (\cref{thm:smallmatchFPT})} &
        {\small (by exact counting)}
        & {\small (by exact counting)}\\[.8ex]
        Bounded treewidth & mixed$^\ddagger$ & FPTRAS & FPT\\
        & {~} &
        {\small (\cref{thm:approx_main_intro})}
        & {\small (follows from~\cite{PlehnV90})}\\[.8ex]
        Matching crit. {\bf and} & mixed$^\ast$ & FPTRAS & FPT\\
                                 star crit. &&
        {\small (\cref{thm:approx_main_intro})}
                                            & {\small
                                            (\cref{thm:dec_classification_intro})}\\[.8ex]
        Matching crit. {\bf or} & mixed$^\ddagger$ & mixed$^\S$ & FPT\\
                                star crit. &&
        {~}
        & {\small (\cref{thm:dec_classification_intro})}\\
        \midrule
        $\Phi = \Psi$ from Definition~\ref{Def:propertyPsi} & $\#\W{1}$-hard & no FPTRAS & FPT\\
        ~ & {\small (Theorem~\ref{lem:coef_special_case_intro})} &
        {\small (Theorem~\ref{thm:decapproxsep})}
                                    & {\small (Main Theorem~\ref{thm:dec_classification_intro})}\\[.8ex]
        $\Phi = \clique$ & $\#\W{1}$-hard & no FPTRAS & $\W{1}$-hard\\
        & {\small (\cite{FlumG04})} &
        {\small (implicitly by \cite{DowneyF95b})}
        & {\small (\cite{DowneyF95b})}\\\bottomrule\\
    \end{tabularx}
    \caption{\label{table:results}An overview of the complexity of $(\#)\edgesubsprob(\Phi)$ for
        different classes and examples of properties~$\Phi$, with respect to exact
        counting, approximate counting and decision. See further below for the definition of the matching
        and star criterion.
        All run-time lower bounds rely on the
        Exponential Time Hypothesis, and the absence of FPTRASes relies on the assumption
        that $\W{1}$ does not coincide with $\ccFPT$ under randomised parameterized
        reductions. We write ``mixed'' whenever the respective classes contain both
        tractable properties and hard properties.
        The known results about the clique problem are added for completeness; note that
        $\W1$-hardness of decision immediately rules out an FPTRAS for approximate
        counting under the previous assumptions.\\[.8ex]
        {$^\dag$\scriptsize We assume that the minor-closed property $\Phi$ does not
            have bounded matching number, is not trivially true and that each forbidden minor has a
        vertex of degree at least $3$.}\\
        {$^\ddagger$\scriptsize $\Phi=\mathsf{true}$ and $\Phi=\mathsf{false}$ always
            yield fixed-parameter tractability of exact counting. $\Phi(H)=1 \Leftrightarrow
            H$ is a matching yields $\#\W1$-hardness of exact counting~\cite{Curticapean13}; note
        that the latter property is of bounded treewidth and satisfies the matching criterion.}\\
        {$^\ast$\scriptsize $\Phi=\mathsf{true}$ always yields fixed-parameter
            tractability of exact counting. $\Phi(H)=1 \Leftrightarrow$ ($H$ is a matching or
            a star) yields $\#\W1$-hardness by~\cref{lem:coef_special_case_intro}; note that the
        latter property satisfies the matching criterion and the star criterion.}\\
        {$^\S$\scriptsize $\Phi=\mathsf{true}$ always yields an FPTRAS for approximate
            counting. $\Phi=\Psi$ (from Definition~\ref{Def:propertyPsi}) does not allow for an FPTRAS while
    satisfying the matching criterion.}}
\end{table}

\paragraph*{Complete Classification for Minor-Closed Properties}
Let us start with the case where the graph property $\Phi$ is closed under taking minors,
that is, if $\Phi$ holds for a graph, then $\Phi$ still holds after removing vertices or
edges, or after contracting edges.
For minor-closed properties $\Phi$, we obtain a complete picture of the complexity of
$\#\edgesubsprob(\Phi)$ and $\edgesubsprob(\Phi)$. In what follows, we say that a property $\Phi$ has \emph{bounded matching number} if there is a constant bound
on the size of a largest matching in graphs satisfying $\Phi$.

\begin{restatable}{mtheorem}{minclosehard}\label{thm:minclose-hard}
    Let $\Phi$ denote a minor-closed graph property.
    \begin{enumerate}
        \item {\sf Exact Counting}: If $\Phi$ is either trivially true or of
            bounded matching number,
            then the (exact) counting version $\#\edgesubsprob(\Phi)$ is fixed-parameter
            tractable. Otherwise, the problem $\#\edgesubsprob(\Phi)$ is $\#\W{1}$-hard.
            If, additionally, each forbidden minor of $\Phi$ has a vertex of degree at
            least $3$, and the Exponential Time Hypothesis holds, then
            $\#\edgesubsprob(\Phi)$ cannot be solved in time $f(k)\cdot |G|^{o(k/\log k)}$,
            for any function $f$.
        \item {\sf Approximate Counting}: The problem $\#\edgesubsprob(\Phi)$ always
            has a fixed-parameter tractable randomised approximation scheme
            (FPTRAS).\ifx\minclosehardlend\undefined\footnote{The
                formal definition is given in Section~\ref{sec:prelims_param}; intuitively an FPTRAS is the
            parameterized equivalent of a fully polynomial-time randomised approximation
            scheme (FPRAS).}\fi
        \item {\sf Decision:} The problem $\edgesubsprob(\Phi)$ is always fixed-parameter
            tractable.
        \ifx\minclosehardlend\undefined\lipicsEnd\fi
    \end{enumerate}
\end{restatable}
\def\minclosehardlend{1}\pagebreak

\noindent
Consider for example the property $\Phi$ of being planar: planar graphs do not have
bounded matching number. Additionally, by Kuratowski's Theorem, the forbidden minors of
planar graphs are the $3$-biclique $K_{3,3}$ and the $5$-clique $K_5$. Since both
$K_{3,3}$ and $K_5$ contain a vertex of degree at least $3$, we conclude that computing
the number of planar subgraphs with $k$ edges in a graph $G$ is $\#\W{1}$-hard and,
assuming ETH, cannot be solved in time $f(k)\cdot |G|^{o(k/\log k)}$ for any function $f$.
In sharp contrast, approximating the number of planar subgraphs with $k$ edges in a graph,
as well as deciding whether there is such a planar subgraph can be done efficiently.
We obtain \cref{thm:minclose-hard} as a combination of our (more general) results for each
of the three settings that we study; we discuss these results next.

\paragraph*{Results for Exact Counting}
Let us return to the case of arbitrary graph properties $\Phi$. Without any further
assumptions on $\Phi$, the naive algorithm for $\#\edgesubsprob(\Phi)$ on the input
$(k,G)$ proceeds by enumerating the $k$-edge subsets of~$G$ and counting the number of
cases where the corresponding subgraph satisfies $\Phi$. This leads to a running time of the
form $f(k) \cdot |V(G)|^{2k+O(1)}$.
However, at least the linear constant in the exponent can be substantially improved using
the currently fastest known algorithm for counting subgraphs with $k$ edges due to
Curticapean, Dell and Marx~\cite{CurticapeanDM17}.
We will show that it easily extends to the case of $\#\edgesubsprob(\Phi)$:

\begin{restatable}{proposition}{bestalgo}\label{cor:best_algo}
    Let $\Phi$ denote a computable graph property. Then $\#\edgesubsprob(\Phi)$ can be solved
    in time $f(k)\cdot |V(G)|^{0.174k +o(k)}$,
    where $f$ is some computable function.
    \ifx\corintrofurlend\undefined\lipicsEnd\fi
\end{restatable}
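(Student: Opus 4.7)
The plan is to reduce $\#\edgesubsprob(\Phi)$ to counting (not-necessarily-induced) subgraph copies of a fixed pattern, and then invoke the currently best known algorithm of Curticapean, Dell and Marx~\cite{CurticapeanDM17} for the latter problem.

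First, I would set up a decomposition of the target count. Every $k$-edge subgraph of $G$ satisfying $\Phi$ has at most $2k$ vertices (by definition, it has no isolated vertices) and is isomorphic to exactly one member of the set $\mathcal{H}^\Phi_k$ of isomorphism types of graphs with exactly $k$ edges, without isolated vertices, and satisfying $\Phi$. This yields
\[
\#\edgesubsprob(\Phi)(G,k) \;=\; \sum_{H \in \mathcal{H}^\Phi_k} \#\subs{H}{G},
\]
where $\#\subs{H}{G}$ denotes the number of subgraphs of $G$ isomorphic to $H$.

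Second, I would bound $|\mathcal{H}^\Phi_k|$ and show that $\mathcal{H}^\Phi_k$ itself can be constructed in time depending only on $k$: there are at most $2^{\binom{2k}{2}}$ labelled graphs on $2k$ vertices, each can be tested for having $k$ edges and no isolated vertices in polynomial time, and each satisfaction check $\Phi(H)$ terminates in time bounded by some computable function of $k$, since $\Phi$ is computable and $|V(H)| \leq 2k$. Brute-force isomorphism testing among these candidates removes duplicates, still within a time budget depending only on $k$.

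Third, for each $H \in \mathcal{H}^\Phi_k$ I would compute $\#\subs{H}{G}$ via the algorithm of~\cite{CurticapeanDM17}, whose analysis for $k$-edge patterns yields a running time of the form $g(k) \cdot |V(G)|^{0.174 k + o(k)}$, the exponent being governed by the maximum treewidth of the graphs appearing in the spasm of $H$ (which is bounded in terms of $|E(H)|=k$ alone). Summing over all patterns produces the overall bound $f(k) \cdot |V(G)|^{0.174 k + o(k)}$ for a suitable computable function $f$, as claimed. Once the decomposition is in place, the per-pattern complexity is independent of $\Phi$, so $\Phi$ merely selects which terms of the sum are nonzero; for this reason I expect no real obstacle, the only point requiring care being the book-keeping that converts the CDM bound on a single $k$-edge pattern into the claimed overall exponent.
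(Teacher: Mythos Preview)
Your proposal is correct and follows essentially the same approach as the paper: decompose $\#\edgesubs{\Phi,k}{G}$ as $\sum_{H\in \Phi_k}\#\subs{H}{G}$, observe that $\Phi_k$ has size bounded by a computable function of $k$ (since each $H$ has at most $2k$ vertices), and then invoke the Curticapean--Dell--Marx algorithm on each term. The paper's proof is slightly terser, but the argument is identical.
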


On the other hand, it was shown by Curticapean and Marx~\cite{CurticapeanM14} that for the
property $\Phi$ of being a matching, the problem $\#\edgesubsprob(\Phi)$ cannot be solved
in time $f(k)\cdot |V(G)|^{o(k/\log k)}$ for any function $f$, unless ETH fails. In other
words, asymptotically and up to a factor of $1/\log k$, the exponent of $|V(G)|$ in the
running time of $\#\edgesubsprob(\Phi)$ cannot be improved without posing any restriction
on $\Phi$.

The goal is hence to identify properties $\Phi$ for which the algorithm in
\cref{cor:best_algo} \emph{can} be (significantly) improved. In the best possible outcome, we
hope to identify the properties for which the exponent of $|V(G)|$ does not depend on $k$;
those cases are precisely the fixed-parameter tractable ones. An easy consequence of known
results for subgraph counting (see for instance \cite{CurticapeanM14}) establishes the following
tractability criterion; we will include the proof only for the sake of completeness:

\begin{restatable}{proposition}{smallmatchFPT}\label{thm:smallmatchFPT}
    Let $\Phi$ denote a computable graph property satisfying that there is $M > 0$ such that
    for all $k$ either the graphs with $k$ edges satisfying $\Phi$ or the graphs with $k$
    edges satisfying $\neg \Phi$ have matching number bounded by $M$. Then
    $\#\edgesubsprob(\Phi)$ is fixed-parameter tractable.
    \ifx\corintrofurlend\undefined\lipicsEnd\fi
\end{restatable}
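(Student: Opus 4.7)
The plan is to reduce $\#\edgesubsprob(\Phi)$ to a bounded number of individual subgraph counts $\subs{H}{G}$, and then to argue that each of these counts is tractable when $H$ has bounded matching number.

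First, I would decompose the problem along isomorphism types. Let $\mathcal{H}_k$ denote the (finite) set of isomorphism classes of graphs with exactly $k$ edges and no isolated vertices. Grouping the $k$-edge subsets of $G$ by the isomorphism type of the resulting subgraph yields
\[
\#\edgesubsprob(\Phi)(G,k) = \sum_{H \in \mathcal{H}_k,\, \Phi(H)} \subs{H}{G}.
\]
Since $\mathcal{H}_k$ and the restriction of $\Phi$ to $\mathcal{H}_k$ are computable from $k$ alone, the outer sum contributes only an $f(k)$ factor to the running time, and the problem reduces to evaluating each $\subs{H}{G}$.

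Second, I would observe that any graph $H$ with $\nu(H)\le M$ has a vertex cover $C \subseteq V(H)$ of size at most $2M$, since $\tau(H)\le 2\nu(H)$. Hence $\subs{H}{G}$ can be computed by enumerating all injections $C \to V(G)$, of which there are at most $|V(G)|^{2M}$, and for each such partial embedding extending it to the independent set $V(H)\setminus C$ by a polynomial-time procedure that processes those remaining vertices one by one (each has all its neighbours in $C$, so its valid images form a readily computable subset of $V(G)$). A standard inclusion-exclusion or symmetry correction then converts embedding counts into subgraph counts, giving a total running time of the form $f(H)\cdot |V(G)|^{2M+O(1)}$. Since $M$ is a \emph{constant} independent of $k$, this is fixed-parameter tractable in $k$.

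Finally, to prove the proposition, for each input $k$ I would determine (by inspecting the finitely many isomorphism types in $\mathcal{H}_k$) which side of the dichotomy is bounded. If the $\Phi$-satisfying graphs in $\mathcal{H}_k$ have matching number at most $M$, the decomposition above together with the second step yields the claim directly. If instead the $\neg\Phi$-satisfying graphs are bounded, I would use the identity
\[
\#\edgesubsprob(\Phi)(G,k) = \binom{|E(G)|}{k} - \#\edgesubsprob(\neg\Phi)(G,k)
\]
and apply the first case to $\neg\Phi$. The only nontrivial ingredient is the FPT algorithm for $\subs{H}{G}$ when $H$ has bounded vertex cover, and that step rests on a well-known enumeration argument rather than on any new idea, so I expect no real obstacle beyond cleanly assembling these ingredients.
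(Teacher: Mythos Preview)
Your proposal is correct and follows essentially the same route as the paper: decompose $\#\edgesubs{\Phi,k}{G}$ as a sum of $\#\subs{H}{G}$ over $H\in\Phi_k$, invoke tractability of subgraph counting for patterns of bounded matching number, and handle the $\neg\Phi$ case via the complement identity $\binom{\#E(G)}{k}-\#\edgesubs{\neg\Phi,k}{G}$. The only difference is that the paper cites~\cite{CurticapeanM14} as a black box for the bounded-matching-number subgraph counting step, whereas you sketch the standard vertex-cover-based algorithm yourself; your sketch is fine, though the phrase ``processes those remaining vertices one by one'' glosses over the need to group independent-set vertices by their neighbourhood in $C$ (of which there are at most $2^{2M}$ types) before doing the injectivity accounting.
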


\noindent Examples of properties satisfying the tractability criterion of \cref{thm:smallmatchFPT}
include, among others, the property of being a star, or the complement thereof.
We conjecture that all remaining properties induce $\#\W{1}$-hardness and rule out any
algorithm running in time $f(k)\cdot |G|^{o(k/\log k)}$ for any function $f$, unless ETH
fails.\footnote{Note that it does not matter whether we choose~$|G|$ or~$|V(G)|$ for the
size of the large graph since we care about the asymptotic behaviour of the exponent.} For
the case of minor-closed graph properties, we have seen above that this conjecture holds.

Further, the techniques we develop to prove hardness of $\#\edgesubsprob(\Phi)$ for
minor-closed properties $\Phi$ in \cref{thm:minclose-hard} can also be applied directly to
show hardness for other specific properties $\Phi$. Below, we record several natural
examples of such properties which are covered by our methods.
%\clearpage

\begin{restatable}{mtheorem}{corintrofur}\label{cor:intro_further}
    Consider the following graph properties.
    \begin{itemize}
        \item $\Phi_C(H)=1$ if and only if $H$ is connected.
        \item $\Phi_H(H)=1$ if and only if $H$ is Hamiltonian.
        \item $\Phi_E(H)=1$ if and only if $H$ is Eulerian.
        \item $\Phi_{CF}(H)=1$ if and only if $H$ is claw-free.
    \end{itemize}
    For $\Phi \in \{ \Phi_C, \Phi_H, \Phi_E, \Phi_{CF} \}$,
    the problem $\#\edgesubsprob(\Phi)$ is $\#\W{1}$-hard.
    Further, unless ETH fails, the problem $\#\edgesubsprob(\Phi)$ cannot be solved in time
    $f(k)\cdot |G|^{o(k/\log k)}$ for any function $f$.
\ifx\corintrofurlend\undefined\lipicsEnd\fi
\end{restatable}

\paragraph*{Results for Approximate Counting and Decision}

Our results on exact counting indicate that we have to relax the problem if we aim for
tractability results for a larger variety of properties. One approach is to only ask for
an \emph{approximate} count of the number of $k$-edge subgraphs satisfying $\Phi$.
Tractability of approximation in the parameterized setting is given by the notion of a
\emph{fixed-parameter tractable randomized approximation scheme} (FPTRAS) as introduced by
Arvind and Raman~\cite{ArvindR02}. While we give the formal definition in
Section~\ref{sec:prelims_param}, it suffices for now to think of an FPTRAS as a fixed-parameter
tractable algorithm that can compute an arbitrarily good approximation of the answer with
high probability. Readers familiar with the classical notions of approximate counting
algorithms should think of an FPTRAS as an FPRAS in which we additionally allow a factor
of $f(k)$ in the running time, for any computable function $f$.

For the statement of our results, we say that a property $\Phi$ satisfies the
\emph{matching criterion} if it is true for all but finitely many matchings, and we say
that it satisfies the \emph{star criterion} if it is true for all but finitely many stars.
Furthermore, we say that $\Phi$ has bounded treewidth if there is a constant upper bound
on the treewidth of graphs that satisfy $\Phi$.

\begin{restatable}{mtheorem}{apxmain}\label{thm:approx_main_intro}
    Let $\Phi$ denote a computable graph property. If $\Phi$ satisfies the matching criterion
    \textbf{and} the star criterion, or if $\Phi$ has bounded treewidth, then
    $\#\edgesubsprob(\Phi)$ admits an FPTRAS.
    \lipicsEnd
\end{restatable}
For example, the property of being planar satisfies both, the star and the matching
criterion. Moreover, we can show that every minor-closed graph property $\Phi$ has either bounded
treewidth or satisfies matching and star criterion, and thus always admits an FPTRAS.
Additionally, if not only exact but also approximate counting is intractable, we ask
whether we can at least obtain an efficient algorithm for the decision version
$\edgesubsprob(\Phi)$. Again, we obtain a tractability criterion; observe the subtle
difference in the tractability criterion compared to \cref{thm:approx_main_intro}.

\begin{restatable}{mtheorem}{decclass}\label{thm:dec_classification_intro}\label{thm:dec_classification}
    Let $\Phi$ denote a computable graph property. If $\Phi$ satisfies the matching criterion
    \textbf{or} the star criterion, or if $\Phi$ has bounded treewidth, then
    $\edgesubsprob(\Phi)$ is fixed-parameter tractable.
    \lipicsEnd
\end{restatable}

As an easy corollary, we can conclude that for monotone, that is, subgraph-closed
properties~$\Phi$, the problem $\edgesubsprob(\Phi)$ is always fixed-parameter
tractable.\footnote{Every graph property has either bounded treewidth or unbounded
matching number. In the latter case, if the property is additionally monotone, it must
satisfy the matching criterion.}

For many previously studied problems, the complexity analysis of approximate
counting and decision were related: often an algorithm solving one setting can be used to
solve the other setting~\cite{Meeks16,DellLM20}.
%Indeed, in case of vertex-induced subgraphs it
%is open whether the decision version and the approximate counting version have the same
%complexity~\cite{Meeks16}.
However, in our results \cref{thm:approx_main_intro,thm:dec_classification_intro} we see
an asymmetry between the two settings: it suffices for $\Phi$
to satisfy only one of the star and the matching criterion to induce tractability of the
decision version, but we require satisfaction of both for approximate counting. One might
expect that this reflects a shortcoming of our proof methods (and that in fact it suffices
to check one of the criteria to have tractability of approximate counting). Interestingly,
this is \emph{not} the case:
%\clearpage

%by constructing an amplification gadget for a certain
%property $\Psi$ that only satisfies one of the two criteria, we can answer the variant of
%the question~\cite{Meeks16} in the negative for edge-induced subgraphs:
\begin{restatable}{proposition}{sepapxdec}\label{thm:sep_approx_dec_intro}
    There is a computable graph property $\Psi$ (see Definition~\ref{Def:propertyPsi}) that
    satisfies the matching criterion, but not the star criterion, such that
    $\edgesubsprob(\Psi)$ is fixed-parameter tractable, but $\#\edgesubsprob(\Psi)$ does
    not admit an FPTRAS unless $\W{1}$ coincides with $\ccFPT$ (the class of all fixed-parameter
    tractable decision problems) under randomised parameterized reductions.
    \ifx\corintrofurlend\undefined\lipicsEnd\fi
\end{restatable}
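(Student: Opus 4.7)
The proposition has three parts. The combinatorial claim---that $\Psi$ satisfies the matching criterion but not the star criterion---should follow directly from inspecting Definition~\ref{Def:propertyPsi}: by design, $\Psi$ is true on every sufficiently large matching while failing on an infinite family of stars $K_{1,n}$. Fixed-parameter tractability of $\edgesubsprob(\Psi)$ is then immediate from Main Theorem~\ref{thm:dec_classification_intro}, since $\Psi$ satisfies the matching criterion.

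The main work is the inapproximability statement, which I plan to prove by reducing the $\W{1}$-hard decision problem $\clique$ to approximately computing $\#\edgesubsprob(\Psi)$. The starting point is the canonical homomorphism-basis decomposition used throughout the paper,
\[
\#\edgesubsprob(\Psi)(G, k) \;=\; \sum_{H} \lambda_H(k) \cdot \#\hom(H \to G),
\]
where $H$ ranges over (representatives of) isomorphism classes of graphs with at most $k$ edges and the coefficients $\lambda_H(k)$ depend only on $\Psi$ and $k$. Analysing the explicit form of $\Psi$, I expect to identify, for infinitely many values of $k$, a ``witness'' pattern $H^\star = H^\star(k)$ of large clique number (or large treewidth) with $\lambda_{H^\star}(k) \neq 0$. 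Given an instance $(G,k)$ of $\clique$, the plan is to engineer a host graph $G'$ and a parameter $k'$ (both of size polynomial in $|G|$ and $k$) such that
\[
\#\edgesubsprob(\Psi)(G', k') \;=\; A \;+\; B \cdot \#\sub(H^\star \to G),
\]
where $A, B$ are explicitly computable and $B$ is so large compared with $A$ that any sufficiently accurate FPTRAS distinguishes $\#\sub(H^\star \to G) = 0$ from $\#\sub(H^\star \to G) \geq 1$. Such an FPTRAS would then yield a randomised FPT decision procedure for $\clique$, collapsing $\W{1}$ to $\ccFPT$ under randomised parameterized reductions.

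The main obstacle is the transition from exact to approximate hardness. The Complexity Monotonicity framework, which powers the exact-counting lower bounds of the paper, relies on polynomial interpolation across polynomially many exact queries and is therefore fundamentally incompatible with preserving a multiplicative error. To circumvent this, I plan to exploit precisely the failure of the star criterion built into $\Psi$: a careful choice of $G'$---tailored to the structural signature of $\Psi$---should force the ``easy'' contributions arising from star- and matching-like patterns to collapse to an explicitly computable baseline $A$, isolating the ``hard'' term $B \cdot \#\sub(H^\star \to G)$. The delicate step is a quantitative analysis of the coefficients $\lambda_H(k)$ alongside the gadget construction, verifying that $B$ dominates $A$ by more than any multiplicative factor that an FPTRAS would tolerate; the role of the definition of $\Psi$ is precisely to make this quantitative separation possible, which is why no analogous obstruction appears in the fixed-parameter tractability proof for decision.
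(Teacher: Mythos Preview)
Your high-level strategy---a gap-amplification argument in which an FPTRAS distinguishes the case $\#\sub(H^\star \to G)=0$ from $\geq 1$---matches the paper exactly. But your proposed implementation via the homomorphism basis is a wrong turn, and you yourself flag the obstacle (``fundamentally incompatible with preserving a multiplicative error'') without actually resolving it. The paper bypasses the homomorphism basis entirely here.

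Concretely, the paper reduces not from $\clique$ but from $\embsprob(\boxplus)$, the problem of deciding whether a $k$-grid embeds into $G$, which is $\W{1}$-hard by Chen--Grohe--Lin. The property $\Psi$ is tailored so that for parameter $3k^2$ the \emph{only} graphs satisfying it are the matching $M_{3k^2}$ and the disjoint union $\boxplus_k + K_{1,k^2+2k}$. The gadget is simply $G' = G + K_{1,n^6}$, a disjoint union of $G$ with a huge star. Now count directly: any $3k^2$-matching in $G'$ uses at most one edge of the star, so there are at most $2n^{6k^2+4}$ of them; but if $G$ contains a grid $\boxplus_k$, then pairing that grid with any $(k^2+2k)$-substar of $K_{1,n^6}$ gives at least $\binom{n^6}{k^2+2k} \geq (3k^2)^{-3k^2} n^{6k^2+12k}$ copies of $\boxplus_k + K_{1,k^2+2k}$. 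The polynomial gap in the exponent of $n$ (roughly $12k$ versus $4$) lets an FPTRAS with $\varepsilon = 1/2$ decide the grid problem.

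So there is no coefficient analysis, no $\lambda_H(k)$, no interpolation to worry about: the ``baseline'' $A$ is just the matching count and the ``hard'' term $B$ is the grid-plus-star count, both bounded by elementary combinatorics. Your instinct that the failure of the star criterion is the mechanism is correct---it is exactly what makes $K_{1,n^6}$ contribute almost nothing to the matching count while contributing enormously to the grid-plus-star count---but the route to exploit it is direct subgraph counting, not the homomorphism basis.
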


\paragraph*{Dichotomy for Evaluating a parameterized Tutte Polynomial}

As a final part of the presentation of our main results, let us discuss our results
on a parameterized Tutte polynomial.

The classical Tutte polynomial (as well as its specializations like the chromatic, flow or
reliability polynomial) have received widespread attention, both from a combinatorial as
well as a complexity theoretic
perspective~\cite{JaegerVW90,AlonFW95,Vertigan98,BjorklundHKK08,GoldbergJ14,Delletal14,BrandDR16,BjorklundK21}.
The classical Tutte polynomial is of special interest from a complexity theoretic
perspective, as the Tutte polynomial encodes a plethora of properties of a graph:
prominent examples include the chromatic number, the number of acyclic orientations,
and the number of spanning trees; we refer the reader to the work of
Jaeger et al.\ \cite{JaegerVW90} for a comprehensive overview.
Formally, the Tutte polynomial is a bivariate graph polynomial defined as follows
(see \cite{JaegerVW90}):
\[T_G(x,y) := \sum_{A \subseteq E(G)} (x-1)^{k(A)-k(E(G))} \cdot (y-1)^{k(A)+\#A-\#V(G)} \,, \]
where $k(S)$ is the number of connected components of the graph $(V(G),S)$.
In the aforementioned work, Jaeger et al.\ \cite{JaegerVW90} also classified the complexity
of evaluating the Tutte Polynomial in every pair of (complex) coordinates, that is, for
every pair $(a,b)$, the complexity of computing the function $G \mapsto T_G(a,b)$ is fully
understood.

In this work, we consider the following parameterized version of the Tutte Polynomial by
restricting to edge-subsets~$A$ in~$G$ of size~$k$:
\[T^k_G(x,y) := \sum_{A \in \binom{E(G)}{k}} (x-1)^{k(A)-k(E(G))} \cdot (y-1)^{k(A)+k-\#V(G)} \,. \]
We observe that the parameterized Tutte polynomial can be seen as a weighted version of
counting small $k$-edge subgraph patterns by assigning to each $k$-edge subset $A$ of $G$
the weight \[(x-1)^{k(A)-k(E(G))} \cdot (y-1)^{k(A)+k-\#V(G)}\,.\]
Moreover, we point out that $T^k_G(x,y)$ is related to a generalization of the bases generating function for matroids~\cite{AnariLGV19}.
By establishing a so-called deletion-contraction recurrence, we show that $T^k_G(x,y)$ has
similar expressive power as its classical counterpart $T_G(x,y)$:

\begin{restatable}{mtheorem}{tutteprops}\label{mthm:tutte_props}
For any graph $G$ and positive integer $k$, the following graph invariants are encoded
    in $T^k_G(x,y)$:
\begin{enumerate}
    \item $T^k_G(2,1)$ is the number of $k$-forests in $G$. In other words
        $T^k_G(2,1)$ corresponds to the problem $\#\edgesubsprob(\Phi)$ for the property $\Phi$ of
        being a forest.
    \item For each positive integer $c$, the values of $T^k_G(1-c,0)$
        determine\footnote{They are equal up to trivial modifications; in particular,
        their complexities coincide.} the numbers of pairs $(A,\sigma)$, where $A$ is a
        $k$-edge subset of~$G$, and $\sigma$ is a proper $c$-colouring of $(V(G),A)$.
    \item From $T^k_G(2,0)$ we can compute the numbers of pairs $(A,\vec{\eta})$, where
        $A$ is a $k$-edge subset of $G$, and $\vec{\eta}$ is an acyclic orientation of
        $(V(G),A)$.
    \item $T^k_G(2,0)$ also determines the number of $k$-edge subsets $A$ of $G$, such
        that $(V(G),A)$ has even Betti Number (we give a formal definition of the Betti
        number in Section~\ref{sec:indpoints}).
    \item $T^k_G(0,2)$ determines the number of $k$-edge subsets $A$ of $G$, such that
        $(V(G),A)$ has an even number of components.
\end{enumerate} %\ifx\corintrofurlend\undefined\lipicsEnd\fi
\end{restatable}
$~$\\
\noindent Note that, while $\#\edgesubsprob(\Phi)$ only allows us to count the number of subgraphs
with $k$ edges that satisfy~$\Phi$, the parameterized Tutte polynomial allows us to count
more intricate objects, such as tuples of an edge-subset and a colouring (or acyclic
orientation) on the induced graph.
From a complexity theoretic point of view, we obtain a similar result
as~\cite{JaegerVW90}, albeit only for rational coordinates:
for each \emph{fixed} pair $(x,y)$ of coordinates, we consider the problem receiving as
input a graph~$G$ and a positive integer $k$ and computing $T^k_G(x,y)$. Following the
paradigm of this work, we choose $k$ as a parameter, that is, we consider inputs in which~$k$ is significantly smaller than~$|G|$.

\begin{restatable}{mtheorem}{tuttemain}\label{thm:tutte_main_param_intro}\label{thm:tutte_main_param}
	Let $(x,y)$ denote a pair of rational numbers. The problem of computing $T^k_G(x,y)$ is
	solvable in polynomial-time if $x=y=1$ or $(x-1)(y-1)=1$, fixed-parameter tractable, but $\#\ccP$-hard, if $x=1$ and $y\neq 1$, and $\#\W{1}$-hard otherwise.
    \ifx\corintrofurlend\undefined\lipicsEnd\fi
\end{restatable}

    %Let $(x,y)$ denote a pair of rational numbers. The problem of computing $T^k_G(x,y)$ is
    %solvable in polynomial-time if $x=y=1$ or $(x-1)(y-1)=1$, fixed-parameter tractable,
    %but $\#\ccP$-hard, if $x=1$ and $y\neq 1$, and $\#\W{1}$-hard otherwise.
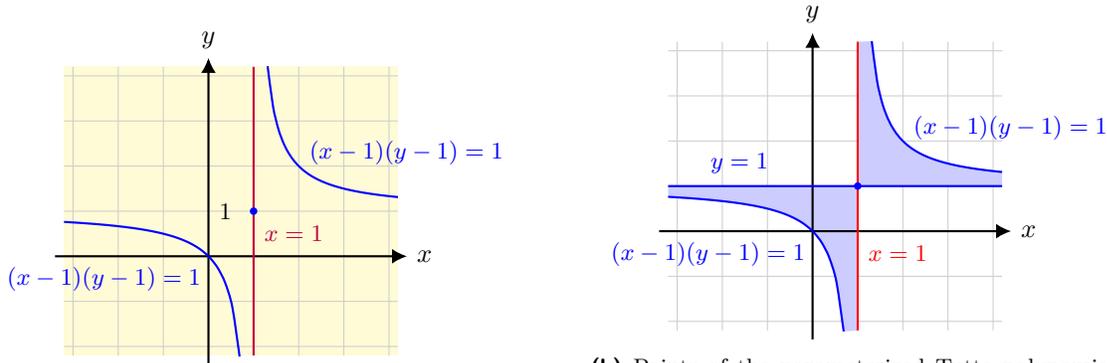
\begin{figure}[t]
    \centering
    \begin{subfigure}[b]{.48\textwidth}
        \begin{tikzpicture}[scale=0.6]
            \fill[yellow!20] (-3.2,-2.2) rectangle (4.2,4.2) {};
            \foreach\i in {-3,...,4}{
                \draw[gray!40,very thin] (\i,-2.2) -- (\i,4.2);
            }
            \foreach\i in {-2,...,4}{
                \draw[gray!40,very thin] (-3.2, \i) -- (4.2, \i);
            }

            \node[label=right:{\small 1}] at (0,1) {};
            \node[red,label=right:{\color{purple} \small $x = 1$}] at (1,.5) {};
            \node[red,label=right:{\color{blue} \small $(x - 1)(y - 1) = 1$}] at (2,2.3) {};
            \node[red,label=right:{\color{blue} \small $(x - 1)(y - 1) = 1$}] at
                (-4.7,-.5) {};

            \draw[thick,-{Latex[length=2mm, width=2mm]}] (-3.4, 0) -- (4.4, 0) node[right] {$x$};
            \draw[thick,-{Latex[length=2mm, width=2mm]}] (0, -2.4) -- (0, 4.4) node[above] {$y$};
            \draw[thick,domain=-2.2:4.2, smooth, variable=\x, purple] plot ({1}, {\x});
            \draw[thick,domain=-3.2:.69, smooth, variable=\x, blue]  plot ({\x}, {\x/(\x-1) });
            \draw[thick,domain=1.311:4.2, smooth, variable=\x, blue]  plot ({\x}, {\x/(\x-1) });

            \node[fill=blue,circle,inner sep=0.1em] at (1,1) {};
        \end{tikzpicture}
        \caption{Points of the parameterized Tutte polynomial that can be computed in
            polynomial-time (blue) and that are fixed-parameter tractable, but $\#\ccP$-hard
        (red). Exact computation at any other point (yellow) is $\#\W{1}$-hard.}
    \end{subfigure}
    ~
    \begin{subfigure}[b]{.48\textwidth}
        \begin{tikzpicture}[scale=0.6]
            \foreach\i in {-3,...,4}{
                \draw[gray!40,very thin] (\i,-2.2) -- (\i,4.2);
            }
            \foreach\i in {-2,...,4}{
                \draw[gray!40,very thin] (-3.2, \i) -- (4.2, \i);
            }

            \node[label=right:{\color{blue}\small $y = 1$}] at (-2.5,1.5) {};
            \node[red,label=right:{\color{red} \small $x = 1$}] at (1,-.5) {};
            \node[red,label=right:{\color{blue} \small $(x - 1)(y - 1) = 1$}] at (2,2.3) {};
            \node[red,label=right:{\color{blue} \small $(x - 1)(y - 1) = 1$}] at (-4.7,-.5) {};

            \fill[,domain=-3.2:.69, smooth, variable=\x, blue!20]
                (-3.2,1) --
                plot ({\x}, {\x/(\x-1) }) --
                (1,-2.2) --
                (1,1) --
                cycle;
            \fill[thick,domain=1.311:4.2, smooth, variable=\x, blue!20]
                (1,4.2) --
                plot ({\x}, {\x/(\x-1) })--
                (4.2,1) --
                (1,1) --
                cycle;

            \draw[thick,-{Latex[length=2mm, width=2mm]}] (-3.4, 0) -- (4.4, 0) node[right] {$x$};
            \draw[thick,-{Latex[length=2mm, width=2mm]}] (0, -2.4) -- (0, 4.4) node[above] {$y$};
            \draw[thick,domain=-2.2:4.2, smooth, variable=\x, red] plot ({1}, {\x});
            \draw[thick,domain=-3.2:4.2, smooth, variable=\x, blue] plot ({\x}, {1});

            \draw[thick,domain=-3.2:.69, smooth, variable=\x, blue]  plot ({\x}, {\x/(\x-1) });
            \draw[thick,domain=1.311:4.2, smooth, variable=\x, blue]  plot ({\x}, {\x/(\x-1) });

            \node[fill=blue,circle,inner sep=0.1em] at (1,1) {};
        \end{tikzpicture}
        \caption{Points of the parameterized Tutte polynomial that allow for an FPRAS (blue) and for an FPTRAS (red); all
        points on the boundary of the blue area are included. The complexity of approximation is
    open for all points outside of the coloured region.}
        \label{fig:tutteeasy_intro_apx}
    \end{subfigure}
    \caption{Points of the parameterized Tutte polynomial that can be computed in FPT-time
    (a)~exactly or (b)~approximately. We emphasize that a full classification for exact
counting is established, while the complexity of approximation remains open outside of the
coloured  area.\label{fig:tutteeasy_intro}}
\end{figure}
$~$\\
\noindent The class $\#\ccP$ is the counting version of $\NP$~\cite{Valiant79,Valiant79b} and, in particular, the $\#\ccP$-hard cases in the above classification are not polynomial-time tractable unless the polynomial-time hierarchy collapses to $\ccP$~\cite{Toda91}.
Consider \cref{fig:tutteeasy_intro} for a depiction of the classification.
Note that \cref{thm:tutte_main_param_intro} yields $\#\W{1}$-hardness for each of the
aforementioned problems from \cref{mthm:tutte_props}.
Note further, that the tractable cases are similar, but not equal to the classical
counterpart~\cite{JaegerVW90}.

Moreover, our proof uses entirely different tools than \cite{JaegerVW90}
and illustrates the power and utility of the method presented in the
subsequent discussion of our techniques.

Having fully classified the complexity of exact evaluation of the parameterized Tutte Polynomial, we also consider the complexity of approximate evaluation. We identify two regions bounded by the hyperbola $(x-1)(y-1)=1$ and the lines $x=1$ and $y=1$ as efficiently approximable; consider Figure~\ref{fig:tutteeasy_intro_apx} for a depiction.

\begin{restatable}{mtheorem}{tutteapprox}\label{thm:tutteapprox_intro}\label{thm:tutteapprox}
Let $(x,y)$ denote a pair of rational numbers. If $0\leq (x-1)(y-1) \leq 1$, then $T^k_G(x,y)$ has an FPTRAS. If additionally $x \neq 1$ or $y=1$ then $T^k_G(x,y)$ even has a fully polynomial-time randomized approximation scheme (FPRAS).
\end{restatable}

\subsection*{Techniques}\label{Sect:Techniques}

Our \cref{thm:approx_main_intro,thm:dec_classification_intro,thm:tutteapprox_intro} are obtained easily:
the proof of \cref{thm:approx_main_intro} is a standard application (see for instance
\cite{Meeks16}) of the Monte-Carlo approach, in combination with Ramsey's theorem, and
Arvind and Raman's algorithm for approximately counting subgraphs of bounded
treewidth~\cite{ArvindR02}.
The proof of \cref{thm:dec_classification_intro} uses a standard parameterized
Win-Win approach for graphs of bounded treewidth or bounded degree. Finally, the proof of \cref{thm:tutteapprox_intro} is an easy consequence of the work of Anari et al.~\cite{AnariLGV19} on approximate counting via log-concave polynomials.

Hence, in this technical discussion, we want to focus on the technique that enables us to
prove the lower bounds for \cref{thm:minclose-hard,cor:intro_further} and,
perhaps surprisingly, also for \cref{thm:tutte_main_param_intro}.

%\noindent
As a main component, we use the Complexity Monotonicity framework of Curticapean,
Dell and Marx~\cite{Curticapean13}. Given a property $\Phi$ and a positive integer $k$, we
write $\#\edgesubs{\Phi,k}{\star}$ for the function that maps a graph $G$ to the number of
$k$-edge subgraphs of $G$ that satisfy $\Phi$. Using a well-known transformation via
M\"obius inversion~\cite[Chapter 5.2]{Lovasz12}, we can show that there are rational
numbers $a_1,\dots,a_\ell$ and graphs $H_1,\dots,H_\ell$ such that for each graph $G$ we
have
\begin{equation}\label{eq:hombasis_intro}
    \#\edgesubs{\Phi,k}{G}=\sum_{i=1}^k a_i\cdot \#\homs{H_i}{G}\,,
\end{equation}
where $\#\homs{H_i}{G}$ is the number of graph homomorphisms from $H_i$ to $G$. In other
words, we can express $\#\edgesubs{\Phi,k}{\star}$ as a finite linear combination of
homomorphism counts. Here, we can then apply the Complexity Monotonicity
framework~\cite{CurticapeanDM17}, which asserts that computing a finite linear combination
of homomorphism counts is \emph{precisely as hard as} its hardest term
(among the terms with a non-zero coefficient).
However, the complexity of computing the number of homomorphisms from small pattern graphs
to large host graphs is very well-understood~\cite{DalmauJ04,Marx10}.
Roughly speaking, the higher the treewidth of the pattern graph, the harder the problem becomes;
we make this formal in Section~\ref{sec:prelims_param}.

Instead of our original problem $\#\edgesubsprob(\Phi)$, we can thus consider the
problem of computing linear combinations of graph homomorphism counts.
In particular, to obtain hardness, it suffices to understand
for which of the coefficients in equation~\eqref{eq:hombasis_intro} we have $a_i \neq 0$,
depending on $k$ and $\Phi$.

Relying on the well-known fact that the M\"obius function of
the partition lattice alternates in sign,
Curticapean, Dell, and Marx~\cite{CurticapeanDM17} observed that non-trivial cancellations
cannot occur in equation~\eqref{eq:hombasis_intro} if, for each $k$, every $k$-edge graph
that satisfies $\Phi$ must have the same number of vertices. Consequently, if the matching
number is unbounded, those properties yield $\#\W{1}$-hardness. An example for such a
property is the case of $\Phi(H)=1$ if and only if $H$ is a tree.
In contrast, the intractability result for the case of $\Phi=$ acyclicity (that is, being a
forest) turned out to be much harder to show~\cite{BrandR17}, indicated by connections to
parameterized counting problems in matroid theory.

In later work, the coefficients $a_i$ were shown to have even more interesting structure:
the coefficients~$a_i$  describe topological and
algebraic invariants of the set of pattern graphs.
For example, in~\cite{RothS18} it was shown that the coefficient of the $k$-clique in case of
counting vertex-induced subgraphs with property $\Phi$ is the reduced Euler characteristic
of a simplicial complex associated with $\Phi$ and can thus, if non-zero, be used to
establish evasiveness of certain graph properties~\cite{KahnSS84}.

In this work, we prove additional insights into said coefficients $a_i$.\footnote{For technical reasons, the approach we describe below requires us to consider a coloured version of $\#\edgesubsprob(\Phi)$, which is, however, shown to be interreducible with the uncoloured one.}
For any graph~$H$ we give an explicit formula for its coefficient $a_H$ in terms of
a sum over the \emph{fractures} on~$H$, an additional combinatorial structure on a graph
$H$ resembling, to some extent, a gadget construction used for the classification of the
subgraph counting problem~\cite{CurticapeanM14}.
Our most crucial insight is then that we can drastically simplify the expression of the
coefficient $a_H$ modulo a prime $\ell$ if $H$ admits a vertex-transitive action of a
group of order given by a power of $\ell$. In this case, we obtain an action of the group
on the set of fractures on $H$ and in the formula for $a_H$ all contributions from
fractures not fixed by the group cancel out modulo $\ell$.

In particular, we consider graphs $H$ which are Cayley graphs of a finite group of prime
power order and a symmetric set of generators.
Since the Cayley graph of a group always has a natural vertex-transitive action of this
group, such Cayley graphs always have the desired symmetry properties.
We exploit this by showing that there is a constant number of fractures fixed by
the group action. This in turn allows us to write $(a_H$ modulo $\ell)$ as a finite sum of terms
depending on the value of $\Phi$ on some explicit graphs.

Specifically, the first set of Cayley graphs we consider are the toroidal grids
$\torus_\ell$, which are depicted in \cref{fig:torgrid_intro}.
Since the treewidth of $\torus_\ell$ diverges
with $\ell$, we thus obtain a $\#\W{1}$-hardness result whenever
the coefficient $a_{\torus_\ell}$ does not vanish for infinitely many $\ell$.

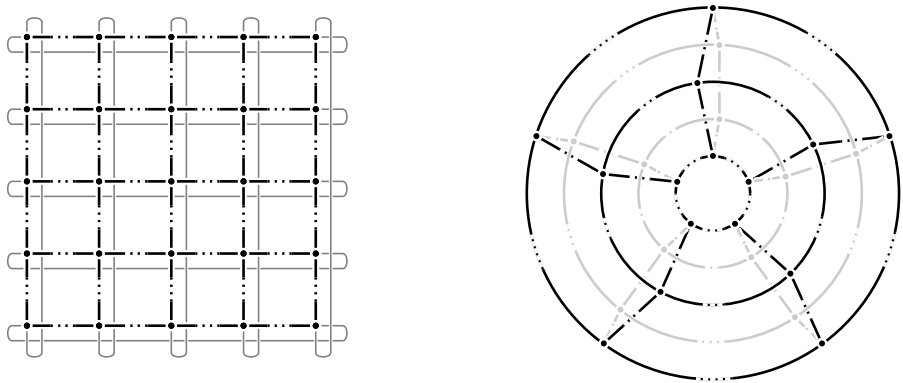
\begin{figure}[t]
    \centering
    \begin{tikzpicture}[scale=0.8,transform shape]
        \pic[sty=10] at (0,0) {grid=5/14};
        \pic[sty=10] at (9,-.2) {torus=5/14};
    \end{tikzpicture}
    \caption{Two isomorphic representations of the toroidal grid $\torus_\ell$: On the left hand side as
    a grid with connected endpoints, on the right hand side as a stylized torus.
}\label{fig:torgrid_intro}\vspace{-2ex}
\end{figure}

\noindent Writing  $M_k$ for the matching of size $k$,
$P_2$ for the path consisting of $2$ edges, $C_k$ for the cycle of length $k$,
$S_k$ for a  sun (a cycle with dangling edges) of size $k$, and $\torus_k$ for the
toroidal grid of size $k$, our first main technical result reads as follows:

\begin{theorem}[Simplified version]\label{lem:coef_special_case_intro}
    Let $\Phi$ denote a computable graph property and assume that  infinitely many primes
    $\ell$ satisfy the equation\footnote{We write $+$ for (disjoint) graph union and $\ell
        H$ for the graph consisting of $\ell$ disjoint copies of $H$.
        Further, we set $\Phi(H)=1$ if $H$ satisfies $\Phi$ and $\Phi(H)=0$ otherwise.} \begin{small}
    \begin{equation} \label{eqn:coef_special_case_intro}
        -6\Phi(M_{2\ell^2}) +4 \Phi(M_{\ell^2} + \ell C_\ell)+ 8\Phi(\ell^2P_2)  -
        \Phi(2\ell C_\ell) -2\Phi(\ell C_{2\ell}) -4\Phi(\ell S_\ell) + \Phi(\torus_\ell)
        \neq 0 \mod \ell \,.
    \end{equation} \end{small}
    Then $\#\edgesubsprob(\Phi)$ is $\#\W{1}$-hard.\lipicsEnd
\end{theorem}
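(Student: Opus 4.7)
The plan is to apply the Complexity Monotonicity framework of Curticapean, Dell and Marx to the homomorphism basis expansion from equation~\eqref{eq:hombasis_intro} and to show that, for infinitely many primes $\ell$, the coefficient $a_{\torus_\ell}$ of the toroidal grid is non-zero. Since the treewidth of $\torus_\ell$ grows linearly in $\ell$, the resulting family of non-trivial homomorphism counts has unbounded treewidth, which together with Complexity Monotonicity and the parameterized hardness of $\#\homsprob$ for pattern classes of unbounded treewidth yields $\#\W{1}$-hardness of $\#\edgesubsprob(\Phi)$.

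First, I would make the coefficient $a_H$ in~\eqref{eq:hombasis_intro} explicit, via the standard M\"obius inversion on the partition lattice, as a signed sum indexed by the \emph{fractures} of $H$ alluded to in the technical overview. This rewriting realises $a_H$ in the form $\sum_{\varphi} c(\varphi)\cdot \Phi(H/\varphi)$, where the sum ranges over fractures $\varphi$ of $H$, the integer $c(\varphi)$ depends only on the partition-lattice M\"obius function, and $H/\varphi$ is the quotient graph produced by the fracture.

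Next, I would specialise to $H = \torus_\ell$ for $\ell$ prime. The toroidal grid is the Cayley graph of $\ztwol$ on two standard generators, so $\ztwol$ acts vertex-transitively on $\torus_\ell$, and this action lifts to an action on the set of fractures which preserves $c(\varphi)$ and the isomorphism type of $H/\varphi$. Because $|\ztwol|=\ell^2$, orbit-stabiliser forces every non-fixed orbit to have size divisible by $\ell$; thus modulo $\ell$ the sum collapses to a sum over those fractures that are fixed by the entire group action. Augmenting the analysis by the dihedral involutions that swap the horizontal and vertical edge directions of $\torus_\ell$ keeps the number of relevant orbits bounded by a constant independent of $\ell$. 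A careful case analysis then shows that each such invariant quotient is isomorphic to one of the seven graphs
\[ \torus_\ell, \quad M_{2\ell^2}, \quad M_{\ell^2}+\ell C_\ell, \quad \ell^2 P_2, \quad 2\ell C_\ell, \quad \ell C_{2\ell}, \quad \ell S_\ell, \]
appearing with residues $1,\ -6,\ 4,\ 8,\ -1,\ -2,\ -4$ modulo $\ell$ respectively, which are exactly the coefficients in equation~\eqref{eqn:coef_special_case_intro}.

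Having established the identity $a_{\torus_\ell} \equiv \Phi(\torus_\ell) - 6\Phi(M_{2\ell^2}) + \cdots \pmod{\ell}$, the theorem follows: by hypothesis the right-hand side is non-zero modulo $\ell$ for infinitely many primes $\ell$, whence $a_{\torus_\ell}\neq 0$ for infinitely many $\ell$, and Complexity Monotonicity transfers the $\#\W{1}$-hardness of $\#\homsprob$ on the pattern class $\{\torus_\ell\}_\ell$ to $\#\edgesubsprob(\Phi)$. The main obstacle is the combinatorial bookkeeping in the previous paragraph: correctly identifying all $\ztwol$-invariant fractures of $\torus_\ell$, computing each of their quotient graphs, and determining the exact signed multiplicities with which these survive reduction modulo $\ell$. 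Once this enumeration is pinned down, the rest is a direct invocation of the standard machinery.
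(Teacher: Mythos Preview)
Your high-level strategy is right and matches the paper: express the problem in the homomorphism basis, analyse the coefficient of $\torus_\ell$ via the $\ztwol$-action on fractures, reduce modulo $\ell$ to the fixed points, and invoke Complexity Monotonicity. However, there is a genuine gap in how you set up the coefficient formula.

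You work directly with the \emph{uncoloured} expansion~\eqref{eq:hombasis_intro} and claim that $a_H = \sum_\varphi c(\varphi)\,\Phi(H/\varphi)$ with integer weights $c(\varphi)$ coming from the partition-lattice M\"obius function. That formula is the one for the \emph{coloured} coefficient $a_{\Phi,H}(\top)$ (the paper's \cref{cor:collect_coeffs}), not for the uncoloured $a_H$. In the uncoloured basis the coefficient of $\#\homs{\torus_\ell}{\star}$ picks up factors of $1/\#\auts{H}$ for each $H\in\Phi_k$ contributing to it; these are genuinely rational, and for instance $\#\auts{M_{2\ell^2}}$ is divisible by high powers of $\ell$, so ``reduce modulo $\ell$'' is not even well-defined without further work. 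The paper avoids this by introducing an $H$-coloured variant $\#\coledgesubsprob(\Phi)$, for which the fracture formula holds with integer coefficients and the $\ztwol$-orbit argument goes through cleanly (\cref{lem:coef_special_case}); it then proves Complexity Monotonicity for colour-preserving homomorphisms (\cref{lem:newcomplexitymonotonicity}) and finally reduces the coloured problem back to the uncoloured one by inclusion--exclusion over edge colours (\cref{lem:nocolours}). Your proposal is missing this entire detour through the coloured problem, which is precisely the ``technical reason'' flagged in the paper's footnote.

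A minor point: the dihedral involutions you invoke are unnecessary and do not help the modular argument (their orbits have size $2$, which is coprime to $\ell$). The $\ztwol$-action alone already has exactly $15$ fixed fractures --- one for each partition of the four-element set $\{\tu,\td,\tl,\tr\}$ --- and the paper simply enumerates these by hand (\cref{obs:fixedPointsbasic}) and reads off the seven isomorphism types with their multiplicities.
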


As a toy example for an application of \cref{lem:coef_special_case_intro}, let us consider
the property $\Phi$ of being connected. Observe that among the graphs in
\eqref{eqn:coef_special_case_intro}, only $\torus_\ell$ is connected, and thus the sum is
always $1$ for $\ell \geq 2$. Thus, indeed the left-hand side of
\eqref{eqn:coef_special_case_intro} is nonzero, proving that $\#\edgesubsprob(\Phi)$ is
$\#\W{1}$-hard.

Using \cref{lem:coef_special_case_intro}, we can prove most of the $\#\W{1}$-hardness
results of \cref{thm:minclose-hard}. However, using the toroidal grid~$\torus_\ell$ we
cannot prove (almost) tight conditional lower bounds: the treewidth of~$\torus_\ell$ grows only
with the \emph{square-root} of the parameter $k$ (that is the number of edges of the graph).
To address this problem, we consider a second family of $4$-regular Cayley graphs,
constructed explicitly by Peyerimhoff and Vdovina~\cite{PeyerimhoffV11}, which have the
additional property of being \emph{expander graphs}.
In particular, for these graphs, the treewidth grows \emph{linearly} in the number of
edges. This allows us to obtain almost tight conditional lower bounds.
The variant of \cref{lem:coef_special_case_intro} for these Cayley graph expanders is given by \cref{lem:expander_coeffs} (in combination with \cref{lem:hardness_basis}).

The only drawback of the Cayley graphs from \cite{PeyerimhoffV11} is that the
corresponding groups always have orders given by powers of $2$ (in contrast to having
arbitrary primes $\ell$ in \cref{lem:coef_special_case_intro}). Hence, our criterion for
hardness is the nonvanishing of some expression modulo $2$. Ultimately, this is the
reason why for the conditional lower bounds in \cref{thm:minclose-hard} we need to exclude
forbidden minors having a vertex of degree $2$ or less.

Finally, to obtain \cref{thm:tutte_main_param_intro}, we express the parameterized Tutte polynomial
at a rational point $(x,y)$ as a linear combination of (fractures of) toroidal grids;
the proof of \cref{lem:coef_special_case_intro} then essentially shows that this linear
combination always contains a graph with unbounded treewidth, yielding $\#\W1$-hardness.

\def\corintrofurlend{1}

\newpage
\tableofcontents

\newpage

\section{Preliminaries}\label{sec:prelims}
Given a finite set $S$, we write $|S|$ and $\#S$ for the cardinality of $S$. Further,
given a function $f: X\times Y \rightarrow Z$ and an element $x\in X$, we write
$f(x,\star) : B \rightarrow C$ for the function $y \mapsto f(x,y)$.

\subsection{Graphs and Homomorphisms}
Graphs in this work are simple and irreflexive, that is, we do not allow multiple edges or
self-loops. Given a graph~$G$ and a subset $A$ of $E(G)$, we write $G(A)$ for the graph
$(V(G),A)$, and we write $G[A]$ for the graph obtained from $G(A)$ by deleting all
isolated vertices.

Given graphs $F$ and $G$, a \emph{homomorphism} from $F$ to $G$ is a mapping $\varphi:
V(F) \rightarrow V(G)$ which is edge-preserving, that is, for each edge $\{u,v\}\in E(F)$
we have $\{\varphi(u),\varphi(v)\}\in E(G)$. A homomorphism is called an \emph{embedding}
if it is injective (on the vertices). We write $\homs{F}{G}$ and $\embs{F}{G}$ for the set
of all homomorphisms and embeddings, respectively, from $F$ to $G$.

An \emph{isomorphism} between two graphs $F$ and $G$ is a bijective embedding $\varphi$
satisfying the stronger constraint $\{u,v\}\in E(F) \Leftrightarrow
\{\varphi(u),\varphi(v)\}\in E(G)$. We say that $F$ and $G$ are isomorphic, denoted by
$F\cong G$, if an isomorphism from $F$ to $G$ exists. An isomorphism from a graph $F$ to
itself is called an \emph{automorphism} and we write $\auts{F}$ for the group formed by
such automorphisms (where the group operation is the composition of automorphisms).

A graph $F$ is a \emph{minor} of a graph $G$ if it can be obtained from $G$ by a sequence
of edge- and vertex-deletions and edge-contractions (with multiple edges and self-loops
deleted).

A graph $G$ is called $k$\emph{-edge-coloured} if the edges of $G$ are coloured with (at
most) $k$ pairwise different colours. Given a homomorphism $\varphi\in\homs{G}{H}$ for
some graphs $G$ and $H$, we also call $\varphi$ an $H$\emph{-colouring}. Moreover, an
$H$\emph{-coloured graph} is a pair of a graph $G$ and an $H$-colouring $\varphi$.
For ease of readability, we say that a graph~$G$ is $H$-coloured if the $H$-colouring is
implicit or clear from the context. Observe that an $H$-colouring $\varphi$ of a graph $G$
induces a $\#E(H)$-edge-colouring by mapping an edge $\{u,v\}\in E(G)$ to the colour
$\{\varphi(u),\varphi(v)\}$. Throughout this work, we use the following notion of
homomorphisms between $H$-coloured graphs:

\begin{defn}
    Let $F$ and $G$ denote two $H$-coloured graphs and let $c_F$ and $c_G$ denote their
    $H$-colourings. A homomorphism $\varphi$ from $F$ to $G$ is called
    \emph{colour-preserving} if $c_G(\varphi(v)) = c_F(v)$ for every $v \in V(F)$. We
    write $\necphoms{F}{H}{G}$ for the set of all colour-preserving homomorphisms from $F$
    to $G$. \emph{Colour-preserving embeddings} and the set $\necpembs{F}{H}{G}$ are
    defined similarly.

    Further, we say that two $H$-coloured graphs $F$ and $G$ are isomorphic \emph{as
    $H$-coloured graphs}, denoted by $F\cong_H G$, if there is a
    colour-preserving isomorphism from $F$ to $G$.
    \lipicsEnd
\end{defn}
Note that, given two $H$-coloured graphs $F$ and $G$, we write $F\cong G$ (rather than
$F\cong_H G$) if the underlying \emph{uncoloured} graphs are isomorphic.

For the treatment of decision and approximate counting, we introduce the following
classification criteria on (computable) graph properties. To this end, we write
$K_{\ell,r}$ for the biclique with $\ell$ vertices on the left and $r$ vertices on the
right side, respectively. In particular, $K_{1,r}$ denotes the star of size $r$.
\begin{defn}
    Let $\Phi$ denote a computable graph property. We say that
    \begin{itemize}
        \item $\Phi$ satisfies the \emph{matching criterion} if $\Phi(M_k)=1$ for all but
            finitely many $k$.
        \item $\Phi$ satisfies the \emph{star criterion} if $\Phi(K_{1,k})=1$ for all but
            finitely many $k$.
        \item $\Phi$ has \emph{bounded treewidth} if there is a constant $B$ such that
            $\Phi$ is false on all graphs of treewidth at least~$B$.
    \lipicsEnd
    \end{itemize}
\end{defn}

\noindent For example, the properties of being bipartite or planar satisfy both, the matching and
the star criterion. Furthermore, the property of being $2$-regular has bounded treewidth,
while the criterion of just being regular satisfies only the matching criterion. Further,
the property of being a tree is an example that satisfies both, the star criterion and is
of bounded treewidth, while the property of being a forest satisfies all three criteria.

\paragraph*{Expander Graphs}
All (almost-tight) conditional lower bounds in this work rely on the existence of certain (families
of) expander graphs. Given a positive integer $d$, a rational $c>0$, and a class of graphs
$\mathcal{G}=\{G_1,G_2,\dots\}$ with $\#V(G_i)=n_i$, we call $\mathcal{G}$ a family of
$(n_i,d,c)$\emph{-expanders} if, for all $i$, the graph $G_i$ is $d$-regular and satisfies
\[\forall X \subseteq V(G_i): |S(X)| \geq c\left( 1-\frac{|X|}{n_i} \right)|X| \,,\]
where $S(X)$ is the set of all vertices in $V(G_i)\setminus X$ that are adjacent to a
vertex in $X$.

While being sparse due to $d$-regularity, expander graphs have treewidth\footnote{We use
the graph parameter ``treewidth'' ($\mathsf{tw}$) in a black-box manner only,
and refer the reader to, for instance, Chapter~7 of~\cite{CyganFKLMPPS15} for a detailed
exposition.} linear in the number of vertices (see for instance Proposition~1 in~\cite{GroheM09}
and set $\alpha=1/2$). Furthermore, they admit arbitrarily large clique
minors~\cite{KleinbergR96}. Formally, we have:
\begin{fact}\label{fac:niceExpanders}
    Fix a rational $c$ and an integer $d$, and let $\mathcal{G}$ denote a family of
    $(n_i,d,c)$-expanders. Then, $\#E(G_i)\in \Theta(n_i)$ and $\mathsf{tw}(G_i)\in
    \Theta(n_i)$. Furthermore, for each positive integer $k$ there is an index $j$
    such that for all $i\geq j$, the graph~$G_i$ contains the complete graph on $k$
    vertices as a minor.
    \lipicsEnd
\end{fact}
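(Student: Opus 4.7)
The plan is to handle the three claims separately, each by reducing to a standard fact about expanders. The edge-count claim is immediate from the handshake lemma: since $G_i$ is $d$-regular, $\#E(G_i) = d \cdot n_i / 2$, which lies in $\Theta(n_i)$ because $d$ is a fixed constant independent of $i$.

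For the treewidth claim, the upper bound $\mathsf{tw}(G_i) \leq n_i - 1$ is trivial. For the lower bound, I would pass through the standard equivalence between treewidth and balanced separator size. Concretely, any graph of treewidth $t$ admits a balanced separator of size at most $t+1$, so it suffices to show that every balanced separator of $G_i$ has size $\Omega(n_i)$. Applying the expansion inequality to a subset $X \subseteq V(G_i)$ of size, say, $n_i/3$ gives $|S(X)| \geq c(1 - 1/3) \cdot n_i/3 = \Omega(n_i)$, and $S(X)$ must be contained in any separator isolating $X$ from $V(G_i) \setminus (X \cup S(X))$. Tracing this through the standard separator-to-treewidth translation yields $\mathsf{tw}(G_i) = \Omega(n_i)$; this is precisely Proposition~1 of \cite{GroheM09} instantiated at $\alpha = 1/2$, which the statement already invokes, so I would simply cite it rather than reproving it.

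For the clique-minor claim, the direct route is to appeal to the theorem of Kleinberg and Rubinfeld \cite{KleinbergR96}, who proved that constant-degree expander families contain $K_k$ as a minor for every fixed $k$, provided $n_i$ is large enough. Alternatively, having already established that $\mathsf{tw}(G_i)$ diverges with $i$, one can feed this into the Excluded Grid Theorem of Robertson--Seymour: arbitrarily large grid minors appear in $G_i$ for $i$ large, and a $k^2 \times k^2$ grid contains (a subdivision of) $K_k$, hence $K_k$ as a minor.

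The statements are all fairly black-box in character, so there is no single hard step; the only subtlety is choosing the balance ratio in the separator argument so that both sides of the expansion inequality contribute a constant fraction of $n_i$, which is the reason the cited proposition is stated at $\alpha = 1/2$.
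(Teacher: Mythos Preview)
Your proposal is correct and matches the paper's treatment exactly: the paper does not give a proof of this fact at all, but simply records it with the same two citations you invoke (Proposition~1 of \cite{GroheM09} with $\alpha=1/2$ for the treewidth bound, and \cite{KleinbergR96} for the clique-minor statement), together with the observation that $d$-regularity gives the edge count. Your additional sketch of the separator argument and the alternative route via the Excluded Grid Theorem goes beyond what the paper provides, but is consistent with it.
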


\subsection{Parameterized Complexity Theory}\label{sec:prelims_param}
A \emph{parameterized counting problem} is a pair of a counting problem $P:\Sigma^\ast
\rightarrow \mathbb{N}$ and a \emph{parameterization} $\kappa: \Sigma^\ast \rightarrow
\mathbb{N}$. A \emph{parameterized decision problem} is a pair $(P,\kappa)$ of a decision
problem $P:\Sigma^\ast \rightarrow \{0,1\}$ and a parameterization $\kappa$. Consider for
example the problems $\#\textsc{Clique}$ and $\textsc{Clique}$: on input a graph~$G$
and a positive integer $k$, the task is to either compute the number of $k$-cliques in $G$ or
to detect the mere existence of a $k$-clique, respectively.
The parameterization is given by $\kappa(G,k):=k$ for both problems.

A parameterized problem $(P,\kappa)$ is called \emph{fixed-parameter tractable} (FPT) if
there is a computable function $f$ such that $P$ can be computed in time
$f(\kappa(x))\cdot |x|^{O(1)}$. For historic reasons, the class of all fixed-parameter
tractable \emph{decision} problems is called $\ccFPT$.\footnote{In some literature
$\ccFPT$ is used for both, parameterized decision and counting problems, while some
authors write $\mathsf{FFPT}$ for the class of all fixed-parameter tractable parameterized
counting problems.}
Furthermore, a \emph{parameterized Turing reduction} from $(P,\kappa)$ to
$(\hat{P},\hat{\kappa})$ is a Turing reduction from $P$ to $\hat{P}$ that, on input $x$,
runs in time $f(\kappa(x))\cdot |x|^{O(1)}$ and additionally satisfies
$\hat{\kappa}(y)\leq f(\kappa(x))$ for each oracle query $y$. Again, $f$ only needs to be
some fixed computable function. We write $(P,\kappa)\fptred (\hat{P},\hat{\kappa})$ if a
parameterized Turing reduction exists.

A parameterized counting problem is $\#\W{1}$\emph{-hard} if it can be reduced from
$\#\textsc{Clique}$, and, similarly, a parameterized decision problem is
$\W{1}$\emph{-hard} if it can be reduced from $\textsc{Clique}$, both with respect to
parameterized Turing reductions.

\noindent Under reasonable assumptions, such as the Exponential
Time Hypothesis (ETH)~\cite{ImpagliazzoP01}, defined below, $\#\W{1}$- and $\W{1}$-hard
problems are not fixed-parameter tractable.\footnote{In fact, Chen et al.\
\cite{Chenetal05,Chenetal06} proved the much stronger statement that $\#\textsc{Clique}$
cannot be solved in time $f(k)\cdot|V(G)|^{o(k)}$ for any function $f$, unless ETH fails.}
\begin{conjecture}[Exponential Time Hypothesis]
    The \emph{Exponential Time Hypothesis} (ETH) asserts that $3$-$\textsc{SAT}$ cannot be
    solved in time $\exp(o(n))$, where $n$ is the number of variables of the input formula.
    \lipicsEnd
\end{conjecture}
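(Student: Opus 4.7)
The statement above is the Exponential Time Hypothesis (ETH) of Impagliazzo and Paturi~\cite{ImpagliazzoP01}, which the paper labels as a \emph{conjecture} rather than a theorem, and accordingly there is no proof to propose here. Establishing ETH would in particular imply $\ccP \neq \NP$, and more strongly it would exhibit an explicit $\NP$-complete problem with no subexponential-time algorithm -- a conclusion entirely beyond what current techniques can deliver: circuit lower bounds, the barriers of relativization, algebrization and natural proofs, and the state of the art on deterministic SAT algorithms (such as the PPSZ bound) all block any direct attack. In this paper, the role of ETH is purely that of a working hypothesis under which the conditional lower bounds are stated, e.g.\ the $f(k)\cdot |G|^{o(k/\log k)}$ bounds in \cref{thm:minclose-hard,cor:intro_further}.

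If pressed to sketch how one would \emph{use} ETH (rather than prove it) in order to derive the paper's lower bounds, the plan is the now-standard route: first invoke the Sparsification Lemma of Impagliazzo, Paturi and Zane to reduce to $3$-$\textsc{SAT}$ with linearly many clauses, then apply the tight ETH-hardness of $k$-$\textsc{Clique}$ due to Chen et al.~\cite{Chenetal05,Chenetal06}, which already rules out $f(k)\cdot |V(G)|^{o(k)}$ algorithms for $\#\textsc{Clique}$. One would then reduce $\#\textsc{Clique}$ to $\#\edgesubsprob(\Phi)$ through the Complexity Monotonicity framework of Curticapean, Dell and Marx, arranging the reduction so that it blows up the parameter by only a factor of $O(\log k)$. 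The genuinely hard step -- addressed elsewhere in this paper, not here -- is engineering the pattern graphs (the toroidal grids $\torus_\ell$ and the prime-power-order Cayley graph expanders of Peyerimhoff and Vdovina) so that the homomorphism-basis coefficient of a term of linear treewidth survives the cancellations implicit in M\"obius inversion. The conjecture itself, however, remains open, and the honest thing to say is that it is simply assumed.
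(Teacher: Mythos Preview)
Your assessment is correct: the statement is labelled as a \emph{conjecture} in the paper and is not accompanied by any proof there, so there is nothing to compare against. Your explanation of why ETH is assumed rather than proved, and of how it feeds into the paper's lower bounds via the Chen et al.\ clique bound and Complexity Monotonicity, accurately reflects how the paper uses the hypothesis.
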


Our hardness results in this paper are obtained by reducing from the problem
$\#\homsprob(\mathcal{H})$. Given a fixed class of graphs $\mathcal{H}$,
in the problem $\#\homsprob(\mathcal{H})$ the input is
a graph $H\in \mathcal{H}$ and an arbitrary graph $G$ and the task is
to compute the number of homomorphisms from $H$ to $G$; the parameter is $|H|$. Dalmau and
Jonsson~\cite{DalmauJ04} established an exhaustive classification for this problem,
stating that $\#\homsprob(\mathcal{H})$ is fixed-parameter tractable if the treewidth of
graphs in~$\mathcal{H}$ is bounded by a constant, and $\#\W{1}$-hard, otherwise.

Let $\Phi$ denote a graph property, that is, a function from (isomorphism classes) of
graphs to $\{0,1\}$. Setting
\[\edgesubs{\Phi,k}{G} :=\{ A\subseteq E(G)~|~ \#A=k \wedge \Phi(G[A])=1 \} \,,\]
we define $\#\edgesubsprob(\Phi)$ as the parameterized counting problem in which
on input a graph $G$ and a positive integer $k$, the task is to compute
the number $\#\edgesubs{\Phi,k}{G}$; the parameter is $k$.

In this paper, we often rely on the following important, but easy observation: write
$\Phi_k$ for the set of graphs $H$ with $k$ edges and without isolated vertices, that
satisfy $\Phi$. Then we have
\begin{equation}\label{eq:unlcolouredSubgraphCounts}
    \#\edgesubs{\Phi,k}{G} = \sum_{H\in \Phi_k} \#\subs{H}{G} \,,
\end{equation}
where $\#\subs{H}{G}$ is the number of subgraphs of $G$ that are isomorphic to $H$.

Using the aforementioned transformation, both, \cref{cor:best_algo} and
\cref{thm:smallmatchFPT} are easy consequences of known results regarding the subgraph
counting problem. We add their proofs only for the sake of completeness:

\bestalgo*
\begin{proof}
    The fastest known algorithm for computing $\#\subs{H}{G}$ for a $k$-edge graph $H$
    runs in time $k^{O(k)}\cdot |V(G)|^{0.174k+o(k)}$ and is due to Curticapean, Dell and
    Marx~\cite{CurticapeanDM17}.  Now observe that the size of $\Phi_k$ is bounded by a
    function in $k$, since graphs in $\Phi_k$ have $k$ edges and no isolated vertices and
    thus have at most $2k$ vertices.
    Consequently, their algorithm extends to $\#\edgesubsprob(\Phi)$ by computing
    $\#\edgesubs{\Phi,k}{G}$ as given in Equation~\eqref{eq:unlcolouredSubgraphCounts}.
\end{proof}
Note that the growth of $f$ in the previous result depends, among other factors, on the
complexity of verifying $\Phi$.

For the following, recall that a property $\Phi$ has \emph{bounded matching number} if
there is a constant $c$ such that $\Phi$ is false on all graphs containing a matching of
size at least $c$. Furthermore, write $\neg \Phi$ for the complement of $\Phi$.

\smallmatchFPT*
\begin{proof}
    Applying Equation~\eqref{eq:unlcolouredSubgraphCounts}, we observe that
    counting subgraphs isomorphic to $H$ is fixed-parameter tractable (even
    polynomial-time solvable) if there is a constant upper bound on the size of the
    largest matching of~$H$~\cite{CurticapeanM14}. This allows us to compute
    $\#\edgesubs{\Phi,k}{G}$ in the desired running time if the graphs in~$\Phi_k$ have
    matching number bounded by $M$. In case the latter is true for $\neg \Phi_k$ instead,
    we use the fact that
    $\#\edgesubs{\Phi,k}{G}=\binom{\#E(G)}{k}-\#\edgesubs{\neg\Phi,k}{G}$ and proceed
    similarly.
\end{proof}

\subsection{Combinatorial Commutative Algebra}
We assume familiarity with the notions of basic group theory and refer the reader to
for instance \cite{Lang93} for a detailed introduction.
Given a positive integer $\ell$, we write $\mathbb{Z}_\ell$ for the group of integers
modulo $\ell$, and we write $\mathbb{Z}^k_\ell$ for the $k$-fold \emph{direct product} of
$\mathbb{Z}_\ell$; recall that the binary operation of the direct product is defined
coordinate-wise.

For a prime $p$, recall that a finite group $G$ is called a $p$-group if the order $\#G$
of $G$ is a power of $p$. Recall that by Lagrange's theorem, this implies that the order
of any subgroup $H$ of $G$ is likewise a power of $p$.

Given a group $G$ and a subgroup $H \subseteq G$, we write $G/H$ for the set of \emph{left
cosets} of $H$. Formally, a left coset of $H$ is an equivalence class
of the following equivalence relation on $G$: two elements $g, g' \in G$
are equivalent if and only if $g' = g h$ for some $h \in H$.
We write $g H$ for the equivalence class of $g \in G$.
We define the \emph{index} of $H$ in $G$ as the cardinality $[G : H] = \# G/H$
of the set of left cosets. Note that $[G:H]$ might be finite even though $\#G$ is
infinite; in fact, we encounter this case when we treat Cayley graph expanders of
$2$-groups. The index satisfies the basic identity
$\#G = [G:H] \cdot \# H$, and again, with a slight abuse of notation, we observe that
the identity remains well-defined in the infinite case: $\#G$ is infinite if and only if
one of $[G:H]$ or $\#H$ is infinite.
If the subgroup $H$ is \emph{normal} in $G$ (that is for each $g \in G$ we have that the
subset $g H g^{-1} = \{g h g^{-1} | h \in H\}$ of $G$ is equal to $H$), then the set $G/H$
naturally carries the structure of a group, with the group operation defined by $(g_1 H)
\cdot (g_2 H) = (g_1 \cdot g_2) H$. In this case, we call $G/H$ a \emph{quotient group}.

Given an element $g \in G$ we write
\[
    \langle g \rangle = \{ g^{a} : a \in \mathbb{Z} \} \subseteq G
\]
for the \emph{subgroup generated by $g$} (recall that for a negative integer $a$ we define
$g^{a}$ as the $|a|$-th power of the inverse element of $g$). If there is a positive
integer $a$ such that $g^a$ equals the neutral element of $G$, we define the \emph{order}
$\mathrm{ord}_G(g)$ of $g$ as the smallest such positive integer $a$ (and set
$\mathrm{ord}_G(g)= \infty$ otherwise). If $g$ has finite order, the subgroup $\langle g
\rangle$ of $G$ generated by $g$ is isomorphic to the cyclic group
$\mathbb{Z}_{\mathrm{ord}_G(g)}$.

Given a finite group $A$ and a set $S$ of generators of $A$, the \emph{Cayley graph} of
$A$ and $S$, denoted by $\Gamma(A,S)$ has as vertices the elements of $A$, and two
vertices $u$ and $v$ are adjacent\footnote{Note that, in some literature, Cayley
graphs are coloured and directed. However, we only consider the underlying uncoloured and
simple graph.} if there is an $s\in S$ such that $v=us$. For example, the Cayley graph
$\Gamma(\mathbb{Z}_\ell,\{1, -1\})$ is isomorphic to the cycle of length $\ell$.
%\jocomment{wikipedia says that in some areas it is standard to have the generating set $S$ symmetric, so that $S^{-1}=S$. That's why for $\mathbb{Z}_\ell$ I would have proposed to take $S= \{\pm 1\}$, though of course the set with only 1 also works.}

\paragraph*{M\"obius Inversion and the Partition Lattice}
We follow the notation of the standard textbook of Stanley~\cite{Stanley11}.
Given a finite partially ordered set $(L,\leq)$, and a function $f: L \rightarrow
\mathbb{Q}$, the \emph{zeta transformation} $\zeta f: L\rightarrow \mathbb{Q}$ is defined
as
\[ \zeta f(\sigma) := \sum_{\rho \geq \sigma} f(\rho) \,. \]

The principle of M\"obius inversion allows us to invert a zeta transformation; a proof of
the following theorem can be found in~\cite[Chapt.~3]{Stanley11}
\begin{theorem}\label{thm:mobiusinv}
    Given a partially ordered set $(L,\leq)$, there is a computable function $\mu:L
    \times L \rightarrow \mathbb{Z}$, called the M\"obius function, such that for all $f:
    L \rightarrow \mathbb{Q}$ and $\sigma\in L$ we have
    \[f(\sigma) =\sum_{\rho \geq \sigma}\mu(\sigma,\rho)\cdot \zeta f (\rho) \,.
    \lipicsEnd
\]
\end{theorem}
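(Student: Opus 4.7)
The plan is to construct the Möbius function $\mu$ by a straightforward recursion on $L$ and then verify the inversion identity by a direct swap of summations. Concretely, I would set $\mu(\sigma, \rho) = 0$ whenever $\sigma \not\leq \rho$, set $\mu(\sigma, \sigma) = 1$, and, for $\sigma < \rho$, define inductively
$$\mu(\sigma, \rho) := -\sum_{\sigma \leq \tau < \rho} \mu(\sigma, \tau).$$
Since $(L, \leq)$ is finite, each evaluation of $\mu(\sigma, \rho)$ reduces to finitely many evaluations at pairs $(\sigma, \tau)$ with $\tau$ strictly below $\rho$, so $\mu$ is a well-defined $\mathbb{Z}$-valued function, and the recursion is manifestly computable.

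The recursion is precisely designed so that, for every pair $\sigma \leq \rho$ in $L$, one has the key identity
$$\sum_{\sigma \leq \tau \leq \rho} \mu(\sigma, \tau) = [\sigma = \rho],$$
where $[\cdot]$ denotes the Iverson bracket. For $\sigma = \rho$ this follows from $\mu(\sigma, \sigma) = 1$, and for $\sigma < \rho$ it is exactly a restatement of the defining recursion. This combinatorial identity is the only nontrivial fact I would need in the sequel.

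With this identity in hand, I would verify the theorem by a routine manipulation. Substituting the definition of $\zeta f$ into the claimed right-hand side and interchanging the (finite) order of summation yields
$$\sum_{\rho \geq \sigma} \mu(\sigma, \rho) \cdot \zeta f(\rho) = \sum_{\rho \geq \sigma} \mu(\sigma, \rho) \sum_{\tau \geq \rho} f(\tau) = \sum_{\tau \geq \sigma} f(\tau) \sum_{\sigma \leq \rho \leq \tau} \mu(\sigma, \rho) = \sum_{\tau \geq \sigma} f(\tau) \cdot [\sigma = \tau] = f(\sigma),$$
which is the desired formula.

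There is no genuine obstacle here: the finiteness of $L$ guarantees both the termination of the recursion defining $\mu$ and the legality of the summation interchange. More conceptually, one may fix any linear extension of $\leq$ and view $\zeta$ as an upper unitriangular $|L| \times |L|$ matrix over $\mathbb{Z}$ with entries $\zeta(\sigma, \rho) = [\sigma \leq \rho]$; then $\mu$ is nothing but the integer matrix inverse $\zeta^{-1}$, and the theorem is the matrix identity $\mu \cdot \zeta = I$ applied to the column vector $f$.
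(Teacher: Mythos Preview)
Your argument is correct and is exactly the standard textbook proof of M\"obius inversion. Note, however, that the paper does not give its own proof of this statement at all: it simply cites \cite[Chapt.~3]{Stanley11}, where the argument you wrote is essentially the one presented. One minor remark: the theorem as stated in the paper omits the word ``finite'' (though the surrounding context and the definition of $\zeta f$ make clear that finiteness is intended), and you correctly and explicitly use finiteness both for the well-definedness of the recursion and for the interchange of sums.
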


\noindent We use M\"obius inversion on the ordering of partitions of finite sets. Given two
partitions $\sigma$ and $\rho$ of a finite set $S$, we say that $\sigma$ \emph{refines}
$\rho$ if every block of $\sigma$ is a subset of a block of $\rho$, and in this case we
write $\sigma \leq \rho$. This induces a partial order, called the \emph{partition
lattice}\footnote{A lattice is a partially ordered set in which every pair of elements has
a least upper bound and a greatest lower bound. For a formal definition we refer
to~\cite{Stanley11}.} of $S$. The explicit formula of the M\"obius function over
the partition lattice is of particular importance in this work:

\begin{theorem}[see~Chapter 3 in~\cite{Stanley11}]\label{thm:mobiusexpl}
    Let $\sigma$ denote a partition of a finite set $S$. We have
    \[\mu(\sigma,\top) = (-1)^{|\sigma|-1} \cdot (|\sigma|-1)!\,,\]
    where $\top=\{S\}$ denotes the coarsest partition.
    \lipicsEnd
\end{theorem}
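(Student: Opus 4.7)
The plan is to reduce the statement to computing $\mu_{\Pi_n}(\hat{0},\hat{1})$ in the partition lattice $\Pi_n$ of $[n]=\{1,\dots,n\}$, and then to evaluate this Möbius number via an exponential-generating-function identity that arises from the defining recursion of~$\mu$.

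First I would observe that for any partition $\sigma$ of $S$ with $n := |\sigma|$ blocks, the upper interval $[\sigma,\top]$ in the partition lattice of $S$ is naturally order-isomorphic to $\Pi_n$: a partition $\rho \geq \sigma$ of $S$ corresponds to the partition of the set of blocks of $\sigma$ induced by merging blocks along $\rho$. Since $\mu(\sigma,\top)$ depends only on the isomorphism type of the interval $[\sigma,\top]$, it suffices to prove the identity for $\sigma = \hat{0}$, the finest partition of $[n]$. Writing $a_n := \mu_{\Pi_n}(\hat{0},\hat{1})$, the goal therefore reduces to establishing $a_n = (-1)^{n-1}(n-1)!$ for all $n \geq 1$.

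Next I would exploit that, for any partition $\pi \in \Pi_n$ with block sizes $\lambda_1,\dots,\lambda_k$, the interval $[\hat{0},\pi]$ decomposes as a product of lattices $\Pi_{\lambda_1} \times \cdots \times \Pi_{\lambda_k}$. By multiplicativity of the Möbius function on products of posets, this yields $\mu(\hat{0},\pi) = \prod_{i=1}^{k} a_{\lambda_i}$. Combined with the defining Möbius identity $\sum_{\pi \in \Pi_n}\mu(\hat{0},\pi) = [\hat{0} = \hat{1}]$, which equals $1$ for $n \leq 1$ and $0$ for $n \geq 2$, I obtain the key recursion
\[\sum_{\pi \in \Pi_n}\prod_{B \in \pi} a_{|B|} = \begin{cases} 1 & \text{if } n \leq 1,\\ 0 & \text{if } n \geq 2.\end{cases}\]

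Finally, setting $A(x) := \sum_{n \geq 1} a_n \, x^n/n!$, the exponential formula identifies the coefficient of $x^n/n!$ in $\exp(A(x))$ with precisely the sum on the left-hand side above. The recursion therefore encodes the identity $\exp(A(x)) = 1+x$, so $A(x) = \log(1+x) = \sum_{n \geq 1} (-1)^{n-1}(n-1)! \, x^n/n!$, and comparing Taylor coefficients yields $a_n = (-1)^{n-1}(n-1)!$ as desired. The substantive content lies in the interval factorisation of Step~2 and the invocation of the exponential formula in Step~3; the rest is bookkeeping. A more elementary variant would bypass generating functions entirely and instead perform strong induction on $n$, extracting $a_n$ directly from the recursion above using the multinomial count of partitions of a given block-size type.
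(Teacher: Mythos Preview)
Your proof is correct and is essentially the standard textbook argument. Note, however, that the paper does not actually prove this statement: it is quoted as a known result from Stanley's \emph{Enumerative Combinatorics} (Chapter~3), with no proof given in the paper itself. Your argument via interval factorisation, multiplicativity of $\mu$ on products, and the exponential formula is precisely one of the classical derivations found in that reference, so there is nothing to compare against here beyond observing that you have correctly supplied the omitted proof.
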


\paragraph*{Fractured graphs}
\begin{defn}[Fractures]\label{def:fracGeneral}
    Let $H$ denote a graph. A \emph{fracture} of $H$ is a tuple \[\rho=(\rho_v)_{v\in
    V(H)}\,,\] where $\rho_v$ is a partition of the set of edges $E_H(v)$ of $H$ incident to $v$.
    \lipicsEnd
\end{defn}

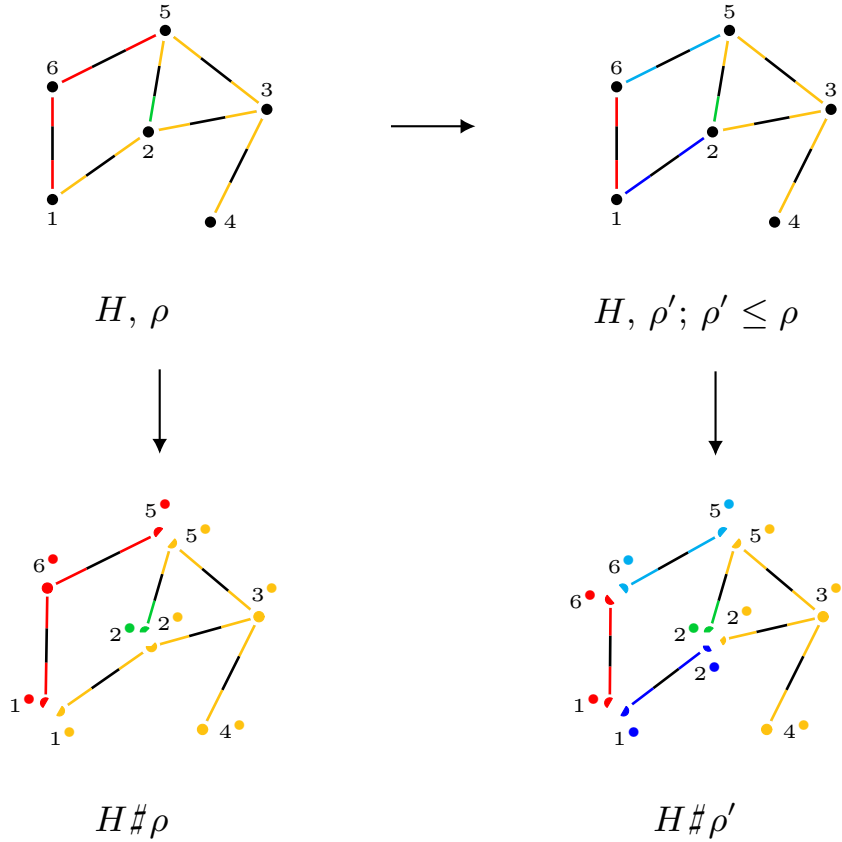
\begin{figure}[t]
    \centering
    \begin{tikzpicture}[scale=1.5,transform shape]
        \begin{scope}[local bounding box=scope1]
            \node[vertex,label={6}] (6) at (0, 0) {};
            \node[vertex,label=5] (5) at (1, .5) {};
            \node[vertex,label=below:1] (1) at (0, -1) {};
            \node[vertex,label=below:2] (2) at (.85, -.4) {};
            \node[vertex,label=above:3] (3) at (1.9, -.2) {};
            \node[vertex,label=right:4] (4) at (1.4, -1.2) {};

            \coledge(6|red)--(5|red);
            \coledge(5|yellow!50!orange)--(2|green!80!blue);
            \coledge(5|yellow!50!orange)--(3|yellow!50!orange);
            \coledge(2|yellow!50!orange)--(3|yellow!50!orange);
            \coledge(4|yellow!50!orange)--(3|yellow!50!orange);
            \coledge(2|yellow!50!orange)--(1|yellow!50!orange);
            \coledge(6|red)--(1|red);

            \node at (.7,-2) {\small $H$, $\rho$};
        \end{scope}
        \draw[-{Latex[length=2mm, width=2mm]},thick] (3,-.35) -- (3.75,-.35);
        \draw[-{Latex[length=2mm, width=2mm]},thick] ($(scope1.south)+(0,-.25)$) -- ++(0,-.75);
        \begin{scope}[shift={(5,0)}, local bounding box=scope2]
            \node[vertex,label={6}] (6) at (0, 0) {};
            \node[vertex,label=5] (5) at (1, .5) {};
            \node[vertex,label=below:1] (1) at (0, -1) {};
            \node[vertex,label=below:2] (2) at (.85, -.4) {};
            \node[vertex,label=above:3] (3) at (1.9, -.2) {};
            \node[vertex,label=right:4] (4) at (1.4, -1.2) {};

            \coledge(6|cyan)--(5|cyan);
            \coledge(5|yellow!50!orange)--(2|green!80!blue);
            \coledge(5|yellow!50!orange)--(3|yellow!50!orange);
            \coledge(2|yellow!50!orange)--(3|yellow!50!orange);
            \coledge(4|yellow!50!orange)--(3|yellow!50!orange);
            \coledge(2|blue)--(1|blue);
            \coledge(6|red)--(1|red);

            \node at (.7,-2) {\small$H$, $\rho'$; $\rho' \le \rho$};
        \end{scope}
        \draw[-{Latex[length=2mm, width=2mm]},thick] ($(scope2.south)+(0,-.25)$) -- ++(0,-.75);
        \begin{scope}[shift={(0,-4.5)}]
            \kowaen{0,0}{-50/310/red}{6};
            \kowaen{1,.5}{-130/50/yellow!50!orange,50/230/red}{5};
            \kowaen{0,-1}{-120/60/yellow!50!orange,60/240/red}{1};
            \kowaen{.85,-.4}{-180/45/yellow!50!orange,45/180/green!80!blue}{2};
            %\kowaen{1.9,-.2}{0/160/orange,160/220/magenta,220/360/cyan}{3};
            \kowaen{1.9,-.2}{0/360/yellow!50!orange}{3};
            \kowaen{1.4,-1.2}{0/360/yellow!50!orange}{4};

            \coledge(6-1|red)--(5-2|red);
            \coledge(5-1|yellow!50!orange)--(2-2|green!80!blue);
            \coledge(5-1|yellow!50!orange)--(3-1|yellow!50!orange);
            \coledge(2-1|yellow!50!orange)--(3-1|yellow!50!orange);
            \coledge(4-1|yellow!50!orange)--(3-1|yellow!50!orange);
            \coledge(2-1|yellow!50!orange)--(1-1|yellow!50!orange);
            \coledge(6-1|red)--(1-2|red);

            \node[label={$6^{\color{red}\bullet}$}] at (6-1) {};
            \node[label={above:$5^{\color{red}\bullet}$}] at (5-2) {};
            \node[label=right:{$~5^{\color{yellow!50!orange}\bullet}\!$}] at (5-2) {};
            \node[label=above:{$~3^{\color{yellow!50!orange}\bullet}\!$}] at (3-1) {};
            \node[label=right:{$4^{\color{yellow!50!orange}\bullet}\!$}] at (4-1) {};
            \node[label={$\quad~2^{\color{yellow!50!orange}\bullet}\!$},
            label distance=-10] at (2-1) {};
            \node[label=left:{$\!2^{\color{green!80!blue}\bullet}\!\!$}] at (2-2) {};
            \node[label=left:{$1^{\color{red}\bullet}$\!\!}] at (1-2) {};
            \node[label=below:{$1^{\color{yellow!50!orange}\bullet}$\!\!}] at (1-1) {};

            \node at (.7,-2) {\small $\fracture{H}{\rho}$};
        \end{scope}
        \begin{scope}[shift={(5,-4.5)}]
            \kowaen{0,0}{-50/130/cyan,130/310/red}{6};
            \kowaen{1,.5}{-130/50/yellow!50!orange,50/230/cyan}{5};
            \kowaen{0,-1}{-120/60/blue,60/240/red}{1};
            \kowaen{.85,-.4}{-60/45/yellow!50!orange,45/180/green!80!blue,180/300/blue}{2};
            %\kowaen{1.9,-.2}{0/160/orange,160/220/magenta,220/360/cyan}{3};
            \kowaen{1.9,-.2}{0/360/yellow!50!orange}{3};
            \kowaen{1.4,-1.2}{0/360/yellow!50!orange}{4};

            \coledge(6-1|cyan)--(5-2|cyan);
            \coledge(5-1|yellow!50!orange)--(2-2|green!80!blue);
            \coledge(5-1|yellow!50!orange)--(3-1|yellow!50!orange);
            \coledge(2-1|yellow!50!orange)--(3-1|yellow!50!orange);
            \coledge(4-1|yellow!50!orange)--(3-1|yellow!50!orange);
            \coledge(2-3|blue)--(1-1|blue);
            \coledge(6-2|red)--(1-2|red);

            \node[label={$6^{\color{cyan}\bullet}$}] at (6-1) {};
            \node[label=left:{$6^{\color{red}\bullet}\!$}] at (6-2) {};
            \node[label={above:$5^{\color{cyan}\bullet}$}] at (5-2) {};
            \node[label=right:{$~5^{\color{yellow!50!orange}\bullet}\!$}] at (5-2) {};
            \node[label=above:{$~3^{\color{yellow!50!orange}\bullet}\!$}] at (3-1) {};
            \node[label=right:{$4^{\color{yellow!50!orange}\bullet}\!$}] at (4-1) {};
            \node[label={$\quad~2^{\color{yellow!50!orange}\bullet}\!$},
            label distance=-10] at (2-1) {};
            \node[label=below:{$\!2^{\color{blue}\bullet}\!$}] at (2-3) {};
            \node[label=left:{$\!2^{\color{green!80!blue}\bullet}\!\!$}] at (2-2) {};
            \node[label=left:{$1^{\color{red}\bullet}$\!\!}] at (1-2) {};
            \node[label=below:{$1^{\color{blue}\bullet}$\!\!}] at (1-1) {};

            \node at (.7,-2) {\small $\fracture{H}{\rho'}$};
        \end{scope}
    \end{tikzpicture}
    \caption{Two fractures $\rho$ and $\rho' \le \rho$ of a graph $H$
        (depicted by the colours of the outgoing edges of the vertices)
        and the corresponding fractured graphs $\fracture{H}{\rho}$ and
        $\fracture{H}{\rho'}$.
    }\label{fig:fracture}
\end{figure}

Note that the set of all fractures of $H$, denoted by $\mathcal{L}(H)$,
is a lattice that is isomorphic to the (point-wise) product of the partition lattices of $E_H(v)$
for each $v\in V(H)$. In particular, we write $\sigma \leq \rho$ if, for each $v\in V(H)$,
the partition $\sigma_v$ refines the partition $\rho_v$.
Consider \cref{fig:fracture} for a visualization of a fracture and its refinement.

Note further that a fracture describes how to split (or \emph{fracture}) each vertex of a
given graph: for each vertex~$v$, create a vertex $v^B$ for each block $B$ in the partition
$\rho_v$; edges originally incident to $v$ are made incident to $v^B$ if and only if they
are contained in $B$. We call the resulting graph the \emph{fractured graph $\fracture{H}{\rho}$};
a formal definition is given in \cref{def:mrhoGeneral}, a visualization is given in
\cref{fig:fracture}.

\begin{defn}[Fractured Graph $\fracture{H}{\rho}$]\label{def:mrhoGeneral}
    Given a graph $H$, consider the matching $M_H$ containing one edge for each edge of $H$;
    formally,
    \[
        V(M_H) := \bigcup_{e = \{u,v\} \in E(H)} \{ u_e, v_e \}\quad\text{and}\quad
        E(M_H) := \{ \{ u_e, v_e \} \mid e = \{u,v\} \in E(H)\}.
    \]

    For a fracture $\rho$ of $H$, we define\footnote{The notation $\fracture{H}{\rho}$
    stems from the fact that the symbol ``$\fracture$'' is commonly used for medical
    fractures.} the graph $\fracture{H}{\rho}$ as the quotient
    graph of $M_H$ under the equivalence relation on $V(M_H)$ which identifies two
    vertices $v_e, w_f$ of $M_H$ if and only if $v=w$ and $e,f$ are in the same block $B$
    of the partition $\rho_v$ of $E_H(v)$. We write $v^B$ for the vertex of
    $\fracture{H}{\rho}$ given by the equivalence class of the vertices $v_e$ (for which
    $e \in B$) of $M_H$.
    \lipicsEnd
\end{defn}

The fractured graph $\fracture{H}{\rho}$ comes with a natural $H$-colouring. Indeed,
the homomorphism $M_H \to H$ which sends $v_e \in V(M_H)$ to $v \in V(H)$ descends to
$\fracture{H}{\rho}$ so that we always have a canonical diagram
$M_H \to \fracture{H}{\rho} \to H$ of graph homomorphisms.
%\marginpar{\tiny Johannes suggested a picture, the comment is still in the tex-file.}
%\jocomment{We could consider illustrating this diagram in a figure,
%	with the three graphs being on 3 different levels and mapping downward. Then we can
%	add below that figure an intuitive explanation of the two ways to think about
%	$\fracture{H}{\rho}$ (either fracturing the vertices of $H$ or mending the vertices of
%	$M_H$ according to $\rho$).}
For example, for any graph $H$, the fracture $\bot$, with $\bot_v := \{ \{e\}\mid e \in E_H(v)
\}$, induces the fractured graph $\fracture{H}{\bot} = M_H$;
the fracture $\top$, with $\top_v := \{ E_H(v) \}$ induces the fractured graph
$\fracture{H}{\top} = H$. The fractures $\bot, \top$ are the minimal and maximal elements
of the lattice $\mathcal{L}(H)$, respectively.

Given a graph property $\Phi$ and a graph $H$, we write $\mathcal{L}(\Phi,H)$ for the set
of all fractures $\rho$ of $H$ that satisfy $\Phi(\fracture{H}{\rho})=1$.\pagebreak

\section{The Colour-Preserving Homomorphism Basis}\label{sec:CountingWithColours}
It turns out that the analysis of the complexity of $\#\edgesubsprob(\Phi)$ is much easier
if a \emph{colourful} version of the problem is considered. For our hardness results, we
then show that the colourful version reduces to the uncoloured version. To this
end, recall that an $H$-colouring of a graph $G$ is a homomorphism from~$G$ to~$H$, and
that a graph $G$ is $H$\emph{-coloured} if $G$ is equipped with an $H$-colouring $c$.
Recall further the implicitly defined $\#E(H)$-edge-colouring of $G$.
In the colourful version of $\#\edgesubsprob(\Phi)$, denoted by $\#\coledgesubsprob(\Phi)$,
the task is to compute the cardinality of the set
\[\ncoledgesubs{\Phi}{H}{G} :=\{ A\subseteq E(G)~|~ \#A=\#E(H) \wedge c(A)=E(H) \wedge \Phi(G[A])=1 \} \,.\]
In particular, we write $\#\ncoledgesubs{\Phi}{H}{\star}$ for the function that maps an
$H$-coloured graph $G$ to the number $\#\ncoledgesubs{\Phi}{H}{G}$.
Note that $\#A=\#E(H) \wedge c(A)=E(H)$ implies that the $A$ contains each of the $\#E(H)$
colours precisely once. Further, note that $\Phi(G[A])=1$ if and only if $\Phi$ holds on
the (uncoloured) graph $G[A]$.

Each element $A \in \ncoledgesubs{\Phi}{H}{G}$ induces a fracture $\rho$ of $H$, where for
$v \in V(G)$ two edges $e,f \in E_H(v)$ are in the same block of $\rho_v$ if and only if
their (unique) preimages $\widehat e, \widehat f \in A$ under $c : A \to E(H)$ are
connected to the \emph{same} endpoint in $V_v$. From the construction, it immediately
follows that $G[A]$ and $\fracture{H}{\rho}$ are canonically isomorphic as $H$-coloured
graphs, that is, $G[A]\cong_H\fracture{H}{\rho}$.

Our goal is to express $\#\ncoledgesubs{\Phi}{H}{\star}$ as a linear combination of
(colour-preserving) homomorphism counts from graphs only depending on $\Phi$ and $H$. In
case $H$ is a torus, we establish an explicit criterion sufficient for the term
$\#\necphoms{\torus_\ell}{\torus_\ell}{\star}$ to survive with a non-zero coefficient in
this linear combination.
The existence of the linear combination is given by the following lemma:

\begin{lemma}\label{lem:col_lincomb}
    Let $H$ denote a graph. We have
    \[\#\ncoledgesubs{\Phi}{H}{\star} = \sum_{\sigma\in\mathcal{L}(\Phi,H) } ~~~\sum_{{\rho} \geq {\sigma}} \mu({\sigma},{\rho}) \cdot \#\necphoms{\fracture{H}{\rho}}{H}{\star} \,,\]
    where the relation $\leq$ and the M\"obius function $\mu$ are over the lattice of fractures $\mathcal{L}(H)$.
\end{lemma}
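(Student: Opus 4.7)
The plan is to prove the lemma via a two-step argument: first, partition the set $\ncoledgesubs{\Phi}{H}{G}$ according to the fracture of $H$ induced by each admissible edge subset, and second, invert the resulting identity over the fracture lattice $\mathcal{L}(H)$ via Möbius inversion.

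For the first step, recall from the discussion preceding the lemma that every $A\in\ncoledgesubs{\Phi}{H}{G}$ induces a fracture $\sigma$ of $H$ with $G[A]\cong_H\fracture{H}{\sigma}$; in particular $\Phi(\fracture{H}{\sigma})=\Phi(G[A])=1$, so $\sigma\in\mathcal{L}(\Phi,H)$. Writing $g(\sigma,G)$ for the number of edge subsets $A\subseteq E(G)$ with $\#A=\#E(H)$, $c(A)=E(H)$, and induced fracture \emph{exactly} $\sigma$, one obtains the partition identity
\[\#\ncoledgesubs{\Phi}{H}{G} = \sum_{\sigma\in\mathcal{L}(\Phi,H)} g(\sigma,G).\]

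The technical core is the zeta-type identity
\[\#\necphoms{\fracture{H}{\rho}}{H}{G} = \sum_{\sigma\geq\rho} g(\sigma,G).\]
I would prove this by setting up a bijection between colour-preserving homomorphisms $\varphi\colon\fracture{H}{\rho}\to G$ and pairs $(A,\sigma)$ with $\sigma\geq\rho$ and $A$ an edge subset inducing fracture exactly $\sigma$. Given $\varphi$, the image $A=\varphi(E(\fracture{H}{\rho}))$ meets every colour class of $E(H)$ exactly once, because $\fracture{H}{\rho}$ carries one edge per colour and $\varphi$ is colour-preserving (hence edge-injective); so $\#A=\#E(H)$ and $c(A)=E(H)$. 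Two edges $e,f\in E_H(v)$ lying in the same block $B$ of $\rho_v$ are incident to the common vertex $v^B$ of $\fracture{H}{\rho}$; their images share the endpoint $\varphi(v^B)\in c^{-1}(v)$, placing them in the same block of $\sigma_v$, so the induced fracture $\sigma$ of $A$ satisfies $\rho\leq\sigma$. Conversely, given $A$ with induced fracture $\sigma\geq\rho$, the canonical projection $\fracture{H}{\rho}\to\fracture{H}{\sigma}$ sending $v^B$ to $v^{B'}$ (where $B'\in\sigma_v$ is the unique block containing $B$) composed with the colour-preserving isomorphism $\fracture{H}{\sigma}\cong_H G[A]$ yields an admissible $\varphi$. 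Uniqueness of $\varphi$ given $A$ reduces to the fact that fractured graphs admit only the trivial colour-preserving automorphism, which follows because each vertex $v^B$ of $\fracture{H}{\sigma}$ is uniquely characterized as the endpoint incident to precisely the edges carrying the colours in $B$.

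Granted the zeta-identity, applying the Möbius inversion of \cref{thm:mobiusinv} to the fracture lattice yields
\[g(\sigma,G) = \sum_{\rho\geq\sigma} \mu(\sigma,\rho)\cdot \#\necphoms{\fracture{H}{\rho}}{H}{G},\]
and substituting into the partition identity of the first step produces exactly the claimed formula. I expect the main obstacle to lie in the third paragraph, namely the careful verification that a colour-preserving homomorphism from $\fracture{H}{\rho}$ is determined by its image subset; this rests on the rigidity of fractured graphs with their canonical $H$-colouring, which must be formulated precisely to avoid overcounting.
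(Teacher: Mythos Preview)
Your proposal is correct and follows essentially the same approach as the paper: partition the colourful edge subsets by their induced fracture, establish the zeta-type identity relating colour-preserving homomorphism counts to these partition counts, and invert via M\"obius inversion over $\mathcal{L}(H)$. The only cosmetic difference is that the paper names the intermediate quantity $g(\sigma,G)$ as $\#\necpembs{\fracture{H}{\sigma}}{H}{G}$ and phrases the bijection as a partition of $\necphoms{\fracture{H}{\rho}}{H}{G}$ into classes indexed by the fracture of the image, but the underlying argument---including the rigidity of fractured graphs as $H$-coloured graphs---is identical.
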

\begin{proof}
    Let $G$ denote an $H$-coloured graph.
    We first partition the elements $A$ of the set $\ncoledgesubs{\Phi}{H}{G}$ according to their
    induced fractures. Writing $\ncoledgesubs{\Phi}{H}{G}[{\sigma}]$ for the set of $A$
    inducing the fracture $\sigma \in \mathcal{L}(H)$, we obtain
    \[\#\ncoledgesubs{\Phi}{H}{G} = \sum_{\sigma\in\mathcal{L}(\Phi,H) } \#\ncoledgesubs{\Phi}{H}{G}[{\sigma}]\,, \]
    since $\#\ncoledgesubs{\Phi}{H}{G}[{\sigma}]=0$ for all $\sigma \notin \mathcal{L}(\Phi,H)$.
    From the fact that $G[A]$ is canonically isomorphic to $\fracture{H}{\sigma}$ as an
    $H$-coloured graph, for $\sigma$ associated to $A \in \ncoledgesubs{\Phi}{H}{G}$, it
    follows that
    \[\#\ncoledgesubs{\Phi}{H}{G}[{\sigma}] = \#\necpembs{\fracture{H}{\sigma}}{H}{G}\,.\]
    Note that we are using that graphs of the form $\fracture{H}{\sigma}$ can have no
    non-trivial automorphisms as $H$-coloured graphs (since all edges must be fixed).
    It remains to show that
    \begin{equation}\label{eq:colorful_mobius_new}
        \#\necpembs{\fracture{H}{\sigma}}{H}{G} = \sum_{{\rho} \geq {\sigma}} \mu({\sigma},{\rho}) \cdot \#\necphoms{\fracture{H}{\rho}}{H}{G}
    \end{equation}
    To this end, we establish the following zeta transformation, which should be considered
    as a colour-preserving version of the standard transformation in the uncoloured case
    (see e.g.\ \cite[Section 5.2.3]{Lovasz12}).
    \begin{claim}
        For every fracture $\sigma$ of $H$, we have
        \[\#\necphoms{\fracture{H}{\sigma}}{H}{G} = \sum_{{\rho} \geq {\sigma}} \#\necpembs{\fracture{H}{\rho}}{H}{G}   \]
    \end{claim}
    \begin{claimproof}
        %         Recall the vertex-labellings and the $H$-colouring of the graph
        %         $M(\cdot)$ (\cref{def:mrhoGeneral}).
        Every colour-preserving homomorphism $\varphi$ from $\fracture{H}{\sigma}$ to $G$,
        induces a fracture $\rho\geq \sigma$, that is, $\rho_{v}$ is a coarsening of
        $\sigma_{v}$ for every $v\in V(H)$.
        Indeed, recall that the vertices of $\fracture{H}{\sigma}$ over $v \in V(H)$
        correspond to the blocks $B$ of the partition $\sigma_v$ of the edges $E_H(v)$.
        Then the partition $\rho_v$ of $E_H(v)$ is obtained from $\sigma_v$ by joining
        those blocks $B,B'$ whose associated vertices in $\fracture{H}{\sigma}$ map to the
        same vertex of $G$ under $\varphi$. We have that the subgraph of $G$ given by the
        image of $\fracture{H}{\sigma}$ under $\varphi$ is canonically isomorphic to
        $\fracture{H}{\rho}$ as an $H$-coloured graph.

        Let us call two homomorphisms in $\necphoms{\fracture{H}{\sigma}}{H}{G}$
        equivalent if they induce the same fracture and write
        $\necphoms{\fracture{H}{\sigma}}{H}{G}[\rho]$ for the equivalence class of all
        homomorphisms inducing $\rho$.
        The claim then follows by partitioning the set
        $\necphoms{\fracture{H}{\sigma}}{H}{G}$ into those equivalence classes and
        observing that%\marginpar{\tiny Again: okay or elaborate?}
        \[ \#\necphoms{\fracture{H}{\sigma}}{H}{G}[\rho] =
        \#\necpembs{\fracture{H}{\rho}}{H}{G}\,.\]
    \end{claimproof}
    Equation~\eqref{eq:colorful_mobius_new} is now obtained by using M\"obius inversion
    (\cref{thm:mobiusinv}) on the zeta-transformation given by the previous claim. This
    concludes the proof.
\end{proof}

Let us now collect for the coefficient of the term
$\#\necphoms{\fracture{H}{\top}}{H}{G}$, where $\top$ is the maximum fracture of $H$ with
respect to the ordering $\leq$. In particular, each partition of $\top$ only consists of a
single block and thus $\fracture{H}{\top} \cong H$, where the isomorphism is given by the
$H$-colouring of $\fracture{H}{\top}$.
\begin{corollary}\label{cor:collect_coeffs}
    Let $\Phi$ denote a computable graph property and let $H$ denote a graph.
    There is a unique computable function $a_{\Phi,H}: \mathcal{L}(H) \rightarrow
    \mathbb{Z}$ such that
    \[\#\ncoledgesubs{\Phi}{H}{\star} = \sum_{\rho\in \mathcal{L}(H)} a_{\Phi,H}(\rho)\cdot \#\necphoms{\fracture{H}{\rho}}{H}{\star} \,,\]
    For $\rho=\top$ we have
    \[ a_{\Phi,H}(\top) = \sum_{\sigma \in \mathcal{L}(\Phi,H)} ~\prod_{v\in V(H)} (-1)^{|{\sigma}_{v}|-1} \cdot (|{\sigma}_{v}|-1)! \]
    Here, $|{\sigma}_{v}|$ denotes the number of blocks of partition ${\sigma}_{v}$.
\end{corollary}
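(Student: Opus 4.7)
The plan is to derive the corollary directly from Lemma~\ref{lem:col_lincomb} by swapping the order of summation, and then to specialize to $\rho = \top$ by exploiting the product structure of the lattice of fractures.

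First I would rewrite the double sum in Lemma~\ref{lem:col_lincomb} as
\[
\#\ncoledgesubs{\Phi}{H}{\star} = \sum_{\rho \in \mathcal{L}(H)} \Bigl( \sum_{\sigma \in \mathcal{L}(\Phi,H),\ \sigma \leq \rho} \mu(\sigma,\rho) \Bigr) \cdot \#\necphoms{\fracture{H}{\rho}}{H}{\star},
\]
which immediately yields a candidate for $a_{\Phi,H}(\rho)$ as the inner sum. For uniqueness, I would invoke the zeta identity $\#\necphoms{\fracture{H}{\sigma}}{H}{G} = \sum_{\rho \geq \sigma} \#\necpembs{\fracture{H}{\rho}}{H}{G}$ established in the proof of Lemma~\ref{lem:col_lincomb}: viewed as a linear relation between the functions $G \mapsto \#\necphoms{\fracture{H}{\rho}}{H}{G}$ and $G \mapsto \#\necpembs{\fracture{H}{\rho}}{H}{G}$ (with $\rho$ ranging over $\mathcal{L}(H)$), the transition matrix is upper triangular with ones on the diagonal, hence invertible. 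Since the colour-preserving embedding counts $\#\necpembs{\fracture{H}{\rho}}{H}{\star}$ are linearly independent as functions on $H$-coloured graphs (one can exhibit, for each $\rho$, an $H$-coloured graph on which only $\fracture{H}{\rho}$ embeds), the same holds for the homomorphism counts, giving uniqueness of $a_{\Phi,H}$.

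For the formula at $\rho = \top$, I would first observe that $\sigma \leq \top$ holds for every $\sigma \in \mathcal{L}(H)$, so
\[
a_{\Phi,H}(\top) = \sum_{\sigma \in \mathcal{L}(\Phi,H)} \mu(\sigma, \top).
\]
Next, the key structural input is that $\mathcal{L}(H)$ is canonically isomorphic to the product $\prod_{v \in V(H)} \Pi(E_H(v))$ of the partition lattices on the edges incident to each vertex. For a product of finite posets, the Möbius function factorizes as the product of the component Möbius functions, so
\[
\mu(\sigma, \top) = \prod_{v \in V(H)} \mu_v(\sigma_v, \top_v),
\]
where $\mu_v$ is the Möbius function of the partition lattice of $E_H(v)$. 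Applying Theorem~\ref{thm:mobiusexpl} to each factor gives $\mu_v(\sigma_v, \top_v) = (-1)^{|\sigma_v|-1} (|\sigma_v|-1)!$, which plugged into the previous display yields exactly the claimed formula. Computability of $a_{\Phi,H}$ follows because $\mathcal{L}(H)$ is finite, $\Phi$ is computable by assumption, and the fractured graphs $\fracture{H}{\sigma}$ and the Möbius function of the partition lattice are effectively computable.

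The only mildly delicate point is the factorization of $\mu$ over the product lattice combined with the restriction that one only sums over $\sigma \in \mathcal{L}(\Phi,H)$; however, this set is just an \emph{a priori} subset of $\mathcal{L}(H)$ and the factorization of $\mu$ holds pointwise on $\mathcal{L}(H)$, so the restriction poses no issue. I do not foresee a genuine obstacle beyond being careful that the zeta identity is indeed stated for every $\sigma \in \mathcal{L}(H)$ (not only for $\sigma \in \mathcal{L}(\Phi,H)$), which is exactly the form in which it appears in the proof of Lemma~\ref{lem:col_lincomb}.
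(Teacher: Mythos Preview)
Your proposal is correct and follows essentially the same route as the paper: collect coefficients from Lemma~\ref{lem:col_lincomb}, specialize to $\rho=\top$ to obtain $a_{\Phi,H}(\top)=\sum_{\sigma\in\mathcal{L}(\Phi,H)}\mu(\sigma,\top)$, and then factor the M\"obius function over the product of partition lattices using Theorem~\ref{thm:mobiusexpl}. The only difference is that you spell out a linear-independence argument for uniqueness, which the paper's proof leaves implicit; your reasoning there is fine and is in fact formalized later in the paper as Lemma~\ref{lem:fullrank}.
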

\begin{proof}
    The first claim follows immediately from \cref{lem:col_lincomb} by collecting
    coefficients; note that $\Phi$ and $\mu$ are computable, and that the image of $\mu$
    is a subset of $\mathbb{Z}$.
    For the second claim, we collect the coefficients of
    $\#\necphoms{\fracture{H}{\top}}{H}{\ast}$ in \cref{lem:col_lincomb} and obtain
    \[ a_{\Phi,H}(\top) = \sum_{\sigma \in \mathcal{L}(\Phi,H)} \mu({\sigma},\top) \,.\]
    Recall that $\mu$ is the M\"obius function of $\mathcal{L}(H)$ and that the latter is
    the product of the partition lattices of $N_H(v)$ for each $v\in V(H)$. Using that the
    M\"obius function is multiplicative with respect to the product (see for instance
    \cite[Proposition 3.8.2]{Stanley11}) and applying the explicit formula for the
    partition lattice (\cref{thm:mobiusexpl}), we obtain the second claim.
\end{proof}

In the remainder of the paper, given $\Phi$ and $H$, we refer to the function $a_{\Phi,H}$
from \cref{cor:collect_coeffs} as the \emph{\cfunction} of $\Phi$ and $H$.

\subsection{Complexity Monotonicity for Colour-Preserving Homomorphisms}

Our next goal is to prove that computing a finite linear combination of colour-preserving
homomorphism counts, as given by \cref{cor:collect_coeffs}, is precisely as hard as
computing its hardest term.
While the proof strategy follows the approach used in~\cite{CurticapeanDM17}
and~\cite{DorflerRSW19}, we need to adapt to the setting of colour-preserving
homomorphisms between fractured graphs.

We rely on the tensor product of $H$-coloured graphs in the following way: let $H$ denote a
fixed graph, and let $G$ and $F$ denote $H$-coloured graphs with colourings $c_G$ and $c_F$.
The \emph{colour-preserving tensor product}~$G\times_H F$ has vertices $\{(x,y)\in
V(G)\times V(F)~|~ c_G(x) = c_F(y)\}$, and two vertices $(x,y)$ and $(x',y')$ are made
adjacent in $G\times_H F$ if (and only if) $\{x,x'\}\in E(G)$ and $\{y,y'\}\in E(F)$.
Observe that the graph $G\times_H F$ is $H$-coloured as well by the colouring $(x,y)
\mapsto c_G(x) = c_F(y)$.

\begin{lemma}\label{lem:tensorlinear}
    Let $H$ denote a graph, and let $F$, $G_1$ and $G_2$ denote $H$-coloured graphs. We have
    \[\#\necphoms{F}{H}{G_1 \times_H G_2} = \#\necphoms{F}{H}{G_1} \cdot \#\necphoms{F}{H}{G_2} \,.\]
\end{lemma}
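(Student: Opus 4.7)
The plan is to exhibit an explicit bijection
\[
\Psi : \necphoms{F}{H}{G_1 \times_H G_2} \longrightarrow \necphoms{F}{H}{G_1} \times \necphoms{F}{H}{G_2},
\]
from which the identity follows immediately by taking cardinalities. This is the colour-preserving analogue of the standard tensor-product identity for homomorphism counts, so the only real content is checking that the colour constraint behaves nicely with the projections/pairing.

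First I would set up the forward map. Given $\varphi \in \necphoms{F}{H}{G_1 \times_H G_2}$, let $\pi_i : V(G_1 \times_H G_2) \to V(G_i)$ denote the two coordinate projections and put $\varphi_i := \pi_i \circ \varphi$. By the very definition of $G_1 \times_H G_2$, each $\pi_i$ is an edge-preserving, colour-preserving map from $G_1 \times_H G_2$ to $G_i$ (for a vertex $(x,y)$ we have $c_{G_i}(\pi_i(x,y)) = c_{G_1}(x) = c_{G_2}(y)$, and edges of $G_1 \times_H G_2$ project to edges of $G_i$). Composing a colour-preserving homomorphism with a colour-preserving homomorphism yields a colour-preserving homomorphism, so $\varphi_i \in \necphoms{F}{H}{G_i}$. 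Set $\Psi(\varphi) := (\varphi_1, \varphi_2)$.

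Next I would construct the inverse. Given a pair $(\varphi_1, \varphi_2) \in \necphoms{F}{H}{G_1} \times \necphoms{F}{H}{G_2}$, define $\varphi : V(F) \to V(G_1) \times V(G_2)$ by $\varphi(v) := (\varphi_1(v), \varphi_2(v))$. Three checks are needed, all immediate:
(i) the image lies in $V(G_1 \times_H G_2)$, since $c_{G_1}(\varphi_1(v)) = c_F(v) = c_{G_2}(\varphi_2(v))$ for every $v \in V(F)$;
(ii) $\varphi$ is edge-preserving, because for every $\{u,v\} \in E(F)$ we have $\{\varphi_i(u),\varphi_i(v)\} \in E(G_i)$ for $i=1,2$, which is exactly the condition placed on edges of $G_1 \times_H G_2$;
(iii) $\varphi$ is colour-preserving, since the colouring of $G_1 \times_H G_2$ at $(x,y)$ is $c_{G_1}(x) = c_F(v)$ when $x = \varphi_1(v)$.

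Finally I would verify that the two constructions are mutually inverse: starting from $\varphi$ and projecting then pairing obviously returns $\varphi$, and starting from $(\varphi_1,\varphi_2)$ and pairing then projecting returns $(\varphi_1,\varphi_2)$. There is no real obstacle here; the only subtlety worth flagging in writing is that the tensor product $G_1 \times_H G_2$ is only well-defined as an $H$-coloured graph because the vertex set is restricted to pairs with matching colour, which is precisely what makes the forward projections colour-preserving and the backward pairing land in the vertex set.
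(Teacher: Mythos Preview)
Your proposal is correct and takes essentially the same approach as the paper: the paper's proof simply declares the map $b(\varphi_1,\varphi_2)(u) := (\varphi_1(u),\varphi_2(u))$ to be the canonical bijection without further comment, and your write-up fills in exactly the routine verifications (well-definedness, edge-preservation, colour-preservation, and mutual invertibility) that justify this.
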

\begin{proof}
    The function
    \begin{align*}
        &b: \necphoms{F}{H}{G_1} \times \necphoms{F}{H}{G_2}
        \to \necphoms{F}{H}{G_1 \times_H G_2},\\
        &b(\varphi,\psi)(u):= (\varphi(u),\psi(u)) \text{ for } u\in V(F)
    \end{align*}
    %$b: \necphoms{F}{H}{G_1} \times \necphoms{F}{H}{G_2} ~~\rightarrow~~ \necphoms{F}{H}{G_1 \times_H G_2}$ defined by $b(\varphi,\psi)(u):= (\varphi(u),\psi(u))$ for all $u\in V(F)$
    is the canonical bijection.
\end{proof}

\noindent The reduction for isolating terms with non-zero coefficient requires to solve a system of
linear equations. For the definition of the corresponding matrix, we fix a linear
extension $\preccurlyeq$ of the order $\leq$ of the $H$-fractures.
Recall that $\leq$ is also the order of the product of the partition lattices of the set
$E(v)$ for all $v\in V(H)$. In particular, $\sigma \leq \rho$ if and only if $\sigma_v$
refines $\rho_v$ for all $v\in V(H)$.
As a consequence, we observe that $\rho \succ \sigma$, that is, $\neg (\rho \preccurlyeq
\sigma)$, implies the existence of a vertex $v\in V(H)$ such that $\rho_v$ does not refine
$\sigma_v$.
Now let $\mathcal{M}_H$ denote the matrix whose columns and rows are associated with the
set of all $H$-fractures, ordered by $\preccurlyeq$, and whose entries are given by
\[ \mathcal{M}_H[\rho,\sigma] := \#\ecphoms{\fracture{H}{\rho}}{\fracture{H}{\sigma}} \,.\]

\begin{lemma}\label{lem:fullrank}
    For each $H$, the matrix $\mathcal{M}_H$ is upper triangular with entries $1$ on the diagonal.
\end{lemma}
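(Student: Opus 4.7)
The matrix has two properties to verify: every diagonal entry equals $1$, and every entry strictly below the diagonal (in the order $\preccurlyeq$) is $0$. Both statements are consequences of the same structural analysis of colour-preserving homomorphisms between fractured graphs, so I would prove a single lemma and then specialise it.

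First I would fix an arbitrary $\varphi \in \ecphoms{\fracture{H}{\rho}}{\fracture{H}{\sigma}}$ and track its action on vertices. Since each vertex of the form $v^B$ in $\fracture{H}{\rho}$ carries the $H$-colour $v$, and the only vertices of $\fracture{H}{\sigma}$ with colour $v$ are those of the form $v^{B'}$ for $B' \in \sigma_v$, colour preservation forces $\varphi$ to restrict, for every $v \in V(H)$, to a function $f_v \colon \rho_v \to \sigma_v$ determined by $\varphi(v^B) = v^{f_v(B)}$.

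Next I would use edge preservation to pin down $f_v$. For every edge $e=\{v,w\} \in E(H)$, write $B^v_e$ for the block of $\rho_v$ containing $e$ and similarly $B^w_e$. By construction of $\fracture{H}{\rho}$, the edge of $H$ corresponding to $e$ lifts to the edge $\{v^{B^v_e}, w^{B^w_e}\}$, and because $H$ is simple this is the \emph{unique} edge of $\fracture{H}{\rho}$ joining a fragment over $v$ with a fragment over $w$. Its image under $\varphi$ is $\{v^{f_v(B^v_e)}, w^{f_w(B^w_e)}\}$ and must again be an edge of $\fracture{H}{\sigma}$; by the same uniqueness in $\fracture{H}{\sigma}$, this forces $f_v(B^v_e)$ to be the block of $\sigma_v$ containing $e$ (and symmetrically for $w$). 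Since each block $B \in \rho_v$ is non-empty, picking $e \in B$ shows that $f_v(B)$ must be a block of $\sigma_v$ containing every edge of $B$; equivalently, $\rho_v$ refines $\sigma_v$ for every $v$, i.e.\ $\rho \leq \sigma$.

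The two claims then follow immediately. For the off-diagonal part, if $\rho \succ \sigma$ in the linear extension, then by the remark preceding the lemma some $\rho_v$ does not refine $\sigma_v$, contradicting the conclusion above, hence $\ecphoms{\fracture{H}{\rho}}{\fracture{H}{\sigma}} = \emptyset$ and $\mathcal{M}_H[\rho,\sigma] = 0$. For the diagonal part, specialising to $\sigma = \rho$ and using that $e \in B$ forces $f_v(B) = B$ shows $f_v = \mathrm{id}$ for every $v$, so $\varphi = \mathrm{id}_{\fracture{H}{\rho}}$ is the only colour-preserving endomorphism and $\mathcal{M}_H[\rho,\rho] = 1$. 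The only subtle point is the uniqueness of the edge between a $v$-fragment and a $w$-fragment, which relies on $H$ being simple; once this is made explicit, the argument is essentially bookkeeping.
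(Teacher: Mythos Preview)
Your proof is correct and follows essentially the same approach as the paper: both arguments use colour preservation to force $\varphi$ to act ``over each vertex $v$'' and then use that the $H$-colouring induces a bijection on edges (your uniqueness-of-the-edge observation) to conclude that $\rho_v$ must refine $\sigma_v$. The only cosmetic difference is that you package this as a single structural lemma (existence of $\varphi$ implies $\rho \leq \sigma$, with $\varphi$ uniquely determined) and then specialise, whereas the paper treats the diagonal and sub-diagonal cases separately; the underlying mechanism is identical.
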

\begin{proof}
    Let us first consider the diagonal. We claim that
    $\#\ecphoms{\fracture{H}{\rho}}{\fracture{H}{\rho}}=1$. Due to the trivial (identity)
    homomorphism we have $\#\ecphoms{\fracture{H}{\rho}}{\fracture{H}{\rho}} \geq 1$. On
    the other hand, the canonical colouring $\fracture{H}{\rho} \to H$ induces a bijection
    from the edges of $\fracture{H}{\rho}$ to the edges of~$H$ that preserves the
    colouring. Since a colour-preserving homomorphism $\fracture{H}{\rho} \to
    \fracture{H}{\rho}$ must commute with this map, it must act as the identity on all
    edges of $\fracture{H}{\rho}$ and is thus equal to the identity.

    It remains to prove that $\mathcal{M}_H[\rho,\sigma]=0$ for every $\rho \succ \sigma$.
    Recall that the latter implies the existence of a vertex $v\in V(H)$ such that
    $\rho_{v}$ does not refine $\sigma_{v}$, that is, there is a block~$B$ of~$\rho_{v}$
    which is not a subset of any block of $\sigma_{v}$.
    Thus, there are edges $e, f \in B \subseteq E_H(v)$ such that $e,f$ are in different
    blocks of $\sigma_v$. Identifying $E(\fracture{H}{\sigma}) = E(\fracture{H}{\rho}) =
    E(H)$ using the colouring, we see that $e,f$ are adjacent to the \emph{same} vertex in
    $\fracture{H}{\rho}$ (corresponding to the block $B$), but to \emph{different}
    vertices in $\fracture{H}{\sigma}$. This implies that there cannot be a colour
    preserving homomorphism $\varphi: \fracture{H}{\rho} \to \fracture{H}{\sigma}$ since
    $e,f$ being incident at $v^B$ in $\fracture{H}{\rho}$ would imply that
    $e=\varphi(e),f=\varphi(f)$ must be incident at $\varphi(v^B)$ in
    $\fracture{H}{\sigma}$.
\end{proof}

\noindent We are now able to prove a version of the Complexity Monotonicity principle which is
sufficient for the purposes in this work. In what follows, given a graph property $\Phi$,
we write $\#\ncoledgesubs{\Phi}{\star}{\star}$ for the function that expects as input a
graph $H$ and an $H$-coloured graph $G$, and outputs $\#\ncoledgesubs{\Phi}{H}{G}$.

\begin{lemma}\label{lem:newcomplexitymonotonicity}
    Let $\Phi$ denote a computable graph property. There exists a deterministic algorithm~$\mathbb{A}$ which has oracle access to the function
    $\#\ncoledgesubs{\Phi}{\star}{\star}$,
    and computes, on input a graph $H$ and an $H$-coloured graph $G$, the numbers
    $\#\necphoms{\fracture{H}{\rho}}{H}{G}$ for every $H$-fracture~$\rho$ satisfying that
    $a_{\Phi,H}(\rho)\neq 0$, where $a_{\Phi,H}$ is the \cfunction{} of $\Phi$ and $H$.

    Furthermore, there is a computable function $f$ such that $\mathbb{A}$ runs in time
    $f(|H|)\cdot |G|^{O(1)}$ and every posed oracle query $(\hat{H},\hat{G})$ satisfies
    $\hat{H}=H$ and $|\hat{G}|\leq f(|H|)\cdot |G|$.
\end{lemma}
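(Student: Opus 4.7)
The plan is to adapt the standard Complexity Monotonicity argument of Curticapean--Dell--Marx to the colour-preserving setting, using the linear combination from \cref{cor:collect_coeffs}, the multiplicativity of \cref{lem:tensorlinear}, and the triangularity of \cref{lem:fullrank} to set up and then invert a suitable system of linear equations via back-substitution. The overall strategy is to probe the oracle on tensor products $G \times_H \fracture{H}{\sigma}$ for a carefully chosen family of fractures $\sigma$, each such query giving a new linear equation in the unknowns $\#\necphoms{\fracture{H}{\rho}}{H}{G}$.

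Concretely, on input $(H, G)$, the algorithm $\mathbb{A}$ would first enumerate the set of all fractures $\mathcal{L}(H)$ (finite, of size depending only on $|H|$) and compute each coefficient $a_{\Phi,H}(\rho)$ via the explicit formula of \cref{cor:collect_coeffs}; let $S := \{\rho \in \mathcal{L}(H) : a_{\Phi,H}(\rho) \neq 0\}$. For every $\sigma \in S$, the algorithm would construct the colour-preserving tensor product $G \times_H \fracture{H}{\sigma}$ -- which inherits a canonical $H$-colouring from that of $\fracture{H}{\sigma}$ noted after \cref{def:mrhoGeneral} -- and query the oracle on input $(H, G \times_H \fracture{H}{\sigma})$. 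Combining \cref{cor:collect_coeffs} with \cref{lem:tensorlinear}, each oracle answer decomposes as
\[
    \#\ncoledgesubs{\Phi}{H}{G \times_H \fracture{H}{\sigma}} = \sum_{\rho \in \mathcal{L}(H)} a_{\Phi,H}(\rho) \cdot \mathcal{M}_H[\rho,\sigma] \cdot \#\necphoms{\fracture{H}{\rho}}{H}{G},
\]
where all $\rho \notin S$ contributions already vanish on account of $a_{\Phi,H}(\rho) = 0$.

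Setting $x_\rho := \#\necphoms{\fracture{H}{\rho}}{H}{G}$ for $\rho \in S$, these $|S|$ equations form a linear system whose coefficient matrix is the principal submatrix $A := (a_{\Phi,H}(\rho) \cdot \mathcal{M}_H[\rho,\sigma])_{\sigma, \rho \in S}$. Ordering $S$ by the linear extension $\preccurlyeq$ fixed before \cref{lem:fullrank}, the vanishing $\mathcal{M}_H[\rho,\sigma] = 0$ for $\rho \succ \sigma$ is inherited by the restriction, so $A$ is triangular; its diagonal entries equal $a_{\Phi,H}(\rho) \cdot 1 = a_{\Phi,H}(\rho) \neq 0$, so $A$ is invertible over $\mathbb{Q}$. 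Back-substitution then recovers every value $x_\rho$ for $\rho \in S$, which is precisely the output required by the lemma.

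For the resource bounds, $|S|$, the time to compute each $a_{\Phi,H}(\rho)$, and $|V(\fracture{H}{\sigma})|$ are all bounded by a computable function $f(|H|)$; hence $\mathbb{A}$ issues at most $f(|H|)$ oracle queries of the form $(H, G \times_H \fracture{H}{\sigma})$, each with first coordinate unchanged and second coordinate of size at most $|V(G)| \cdot |V(\fracture{H}{\sigma})| \leq f(|H|) \cdot |G|$, and the back-substitution is polynomial in $|G|$ and $f(|H|)$ since the intermediate homomorphism counts have bit-length polynomial in $f(|H|)$ and $\log |G|$. The only genuine care required is in checking that $G \times_H \fracture{H}{\sigma}$ is a bona fide $H$-coloured graph on which \cref{lem:tensorlinear} can be invoked -- but this follows immediately from the canonical $H$-colouring on $\fracture{H}{\sigma}$ -- so I expect no substantive obstacle beyond verifying that the lower-triangular structure of \cref{lem:fullrank} survives the passage to the principal submatrix indexed by $S$.
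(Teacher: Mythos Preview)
Your proposal is correct and follows essentially the same approach as the paper: query the oracle on tensor products $G \times_H \fracture{H}{\sigma}$, expand via \cref{cor:collect_coeffs} and \cref{lem:tensorlinear}, and invert the resulting triangular system using \cref{lem:fullrank}. The only difference is cosmetic: the paper queries for \emph{all} $\sigma \in \mathcal{L}(H)$, solves for the products $a_{\Phi,H}(\rho) \cdot \#\necphoms{\fracture{H}{\rho}}{H}{G}$, and then divides by $a_{\Phi,H}(\rho)$ when nonzero, whereas you restrict the queries to $\sigma \in S$ up front and work with the principal submatrix---a harmless optimisation since both query counts are bounded by a function of $|H|$.
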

\begin{proof}
    Given $H$ and $G$, we can obtain the numbers
    $\#\ncoledgesubs{\Phi}{H}{G \times_H (\fracture{H}{\sigma})}$ for all
    $H$-fractures~$\sigma$ via access to the oracle. By
    \cref{cor:collect_coeffs,lem:tensorlinear}, we have
    \begin{align*}
        &\#\ncoledgesubs{\Phi}{H}{G \times_H (\fracture{H}{\sigma})}\\
        &\qquad = \sum_{\rho\in \mathcal{L}(H)} a_{\Phi,H}(\rho)\cdot \#\necphoms{\fracture{H}{\rho}}{H}{G \times_H (\fracture{H}{\sigma})}\\
        &\qquad = \sum_{\rho\in \mathcal{L}(H)} a_{\Phi,H}(\rho)\cdot \#\necphoms{\fracture{H}{\rho}}{H}{G} \cdot \#\necphoms{\fracture{H}{\rho}}{H}{\fracture{H}{\sigma}}
    \end{align*}
    \noindent Observe that the latter yields a system of linear equations for the numbers
    \[a_{\Phi,H}(\rho)\cdot \#\necphoms{\fracture{H}{\rho}}{H}{G}\] with matrix
    $\mathcal{M}_H$ which has full rank according to \cref{lem:fullrank}. Consequently
    $\mathbb{A}$ can compute the number
    $a_{\Phi,H}(\rho)\cdot \#\necphoms{\fracture{H}{\rho}}{H}{G}$
    for each $H$-fracture $\rho$. Therefore, $\#\necphoms{\fracture{H}{\rho}}{H}{G}$
    can be computed whenever $a_{\Phi,H}(\rho)\neq 0$.

    Now observe that $a_{\Phi,H}$, which is computable, only depends on $\Phi$, which is
    fixed, and $H$. Furthermore $\mathcal{L}(H)$ and all $\fracture{H}{\rho}$ only depend
    on $H$. Thus the computation of
    $\#\necphoms{\fracture{H}{\rho}}{H}{\fracture{H}{\sigma}}$, takes time only depending
    on $H$ as well.
    Consequently, the system of linear equations can be solved in time $f'(|H|)\cdot
    |G|^{O(1)}$ for some computable function $f'$. Furthermore, the size of $G\times_H
    (\fracture{H}{\sigma})$ is bounded by $|\fracture{H}{\sigma}|\cdot |G|$. Setting
    $f(|H|):= \max\{f'(|H|),\max_{\sigma \in \mathcal{L}(H)} |\fracture{H}{\sigma}| \}$
    concludes the proof; since each fractured graph $\fracture{H}{\sigma}$ has only~$\#E(H)$ many edges.
\end{proof}

\subsection{Intractability of Counting Homomorphisms from Tori and Expanders}
The final step of this section is to establish $\#\W{1}$-hardness of the (uncoloured)
problem $\#\edgesubsprob(\Phi)$ whenever $a_{\Phi,\torus_\ell}(\top)\neq 0$ for infinitely
many $\ell$. Essentially, we rely on the fact that tori have high treewidth and that
the problem of counting (colour-preserving) homomorphisms from high-treewidth graphs is
hard. We can proceed similarly in case of expanders, and due to the fact that expanders
have high treewidth and are sparse (see \cref{fac:niceExpanders}), we even obtain an
almost tight conditional lower bound.

In both cases, we use Complexity Monotonicity, which yields hardness of the
(edge-) colourful version of $\#\edgesubsprob(\Phi)$. Consequently, we need to show that
the colourful version reduces to the uncoloured version.
This can be achieved by a standard inclusion-exclusion argument:

\begin{lemma}\label{lem:nocolours}
    Let $\Phi$ denote a computable graph property. There exists a deterministic algorithm
    $\mathbb{A}$, equipped with oracle access to the function
    \[ (k,\hat{G})\mapsto \#\edgesubs{\Phi,k}{\hat{G}}\,, \]
    which expects as input a graph $H$ and an $H$-coloured graph $G$, and computes in time $2^{|E(H)|}\cdot |G|^{O(1)}$ the cardinality
    $\#\ncoledgesubs{\Phi}{H}{G}$. Furthermore, every
    oracle query $(k,\hat{G})$ posed by $\mathbb{A}$ satisfies $k=|E(H)|$ and
    $|\hat{G}|\leq|G|$.
\end{lemma}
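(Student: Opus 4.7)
The plan is a direct inclusion–exclusion over the edge-colour classes induced by the $H$-colouring $c:V(G)\to V(H)$. Setting $k:=|E(H)|$, recall that $c$ induces an edge-colouring $c_E:E(G)\to E(H)$ via $c_E(\{u,v\}):=\{c(u),c(v)\}$. For each $S\subseteq E(H)$, let $G_S$ denote the subgraph of $G$ with vertex set $V(G)$ and edge set $\{e\in E(G)\mid c_E(e)\in S\}$. Clearly $|G_S|\le|G|$ and $G_S$ can be constructed in polynomial time from $G$ and $c$.

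The key step is to observe that for each $T\subseteq E(H)$, the number $g(T)$ of edge sets $A\subseteq E(G)$ with $|A|=k$, $\Phi(G[A])=1$, and $c_E(A)=T$ is related to the oracle values by
\[
f(S) \;:=\; \sum_{T\subseteq S} g(T) \;=\; \#\edgesubs{\Phi,k}{G_S},
\]
since on the right-hand side we are counting precisely the $k$-edge subsets of $G_S$ on which $\Phi$ holds, and these are exactly the $A\subseteq E(G)$ with $c_E(A)\subseteq S$, $|A|=k$, $\Phi(G[A])=1$. (Here it is crucial that $\Phi$ is evaluated on the uncoloured graph $G[A]$, so that restricting the ambient graph to $G_S$ does not change the truth value of $\Phi$.) Because $k=|E(H)|$, the condition $c_E(A)=E(H)$ is equivalent to the colourfulness condition $c(A)=E(H)$ appearing in the definition of $\#\ncoledgesubs{\Phi}{H}{G}$, so the desired quantity is precisely $g(E(H))$.

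Möbius inversion on the Boolean lattice of subsets of $E(H)$ then yields
\[
\#\ncoledgesubs{\Phi}{H}{G} \;=\; g(E(H)) \;=\; \sum_{S\subseteq E(H)} (-1)^{|E(H)|-|S|}\,\#\edgesubs{\Phi,k}{G_S}.
\]
The algorithm $\mathbb{A}$ thus enumerates the $2^{|E(H)|}$ subsets $S\subseteq E(H)$, constructs $G_S$ for each, queries the oracle on input $(k,G_S)$, and returns the above signed sum. Each oracle query satisfies $k=|E(H)|$ and $|\hat G|=|G_S|\le|G|$, and the total running time is $2^{|E(H)|}\cdot|G|^{O(1)}$, matching the claim.

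I do not foresee a real obstacle here: the argument is a textbook inclusion–exclusion. The only subtlety is the justification of the identity $f(S)=\#\edgesubs{\Phi,k}{G_S}$, which relies on the property that $\Phi$ is a graph property (and hence invariant under the colouring) together with the requirement $|A|=k=|E(H)|$ ensuring that ``each colour used'' coincides with ``each colour used exactly once''.
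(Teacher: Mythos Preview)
Your proof is correct and essentially identical to the paper's: both use inclusion--exclusion over subsets of $E(H)$, with your $G_S$ corresponding to the paper's $G\setminus J$ for $J=E(H)\setminus S$, yielding the same signed sum $\sum_{J\subseteq E(H)}(-1)^{|J|}\#\edgesubs{\Phi,k}{G\setminus J}$. The only cosmetic difference is that you phrase it as M\"obius inversion on the Boolean lattice while the paper writes out the inclusion--exclusion directly.
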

\begin{proof}
    Given $H$ and an $H$-coloured graph $G$, we write $c:E(G)\rightarrow E(H)$ for the
    induced edge-colouring of $G$. Given a set of edge-colours $J\subseteq E(H)$, we write
    $G\setminus J$ for the graph obtained from $G$ by deleting all edges $e$ with $c(e)\in
    J$. Now recall that
    \[\edgesubs{\Phi,|E(H)|}{G} =\{ A\subseteq E(G)~|~ \#A=|E(H)| \wedge \Phi(G[A])=1 \} \,,\text{ and}\]
    \[\ncoledgesubs{\Phi}{H}{G} =\{ A\subseteq E(G)~|~ \#A=|E(H)| \wedge c(A)=E(H) \wedge \Phi(G[A])=1 \} \,.\]
    Next set $k:=|E(H)|$, then we have
    \begin{align*}
        &\#\ncoledgesubs{\Phi}{H}{G}\\
         &\quad= \#\edgesubs{\Phi,k}{G}\\
          &\qquad- \#\left(\bigcup_{e\in E(H)} \{A\in \edgesubs{\Phi,k}{G}~|~e \notin c(A) \}\right)\\
        &\quad=\#\edgesubs{\Phi,k}{G}\\
          &\qquad- \sum_{\emptyset \neq J \subseteq E(H)}  (-1)^{|J|+1} \cdot \#\left(\bigcap_{e\in J} \{A\in \edgesubs{\Phi,k}{G}~|~e \notin c(A) \} \right)\\
        &\quad=\#\edgesubs{\Phi,k}{G}\\
          &\qquad- \sum_{\emptyset \neq J \subseteq E(H)}  (-1)^{|J|+1} \cdot \#\{A\in \edgesubs{\Phi,k}{G}~|~\forall e\in J: e \notin c(A) \}\\
        &\quad=\#\edgesubs{\Phi,k}{G}\\
          &\qquad- \sum_{\emptyset \neq J \subseteq E(H)}  (-1)^{|J|+1} \cdot \#\edgesubs{\Phi,k}{G\setminus J}\\
          &\quad=\sum_{J \subseteq E(H)}  (-1)^{|J|} \cdot \#\edgesubs{\Phi,k}{G\setminus J}
    \end{align*}

    \noindent Note that the second equation is due to the inclusion-exclusion principle. We conclude
    that the desired number $\#\ncoledgesubs{\Phi}{H}{G}$ can be computed using
    $2^{|E(H)|}$ many oracle calls of the form $\#\edgesubs{\Phi,|E(H)|}{G\setminus J}$.
    The claim of the lemma follows since $|G\setminus J|\leq |G|$.
\end{proof}

For the formal statement of this section's main lemma, we define
$\mathcal{H}[\Phi,\torus]$ as the set of all $\torus_\ell$ such that
$a_{\Phi,\torus_\ell}(\top)\neq 0$. Furthermore, given a family
$\mathcal{G}=\{G_1,G_2,\dots\}$ of $(n_i,d,c)$-expanders, we write
$\mathcal{H}[\Phi,\mathcal{G}]$ for the set of all $G_i$ such that $a_{\Phi,G_i}(\top)\neq
0$

\begin{lemma}\label{lem:hardness_basis}
    Let $\Phi$ denote a computable graph property,
    fix a rational $c$ and an integer $d$, and
    let $\mathcal{G}=\{G_1,G_2,\dots\}$ denote a family of $(n_i,d,c)$-expanders.
    If at least one of $\mathcal{H}[\Phi,\torus]$ and $\mathcal{H}[\Phi,\mathcal{G}]$ is
    infinite, then $\#\edgesubsprob(\Phi)$ is $\#\W{1}$-hard. Moreover, if
    $\mathcal{H}[\Phi,\mathcal{G}]$ is infinite, then $\#\edgesubsprob(\Phi)$ cannot be
    solved in time
    $f(k)\cdot |G|^{o(k/\log k)}$ for any function $k$, unless the ETH fails.
\end{lemma}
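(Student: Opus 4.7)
The plan is to chain together three reductions: from counting uncoloured homomorphisms to counting colour-preserving homomorphisms, from the latter to the colourful edge-subgraph counting problem via Complexity Monotonicity (\cref{lem:newcomplexitymonotonicity}), and finally from the colourful to the uncoloured edge-subgraph counting problem via inclusion--exclusion (\cref{lem:nocolours}). Hardness is then pulled back from the Dalmau--Jonsson dichotomy for $\#\homsprob(\mathcal{H})$.

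First, I would establish the following reduction from uncoloured to colour-preserving homomorphism counting: given a graph $H$ and an arbitrary graph $G$, equip the tensor product $G \times H$ with the canonical $H$-colouring given by projection onto the second coordinate. A colour-preserving homomorphism $H \to G \times H$ is forced to be the identity in the second coordinate, so
\[\#\homs{H}{G} = \#\necphoms{H}{H}{G \times H}\,.\]
Now for any $H \in \mathcal{H}[\Phi,\torus] \cup \mathcal{H}[\Phi,\mathcal{G}]$, the coefficient $a_{\Phi,H}(\top)$ is non-zero by definition and $\fracture{H}{\top} \cong H$ as an $H$-coloured graph. Hence, \cref{lem:newcomplexitymonotonicity} provides a parameterized Turing reduction that, on input $(H, G \times H)$, computes $\#\necphoms{H}{H}{G \times H}$ using oracle access to $\#\ncoledgesubs{\Phi}{\star}{\star}$ with queries of the form $(H,\hat G)$ for $|\hat G|\leq f(|H|)\cdot |G \times H|$. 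Composing with \cref{lem:nocolours} then gives a parameterized Turing reduction from $\#\homsprob(\mathcal{H}[\Phi,\torus] \cup \mathcal{H}[\Phi,\mathcal{G}])$ to $\#\edgesubsprob(\Phi)$, where every oracle query is of the form $(|E(H)|, \tilde G)$ with $|\tilde G|\leq f(|H|)\cdot |G|$.

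For the $\#\W{1}$-hardness, I would invoke the result of Dalmau and Jonsson~\cite{DalmauJ04}: $\#\homsprob(\mathcal{H})$ is $\#\W{1}$-hard whenever $\mathcal{H}$ has unbounded treewidth. The toroidal grid $\torus_\ell$ has treewidth $\Theta(\ell)$, so an infinite subset of $\mathcal{H}[\Phi,\torus]$ immediately yields unbounded treewidth; the same holds for $\mathcal{H}[\Phi,\mathcal{G}]$ by \cref{fac:niceExpanders}. In either case, we conclude that $\#\edgesubsprob(\Phi)$ is $\#\W{1}$-hard.

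For the tight ETH-based lower bound, I would apply Marx's theorem~\cite{Marx10}, which strengthens Dalmau--Jonsson: under ETH, $\#\homsprob(\mathcal{H})$ cannot be solved in time $f(|H|)\cdot n^{o(\mathsf{tw}(H)/\log \mathsf{tw}(H))}$. For $H = G_i \in \mathcal{H}[\Phi,\mathcal{G}]$, \cref{fac:niceExpanders} gives $\mathsf{tw}(G_i) \in \Theta(n_i)$ and $\#E(G_i) \in \Theta(n_i)$, so the parameter $k = \#E(G_i)$ in the reduced $\#\edgesubsprob(\Phi)$-instance satisfies $\mathsf{tw}(G_i) = \Theta(k)$. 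Since the reduction blows up the host graph only by a factor depending on $|H| = O(k)$, an algorithm for $\#\edgesubsprob(\Phi)$ running in time $f(k) \cdot |G|^{o(k/\log k)}$ would transfer to an algorithm for $\#\homsprob(\mathcal{H}[\Phi,\mathcal{G}])$ violating Marx's lower bound, and hence ETH. The main technical point to verify carefully is that the sizes of the oracle queries introduced by \cref{lem:newcomplexitymonotonicity,lem:nocolours} remain polynomial in $|G|$ (with only $f(|H|)$-overhead), so that the $o(k/\log k)$ exponent indeed propagates through the reduction.
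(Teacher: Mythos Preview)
Your proof is correct and follows essentially the same three-step reduction chain as the paper: uncoloured homomorphisms $\to$ colour-preserving homomorphisms $\to$ $\#\coledgesubsprob(\Phi)$ $\to$ $\#\edgesubsprob(\Phi)$, with hardness pulled back from Dalmau--Jonsson and the ETH bound from Marx. The only difference is cosmetic: where the paper cites the reduction $\#\homsprob(\mathcal{H}) \fptred \#\cphomsprob(\mathcal{H})$ as known from prior work, you spell it out explicitly via the tensor product $G \times H$ with its projection colouring, which is indeed one of the standard ways to realise that reduction.
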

\begin{proof}
    We start with the case of $\mathcal{H}[\Phi,\torus]$ being infinite. If the latter is
    true, then $\mathcal{H}[\Phi,\torus]$ has unbounded treewidth since it contains graphs
    with arbitrary large grid minors~\cite{RobertsonS86-ExGrid}.
    This allows us to reduce from the problem $\#\homsprob(\mathcal{H}[\Phi,\torus])$
    which is known to be $\#\W{1}$-hard since $\mathcal{H}[\Phi,\torus]$ has unbounded
    treewidth~\cite{DalmauJ04}.
    It is convenient to consider the following intermediate problem: given a graph $H\in
    \mathcal{H}[\Phi,\torus]$ and an $H$-coloured graph $G$ with colouring~$c$, in the
    problem $\#\cphomsprob(\mathcal{H}[\Phi,\torus])$ the task is to compute the number
    $\#\cphoms{H}{G}$ of homomorphisms $\varphi \in \homs{H}{G}$ such that
    $c(\varphi(v))=v$ for each vertex~$v$ of~$H$. It is well-known that
    $\#\homsprob(\mathcal{H})$ reduces to $\#\cphomsprob(\mathcal{H})$ for every class of
    graphs $\mathcal{H}$; see for instance
    \cite{Roth19,DorflerRSW19,DellRW19icalp,Curticapean15}---note that, in the latter, the
    problem is referred to as
    $\#\textsc{PartitionedSub}(\mathcal{H})$. Thus we have
    \begin{equation}\label{eq_redchain1}
        \#\homsprob(\mathcal{H}[\Phi,\torus])\fptred \#\cphomsprob(\mathcal{H}[\Phi,\torus])\,.
    \end{equation}

    Now observe that $\#\cphoms{H}{G} = \#\necphoms{\fracture{H}{\top}}{H}{G}$ for every
    graph $H$ and $H$-coloured graph $G$, since $\fracture{H}{\top}=H$. By definition of
    $\mathcal{H}[\Phi,\torus]$, we have that $a_{\Phi,\torus_\ell}(\top)\neq 0$ whenever
    $\torus_\ell\in \mathcal{H}[\Phi,\torus]$. Thus we can use Complexity Monotonicity
    (\cref{lem:newcomplexitymonotonicity}) which yields the reduction
    \begin{equation}\label{eq_redchain2}
        \#\cphomsprob(\mathcal{H}[\Phi,\torus])\fptred \#\coledgesubsprob(\Phi)\,.
    \end{equation}
    Finally, we can reduce to the uncoloured version by \cref{lem:nocolours} and obtain
    \begin{equation}\label{eq_redchain3}
        \#\coledgesubsprob(\Phi) \fptred \#\edgesubsprob(\Phi) \,.
    \end{equation}
    Consequently, $\#\edgesubsprob(\Phi)$ is $\#\W{1}$-hard by \eqref{eq_redchain1} -
    \eqref{eq_redchain3} in combination with $\#\W{1}$-hardness of
    $\#\homsprob(\mathcal{H}[\Phi,\torus])$.

     In case of $\mathcal{H}[\Phi,\mathcal{G}]$, we reduce from the homomorphism counting
     problem as well and obtain $\#\W{1}$-hardness analogously. However, for the almost
     tight conditional lower bound, we rely on a result of Marx~\cite{Marx10} implying
     that for any class $\mathcal{H}$ of unbounded treewidth, the problem
     $\#\homsprob(\mathcal{H})$ cannot be solved in time
    \[f(|H|)\cdot |G|^{o(\mathsf{tw}(H)/\log \mathsf{tw}(H))}\]
    for any function $f$, unless ETH fails.\footnote{Observe that this result follows
    only implicitly from~\cite{Marx10}, but we made it explicit in~\cite{RothSW20}.} Let
    us use the aforementioned lower bound for the case of
    $\mathcal{H}=\mathcal{H}[\Phi,\mathcal{G}]$. We observe that the reduction sequence
    from $\#\homsprob(\mathcal{H}[\Phi,\mathcal{G}])$ to $\#\edgesubsprob(\Phi)$ as
    illustrated before only leads to a linear blow up of the parameter: given an input
    $(G_i,G)$ for which we wish to compute $\#\homs{G_i}{G}$, we only query the oracle for
    $\#\edgesubsprob(\Phi)$ on instances $(k,G')$ where $k=\#E(G_i)$ and $|G'|\leq
    f''(\#E(G_i)) \cdot  |G|$ for some function $f''$.
    Since both $\#E(G_i)$ and the treewidth of~$G_i$ are linear in $|V(G_i)|$ (see
    \cref{fac:niceExpanders}), any algorithm that, for some function $f'$, solves
    $\#\edgesubsprob(\Phi)$ in time
    \[f'(k)\cdot |G'|^{o(k/\log k)}\]
    yields an algorithm for $\#\homsprob(\mathcal{H}[\Phi,\mathcal{G}])$ running in time
    \[f(|G_i|)\cdot |G|^{o(\mathsf{tw}(G_i)/\log \mathsf{tw}(G_i))}\,,\]
    for some function $f$ (depending only on $f'$ and $f''$), contradicting ETH by Marx'
    lower bound.
\end{proof}
Regarding the previous proof, observe that we cannot obtain a similar conditional
lower bound if only $\mathcal{H}[\Phi,\torus]$ is infinite, since in that case the
parameter grows quadratically: while~$\torus_\ell$ has treewidth $O(\ell)$, it has
$2\ell^2$ edges.

\section{Coefficients of Tori and Cayley Graph Expanders}\label{sec:main_section}
The previous section allows us to establish hardness of $\#\edgesubsprob(\Phi)$ by the
purely combinatorial problem of determining whether one of the sets
$\mathcal{H}[\Phi,\torus]$ and $\mathcal{H}[\Phi,\mathcal{G}]$, for some family of
expanders $\mathcal{G}$, is infinite. Still, this is a challenging combinatorial problem
and we consider the treatment of the coefficients of the tori and Cayley graph expanders
to be our main technical contribution in this work.

Recall from \cref{cor:collect_coeffs} that the \cfunction~of $\Phi$ and $H$ satisfies
\[ a_{\Phi,H}(\top) = \sum_{\sigma \in \mathcal{L}(\Phi,H)} ~\prod_{v\in V(H)} (-1)^{|{\sigma}_{v}|-1} \cdot (|{\sigma}_{v}|-1)! \,.\]
In case that $H$ satisfies certain symmetry properties, we obtain that it suffices to
consider only those fractures in the previous sum that are fixed-points under suitable
group actions. More formally, we obtain the desired symmetries from the structure of the
groups underlying the Cayley graph constructions for tori and expanders as introduced in
the subsequent subsections

\subsection{Symmetries of the Torus}
We start with a simple Cayley graph given by the direct product of two cyclic groups:
\begin{defn}[The Torus]
    Let $\ell\geq 3$ denote an integer. The \emph{torus}, also called the \emph{toroidal
    grid}, $\torus_\ell$ of size $\ell$ is the Cayley graph of $\mathbb{Z}_\ell^2$ with
    generators $(\pm 1,0),(0,\pm 1)$, that is,
    \[\torus_\ell := \Gamma\left(\mathbb{Z}_\ell^2,\{(1,0),(-1,0),(0,1),(0,-1)\}\right) \,.\]

    Equivalently, the vertices of $\torus_\ell$ are $\mathbb{Z}_\ell^2$ and two vertices
    $(x,y)$ and $(x',y')$ are adjacent if and only if
    \begin{center}
        $x=x'$ and $y' = y \pm 1 \mod \ell$, or $y=y'$ and $x' = x\pm 1 \mod \ell$.
    \end{center}
    Consult \cref{fig:torgrid_intro} for a visualization.
    \lipicsEnd
\end{defn}

In the following, for simplicity, we write $+$ for (point-wise) addition modulo $\ell$.
Our goal is to understand the symmetries of $\torus_\ell$.
Consider the following action of $\ztwol$ on the vertices of $\torus_\ell$. Let $(i,j)\in
\ztwol$ and let $(x,y)\in V(\torus_\ell)$. We set $(i,j)\vdash (x,y) := (i,j) + (x,y)$.
The following is immediate:

\begin{fact}
    The action of $\ztwol$ on $V(\torus_\ell)$ is transitive. In particular, for every
    $(i,j)\in \ztwol$, the function $(i,j)\vdash \star$ is an automorphism of
    $\torus_\ell$.
    \lipicsEnd
\end{fact}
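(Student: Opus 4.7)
The plan is to verify both claims directly from the definition of the action, as this is essentially a standard fact about Cayley graphs applied to our concrete construction. The two assertions are logically tied: transitivity is an immediate consequence of the group structure, and the automorphism property follows from the translation-invariance of the edge relation under the $\ztwol$-action. I expect no substantial obstacle; the only care needed is to check the edge-preservation explicitly in coordinates.

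For transitivity, given two vertices $(x,y),(x',y') \in V(\torus_\ell) = \ztwol$, I would simply exhibit the group element $(i,j) := (x'-x,\, y'-y) \in \ztwol$ (with subtraction taken coordinatewise modulo $\ell$) and observe that
\[
(i,j) \vdash (x,y) \;=\; (i,j) + (x,y) \;=\; (x',y').
\]
Hence the action is transitive.

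For the automorphism claim, fix $(i,j) \in \ztwol$ and consider the map $\varphi := (i,j) \vdash \star$. It is a bijection on $V(\torus_\ell)$ because the map $(-i,-j)\vdash\star$ is its two-sided inverse (using associativity and the fact that $(i,j)+(-i,-j) = (0,0)$ acts as the identity). To see that $\varphi$ preserves edges, take an edge $\{(x,y),(x',y')\} \in E(\torus_\ell)$, and split into the two cases of the definition: either $x=x'$ and $y' = y \pm 1 \bmod \ell$, or $y=y'$ and $x' = x \pm 1 \bmod \ell$. In the first case
\[
\varphi(x,y) = (i+x,\, j+y), \qquad \varphi(x',y') = (i+x,\, j+y\pm 1),
\]
and the two images agree in their first coordinate while differing by $\pm 1 \bmod \ell$ in the second, so $\{\varphi(x,y),\varphi(x',y')\}$ is again an edge; the second case is symmetric. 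Thus $\varphi$ is an edge-preserving bijection, i.e., an automorphism of $\torus_\ell$, completing the proof.
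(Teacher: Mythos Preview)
Your proof is correct and matches the paper's treatment: the paper simply declares the fact ``immediate'' without giving any argument, and your direct verification from the definitions is exactly the kind of routine check being alluded to. One minor remark: to conclude that an edge-preserving bijection on a finite graph is an automorphism you implicitly use that the induced map on edges is injective (hence bijective by finiteness), or equivalently that the inverse $(-i,-j)\vdash\star$ is also edge-preserving by the same argument---either justification is trivial here.
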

The fact above allows us to consider the set $\mathbb{Z}_\ell^2$ of all $(i,j)$-``shifts''
is a subgroup of the automorphism group of~$\torus_\ell$. We remark that not all
automorphisms are given by such shifts, but for our arguments we will not need to consider
the full group of automorphisms.

\paragraph*{Fractures of the Torus}

Recall that a fracture $\rho$ of a graph $H$ is a tuple $\rho = (\rho_v)_{v \in V(H)}$
where  $\rho_v$ is a  partition of the set $E_H(v)$ of edges of $H$ incident to $v$.
% \jocomment{I think that it would be ok to not spell out all the details in this section. The following would be a proposal:
Now given an automorphism $\varphi : H \to H$ of $H$, it gives a bijection from the edges
$E_H(v)$ at $v$ to the edges~$E_H(\varphi(v))$ at~$\varphi(v)$. Thus, given a fracture
$\rho$ of $H$, we obtain a fracture~$\varphi(\rho)$ of~$H$, such that two edges $e_1, e_2
\in E_H(\varphi(v))$ are in the same block of $\varphi(\rho)_{\varphi(v)}$ if and only if
their preimages $\varphi^{-1}(e_1), \varphi^{-1}(e_2) \in E_H(v)$ are in the same block of
$\rho_v$.

\noindent We claim that that the two fractured graphs $\fracture{H}{\rho} \cong
\fracture{H}{\varphi(\rho)}$ are isomorphic.
To see this, note that the automorphism $\varphi: H \to H$ lifts to an automorphism
$\widehat \varphi : M_H \to M_H$ of the matching $M_H$ associated to $H$, where~$\widehat\varphi$ sends the vertex $v_e$ of $M_H$ to $\varphi(v)_{\varphi(e)}$. The map $\widehat
\varphi$ sends the equivalence relation on~$M_H$ associated to $\rho$ (with quotient
$\fracture{H}{\rho}$) to the equivalence relation associated to $\varphi(\rho)$ (with
quotient $\fracture{H}{\varphi(\rho)}$). Thus $\widehat \varphi : M_H \to M_H$ descends to
an isomorphism $\fracture{H}{\rho} \to \fracture{H}{\varphi(\rho)}$ fitting in a diagram
of graph homomorphisms, depicted in \cref{fig:ghomcd}.
\begin{figure}[t]
    \centering
    \begin{tikzcd}
        M_H \arrow[r] \arrow[d,"\widehat \varphi"] & \fracture{H}{\rho} \arrow[r] \arrow[d] & H \arrow[d,"\varphi"]\\
        M_H \arrow[r] & \fracture{H}{\varphi(\rho)} \arrow[r] & H
    \end{tikzcd}
    \caption{Each automorphism $\varphi$ of $H$ lifts to an automorphism $\hat{\varphi}$ of $M_H$. The latter descends to an isomorphism from $\fracture{H}{\rho}$ to $\fracture{H}{\varphi(\rho)}$, for every fracture $\rho$ of $H$.}\label{fig:ghomcd}
\end{figure}

Given a finite group $G$ acting on the graph $H$ by graph isomorphisms $\varphi_g : H \to
H$ (for $g \in G$), we obtain an action $\Vdash$ of $G$ on the lattice $\mathcal{L}(H)$ of
fractures on $H$, where $g \in G$ acts by $g \Vdash \rho = \varphi_g(\rho)$. Clearly, this
action respects the order of the lattice ($\rho  \leq \rho'$ if and only if $g \Vdash \rho
\leq g \Vdash \rho'$) and as seen above, for any $\widetilde \rho$ in the $G$-orbit of
$\rho$ we have $\fracture{H}{\widetilde \rho} \cong \fracture{H}{\rho}$.

We now return to the special case when $H=\torus_\ell$ is a torus. Here, given a
vertex $(i,j)$ of $H$ it is convenient to identify the edges incident to the
vertex (connecting it to $(i,j+1),(i,j-1),(i-1,j)$, and $(i+1,j)$) with the four
``directions'' $\tu,\td,\tl$, and $\tr$, respectively, so that each $\rho_{(i,j)}$
is a partition of the set
$\{\tu,\td,\tl,\tr\}$.

We have seen that $\ztwol$ acts transitively on the vertices of $\torus_\ell$ in such a
way that every element of $\ztwol$ induces an automorphism of $\torus_\ell$.
Thus, by the discussion above, we obtain an action $\Vdash$ of $\ztwol$ on the set of
fractures of $\torus_\ell$. Let us make this action explicit: $(i,j) \Vdash \rho :=
\hat{\rho}$, where $\hat{\rho}_{((i,j)\vdash (i',j'))} = \rho_{(i',j')}$ for all
$(i',j')\in \ztwol$.

\paragraph*{Analysis of the Fixed-points}
We proceed with the fixed-points of the action $\Vdash$ of $\ztwol$ on the fractures $\torus_\ell$. Since this action consists of (all possible)
$(i,j)$-shifts, the fixed-points are precisely those fractures $\rho$ for which all
partitions $\rho_{(i,j)}$
are equal --- recall that we assumed every $\rho_{(i,j)}$ to be
a partition of $\{\tu,\td,\tl,\tr\}$. Fortunately, there are only $15$ partitions of the
four-element set, and thus we can analyse the fixed-points by hand. Indeed, one special
case of our main result, as well as the classification of the parameterized Tutte
polynomial, rely on the understanding of all of those $15$ fixed-points.
However, while there are 15 fixed-points $\rho$, we can group those into 7 types according
to the isomorphism class of $\fracture{\torus_\ell}{\rho}$; an illustration of all
fixed-points is given in \cref{fig:fp}.

\begin{obs}\label{obs:fixedPointsbasic} The fixed-points of the action of $\ztwol$ on the
    fractures of $\torus_\ell$ are as follows.

    \noindent \textbf{\sffamily Matching:} $\fracture{\torus_\ell}{\rho}\cong M_{2\ell^2}$, the matching of size $2\ell^2$.
    \begin{enumerate}
        \item $\rho_{(i,j)}=\{\{\tu\},\{\td\},\{\tl\},\{\tr\}\}$ for all $(i,j)\in \ztwol$, that is, $\rho=\bot$.
            \medskip
    \end{enumerate}
    \textbf{\sffamily Matching and cycles:} $\fracture{\torus_\ell}{\rho}\cong M_{\ell^2}+\ell
    C_\ell$, the union of a matching of size $\ell^2$ and $\ell$ disjoint cycles of length $\ell$.
	\begin{enumerate}
		\setcounter{enumi}{1}
        \item $\rho_{(i,j)}=\{\{\tu,\td\},\{\tl\},\{\tr\}\}$ for all $(i,j)\in \ztwol$.
        \item $\rho_{(i,j)}=\{\{\tu\},\{\td\},\{\tl,\tr\}\}$ for all $(i,j)\in \ztwol$.
        \medskip
    \end{enumerate}
    \textbf{\sffamily Wedge packing:} $\fracture{\torus_\ell}{\rho}\cong \ell^2 P_2$, the union of
    $\ell^2$ disjoint paths of length $2$.
 	\begin{enumerate}
 		\setcounter{enumi}{3}
        \item $\rho_{(i,j)}=\{\{\tu,\tr\},\{\td\},\{\tl\}\}$ for all $(i,j)\in \ztwol$.
        \item $\rho_{(i,j)}=\{\{\tu,\tl\},\{\td\},\{\tr\}\}$ for all $(i,j)\in \ztwol$.
        \item $\rho_{(i,j)}=\{\{\td,\tl\},\{\tu\},\{\tr\}\}$ for all $(i,j)\in \ztwol$.
        \item $\rho_{(i,j)}=\{\{\td,\tr\},\{\tu\},\{\tl\}\}$ for all $(i,j)\in \ztwol$.
        \medskip
    \end{enumerate}
    \textbf{\sffamily Cycle packing I:} $\fracture{\torus_\ell}{\rho}\cong 2\ell C_\ell$,
    the union of $2\ell$ disjoint cycles of length $\ell$.
    \begin{enumerate}
        \setcounter{enumi}{7}
        \item $\rho_{(i,j)}=\{\{\tu,\td\},\{\tl,\tr\}\}$ for all $(i,j)\in \ztwol$.
            \medskip
    \end{enumerate}
    \textbf{\sffamily Cycle packing II:} $\fracture{\torus_\ell}{\rho}\cong \ell
    C_{2\ell}$, the union of $\ell$ disjoint cycles of length $2\ell$.
    \begin{enumerate}
        \setcounter{enumi}{8}
        \item $\rho_{(i,j)}=\{\{\tu,\tr\},\{\td,\tl\}\}$ for all $(i,j)\in \ztwol$.
        \item $\rho_{(i,j)}=\{\{\tu,\tl\},\{\td,\tr\}\}$ for all $(i,j)\in \ztwol$.
            \medskip
    \end{enumerate}
    \textbf{\sffamily Sun packing:} $\fracture{\torus_\ell}{\rho}\cong \ell S_{\ell}$, the
    union of $\ell$ suns of size $\ell$. Here a a sun of size $\ell$ is obtained from a
    cycle of length~$\ell$ by adding one ``dangling'' edge at every vertex of the cycle.
    \begin{enumerate}
        \setcounter{enumi}{10}
        \item $\rho_{(i,j)}=\{\{\tu\},\{\td,\tl,\tr\}\}$ for all $(i,j)\in \ztwol$.
        \item $\rho_{(i,j)}=\{\{\td\},\{\tu,\tl,\tr\}\}$ for all $(i,j)\in \ztwol$.
        \item $\rho_{(i,j)}=\{\{\tl\},\{\tu,\td,\tr\}\}$ for all $(i,j)\in \ztwol$.
        \item $\rho_{(i,j)}=\{\{\tr\},\{\tu,\tl,\td\}\}$ for all $(i,j)\in \ztwol$.
            \medskip
    \end{enumerate}
    \textbf{\sffamily Torus:} $\fracture{\torus_\ell}{\rho}\cong \torus_\ell$, the torus of size $\ell$.
    \begin{enumerate}
        \setcounter{enumi}{14}
    \item $\rho_{(i,j)}=\{\{\tu,\td,\tl,\tr\}\}$ for all $(i,j)\in \ztwol$, that is $\rho=\top$.
        \lipicsEnd
    \end{enumerate}
\end{obs}

\begin{figure}[t!]
   \centering
   \begin{tikzpicture}[scale=.87,transform shape]
       \pic[sty=10] at (0,0) {grid=3/0};
       \node at (0,-2.25) {\small ${\{\{\tu\},\{\td\},\{\tl\},\{\tr\}\}}$};
       \pic[sty=10] at (4,0) {grid=3/1};
       \node at (4,-2.25) {\small ${\{\{\tu, \td\},\{\tl\},\{\tr\}\}}$};
       \pic[sty=10] at (8,0) {grid=3/2};
       \node at (8,-2.25) {\small ${\{\{\tu\},\{\td\},\{\tl, \tr\}\}}$};
       \pic[sty=10] at (12,0) {grid=3/3};
       \node at (12,-2.25) {\small $\{\{\tr, \tu\},\{\tl\},\{\td\}\}$};
       \pic[sty=10] at (0,-5) {grid=3/4};
       \node at (0,-7.25) {\small $\{\{\tu\},\{\tl\},\{\td, \tr\}\}$};
       \pic[sty=10] at (4,-5) {grid=3/5};
       \node at (4,-7.25) {\small $\{\{\tu, \tl\},\{\td\},\{\tr\}\}$};
       \pic[sty=10] at (8,-5) {grid=3/6};
       \node at (8,-7.25) {\small $\{\{\tu\},\{\tl, \td\},\{\tr\}\}$};
       \pic[sty=10] at (12,-5) {grid=3/7};
       \node at (12,-7.25) {\small $\{\{\tr, \tu\},\{\tl, \td\}\}$};
       \pic[sty=10] at (0,-10) {grid=3/8};
       \node at (0,-12.25) {\small $\{\{\tu,\tl\},\{\td,\tr\}\}$};
       \pic[sty=10] at (4,-10) {grid=3/9};
       \node at (4,-12.25) {\small ${\{\{\tu, \td\},\{\tl, \tr\}\}}$};
       \pic[sty=10] at (8,-10) {grid=3/10};
       \node at (8,-12.25) {\small $\{\{\tl\},\{\tl, \td, \tr\}\}$};
       \pic[sty=10] at (12,-10) {grid=3/11};
       \node at (12,-12.25) {\small $\{\{\tu\},\{\tl, \td, \tr\}\}$};
       \pic[sty=10] at (2,-15) {grid=3/12};
       \node at (2,-17.25) {\small $\{\{\tu,\tl,\td\},\{\tr\}\}$};
       \pic[sty=10] at (6,-15) {grid=3/13};
       \node at (6,-17.25) {\small $\{\{\tu,\tl,\tr\},\{\td\}\}$};
       \pic[sty=10] at (10,-15) {grid=3/14};
       \node at (10,-17.25) {\small ${\{\{\tu, \tl, \td, \tr\}\}}$};
   \end{tikzpicture}
   \caption{The 15 fixed points of the action $\Vdash$ of $\mathbb{Z}_\ell^2$ on the fractures of
       $\torus_\ell$. Vertices in the fractured graphs that correspond to the same vertex in
       the original graph are encircled with a dashed line; the corresponding fixed point is
   denoted below its graphical representation. }\label{fig:fp}
\end{figure}
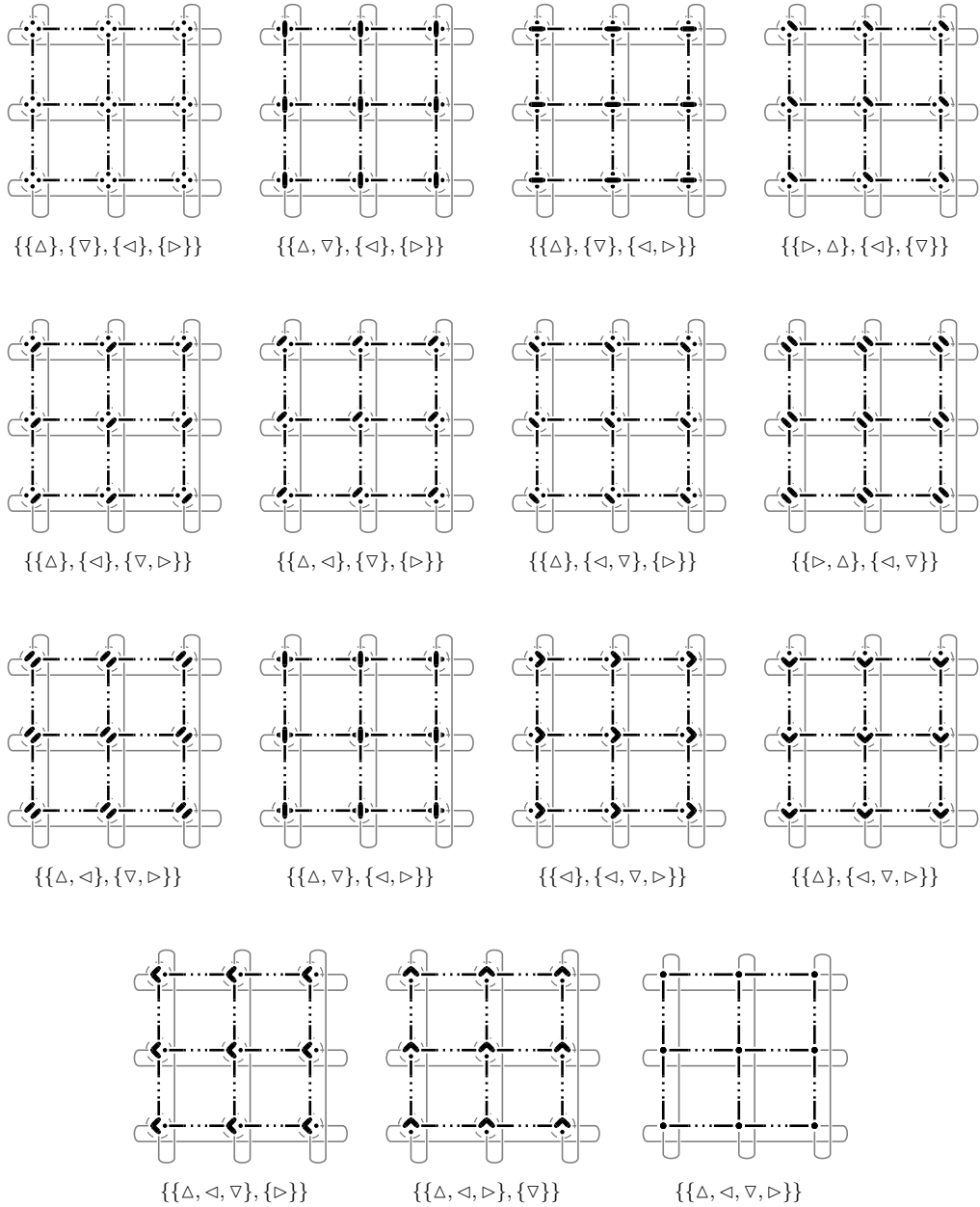
\afterpage{\clearpage}

While it might be surprising at first glance, we observe that, for many properties $\Phi$,
our analysis of the complexity of $\#\edgesubsprob(\Phi)$ \emph{only} depends on which
of the previous $15$ fixed-points $\rho$ satisfy that $\fracture{\torus_\ell}{\rho}$ has
the property $\Phi$.
%The observant reader might notice, that the fixed-points admit multiple symmetries we did not exploit so far. However, a modification of the group action to identify those symmetric fixed-points as well, would require to use the action of a $2$-group,\footnote{If $\ell$ is a power of $2$, then the semi-direct product of $\ztwol$ and the Klein-four group is a $2$-group that acts transitively on (the fractures of) $V(\torus_\ell)$ and removes those symmetries.} and all cases that can be solved using the action of a $2$-group, will already be subsumed by our treatment of Cayley expander graphs.
%\jocomment{I still think that everything from "The obsevant reader ..." is confusing as phrased now (we would have to add something along the lines of "since all additional symmetries have order $2$, the only way for them to be part of a $p$-group action is to choose $p=2$" to make it understandable). I would just remove it entirely.}

\subsection{Symmetries of Cayley Graph Expanders of 2-groups}\label{sec:tori_exp}
For the second family of Cayley graphs, we rely on an explicit construction of $4$-regular
Cayley graph expanders due to Peyerimhoff and Vdovina~\cite{PeyerimhoffV11}. They are
constructed from an explicit infinite group $G$ with generators $x_0, x_1$ and a sequence
\[G \supseteq N_0 \supseteq N_1 \supseteq N_2 \supseteq \cdots\]
of normal subgroups $N_i$ of $G$, such that the indices $[G : N_i] = 2^{t_i} = n_i$ are
powers of $2$ converging to infinity as $i$ increases. Moreover, writing $\Kgroup_i$ for
the quotient group $G/N_i$, the set of Cayley graphs $G_i := \Gamma(\Kgroup_i,\{v_0^{\pm
1}, v_1^{\pm 1}\})$ is a family $\mathcal{G}$ of $(n_i,4,c)$-expanders for some constant
$c>0$. Here, $v_0 = x_0 N_i$ and $v_1 = x_1 N_i$ are generators of $\Kgroup_i$.

Similar to the case of the toroidal grid, we obtain an action of the group $\Kgroup_i$ on the
graph $G_i$, where an element $g \in \Kgroup_i$ acts on a vertex $v \in V(G_i) =
\Kgroup_i$ sending it to $g \vdash v = gv$, where the latter is the product of $g$ and $v$
in the group $\Kgroup_i$. The action of $g$ defines a graph automorphism of $G_i$ since
the four edges $\{v, v v_j^{\pm 1}\}$ (for $j=0,1$) at $v$ are sent to the four edges
$\{gv, gv v_j^{\pm 1}\}$ incident to $gv$.

\paragraph*{Fractures of the Cayley Graph Expanders}
For $v \in V(G_i) \cong \Kgroup_i$, the edges adjacent to $v$ connect $v$ to the vertices
$vs$ for $s \in S$ and thus can be uniquely labelled\footnote{We use the same notation
here as in the section about the torus grid $\torus_\ell$.
Many of the intuitions that we gained so far are still valid,
for instance the edge $\tr$ going out from the
vertex $v$ to $v v_0$ is equal to the edge $\tl$ associated to the vertex $v v_0$. On the
other hand, we also need to be more careful in our proofs, since e.g. going along an edge
$\tr$ followed by $\tu$ does not necessarily go to the same vertex as the path $\tu$
followed by $\tr$, since the group $\Kgroup_i$ is in general not abelian.} by
\begin{equation*}
    \tr = \{v, v v_0\},\ \tl = \{v, v v_0^{-1}\},\ \tu = \{v, v v_1\},\ \td = \{v, v v_1^{-1}\}.
\end{equation*}
Thus a fracture $\rho \in \mathcal{L}(G_i)$ is a collection $\rho = (\rho_v)_{v \in V(G_i)}$ of partitions of the set $\{\tr, \tl, \tu, \td\}$.

\paragraph*{Analysis of the Fixed-points}
As seen before, the action $\vdash$ of $\Kgroup_i$ on $G_i$ induces an action $\Vdash$ of $\Kgroup_i$ on the lattice of partitions $\mathcal{L}(G_i)$.
A fracture $\rho = (\rho_v)_{v \in V(G_i)}$ is invariant under the action of $\Kgroup_i$ if and only if $\rho_v$ does not depend on $v$.

Later we want to compute the coefficient $a_{\Phi,G_i}(\top)$ modulo two. As before we
observe that only fixed points of the action of $\Kgroup_i$  contribute, and
additionally we observe that only such fixed points $\rho$ can contribute where $\rho_v$
has at most two blocks. Thus in the following we consider only such fixed points.

\begin{lemma}\label{lem:fixedPointsexpander}
    Fix $i \geq 2$ and denote
    $b_0 = \mathrm{ord}_{\Kgroup_i}(v_0), b_1 = \mathrm{ord}_{\Kgroup_i}(v_1),  b_2 =
    \mathrm{ord}_{\Kgroup_i}(v_1 ^{-1} v_0), b_3 = \mathrm{ord}_{\Kgroup_i}(v_1 v_0),$
    and $a_j = \#\Kgroup_i / b_j$ for $j=0, \ldots, 3$.
    Then the fixed-points $\rho = (\rho_v)_{v \in V(G_i)}$ of the action of $\Kgroup_i$ on
    the fractures of $G_i$ satisfying that all $\rho_v$ have at most two blocks
    are as follows:

\noindent \textbf{\sffamily Cycle packing I:} %$\fracture{\torus_\ell}{\rho}\cong 2\ell C_\ell$, the union of $2\ell$ disjoint cycles of length $\ell$.
	\begin{enumerate}
		\setcounter{enumi}{0}
        \item $\rho_{v}=\{\{\tu,\td\},\{\tl,\tr\}\}$ for all $v \in V(G_i)$ and $\fracture{G_i}{\rho} \cong a_0 \cdot C_{b_0} + a_1 \cdot C_{b_1}$.
        \medskip
    \end{enumerate}
 \textbf{\sffamily Cycle packing II:} %$\fracture{\torus_\ell}{\rho}\cong \ell C_{2\ell}$, the union of $\ell$ disjoint cycles of length $2\ell$.
	\begin{enumerate}
		\setcounter{enumi}{1}
        \item $\rho_{v}=\{\{\tu,\tr\},\{\td,\tl\}\}$ for all $v \in V(G_i)$ and $\fracture{G_i}{\rho} \cong a_2 \cdot C_{2 b_2}$.
        \item $\rho_{v}=\{\{\tu,\tl\},\{\td,\tr\}\}$ for all $v \in V(G_i)$ and $\fracture{G_i}{\rho} \cong a_3 \cdot C_{2 b_3}$.
            \medskip
    \end{enumerate}
 \textbf{\sffamily Sun packing:} %$\fracture{\torus_\ell}{\rho}\cong \ell S_{\ell}$, the union of $\ell$ suns of size $\ell$. Here a a sun of size $\ell$ is obtained from a cycle of length~$\ell$ by adding one ``dangling'' edge at every vertex of the cycle.
	\begin{enumerate}
		\setcounter{enumi}{3}
        \item $\rho_{v}=\{\{\tu\},\{\td,\tl,\tr\}\}$ for all $v \in V(G_i)$ and $\fracture{G_i}{\rho} \cong a_0 \cdot S_{b_0}$.
        \item $\rho_{v}=\{\{\td\},\{\tu,\tl,\tr\}\}$ for all $v \in V(G_i)$ and $\fracture{G_i}{\rho} \cong a_0 \cdot S_{b_0}$.
        \item $\rho_{v}=\{\{\tl\},\{\tu,\td,\tr\}\}$ for all $v \in V(G_i)$ and $\fracture{G_i}{\rho} \cong a_1 \cdot S_{b_1}$.
        \item $\rho_{v}=\{\{\tr\},\{\tu,\tl,\td\}\}$ for all $v \in V(G_i)$ and $\fracture{G_i}{\rho} \cong a_1 \cdot S_{b_1}$.
        \medskip
    \end{enumerate}
 \textbf{\sffamily Full graph:} %$\fracture{\torus_\ell}{\rho}\cong \torus_\ell$, the torus of size $\ell$.
\begin{enumerate}
	\setcounter{enumi}{7}
        \item $\rho_{v}=\{\{\tu,\td,\tl,\tr\}\}$ for all $v \in V(G_i)$, that is $\rho=\top$ and $\fracture{G_i}{\rho} \cong G_i$.
    \end{enumerate}

Moreover, the numbers $a_j, b_j$ are all powers of $2$ and $a_j \geq 8$.
\end{lemma}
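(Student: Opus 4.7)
The plan is to proceed in three steps: classify all fixed fractures with at most two blocks per vertex, compute the isomorphism type of the fractured graph in each case by tracing connected components using the group structure, and finally verify the arithmetic claims about $a_j, b_j$.

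First I would observe that, as in the torus case, the action of $\Kgroup_i$ on $V(G_i)$ is transitive (by left multiplication), so a fracture $\rho = (\rho_v)_{v \in V(G_i)}$ is fixed if and only if the partition $\rho_v$ does not depend on $v \in V(G_i)$. Hence every fixed fracture is determined by a single partition of $\{\tu,\td,\tl,\tr\}$. Enumerating the partitions of a four element set with at most two blocks gives $1$ (one block) plus $4$ (split $1+3$) plus $3$ (split $2+2$), exactly the $8$ cases listed.

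The core of the proof is identifying $\fracture{G_i}{\rho}$ case by case. The key observation is that if an edge leaves $v$ via direction $\tr$ (i.e. to $v v_0$), it enters $v v_0$ via direction $\tl$, and analogously for the pair $(\tu,\td)$; this lets one read off which block at each endpoint an edge belongs to, and thus trace connected components in $\fracture{G_i}{\rho}$. For the two ``cycle packing I'' blocks $\{\tu,\td\}$ and $\{\tl,\tr\}$, successive edges are concatenations of $v_1^{\pm 1}$ resp.\ $v_0^{\pm 1}$, closing up after $b_1$ resp.\ $b_0$ steps, giving $a_1$ cycles of length $b_1$ plus $a_0$ cycles of length $b_0$. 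For ``cycle packing II'' with blocks $\{\tu,\tr\},\{\td,\tl\}$, a walk alternates between the two blocks: starting at $v^A$ and leaving via $\tu$ one arrives at $(v v_1)^B$, then $\tl$ leads to $(v v_1 v_0^{-1})^A$, so two steps amount to multiplication by $v_1 v_0^{-1}$. Using the standard identity $\mathrm{ord}(gh) = \mathrm{ord}(hg)$, one has $\mathrm{ord}(v_1 v_0^{-1}) = \mathrm{ord}(v_1^{-1} v_0) = b_2$, so the cycle has length $2 b_2$ and there are $a_2$ of them; case 3 is symmetric with $v_1 v_0$ and $b_3$. For the ``sun packing'' case $\{\{\tu\},\{\td,\tl,\tr\}\}$, the singleton block produces leaves $v^A$ attached to the degree $3$ vertex $(v v_1)^B$, while within the $3$ block the edges $\tl, \tr$ concatenate via $v_0^{\pm 1}$ to produce cycles of length $b_0$ with one pendant edge at each cycle vertex, giving $a_0$ suns of size $b_0$; the other three sun cases are handled analogously (cases 4, 5 produce suns via $v_0$, cases 6, 7 via $v_1$). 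Case 8 is tautological.

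Finally, for the ``moreover'' claim, note that by hypothesis $\#\Kgroup_i = 2^{t_i}$, and since each $b_j$ is the order of an element in a $2$ group, $b_j$ is a power of $2$; hence $a_j = \#\Kgroup_i / b_j$ is also a power of $2$. To obtain $a_j \geq 8$, I would argue that because $G_i$ is a simple graph with the prescribed $4$ regularity, none of $v_0, v_1, v_1^{-1} v_0, v_1 v_0$ can have order $1$ or $2$ (otherwise edges would collapse), so $b_j \geq 4$; combined with a sufficiently large lower bound on $t_i$ (for $i \geq 2$) coming from the Peyerimhoff–Vdovina construction, one deduces $b_j \leq 2^{t_i - 3}$ and hence $a_j \geq 8$. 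The main obstacle I anticipate is precisely this last step: establishing the explicit bound $a_j \geq 8$ requires invoking concrete quantitative properties of the groups $\Kgroup_i$ from~\cite{PeyerimhoffV11} (such as the growth of the indices $t_i$ and the absence of small order relations among the generators), rather than purely general Cayley graph reasoning; the cycle/sun identifications themselves are straightforward bookkeeping once the incidence convention for the directions $\tu,\td,\tl,\tr$ is fixed.
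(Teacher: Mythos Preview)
The first three parts of your proposal---enumerating the fixed fractures via transitivity, identifying each $\fracture{G_i}{\rho}$ by tracing walks through the group, and observing that $a_j,b_j$ are powers of $2$---match the paper's argument essentially verbatim.

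The gap is in your treatment of $a_j \geq 8$. Your observation that simplicity and $4$-regularity force $b_j \geq 4$ is a \emph{lower} bound on $b_j$, but what you need is an \emph{upper} bound $b_j \leq 2^{t_i-3}$. Invoking ``a sufficiently large lower bound on $t_i$'' does not bridge this: nothing in your argument prevents $b_j$ from growing with $\#\Kgroup_i$ (in a cyclic $2$-group the generator has full order), so the step ``one deduces $b_j \leq 2^{t_i-3}$'' is unsupported. You correctly flag this as the obstacle, but the proposed route does not close it.

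The paper's actual argument is structurally different. It verifies the base case $i=2$ by hand, using the explicit presentation of $\Kgroup_2$ from~\cite{PeyerimhoffV11} to compute that all four $b_j$ equal $4$, whence $a_j=8$. For $i \geq 3$ it exploits the nested chain $N_2 \supseteq N_i$, which yields a surjective group homomorphism $\varphi_i:\Kgroup_i \twoheadrightarrow \Kgroup_2$ sending the cyclic subgroup $\langle v_0 \rangle \subseteq \Kgroup_i$ onto $\langle v_0 \rangle \subseteq \Kgroup_2$. A short index-divisibility lemma (for any group homomorphism $\varphi$ and subgroup $H$, the index $[\varphi(G):\varphi(H)]$ divides $[G:H]$) then gives that $8=[\Kgroup_2:\langle v_0\rangle]$ divides $[\Kgroup_i:\langle v_0\rangle]=a_0$, and similarly for $a_1,a_2,a_3$. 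The missing idea, in short, is that the bound is \emph{inherited from the base case through the tower of quotients}, not extracted from general Cayley-graph properties.
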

\begin{proof}
In cases 1,2,3 it follows from the definition of the fractured graph that
$\fracture{G_i}{\rho}$ is $2$-regular and thus a union of circles.
In case 1 the first type of circles (associated to the directions $\tl, \tr$) is given by
\begin{equation} \label{eqn:w0circle} w_0 \to w_0 v_0 \to w_0 v_0^2 \to \cdots \to w_0 v_0^{b_0-1} \to w_0 v_0^{b_0} = w_0\end{equation}
and thus isomorphic to $C_{b_0}$, with one circle for each $w_0 \in K_i / \langle v_0 \rangle$ giving a total number of $\#(\Kgroup_i / \langle v_0 \rangle) = \#\Kgroup_i / b_i = a_i$. Analogously we obtain $a_1$ copies of $C_{b_1}$ associated to the directions $\tu, \td$.

In case 2 the circles are of the form
\begin{align*}
w_0 &\to w_0 v_1^{-1} \to w_0 v_1^{-1} v_0 \to w_0 v_1^{-1} v_0 v_1^{-1} \to w_0 (v_1^{-1} v_0)^{2} \to \cdots \\ &\to w_0 (v_1^{-1} v_0)^{b_2-1} v_1^{-1} \to w_0 (v_1^{-1} v_0)^{b_2} = w_0.\end{align*}
Thus they are isomorphic to $C_{2 b_2}$ and since the total number of edges of $\fracture{G_i}{\rho}$ is equal to $\#E(G_i) = 4 \#\Kgroup_i/2 = 2 \#\Kgroup_i$, the number of copies of $C_{2 b_2}$ is given by $2 \#\Kgroup_i / (2 b_2) =  a_2$. The case 3 is treated analogously.

In case 4, the connected component of a vertex $w_0 \in \fracture{G_i}{\rho}$ associated to the directions $\td,\tl,\tr$  certainly contains the circle $C_{b_0}$ given by \eqref{eqn:w0circle} and in addition, each vertex $w_0 v_0^{\ell}$ is connected to $w_0 v_0^{\ell} v_1^{-1}$, which forms a leaf of $\fracture{G_i}{\rho}$. Thus, these are the only additional vertices connected to the circle and thus the connected component of each vertex in $\fracture{G_i}{\rho}$ forms a sun $S_{b_0}$. The total number of suns is $\#E(G_i) / \#E(S_{b_0}) = (2 \#\Kgroup_i)/(2 b_0) = a_0$. The cases 5,6,7 are treated completely analogously.

Finally, case 8 follows from the general property $\fracture{H}{\top} \cong H$. The fact that $a_j, b_j$ divide the order of $\#\Kgroup_i$ together with the property that $\Kgroup_i$ is a $2$-group, imply that $a_j, b_j$ are powers of $2$. Finally, we show the inequality $a_j \geq 8$ by induction on $i \geq 2$. Note that in the case $i=2$ this can be checked by hand.
For this one uses the explicit description of the group law of $\Kgroup_2$ presented in \cite[Section 3]{PeyerimhoffV11} and verifies that the orders $b_j$ of elements $v_0, v_1, v_1^{-1} v_0, v_1 v_0$ are precisely $4$, so that $a_j = \#\Kgroup_2 / b_j = 8$.

To conclude the general case for $i \geq 2$, denote $V_0^{i} = \langle v_0 \rangle \subseteq \Kgroup_i$ the subgroup generated by $v_0$, so that $a_0 = [\Kgroup_i : V_0^i]$. Recalling the facts from the start of the section, we saw that $\Kgroup_i = G / N_i$ with $N_2 \supseteq N_i$ for $i \geq 3$. Thus we have a surjective group homomorphism
\[\varphi_i : \Kgroup_i = G / N_i \to G / N_2 = \Kgroup_2\,,~ xN_i \mapsto xN_2 \]
sending $V_0^i \subseteq \Kgroup_i$ to $V_0^2 \subseteq \Kgroup_2$ (this follows since the generator $v_0 = x_0 N_i$ of $V_0^i$ maps to the generator $v_0 = x_0 N_2$ of $V_0^2$). As mentioned above, we checked by hand that $V_0^2$ has index $8$ in $\Kgroup_2$. Then by \cref{Lem:indexmonotone} we have that $8 = [\Kgroup_2 : V_0^2] = [\varphi(\Kgroup_i) : \varphi(V_0^i)]$ divides $[\Kgroup_i : V_0^i] = a_i$ and thus $a_0 \geq 8$. The bounds for $a_1, a_2, a_3$ work exactly the same way.
\end{proof}

%\begin{corollary}
%For $i \geq 2$ we have
%\begin{equation}
%a_{\Phi,G_i}(\top) = \Phi(a_0 \cdot C_{b_0} + a_1 \cdot C_{b_1}) + \Phi(a_2 \cdot C_{2 b_2}) + \Phi(a_3 \cdot C_{2 b_3}) + \Phi(G_i)  \mod 2\,.
%\end{equation}
%\end{corollary}
%\begin{proof}
%We have already seen that only the fractures $\rho$ from cases 1 to 8 above can give a nontrivial contribution to $a_{\Phi,G_i}(\top)$. It is eays to see that $f(\rho)=1 \mod 2$ for each of them and $\rho$ contributes if and only if $\Phi(\fracture{G_i}{\rho})=1$. Finally, since the pairs of cases 4,5 and 6,7 lead to isomorphic graphs $\fracture{G_i}{\rho}$, any possible contributions from these cancel modulo $2$ and we are left with the four summands above.
%\end{proof}

\begin{lemma} \label{Lem:indexmonotone}
    Let $G,G'$ denote finite groups and $\varphi : G \to G'$ a group homomorphism. Then,
    for any subgroup $H \subseteq G$ we have that $[\varphi(G) : \varphi(H)]$ divides $[G
    : H]$.
\end{lemma}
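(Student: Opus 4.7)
The plan is to construct a natural surjection from the set $G/H$ of left cosets of $H$ in $G$ onto $\varphi(G)/\varphi(H)$ and show that all of its fibers have the same size, so that $[\varphi(G):\varphi(H)]$ divides $|G/H| = [G:H]$.

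Define $\pi: G/H \to \varphi(G)/\varphi(H)$ by $\pi(gH) := \varphi(g)\varphi(H)$. First, I would check that $\pi$ is well-defined: if $gH = g'H$ then $g^{-1}g' \in H$, hence $\varphi(g)^{-1}\varphi(g') = \varphi(g^{-1}g') \in \varphi(H)$, giving $\varphi(g)\varphi(H) = \varphi(g')\varphi(H)$. Surjectivity is immediate: an arbitrary element of $\varphi(G)/\varphi(H)$ has the form $\varphi(g)\varphi(H)$ with $g \in G$, and is clearly the image of $gH$.

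Next I would analyze the fibers. Let $K := \ker(\varphi)$, which is a normal subgroup of $G$. Since $K$ is normal, the subset $HK = \{hk : h \in H, k \in K\} \subseteq G$ is a subgroup of $G$ containing $H$. A coset $g'H$ lies in the fiber $\pi^{-1}(\varphi(H))$ if and only if $\varphi(g') \in \varphi(H)$, equivalently $g' \in \varphi^{-1}(\varphi(H)) = HK$; thus this fiber consists of exactly the cosets of $H$ contained in the subgroup $HK$, a set of cardinality $[HK:H]$. For the fiber over an arbitrary coset $\varphi(g)\varphi(H)$, left multiplication by $g$ induces a bijection $xH \mapsto gxH$ from $\pi^{-1}(\varphi(H))$ onto $\pi^{-1}(\varphi(g)\varphi(H))$ (one checks directly that $\pi(xH) = \varphi(H)$ iff $\pi(gxH) = \varphi(g)\varphi(H)$). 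Consequently every fiber of $\pi$ has the same cardinality $[HK:H]$.

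Putting this together gives $[G:H] = [HK:H] \cdot [\varphi(G):\varphi(H)]$, from which $[\varphi(G):\varphi(H)] \mid [G:H]$ follows. The only subtle point is to confirm that $HK$ really is a subgroup so that ``cosets of $H$ in $HK$'' makes sense; this is where the normality of $K = \ker\varphi$ in $G$ is used. Everything else is formal manipulation of cosets.
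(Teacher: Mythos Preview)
Your argument is correct. It takes a genuinely different route from the paper's proof, though both arrive at the same numerical factorization.

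The paper proceeds by pure order-counting: with $K=\ker(\varphi)$ and $K_H=K\cap H$, the First Isomorphism Theorem gives $\#\varphi(G)=\#G/\#K$ and $\#\varphi(H)=\#H/\#K_H$, whence
\[
[G:H]=\frac{\#G}{\#H}=[\varphi(G):\varphi(H)]\cdot\frac{\#K}{\#K_H},
\]
and Lagrange applied to $K_H\subseteq K$ makes the second factor an integer. Your proof instead builds an explicit surjection $\pi:G/H\to\varphi(G)/\varphi(H)$ and shows its fibers all have cardinality $[HK:H]$. The two factorizations agree, since $[HK:H]=[K:K\cap H]=\#K/\#K_H$ by the Second Isomorphism Theorem (or a direct count). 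The paper's approach is shorter given that the groups are finite; your bijective argument is somewhat more structural and would adapt directly to the setting where the groups are infinite but the relevant indices are finite, which in fact matches how the paper's surrounding discussion treats indices.
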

\begin{proof}
    Let $K = \mathrm{ker}(\varphi)$ and $K_H = K \cap H = \mathrm{ker}(\varphi|_{H})$,
    then by the First isomorphism theorem we have $\varphi(G) \cong G/K$ and $\varphi(H)
    \cong H/K_H$. Using this, we have
    \begin{align*}
        [G:H] &= \frac{\#G}{\#H} = \frac{\#G / \#K}{\#H / \#K_H} \cdot  \frac{\#K}{\#K_H} = \frac{\#(G / K)}{\#(H / K_H)}\cdot  \frac{\#K}{\#K_H} = \frac{\#\varphi(G)}{\#\varphi(H)}\cdot \frac{\#K}{\#K_H} \\
        ~&= [\varphi(G):\varphi(H)] \cdot \frac{\#K}{\#K_H}.
    \end{align*}
    But $K_H \subseteq K$ is a subgroup, so by Lagrange's theorem, the order of $K_H$
    divides the order of $K$, so that $\#K/\# K_H$ is an integer. Thus the above equality
    shows that $[\varphi(G) : \varphi(H)]$ divides $[G : H]$.
    %\jocomment{Find reference}, see \url{https://math.stackexchange.com/questions/2127520/homomorphism-and-indexes-of-subgroups} .
\end{proof}

\subsection{Analysis of the Coefficient Function via Fixed-points}
While the value $a_{\Phi,H}(\top)$ of the \cfunction~seems to be very difficult to handle
for arbitrary graphs~$H$, we now use our observations on the symmetries of the
torus and the Cayley graph expanders to prove that the \cfunction~does not vanish
infinitely often under specific constraints on the behaviour of~$\Phi$ on the fixed-points
presented in the preceding section.

We start with the case of $a_{\Phi,\torus_\ell}(\top)$, which, while being simple, turns
out to be required for one of the special cases in our main classification for
minor-closed graph properties:
\begin{lemma}\label{lem:coef_special_case}
    Let $\ell$ denote a prime and let $\Phi$ denote a computable graph property. We have that
\begin{align*}
    a_{\Phi,\torus_\ell}(\top) &= -6\Phi(M_{2\ell^2}) +4 \Phi(M_{\ell^2} + \ell C_\ell)
    + 8\Phi(\ell^2P_2)\\ &\quad- \Phi(2\ell C_\ell) -2\Phi(\ell C_{2\ell}) -4\Phi(\ell S_\ell) +
\Phi(\torus_\ell) \mod \ell \,.\end{align*}
\end{lemma}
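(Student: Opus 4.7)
The plan is to start from the explicit formula for $a_{\Phi,\torus_\ell}(\top)$ provided by \cref{cor:collect_coeffs} and use the $\ztwol$-action $\Vdash$ on $\mathcal{L}(\torus_\ell)$ introduced above to reduce the sum modulo~$\ell$ to a finite sum over the fifteen fixed-points enumerated in \cref{obs:fixedPointsbasic}.

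First I would check that both factors in each summand are $\Vdash$-invariant. Setting
\[
w(\sigma)\;:=\;\prod_{v\in V(\torus_\ell)}(-1)^{|\sigma_v|-1}(|\sigma_v|-1)!\,,
\]
the set $\mathcal{L}(\Phi,\torus_\ell)$ is a union of $\Vdash$-orbits because $\fracture{\torus_\ell}{(i,j)\Vdash\sigma}\cong\fracture{\torus_\ell}{\sigma}$ by the general fracture-under-automorphism discussion preceding the lemma, and $w$ is orbit-invariant because it depends only on the multiset $(\sigma_v)_v$ of local partitions, which is merely relabelled by the shift. Partitioning the sum along $\Vdash$-orbits and applying orbit--stabilizer (orbit sizes divide $\#\ztwol=\ell^2$ and hence lie in $\{1,\ell,\ell^2\}$ since $\ell$ is prime), every non-fixed orbit contributes a multiple of $\ell$, leaving only fixed-points modulo $\ell$:
\[
a_{\Phi,\torus_\ell}(\top)\;\equiv\;\sum_{\sigma\text{ fixed}}\Phi(\fracture{\torus_\ell}{\sigma})\cdot w(\sigma)\pmod{\ell}.
\]

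Next I would evaluate $w$ on each fixed-point. A $\Vdash$-fixed fracture has all $\sigma_v$ equal to one common partition $\pi$ of the four local directions $\{\tu,\td,\tl,\tr\}$, so $w(\sigma)=\bigl((-1)^{b-1}(b-1)!\bigr)^{\ell^2}$ where $b=|\pi|$. Fermat's little theorem applied twice gives $x^{\ell^2}=(x^\ell)^\ell\equiv x\pmod{\ell}$ for every integer $x$, which collapses $w(\sigma)\equiv(-1)^{b-1}(b-1)!\pmod{\ell}$ and thereby yields the per-fixed-point contributions $-6$, $2$, $-1$, $1$ for $b=4,3,2,1$ respectively.

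Finally, the fifteen fixed-points have already been grouped in \cref{obs:fixedPointsbasic} by the isomorphism type of $\fracture{\torus_\ell}{\sigma}$: one point with $b=4$ gives $M_{2\ell^2}$; two points with $b=3$ give $M_{\ell^2}+\ell C_\ell$; four points with $b=3$ give $\ell^2 P_2$; one, two, and four points with $b=2$ give $2\ell C_\ell$, $\ell C_{2\ell}$, $\ell S_\ell$ respectively; and the single $b=1$ fracture is $\top$ with $\fracture{\torus_\ell}{\top}=\torus_\ell$. Multiplying per-fixed-point value by the number of fixed-points of each type and summing produces exactly the claimed coefficients $(-6,+4,+8,-1,-2,-4,+1)$. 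The only real obstacle is the bookkeeping in this last step, and it has already been carried out in \cref{obs:fixedPointsbasic}; the conceptual content sits entirely in the orbit argument of Step~1 together with Fermat's identity.
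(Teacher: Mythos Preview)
Your proposal is correct and follows essentially the same route as the paper: start from \cref{cor:collect_coeffs}, observe that both the summand $w(\sigma)$ and membership in $\mathcal{L}(\Phi,\torus_\ell)$ are $\ztwol$-orbit invariants, use that all non-trivial orbit sizes are divisible by $\ell$ to reduce modulo $\ell$ to the fifteen fixed-points of \cref{obs:fixedPointsbasic}, and finish with Fermat's little theorem to collapse the $\ell^2$-th powers. The paper organises the final bookkeeping type by type rather than via your uniform ``$(-1)^{b-1}(b-1)!$ with multiplicity'' table, but the content is identical.
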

\begin{proof}
    By \cref{cor:collect_coeffs} we have
    \[ a_{\Phi,\torus_\ell}(\top) = \sum_{\sigma \in \mathcal{L}(\Phi,\torus_\ell)} ~\prod_{v\in V(\torus_\ell)} (-1)^{|{\sigma}_{v}|-1} \cdot (|{\sigma}_{v}|-1)! \]
    Setting $f(\sigma):= \prod_{v\in V(\torus_\ell)} (-1)^{|{\sigma}_{v}|-1} \cdot (|{\sigma}_{v}|-1)!$, this rewrites to
    \[a_{\Phi,\torus_\ell}(\top) = \sum_{\sigma \in \mathcal{L}(\Phi,\torus_\ell)} f(\sigma)\,. \]
    We now use the action $\Vdash$ of $\ztwol$ on the subset
    $\mathcal{L}(\Phi,\torus_\ell)$ of $\mathcal{L}(\torus_\ell)$, given by permuting the
    elements of a fracture $\rho$ according to the coordinate shift induced by an element
    $(i,j)\in \ztwol$. Restricting this action to $\mathcal{L}(\Phi,\torus_\ell)$ is
    well-defined since the action does not change the isomorphism class\footnote{Note that
    while the action \emph{can} change the isomorphism class \emph{as a
    $\torus_\ell$-coloured graph}, the property $\Phi$ only depends on the underlying
    uncoloured graph, which is unchanged, and thus $\mathcal{L}(\Phi,\torus_\ell)$ is indeed
    invariant under the action.} of $\fracture{\torus_\ell}{\rho}$. In particular, we have
    that $f(\sigma)=f(\rho)$ whenever $\sigma$ and $\rho$ are in the same orbit of the action.
    This allows us to rewrite as follows; the sum is taken over all orbits $[\sigma]$ of the
    group action:
    \[ a_{\Phi,\torus_\ell}(\top) = \sum_{[\sigma]} \#[\sigma]  \cdot f(\sigma) \]
    Since $\ell$ is a prime, the group order of $\ztwol$ is a power of $\ell$. As the size
    of every orbit must divide the group order, we can ignore all orbits which are not
    fixed-points, that is $\sigma$ for which $\#[\sigma]=1$, if we take the sum
    modulo~$\ell$. All $15$ fixed-points are explicitly given in
    \cref{obs:fixedPointsbasic}. Let us now compute the coefficients of each collection of
    fixed-points that induce the same graph, up to isomorphism; we use Fermat's Little
    Theorem---recall that $\ell$ is a prime.

    \medskip

     \noindent \textbf{\sffamily Matching:} One fixed-point $\rho$ satisfies
     $\fracture{\torus_\ell}{\rho}\cong M_{2\ell^2}$. The contribution to
     $a_{\Phi,\torus_\ell}(\top)$ is thus
    \[1\cdot f(\rho)\cdot \Phi(M_{2\ell^2})=((-1)^{4-1} \cdot (4-1)!)^{\ell^2} \Phi(M_{2\ell^2}) = -6 \Phi(M_{2\ell^2}) \mod \ell \,.\]
     \textbf{\sffamily Matching and cycles:} Two fixed-points $\rho$ satisfy $\fracture{\torus_\ell}{\rho}\cong M_{\ell^2} +\ell C_\ell$. The contribution to $a_{\Phi,\torus_\ell}(\top)$ is thus
    \[2\cdot f(\rho)\cdot \Phi(M_{\ell^2}+\ell C_\ell) = 2\cdot ((-1)^{3-1} \cdot (3-1)!)^{\ell^2} \Phi(M_{\ell^2}+\ell C_\ell) = 4 \Phi(M_{\ell^2}+\ell C_\ell) \mod \ell \,. \]
     \textbf{\sffamily Wedge packing:} Four fixed-points $\rho$ satisfy $\fracture{\torus_\ell}{\rho}\cong \ell^2 P_2$.
    The contribution to $a_{\Phi,\torus_\ell}(\top)$ is thus
    \[4\cdot f(\rho)\cdot \Phi(\ell^2 P_2) = 4\cdot ((-1)^{3-1} \cdot (3-1)!)^{\ell^2} \Phi(\ell^2 P_2) = 8 \Phi(\ell^2 P_2) \mod \ell \,. \]
     \textbf{\sffamily Cycle packing I:} One fixed-point $\rho$ satisfies $\fracture{\torus_\ell}{\rho}\cong 2\ell C_\ell$.
    The contribution to $a_{\Phi,\torus_\ell}(\top)$ is thus
    \[1\cdot f(\rho)\cdot \Phi(2\ell C_\ell) = ((-1)^{2-1} \cdot (2-1)!)^{\ell^2} \Phi(2\ell C_\ell) = - \Phi(2\ell C_\ell) \mod \ell \,. \]
     \textbf{\sffamily Cycle packing II:} Two fixed-points $\rho$ satisfy $\fracture{\torus_\ell}{\rho}\cong \ell C_{2\ell}$.
    The contribution to $a_{\Phi,\torus_\ell}(\top)$ is thus
    \[2\cdot f(\rho)\cdot \Phi(\ell C_{2\ell}) = 2\cdot ((-1)^{2-1} \cdot (2-1)!)^{\ell^2} \Phi(\ell C_{2\ell}) = -2 \Phi(\ell C_{2\ell}) \mod \ell \,. \]
     \textbf{\sffamily Sun packing:} Four fixed-points $\rho$ satisfy $\fracture{\torus_\ell}{\rho}\cong \ell S_{\ell}$.
    The contribution to $a_{\Phi,\torus_\ell}(\top)$ is thus
    \[4\cdot f(\rho)\cdot \Phi(\ell S_{\ell}) = 4\cdot ((-1)^{2-1} \cdot (2-1)!)^{\ell^2} \Phi(\ell S_{\ell}) = -4 \Phi(\ell S_{\ell}) \mod \ell \,. \]
     \textbf{Torus:} One fixed-point $\rho$ satisfies $\fracture{\torus_\ell}{\rho}\cong \torus_\ell$.
    The contribution to $a_{\Phi,\torus_\ell}(\top)$ is thus
    \[1\cdot f(\rho)\cdot \Phi(\torus_\ell) = ((-1)^{1-1} \cdot (1-1)!)^{\ell^2} \Phi(\torus_\ell) =  \Phi(\torus_\ell) \mod \ell \,. \]
     Taking the sum of the previous terms (modulo $\ell$) concludes the proof.
\end{proof}

 We proceed with a similar lemma for the Cayley graph expanders.
%\marc{Johannes, please add the coefficient analysis for the cayley expanders here. (Using the fixed-points from the previous subsection.) That is, add the proof for the following }

\begin{lemma}\label{lem:expander_coeffs}
    Let $\mathcal{G}=\{G_1,G_2,\dots \}$ denote the family of Cayley graph expanders given in
    \cref{sec:tori_exp} and let $\Phi$ denote a computable graph property. For $i \geq 2$ we have
    \[
        a_{\Phi,G_i}(\top) = \Phi(a_0 \cdot C_{b_0} + a_1 \cdot C_{b_1}) + \Phi(a_2 \cdot C_{2 b_2}) + \Phi(a_3 \cdot C_{2 b_3}) + \Phi(G_i)  \mod 2\,.
    \]
    Moreover, the numbers $a_j, b_j$ are all powers of $2$ and $a_j \geq 8$.
\end{lemma}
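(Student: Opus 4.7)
The plan is to mirror the strategy of \cref{lem:coef_special_case}, substituting the action of $\ztwol$ on $\mathcal{L}(\torus_\ell)$ by the action $\Vdash$ of the finite $2$-group $\Kgroup_i$ on $\mathcal{L}(G_i)$ introduced in \cref{sec:tori_exp}. Starting from
\[a_{\Phi,G_i}(\top) = \sum_{\sigma \in \mathcal{L}(\Phi,G_i)} f(\sigma), \qquad f(\sigma) := \prod_{v \in V(G_i)} (-1)^{|\sigma_v|-1}(|\sigma_v|-1)!\]
as provided by \cref{cor:collect_coeffs}, I would first note that the $\Kgroup_i$-action preserves both the isomorphism class of $\fracture{G_i}{\sigma}$ (and so restricts to $\mathcal{L}(\Phi,G_i)$) and the value $f(\sigma)$, since it merely permutes the vertex factors in the product. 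Consequently one may rewrite $a_{\Phi,G_i}(\top) = \sum_{[\sigma]} \#[\sigma] \cdot f(\sigma)$, where the sum ranges over $\Kgroup_i$-orbits in $\mathcal{L}(\Phi,G_i)$.

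Since $\Kgroup_i$ is a finite $2$-group, the orbit-stabilizer theorem forces every orbit size to be a power of $2$, so modulo $2$ only fixed points of the action survive. For such a fixed point $\sigma$, a second reduction applies: the factor $(-1)^{k-1}(k-1)!$ equals $2$ for $k=3$ and $-6$ for $k=4$, hence is even, so $f(\sigma) \equiv 0 \mod 2$ whenever some $\sigma_v$ has at least three blocks. Thus only the fixed points whose partitions $\sigma_v$ each have at most two blocks contribute; for these the per-vertex factors equal $\pm 1$ and their product is congruent to $1$ modulo $2$. Consequently
\[a_{\Phi,G_i}(\top) \equiv \sum_{\sigma} \Phi(\fracture{G_i}{\sigma}) \mod 2,\]
where $\sigma$ ranges over the $\Kgroup_i$-fixed fractures satisfying $|\sigma_v| \leq 2$ at every vertex $v$.

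To finish, I would invoke \cref{lem:fixedPointsexpander}, which enumerates precisely these fixed points together with the isomorphism types of the associated fractured graphs. Summing the $\Phi$-values modulo $2$ then yields the claim: the unique cycle-packing-I fixed point contributes $\Phi(a_0 C_{b_0} + a_1 C_{b_1})$; the two cycle-packing-II fixed points contribute $\Phi(a_2 C_{2 b_2}) + \Phi(a_3 C_{2 b_3})$; the four sun-packing fixed points produce $2\Phi(a_0 S_{b_0}) + 2\Phi(a_1 S_{b_1}) \equiv 0 \mod 2$; and the full-graph fixed point $\top$ contributes $\Phi(G_i)$. The ``moreover'' statement about $a_j, b_j$ is already part of \cref{lem:fixedPointsexpander}. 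I expect no real obstacle in carrying this out: the only subtlety worth flagging is the cancellation of the sun-packing contributions modulo $2$, which occurs precisely because the isomorphism classes $a_0 S_{b_0}$ and $a_1 S_{b_1}$ each arise from exactly two distinct $\Kgroup_i$-fixed fractures.
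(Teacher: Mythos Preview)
Your proposal is correct and follows essentially the same approach as the paper's proof: both start from \cref{cor:collect_coeffs}, exploit the $\Kgroup_i$-action to reduce modulo $2$ to fixed points, observe that $f(\sigma)$ is even unless every $\sigma_v$ has at most two blocks, and then invoke \cref{lem:fixedPointsexpander} to enumerate those fixed points and cancel the paired sun-packing contributions. Your write-up is, if anything, slightly more explicit about why the orbit sizes are powers of $2$ and why the per-vertex factors vanish for $|\sigma_v|\geq 3$.
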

\begin{proof}
    %We are interested in understanding the coefficient $a_{\Phi,G_i}(\top)$ modulo $2$.
    By \cref{cor:collect_coeffs} we have
    \[ a_{\Phi,G_i}(\top) = \sum_{\rho \in \mathcal{L}(\Phi,G_i)} ~\prod_{v\in V(G_i)} (-1)^{|{\rho}_{v}|-1} \cdot (|{\rho}_{v}|-1)! \]
    Setting $f(\rho):= \prod_{v\in V(G_i)} (-1)^{|{\rho}_{v}|-1} \cdot (|{\rho}_{v}|-1)!$,
    this rewrites to
    \[a_{\Phi,G_i}(\top) = \sum_{\rho \in \mathcal{L}(\Phi,G_i)} f(\rho)\,. \]
    As before, the action of the $2$-group $\Kgroup_i$ leaves the set
    $\mathcal{L}(\Phi,G_i)$ invariant and modulo $2$ the contribution of all elements
    $\rho$ not fixed under $\Kgroup_i$ vanishes. Thus we only consider the fixed points
    $\rho = (\rho_v)_{v \in V(G_i)}$, for which $\rho_v$ is independent of $v$.

    From the formula of $f(\rho)$ it is easy to see that $f(\rho) = 1 \mod 2$ if $\rho$ has at
    most two blocks and $f(\rho) = 0 \mod 2$ otherwise. Thus only the fractures $\rho$ from
    cases 1 to 8 of \cref{lem:fixedPointsexpander}  can give a nontrivial contribution to
    $a_{\Phi,G_i}(\top)$. The fixed-point $\rho$ contributes if and only if
    $\Phi(\fracture{G_i}{\rho})=1$. Finally, since the pairs of cases 4,5 and 6,7 lead to
    isomorphic graphs $\fracture{G_i}{\rho}$, any possible contributions from these cancel
    modulo $2$ and we are left with the four summands above.
\end{proof}

\section{Exact Counting of Small Subgraph Patterns}
Building upon our analysis of the \cfunction{} of the torus and the Cayley graph expanders above, we are now able to present the proofs of our results on exact counting.

\subsection{Hardness for Minor-closed Properties}\label{sec:minor_classification}
We present an exhaustive and explicit complexity dichotomy of $\#\edgesubsprob(\Phi)$ for properties~$\Phi$ that are minor-closed. Recall that a graph $H$ is a minor of a graph $G$ if it
can be obtained from $G$ by a sequence of vertex-deletions, edge-deletions and
edge-contractions (where multiple edges and self-loops are deleted). A property $\Phi$ is
\emph{minor-closed} if, for all graphs $G$ with $\Phi(G)=1$, we have that $\Phi$ is true
for all minors of~$G$ as well.

Given a minor-closed property $\Phi$, by the celebrated Robertson-Seymour
Theorem~\cite{RobertsonS04}, there is a \emph{finite} set~$\mathcal{F}$ such that for
all graphs~$H$ we have that $\Phi(H)=1$ if and only if no graph in~$\mathcal{F}$ is a
minor of~$H$. Recall that $\#\edgesubsprob(\Phi)$ is fixed-parameter tractable if $\Phi$
has bounded matching number or if $\Phi$ is trivially true. We show that
$\#\edgesubsprob(\Phi)$ is $\#\W{1}$-hard in all other cases, given that $\Phi$ is
minor-closed.
It turns out that we need to distinguish\footnote{For example, if $\mathcal{F}$ only
contains the path of two edges, then $\#\edgesubsprob(\Phi)$ is the problem of counting
$k$-matchings. If we would be able to use the Cayley graph expanders for this property as
given by \cref{lem:expander_coeffs}, then the \cfunction{} would satisfy $a_{\Phi,k}(\top)=1 \mod 2$, which could be used to establish that counting $k$-matchings modulo $2$ is hard, contradicting the fact that the latter problem is known to be polynomial time solvable~\cite[Section~1.4]{BjorklundDH15}.} whether $\mathcal{F}$ contains a graph of degree at most $2$.

\begin{lemma}\label{lem:minor_dicho_deg2}
    Let $\Phi$ denote a minor-closed graph property with unbounded matching number and assume that $\mathcal{F}$ contains a graph $F$ of degree at most $2$. Then $\mathcal{H}[\Phi,\torus]$ is infinite.
\end{lemma}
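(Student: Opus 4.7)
The plan is to invoke Lemma~\ref{lem:coef_special_case}, which for every prime $\ell$ expresses $a_{\Phi,\torus_\ell}(\top)\bmod\ell$ as an explicit $\mathbb{Z}$-linear combination of the seven indicators $\Phi(M_{2\ell^2})$, $\Phi(M_{\ell^2}+\ell C_\ell)$, $\Phi(\ell^2 P_2)$, $\Phi(2\ell C_\ell)$, $\Phi(\ell C_{2\ell})$, $\Phi(\ell S_\ell)$, $\Phi(\torus_\ell)$. I shall determine the value of $\Phi$ on each of these seven graphs for all sufficiently large $\ell$ and show that the resulting combination is nonzero modulo $\ell$ for infinitely many primes, which immediately gives $\torus_\ell\in\mathcal{H}[\Phi,\torus]$ infinitely often.

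Two observations will be used repeatedly. First, since $\Phi$ is minor-closed and has unbounded matching number, every matching arises as a minor of some graph satisfying $\Phi$, so minor-closure forces $\Phi(M_k)=1$ for all $k$, and equivalently no $F'\in\mathcal{F}$ is a matching. Second, the forbidden minor $F$ has maximum degree at most $2$ and no isolated vertex, hence is a disjoint union of paths and cycles; since $F$ is not itself a matching, it has a component of the form $P_d$ with $d\ge 2$ or $C_k$ with $k\ge 3$. I will then argue that for all sufficiently large $\ell$, this $F$ embeds as a minor into each of the five graphs $M_{\ell^2}+\ell C_\ell$, $2\ell C_\ell$, $\ell C_{2\ell}$, $\ell S_\ell$ and $\torus_\ell$: each of them contains enough pairwise disjoint copies of $C_\ell$ (or, for the torus, disjoint large planar regions) to route the finitely many path and cycle components of $F$ separately, using that every path or cycle of length at most $\ell$ is already a minor of $C_\ell$ (and of $S_\ell$). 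Consequently, $\Phi$ vanishes on each of these five graphs for large $\ell$.

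It remains to treat $\Phi(\ell^2 P_2)$. The minors of $\ell^2 P_2$ are precisely the disjoint unions of copies of $P_1$ and $P_2$ (plus isolated vertices), so, as $\mathcal{F}$ is finite, this value stabilises for large $\ell$ to either $0$ (if some forbidden minor is a disjoint union of $P_1$'s and $P_2$'s containing at least one $P_2$) or $1$ (otherwise). Combined with $\Phi(M_{2\ell^2})=1$, Lemma~\ref{lem:coef_special_case} yields, for all sufficiently large $\ell$,
\[
a_{\Phi,\torus_\ell}(\top)\ \equiv\ -6 + 8\cdot\Phi(\ell^2 P_2)\ \in\ \{-6,\ 2\}\pmod{\ell},
\]
which is nonzero for every prime $\ell\notin\{2,3\}$. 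Hence $\torus_\ell\in\mathcal{H}[\Phi,\torus]$ for infinitely many primes $\ell$, as required. The main technical step is the explicit verification that $F$ embeds as a minor into each of the five ``high-complexity'' graphs listed above; although intuitively clear because each contains arbitrarily many pairwise disjoint long cycles (and in the torus case, arbitrarily large subgrid regions), the concrete embedding---especially for $\torus_\ell$, where one must carve out disjoint subgrid ``patches'' in which to lay down the components of $F$---needs to be written out with some care.
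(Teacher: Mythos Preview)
Your proposal is correct and follows essentially the same route as the paper: use minor-closure plus unbounded matching number to get $\Phi(M_k)=1$ for all $k$, observe that a degree-$\leq 2$ forbidden minor $F$ (a disjoint union of paths and cycles) embeds as a minor into each of $M_{\ell^2}+\ell C_\ell$, $2\ell C_\ell$, $\ell C_{2\ell}$, $\ell S_\ell$, $\torus_\ell$ for large $\ell$, and then invoke Lemma~\ref{lem:coef_special_case} to reduce to $-6+8\Phi(\ell^2 P_2)\in\{-6,2\}\not\equiv 0\pmod\ell$ for primes $\ell>3$. The paper is slightly terser---it does not bother arguing that $\Phi(\ell^2 P_2)$ stabilises or that $F$ is not a matching, simply noting that either value of $\Phi(\ell^2 P_2)$ works---but the substance is identical.
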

\begin{proof}
    The assumption that $\Phi$ has unbounded matching number implies that $\Phi$ is
    satisfied by graphs containing arbitrarily large matchings. Since $\Phi$ is closed
    under taking minors, this implies that $\Phi(M_k)=1$ for all $k$.

    Now observe that since $F$ has degree at most $2$, the graph $F$ is a union of paths
    and cycles. Therefore, there is a constant $c$ (only depending on $F$) such that for
    all $\ell>c$, the graph $F$ is a minor of each of the following graphs:
    \[ M_{\ell^2} + \ell C_\ell, 2\ell C_\ell, \ell C_{2\ell}, \ell S_\ell, \text{ and } \torus_\ell \,. \]
    Indeed, any finite union of paths and cycles can be obtained as a minor of
    sufficiently large cycle packings, sun packings and tori.

    Now assume that $\ell$ is additionally a prime and greater than $3$. By
    \cref{lem:coef_special_case}, we thus have that
    \[a_{\Phi, \torus_\ell}(\top) = -6\Phi(M_{2\ell^2}) + 8\Phi(\ell^2 P_2) = -6 +
    8\Phi(\ell^2 P_2) \mod \ell \,.\]
    The claim follows by observing that $-6 + 8\Phi(\ell^2 P_2) \neq 0 \mod \ell$ for
    every prime $\ell>3$, regardless on whether $\Phi(\ell^2 P_2)=1$ or $\Phi(\ell^2
    P_2)=0$.
\end{proof}

Recall that $\mathcal{G}$ is the family of Cayley graph expanders introduced in
\cref{sec:tori_exp}.
\begin{lemma}\label{lem:minor_dicho_nodeg2}
    Let $\Phi$ denote a minor-closed graph property which is not trivially true, and
    assume that~$\mathcal{F}$ does not contain a graph $F$ of degree at most $2$. Then
    $\mathcal{H}[\Phi,\mathcal{G}]$ is infinite.
\end{lemma}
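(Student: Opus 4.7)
The plan is to apply the coefficient formula for Cayley graph expanders from Lemma~\ref{lem:expander_coeffs} in combination with the observation that, under our hypothesis on $\mathcal{F}$, the property $\Phi$ must hold on every disjoint union of cycles.

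First I would argue that if every forbidden minor $F \in \mathcal{F}$ has a vertex of degree at least $3$, then no $F$ can be a minor of a graph of maximum degree at most $2$. This is because edge deletions and contractions never raise the maximum degree when starting from a graph of maximum degree $2$ (merging two vertices of degree $\le 2$ and removing the contracted edge yields a vertex of degree at most $2$). A disjoint union of cycles has maximum degree exactly $2$, so it contains no forbidden minor, and hence $\Phi$ evaluates to $1$ on every such graph.

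Next I would specialise Lemma~\ref{lem:expander_coeffs} to this $\Phi$. Each of the three summands $a_0 C_{b_0} + a_1 C_{b_1}$, $a_2 C_{2 b_2}$, and $a_3 C_{2 b_3}$ is a disjoint union of cycles, so each contributes $1$ modulo $2$, yielding
\[
a_{\Phi,G_i}(\top) \equiv 1 + \Phi(G_i) \pmod{2}.
\]
Since $\Phi$ is not trivially true, $\mathcal{F}$ is nonempty; pick any $F \in \mathcal{F}$. By Fact~\ref{fac:niceExpanders}, for all sufficiently large $i$ the graph $G_i$ contains $K_{|V(F)|}$ (and therefore $F$) as a minor, so $\Phi(G_i) = 0$ eventually. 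This gives $a_{\Phi,G_i}(\top) \equiv 1 \pmod 2 \neq 0$ for infinitely many $i$, and thus $\mathcal{H}[\Phi,\mathcal{G}]$ is infinite.

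The only point that requires a bit of care is confirming that the lengths $b_0, b_1, 2b_2, 2b_3$ appearing in Lemma~\ref{lem:expander_coeffs} are large enough that the corresponding graphs are genuine simple cycles of maximum degree $2$. This follows from the Cayley graph construction: the generating set $\{v_0^{\pm 1}, v_1^{\pm 1}\}$ consists of four distinct non-trivial elements of the $2$-group $\Kgroup_i$, so the orders $b_j$ are powers of $2$ bounded below by $4$, and similarly $2 b_2, 2 b_3 \geq 4$. Once this is checked, no other step is subtle; the structure of the proof exactly mirrors Lemma~\ref{lem:minor_dicho_deg2}, with the hypothesis on $\mathcal{F}$ being precisely what is needed so that all three cycle-type fixed points, rather than just some of them as in the torus case, contribute to the coefficient modulo~$2$.
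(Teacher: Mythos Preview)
Your argument is correct and follows essentially the same route as the paper: invoke Lemma~\ref{lem:expander_coeffs}, use Fact~\ref{fac:niceExpanders} to force $\Phi(G_i)=0$ for large $i$, and observe that no forbidden minor with a degree-$3$ vertex can appear in a graph of maximum degree~$2$, so the three cycle-packing terms each contribute~$1$. Your final paragraph checking cycle lengths is unnecessary: the fractured graphs are simple by construction, and in the three cycle-packing cases the partition blocks have size~$2$, so the graphs are $2$-regular and hence automatically disjoint unions of cycles of length at least~$3$.
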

\begin{proof}
    Since $\Phi$ is not trivially true, the set $\mathcal{F}$ is non-empty. Thus let $F$
    denote an arbitrary graph in~$\mathcal{F}$. By \cref{fac:niceExpanders}, there is an
    index $j$ such that for all $i\geq j$, the graph $G_i$ contains the complete graph on
    $\#V(F)$ vertices (and thus also $F$) as a minor. In other words, $\Phi(G_i)=0$ for
    all $i\geq j$. By \cref{lem:expander_coeffs}, we have that for all $i\geq 2$
    \[
        a_{\Phi,G_i}(\top) = \Phi(a_0 \cdot C_{b_0} + a_1 \cdot C_{b_1}) + \Phi(a_2 \cdot C_{2 b_2}) + \Phi(a_3 \cdot C_{2 b_3}) + \Phi(G_i)  \mod 2\,.
    \]
    Hence, for $i\geq \max\{2,j\}$, we have
    \[
        a_{\Phi,G_i}(\top) = \Phi(a_0 \cdot C_{b_0} + a_1 \cdot C_{b_1}) + \Phi(a_2 \cdot C_{2 b_2}) + \Phi(a_3 \cdot C_{2 b_3})  \mod 2\,.
    \]
    Finally, we rely on the premise of the lemma, implying that each graph in
    $\mathcal{F}$ has a vertex of degree at least $3$. Consequently, no graph in
    $\mathcal{F}$ can be a minor of a cycle-packing. Thus $\Phi(a_0 \cdot C_{b_0} + a_1
    \cdot C_{b_1}) = \Phi(a_2 \cdot C_{2 b_2}) = \Phi(a_3 \cdot C_{2 b_3}) =1$, and,
    consequently, $a_{\Phi,G_i}(\top) =1 \mod 2$ for each $i\geq \max\{2,j\}$.
\end{proof}

\noindent We are finally able to prove our main result for exact counting; note that all minor-closed graph properties are (polynomial-time) computable due to the finite set of forbidden minors.

\begin{theorem}\label{thm:main_minor_exact_only}
    Let $\Phi$ denote a minor-closed graph property.
    If $\Phi$ is either trivially true or of bounded matching number,\footnote{We say that
        a property has bounded matching number if there is a constant bound on the size of a
    largest matching in graphs satisfying $\Phi$.} then the (exact) counting version
    $\#\edgesubsprob(\Phi)$ is fixed-parameter tractable. Otherwise $\#\edgesubsprob(\Phi)$ is
    $\#\W{1}$-hard. If, additionally, each forbidden minor of $\Phi$ has a vertex of degree at
    least $3$, then $\#\edgesubsprob(\Phi)$ cannot be solved in time
    \[ f(k)\cdot |G|^{o(k/\log k)} \,,\]
    for any function $f$, unless the Exponential Time Hypothesis fails.
\end{theorem}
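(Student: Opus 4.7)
The proof will combine the three preceding lemmas of this section with \cref{thm:smallmatchFPT} and \cref{lem:hardness_basis}. My plan is to first dispatch the tractable regime, then establish $\#\W{1}$-hardness via a case distinction on the set $\mathcal{F}$ of forbidden minors furnished by the Robertson--Seymour theorem, and finally to observe that the extra hypothesis in the ETH statement pushes us into the case that already yields the stronger lower bound.

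For the tractable cases, if $\Phi$ is trivially true then $\#\edgesubs{\Phi,k}{G} = \binom{\#E(G)}{k}$, which is computable in polynomial time. If instead $\Phi$ has bounded matching number $M$, then all graphs satisfying $\Phi$ have matching number at most $M$, and \cref{thm:smallmatchFPT} applies and yields fixed-parameter tractability. So I may assume from now on that $\Phi$ is not trivially true and has unbounded matching number; in particular, $\mathcal{F}\neq\emptyset$.

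For hardness, I split according to whether $\mathcal{F}$ contains a graph of maximum degree at most $2$. If it does, then \cref{lem:minor_dicho_deg2} shows that $\mathcal{H}[\Phi,\torus]$ is infinite, and \cref{lem:hardness_basis} then gives $\#\W{1}$-hardness of $\#\edgesubsprob(\Phi)$. Otherwise, every forbidden minor has a vertex of degree at least $3$, so \cref{lem:minor_dicho_nodeg2} gives that $\mathcal{H}[\Phi,\mathcal{G}]$ is infinite for the Cayley graph expander family $\mathcal{G}$ of \cref{sec:tori_exp}, and \cref{lem:hardness_basis} again yields $\#\W{1}$-hardness.

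For the ETH-based lower bound, I observe that the additional hypothesis ``every forbidden minor has a vertex of degree at least $3$'' places us automatically in the second case above: since no graph in $\mathcal{F}$ is a minor of any $M_k$ (as matchings have maximum degree $1$), we obtain $\Phi(M_k)=1$ for all $k$, so $\Phi$ has unbounded matching number (and is not trivially true, as $\mathcal{F}\neq\emptyset$). Therefore the infiniteness of $\mathcal{H}[\Phi,\mathcal{G}]$ from \cref{lem:minor_dicho_nodeg2} applies, and the second half of \cref{lem:hardness_basis} rules out algorithms running in time $f(k)\cdot|G|^{o(k/\log k)}$ under ETH. The only real work lies not in the present proof, where everything is a short assembly of earlier results, but rather in the preceding lemmas \cref{lem:minor_dicho_deg2} and \cref{lem:minor_dicho_nodeg2}, whose role was precisely to carry out the coefficient analysis on tori and on the $2$-group Cayley expanders; here it remains only to verify that these two cases cover the whole dichotomy and that the stronger hypothesis triggers the stronger conclusion.
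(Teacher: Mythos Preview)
Your proof is correct and follows essentially the same route as the paper: tractability via \cref{thm:smallmatchFPT}, then a case split on whether $\mathcal{F}$ contains a graph of maximum degree at most $2$, invoking \cref{lem:minor_dicho_deg2} or \cref{lem:minor_dicho_nodeg2} respectively, and feeding the resulting infinite family into \cref{lem:hardness_basis}. One small remark: in the ETH paragraph you assert $\mathcal{F}\neq\emptyset$ as if it followed from the degree-$3$ hypothesis, but that hypothesis is vacuous when $\mathcal{F}=\emptyset$; the non-triviality of $\Phi$ is already part of the ``Otherwise'' clause (signalled by ``additionally''), so you can simply take it as given rather than trying to derive it.
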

\begin{proof}
    The (fixed-parameter) tractability part is given by \cref{thm:smallmatchFPT}. If
    $\Phi$ has unbounded matching number, but at least one forbidden minor is of degree at
    most $2$, then, by \cref{lem:minor_dicho_deg2}, the set $\mathcal{H}[\Phi,\torus]$ is
    infinite, which implies $\#\W{1}$-hardness by \cref{lem:hardness_basis}.

    If $\Phi$ is not trivially true and each forbidden minor has a vertex of degree at least
    $3$, then, by \cref{lem:minor_dicho_nodeg2}, the set $\mathcal{H}[\Phi,\mathcal{G}]$ is
    infinite. Again by \cref{lem:hardness_basis}, this implies both, $\#\W{1}$-hardness
    and the conditional lower bound.
\end{proof}

\subsection{Hardness for Selected Natural Graph Properties}

In addition to classifying
$\#\edgesubsprob(\Phi)$ for minor-closed properties $\Phi$, we can also use the criteria for
establishing $\#\W{1}$-hardness and an almost tight conditional lower bound of
$\#\edgesubsprob(\Phi)$ directly to some natural, but non-minor-closed properties. With this
we aim to illustrate the simplicity of applying our fixed-points result for Cayley graph
expanders (\cref{lem:expander_coeffs}) to explicitly given graph properties.

\corintrofur*
\begin{proof}
    Our proof proceeds by applying \cref{lem:expander_coeffs} to show $a_{\Phi,G_i}(\top) \neq 0$ for each of the properties $\Phi$, allowing us to conclude using \cref{lem:hardness_basis}.

    For $\Phi \in  \{\Phi_C, \Phi_H, \Phi_E\}$, observe that the graphs
    $ a_0 \cdot C_{b_0} + a_1 \cdot C_{b_1}$, $a_2 \cdot C_{2 b_2}$, and $a_3 \cdot
    C_{2 b_3}$
    are each disconnected (and hence not Hamiltonian, nor Eulerian either) if $a_i
    \geq 8$ for $i=1,2,3$.
    Further, the graphs $G_i$ are connected since Cayley graphs
    are connected\footnote{Recall that our definition of Cayley graphs enforces the set $S$ to be a set of generators
    of the group.} Thus, the graphs $G_i$ are also Eulerian, since they are $4$-regular.
    Moreover, Cayley graphs of $p$-groups are Hamiltonian~\cite{Witte86}.
    Thus, by \cref{lem:expander_coeffs}, we have that for $i\geq 2$:
    \[
        a_{\Phi,G_i}(\top) = \Phi(a_0 \cdot C_{b_0} + a_1 \cdot C_{b_1}) + \Phi(a_2 \cdot C_{2 b_2}) + \Phi(a_3 \cdot C_{2 b_3}) + \Phi(G_i) = 1  \mod 2\,.
    \]
    Consequently, $\mathcal{H}[\Phi,\mathcal{G}]$ is infinite if $\Phi \in \{\Phi_C,
    \Phi_H, \Phi_E\}$.
    By \cref{lem:hardness_basis}, we obtain both, $\#\W{1}$-hardness and the conditional
    lower bound.

    \noindent For $\Phi = \Phi_{CF}$ we can perform a similar analysis: observe that cycle-packings are
    always claw-free. On the other hand, for each $i> 2$, the graphs $G_i$ do contain an
    (induced) claw. To see this,
    let $e_{\Kgroup_i}$ denote the neutral element of $\Kgroup_i$ and consider the vertices of $G_i$ associated to  $e_{\Kgroup_i}$, $v_0$,
    $v_1$ and $v_1^{-1}$. While $e_{\Kgroup_i}$ is adjacent to the remaining three cosets,
    it is easy to check by hand that $v_0$, $v_1$ and $v_1^{-1}$ constitute an
    independent set in $G_i$.

    Consequently, by \cref{lem:expander_coeffs}, we have that for $i>2$:
    \[
        a_{\Phi,G_i}(\top) = \Phi(a_0 \cdot C_{b_0} + a_1 \cdot C_{b_1}) + \Phi(a_2 \cdot C_{2 b_2}) + \Phi(a_3 \cdot C_{2 b_3}) + \Phi(G_i) = 3+0 = 1  \mod 2\,.
    \]
    Thus, $\mathcal{H}[\Phi_{CF},\mathcal{G}]$ is infinite.
    By \cref{lem:hardness_basis}, we hence obtain both $\#\W{1}$-hardness and the
    conditional lower bound.
\end{proof}

\section{Approximate Counting of Small Subgraph Patterns}\label{sec:approx}
Recall that we identified $\#\edgesubsprob(\Phi)$ as an inherently hard problem in case we
aim for \emph{exactly} counting the solutions. In particular, we established
$\#\W{1}$-hardness for any non-trivial minor-closed property~$\Phi$ of unbounded matching
number. For this reason, the section below deals with the complexity of
\emph{approximating} the number of solutions. Tractability of approximating the solutions
of parameterized counting problems is given by the notion of a fixed-parameter tractable
randomized approximation scheme.
\begin{defn}[FPTRAS~\cite{ArvindR02,Meeks16}]\label{def:FPTRAS}
    Let $(P,\kappa)$ denote a parameterized counting problem. A \emph{fixed-parameter
    tractable randomized approximation scheme} ``FPTRAS'' for $(P,\kappa)$ is a randomized
    algorithm $\mathbb{A}$ that, given $x\in \Sigma^\ast$ and rational numbers
    $\varepsilon >0$ and $0< \delta< 1$ computes an integer $z$ such that
    \[ \pr{(1-\varepsilon) P(x) \leq z \leq (1+\varepsilon) P(x)} \geq 1-\delta \,.\]
    The running time of $\mathbb{A}$ must be bounded by $f(\kappa(x))\cdot
    \mathsf{poly}(|x|,1/\varepsilon,\log(1/\delta))$
    for some computable function~$f$.
    \lipicsEnd
\end{defn}

Indeed, we can show that $\#\edgesubsprob(\Phi)$ allows an FPTRAS for every minor-closed
property $\Phi$. In fact, we prove the following general criterion, which implies the
existence of an FPTRAS for minor-closed properties.

\apxmain*

We start with the case of $\Phi$ satisfying both the matching and the star criterion. For
readers familiar with the meta-theorem of Dell, Lapinskas and Meeks~\cite{DellLM20}, we
point out that their method cannot be used to achieve the desired goal in the current
setting: the results in~\cite[Section~1.3]{DellLM20} imply that $\#\edgesubsprob(\Phi)$
admits an FPTRAS whenever the \emph{edge-colourful decision version} of
$\edgesubsprob(\Phi)$ is fixed-parameter tractable; in the latter, we expect as input a
graph $G$ with $k$ different edge-colours and the goal is to decide whether there is a
subset $A$ of edges containing each colour exactly once such that $G[A]$ satisfies $\Phi$
(w.r.t.\ the underlying uncoloured graph). Thus, if we could show that the edge-colourful
decision version is fixed-parameter tractable for properties satisfying the matching and
the star criterion, \cref{thm:approx_main_intro} would follow.

However, the latter cannot be true (unless $\ccFPT=\W{1}$) since the following property
$\Phi$ induces a $\W{1}$-hard colourful decision version, while satisfying both the
matching and the star criterion: $\Phi(H)=1$ if and only if $H$ is either a star, a
matching, or the union of a clique and a triangle. $\W{1}$-hardness follows from a
reduction from finding edge-colourful $k$-cliques in a graph, which is known to be
$\W{1}$-hard.\footnote{The \emph{vertex-colourful} clique problem is $\W{1}$-hard~(see
Chapter 13 in~\cite{CyganFKLMPPS15}) and reduces to the \emph{edge-colourful} clique
problem by assigning an edge $\{u,v\}$ the colour $\{c(u),c(v)\}$, where $c(u)$ and $c(v)$
are the vertex-colours of $u$ and $v$.} The reduction is straightforward: given a graph
$G$ with $\binom{k}{2}$ edge colours, we construct a graph $G'$ by adding a triangle with
three fresh colours to the graph. Then $G'$ contains a colourful
$\binom{k}{2}+3$-edge-subset $A$ that satisfies $\Phi$ if and only if $G$ contains an
edge-colourful $k$-clique. The latter is true since any colourful
$\binom{k}{2}+3$-edge-subset must contain the triangle with the three fresh colours and
can thus neither induce a star, nor a matching.

Being unable to rely on the colourful decision version, we thus use a different approach
using Ramsey's Theorem, similarly to the one in~\cite{Meeks16}. More precisely, we
use the following consequence:

\begin{lemma}
    Let $k\geq 4$ denote a positive integer and let $G$ denote a graph with at least
    $R(k,k)$ edges. Then $G$ contains either $K_{1,k}$ or $M_k$ as a subgraph.
\end{lemma}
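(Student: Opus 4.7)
The plan is to apply Ramsey's theorem directly to the edge set of $G$. I would consider the complete graph on the vertex set $E(G)$ and $2$-colour each pair $\{e,e'\}$ as \emph{red} if $e$ and $e'$ share a common vertex in $G$, and \emph{blue} otherwise. Since $|E(G)| \geq R(k,k)$, Ramsey's theorem guarantees a monochromatic clique of size $k$, yielding $k$ edges of $G$ that are either pairwise vertex-disjoint or pairwise vertex-sharing.

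In the blue case there is nothing more to do: $k$ pairwise vertex-disjoint edges form a matching, so $M_k \subseteq G$.

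The red case is where the real work, and the hypothesis $k \geq 4$, enter. Here I would show that $k$ pairwise intersecting edges must share a single common vertex, i.e.\ form $K_{1,k}$. The main tool is the folklore dichotomy that three pairwise intersecting edges either share a common vertex or form a triangle. Given pairwise intersecting $e_1,\dots,e_k$ with $k \geq 4$, I would first rule out the triangle case: if $e_1,e_2,e_3$ formed a triangle on $\{a,b,c\}$, then any $e_4$ meeting each of them would, by a short case analysis (an edge has only two endpoints, and intersecting all three sides forces both endpoints of $e_4$ into $\{a,b,c\}$), have to coincide with one of $e_1,e_2,e_3$, contradicting distinctness. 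Hence $e_1,e_2,e_3$ share a common vertex $v$, and since distinct edges meet in at most one vertex, $e_1 \cap e_2 = \{v\}$. For any further $e_i$ with $i \geq 4$, applying the dichotomy to $\{e_1,e_2,e_i\}$ rules out the triangle configuration by the same argument and produces a common vertex, which must lie in $e_1 \cap e_2 = \{v\}$. Thus every $e_i$ contains $v$, and $\{e_1,\dots,e_k\}$ forms a star $K_{1,k}$ in $G$.

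The only delicate step is eliminating the triangle configuration in the red case; once that is done, the fact that two distinct edges share at most one vertex forces the common vertex $v$ of every triple to coincide, so the argument closes with no further subtlety.
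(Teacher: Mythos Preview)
Your proposal is correct and takes essentially the same approach as the paper: the paper phrases it as applying Ramsey's theorem to the line graph $L(G)$ (where your red/blue colouring is exactly the adjacency relation), and then notes that a $k$-clique in $L(G)$ is a star $K_{1,k}$ for $k \geq 4$ because the only obstruction is the triangle. You simply spell out the $k \geq 4$ step in more detail than the paper does.
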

\begin{proof}
    We apply Ramsey's Theorem to the line graph $L(G)$ of $G$: The vertices of $L(G)$ are
    the edges of~$G$, and two vertices $e$ and $e'$ of $L(G)$ are adjacent if and only if
    $e\cap e'\neq 0$.
    Sine $L(G)$ contains at least $R(k,k)$ vertices, Ramsey's Theorem implies that $L(G)$
    either contains an independent set of a clique of size~$k$. Note that a
    $k$-independent set of $L(G)$ corresponds to a $k$-matching in $G$, and that a
    $k$-clique in~$L(G)$ corresponds to a star $K_{1,k}$ in $G$; the latter requires that
    $k\geq 4$ since the line graph of a triangle is a triangle (and thus a clique) as
    well.
\end{proof}

The subsequent observation enables our Monte-Carlo algorithm to only rely on ``FPT-many'' samples:
\begin{lemma}\label{lem:few_samples}
    Let $k\geq 4$ denote a positive integer and let $G$ denote a graph with at least
    $R(k,k)$ edges. Assume a subset~$A$ of $k$ edges is sampled uniformly at random. We
    have \[ \pr{G[A] \cong M_k \vee G[A] \cong K_{1,k}} \geq \binom{R(k,k)}{k}^{-1} \,.\]
\end{lemma}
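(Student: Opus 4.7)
The goal is to show that a uniformly random $k$-subset $A\subseteq E(G)$ has a noticeable chance of inducing either a matching $M_k$ or a star $K_{1,k}$. A naive attempt to produce a single witness via the preceding Ramsey-type lemma gives a probability of only $1/\binom{|E(G)|}{k}$, which is far too small once $|E(G)|$ exceeds $R(k,k)$. The plan is therefore to apply Ramsey not once but to \emph{every} $R(k,k)$-edge subset of $G$ and use a double-counting argument to show that the total number of ``good'' $k$-subsets is large.

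Write $N$ for the number of $A\in\binom{E(G)}{k}$ such that $G[A]\cong M_k$ or $G[A]\cong K_{1,k}$; then $\pr{G[A]\cong M_k \vee G[A]\cong K_{1,k}} = N/\binom{|E(G)|}{k}$. First I would count pairs $(A,E')$ with $A\in\binom{E(G)}{k}$ a witness, $E'\in\binom{E(G)}{R(k,k)}$, and $A\subseteq E'$. Summing over $E'$ and invoking the previous lemma (every $R(k,k)$-edge subgraph contains $M_k$ or $K_{1,k}$ as a subgraph, so at least one witness $k$-subset sits inside $E'$) gives a lower bound of $\binom{|E(G)|}{R(k,k)}$ on the number of such pairs. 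Summing over $A$ gives an upper bound of $N\cdot\binom{|E(G)|-k}{R(k,k)-k}$, so
\[
N \;\geq\; \frac{\binom{|E(G)|}{R(k,k)}}{\binom{|E(G)|-k}{R(k,k)-k}}.
\]
The key calculation is then the identity
\[
\frac{\binom{|E(G)|}{R(k,k)}}{\binom{|E(G)|-k}{R(k,k)-k}} \;=\; \frac{\binom{|E(G)|}{k}}{\binom{R(k,k)}{k}},
\]
which follows by expanding both sides into falling factorials; both reduce to $|E(G)|^{\underline{k}}/R(k,k)^{\underline{k}}$. Dividing by $\binom{|E(G)|}{k}$ then yields the claimed probability bound $1/\binom{R(k,k)}{k}$.

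The only substantive step is recognizing the right double-counting: applying Ramsey only to the full edge set gives one witness, but applying it to every $R(k,k)$-subset and averaging forces the density of witnesses to be at least $1/\binom{R(k,k)}{k}$, which is an FPT quantity in $k$. Crucially, this bound is independent of $|E(G)|$, which is exactly what is needed to drive a Monte-Carlo sampling scheme with an FPT number of trials in the subsequent approximation algorithm.
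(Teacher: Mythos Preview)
Your proof is correct and is essentially the same argument as the paper's, just phrased combinatorially as double counting rather than probabilistically as two-stage sampling (first draw an $R(k,k)$-subset, then a $k$-subset inside it). The binomial identity you isolate is exactly the computation the paper uses to verify that two-stage sampling is uniform, and the key Ramsey step---every $R(k,k)$-edge subset contains at least one witness---is identical in both.
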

\begin{proof}
    Set $m=|E(G)|$ and $r=R(k,k)$. It is convenient to assume that $A$ is sampled as
    follows: we first choose $r$ edges u.a.r., denote this set by $S$, and afterwards we
    obtain $A$ by choosing $k$ edges among $S$ u.a.r.; of course, we need to show that
    this yields a uniform distribution. Let $B$ denote any $k$-edge subset of $G$. By the law
    of total probability, we have that
    \begin{align*}
    	 \pr{A=B} &= \sum_{T \in \binom{E(G)}{r}} \pr{S=T} \cdot \pr{A=B\mid S=T}\\
    	 ~&= \sum_{T
    		\in \binom{E(G)}{r}} \binom{m}{r}^{-1} \cdot \pr{A=B\mid S=T}\,.
    \end{align*}
    Note that $\pr{A=B\mid S=T}=\binom{r}{k}^{-1}$ if $B\subseteq T$, and $\pr{A=B\mid
    S=T}=0$ otherwise. Consequently
\begin{align*}
	 \pr{A=B} &= \#\{T\subseteq E(G) ~|~ B\subseteq T ~\wedge~ \#T=r \}\cdot \binom{m}{r}^{-1}  \binom{r}{k}^{-1}\\
	 ~&= \binom{m-k}{r-k}  \binom{m}{r}^{-1}  \binom{r}{k}^{-1} = \binom{m}{k}^{-1} \,.
\end{align*}
    Now let $\mathcal{E}$ denote the event $G[A] \cong M_k \vee G[A] \cong K_{1,k}$ and note
    that for every $r$-edge subset $T$ of $G$ we have that $\pr{\mathcal{E}\mid S=T}\geq
    \binom{r}{k}^{-1}$ since, by the previous lemma, $G[T]$ contains either $M_k$ or
    $K_{1,k}$ as a subgraph. We conclude that
    \[\pr{\mathcal{E}} =  \sum_{ T \in \binom{E(G)}{r}} \pr{S=T} \cdot \pr{\mathcal{E}\mid
    S=T}  = \binom{m}{r} \binom{m}{r}^{-1} \cdot \pr{\mathcal{E}\mid S=T} \geq \binom{r}{k}^{-1}\,,  \]
    which concludes the proof.
\end{proof}

\noindent For our FPTRAS, we use the following (consequence of a) Chernoff bound:
\begin{theorem}[see Theorem 11.1 in~\cite{MitzenmacherU17}]
    Let $X_1,\dots,X_t$ denote independent and identically distributed indicator random variables
    with expectation $\eta = E[X_i]$, and let $0< \varepsilon,\delta < 1$ denote
    positive rationals.
	If $t \geq (3 \ln(2/\delta))/(\varepsilon^2 \eta)$, then
    \[ \pr{ \left| \frac{1}{t}\cdot \sum_{i=1}^t X_i - \eta\right| < \varepsilon \eta }
    \geq 1-\delta \,.\lipicsEnd\]
\end{theorem}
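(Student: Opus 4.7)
The plan is to derive this bound directly from the standard multiplicative Chernoff bound for sums of independent $\{0,1\}$-valued random variables, by rescaling from the sum to the sample mean. Let me set $X := \sum_{i=1}^t X_i$ so that $\mathbb{E}[X] = t \eta =: \mu$, and observe that the event of interest can be rewritten as
\[ \Bigl|\tfrac{1}{t}\!\sum_{i=1}^t X_i - \eta\Bigr| < \varepsilon \eta \quad\Longleftrightarrow\quad |X - \mu| < \varepsilon \mu. \]
Thus the whole statement reduces to bounding $\pr{|X - \mu| \geq \varepsilon \mu}$ from above by $\delta$.

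Next, I would invoke the two-sided multiplicative Chernoff inequality (as stated, e.g., in Theorem~4.4/4.5 of \cite{MitzenmacherU17}): for any $0 < \varepsilon \leq 1$, a sum $X$ of independent indicator random variables with expectation $\mu$ satisfies
\[ \pr{|X - \mu| \geq \varepsilon \mu} \;\leq\; 2 \exp\!\bigl(-\varepsilon^2 \mu / 3\bigr), \]
where the worst of the two one-sided bounds ($\exp(-\varepsilon^2 \mu/3)$ from the upper tail) is used on both sides. Substituting $\mu = t\eta$ yields
\[ \pr{|X - \mu| \geq \varepsilon \mu} \;\leq\; 2 \exp\!\bigl(-\varepsilon^2 t \eta / 3\bigr). \]

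Finally, I would solve for the smallest $t$ that makes the right-hand side at most $\delta$: the inequality $2 \exp(-\varepsilon^2 t \eta / 3) \leq \delta$ is equivalent to $\varepsilon^2 t \eta / 3 \geq \ln(2/\delta)$, i.e.\ $t \geq 3 \ln(2/\delta)/(\varepsilon^2 \eta)$, which is precisely the hypothesis. Taking the complementary event then gives the claimed bound of at least $1-\delta$.

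There is really no obstacle here: the statement is a textbook packaging of the multiplicative Chernoff bound in terms of the empirical mean rather than the sum. The only mild care point is to use the weaker constant $1/3$ (valid for both tails when $\varepsilon \leq 1$) instead of the sharper $1/2$ available for the lower tail, which matches the constant $3$ appearing in the sample-size threshold.
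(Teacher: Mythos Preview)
Your derivation is correct and is the standard way to obtain this statement from the multiplicative Chernoff bound. Note, however, that the paper does not actually prove this theorem: it is stated with a citation to \cite{MitzenmacherU17} and no proof is given (the \lipicsEnd marker terminates the statement). So there is no ``paper's own proof'' to compare against; your argument simply supplies the routine textbook derivation that the paper omits.
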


\begin{lemma}\label{lem:MS_approx}
    Let $\Phi$ denote a computable graph property satisfying both, the matching criterion
    and the star criterion. Then $\#\edgesubsprob(\Phi)$ has an FPTRAS.
\end{lemma}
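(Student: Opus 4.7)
The plan is a standard Monte Carlo approach in the spirit of Meeks~\cite{Meeks16}, with the Ramsey-type success probability supplied by \cref{lem:few_samples}. The crucial observation enabling this is that, since $\Phi$ satisfies both the matching criterion and the star criterion, there is a constant $k_0$ such that for all $k \geq \max(4,k_0)$ we have $\Phi(M_k) = 1$ and $\Phi(K_{1,k}) = 1$. Thus, every $k$-edge subset whose induced graph is a matching or a star already contributes to $\#\edgesubs{\Phi,k}{G}$.

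First, I would dispose of the easy boundary case: if $k < \max(4,k_0)$ or if $|E(G)| < R(k,k)$, the algorithm enumerates all $k$-edge subsets of $G$ (there are at most $\binom{R(k,k)}{k}$, a function of $k$ only), evaluates $\Phi$ on each induced graph, and outputs the exact count in FPT time. Otherwise, we have $k \geq \max(4,k_0)$ and $|E(G)| \geq R(k,k)$, so \cref{lem:few_samples} applies.

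In the remaining case, let $p := \#\edgesubs{\Phi,k}{G}/\binom{|E(G)|}{k}$ denote the probability that a uniformly random $k$-edge subset $A \subseteq E(G)$ satisfies $\Phi(G[A])=1$. By \cref{lem:few_samples} combined with the choice of $k_0$, we have
\[
    p \;\geq\; \pr{G[A] \cong M_k \vee G[A]\cong K_{1,k}} \;\geq\; \binom{R(k,k)}{k}^{-1}.
\]
The algorithm now draws $t := \lceil 3\ln(2/\delta)\,\varepsilon^{-2}\,\binom{R(k,k)}{k}\rceil$ independent uniform $k$-edge subsets $A_1,\dots,A_t$, sets $X_i := \Phi(G[A_i])$, and outputs $Z := \binom{|E(G)|}{k}\cdot \frac{1}{t}\sum_{i=1}^t X_i$. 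The $X_i$ are i.i.d.\ indicator variables with expectation $p$, and by the Chernoff bound stated above, the choice of $t$ guarantees $\pr{|Z - \#\edgesubs{\Phi,k}{G}| < \varepsilon\cdot \#\edgesubs{\Phi,k}{G}} \geq 1-\delta$.

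Finally, I would verify the running-time bound for Definition~\ref{def:FPTRAS}. Sampling a uniform $k$-edge subset and evaluating $\Phi$ on it (recall $\Phi$ is computable and $|V(G[A_i])|\leq 2k$) takes time $g(k)\cdot |G|^{O(1)}$ for some computable $g$. The total time is therefore $t\cdot g(k)\cdot |G|^{O(1)} = f(k)\cdot \mathrm{poly}(|G|,1/\varepsilon,\log(1/\delta))$ with $f(k) := 3\,g(k)\binom{R(k,k)}{k}$ computable, matching the FPTRAS definition. The only subtle point is ensuring we correctly handle small $k$ (hence the exact enumeration fallback) and that the Ramsey number $R(k,k)$, while enormous, is absorbed by the function $f$; no step requires work beyond this routine bookkeeping.
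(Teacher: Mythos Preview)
Your approach is essentially identical to the paper's: brute force for small $k$ or few edges, otherwise Monte Carlo sampling with the success-probability lower bound from \cref{lem:few_samples} and the stated Chernoff bound. One minor slip in your justification: when $k < \max(4,k_0)$ but $|E(G)|$ is large, the number of $k$-edge subsets is \emph{not} bounded by $\binom{R(k,k)}{k}$; rather, the brute force is still FPT because $k$ is bounded by a constant depending only on $\Phi$, so enumeration takes time $|G|^{O(1)}$ (this is exactly how the paper handles it).
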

\begin{proof}
    By assumption, there is a constant $c'$ such that $\Phi$ is true for all matchings
    and stars of size at least $c'$; we set $c=\max({c',4})$. Our FPTRAS $\mathbb{A}$ is
    constructed as follows: If $k<c$ or if $|E(G)|\leq R(k,k)$, then we solve the problem
    (exactly) by the naive brute-force algorithm. Otherwise, we take
    \[\binom{R(k,k)}{k} \cdot \frac{3\ln(2/\delta)}{\varepsilon^2}\]
    many independent samples of $k$-edge sets $A$ of $G$, each taken uniformly at random.
    Finally, we output the fraction of those samples $A$ such that $\Phi(G[A])=1$. Consult
    \cref{alg1} for a visualization as pseudo-code.

    \begin{algorithm}[t]
        \SetKwBlock{Begin}{}{end}
        \SetKwFunction{esf}{MatchingsAndStarsFPTRAS}
        \esf{$G$, $k$, $\varepsilon$, $\delta$}\Begin{
            \If{$k < c$ \KwSty{or} $|E(G)|\leq R(k,k)$}{
                Solve the problem exactly by brute force.\;
                }\Else{
                $X \gets 0$; $t \gets \binom{R(k,k)}{k} \cdot
                \frac{3\ln(2/\delta)}{\varepsilon^2}$\;
                \For{$i \gets 1$ \KwSty{to} $t$}{
                    Sample a $k$-edge subset $A$ of $G$ uniformly at random.\;
                    \lIf{$\Phi(G[A])=1$}{$X \gets X + 1$}
                }
                \Return{$\frac{X}{t}\cdot \binom{|E(G)|}{k}$}\;
            }
        }
        \caption{An FPTRAS for $\#\edgesubsprob(\Phi)$ if $\Phi$ satisfies the matching and the star criterion.}\label{alg1}
    \end{algorithm}

    \noindent Let us first argue about the running time: if $k<c$ then the brute force algorithm
    takes time at most~$|G|^c$,\footnote{$|G|$ rather than $|E(G)|$ since $G$ might
    contain many isolated vertices.} and if $|E(G)|\leq R(k,k)$ then the brute force
    algorithm takes time at most $|G| + R(k,k)^k$. Otherwise, we iterate through the loop
    $t$ times, and each iteration can clearly be done in time $f'(k)\cdot
    \mathsf{poly}(|G|)$ for some computable function $f'$ --- note that the factor $f'(k)$
    depends on the complexity of verifying whether $\Phi(G[A])$ holds, which might require
    super-polynomial time in $|G[A]|\in O(k)$. The overall running time is thus bounded by
    \[ \max \biggl\{ |G|^c, |G|+R(k,k)^k, \binom{R(k,k)}{k} \cdot (3 \ln(2/\delta))/\varepsilon^2 \cdot f'(k) \cdot \mathsf{poly}(|G|) \biggr\}\,,\]
    which is bounded by $f(k) \cdot \mathsf{poly}(|G|,1/\varepsilon,\log(1/\delta))$
    for some computable function $f$.

	Next note that correctness is trivial in case the brute force algorithm is executed.
    Hence assume that~$k\geq c$ and $|E(G)|> R(k,k)$. To avoid notational clutter, we set
    $r:=R(k,k)$ and $m:= |E(G)|$. Now let $X_i$ denote the indicator variable defined to
    be $1$ if the $i$-th sample, denoted $A_i$, satisfies $\Phi(G[A_i])=1$, and $X_i=0$
    otherwise. Observe that $E[X_i]= \#\edgesubs{\Phi,k}{G} \cdot \binom{m}{k}^{-1}$ for
    all $i$. In what follows, we thus just set $\eta:=E[X_i]$. Since $\Phi$ is true
    for $M_k$ and $K_{1,k}$, and by \cref{lem:few_samples} we furthermore have
    \[ \eta= \pr{\Phi(G[A])=1} \geq \pr{G[A]\cong M_k \vee G[A] \cong K_{1,k}} \geq \binom{r}{k}^{-1} \,.\]
    Consequently, $t\geq  (3 \ln(2/\delta))/(\varepsilon^2 \eta)$. By the previous Chernoff bound, we thus have
    \[ \pr{ \left| \frac{1}{t}\cdot \sum_{i=1}^t X_i - \eta\right| < \varepsilon \eta } \geq 1-\delta \,.\]
    Finally, recall that $X=\sum_{i=1}^t X_i$ and observe that
    \begin{align*}
        ~&\left| \frac{1}{t}\cdot \sum_{i=1}^t X_i - \eta\right| < \varepsilon \eta\\
         \Rightarrow &\left|\frac{X}{t} -\frac{\#\edgesubs{\Phi,k}{G}}{\binom{m}{k}}\right| < \varepsilon \cdot \frac{\#\edgesubs{\Phi,k}{G}}{\binom{m}{k}} \\
        \Rightarrow & \left|\frac{X}{t}\cdot \binom{m}{k} -\#\edgesubs{\Phi,k}{G}\right| < \varepsilon \cdot \#\edgesubs{\Phi,k}{G} \,.
    \end{align*}
    \noindent We conclude the proof by pointing out that the latter implies
    \[ (1-\varepsilon)\cdot \#\edgesubs{\Phi,k}{G} \leq  \frac{X}{t}\cdot \binom{m}{k} \leq (1+\varepsilon)\cdot \#\edgesubs{\Phi,k}{G}\,.\]
\end{proof}

For the case of $\Phi$ having bounded treewidth, we rely on the following result of Arvind and Raman; to this end, given a fixed positive integer $T$, let $\#\subsprob(T)$ denote the problem that, on input a graph $H$ of treewidth at most $T$ and an arbitrary graph $G$, requires to compute $\#\subs{H}{G}$.

\begin{theorem}[\cite{ArvindR02}]\label{thm:ArvindRFPTRAS}
    For each positive integer $T$, there is an FPTRAS for $\#\subsprob(T)$ if parameterized by the size of the graph $H$.
    \lipicsEnd
\end{theorem}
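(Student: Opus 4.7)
The plan is to prove \cref{thm:ArvindRFPTRAS} via colour-coding combined with bounded-treewidth dynamic programming and a second-moment argument. Fix $H$ with $k := |V(H)|$ and $\mathsf{tw}(H) \le T$, together with a fixed tree decomposition $(T_H, \{B_t\}_{t \in V(T_H)})$ of $H$ of width at most $T$, and write $N := \#\subs{H}{G}$.

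First, I would colour each vertex of $G$ independently and uniformly at random with one of $k$ colours. Call a subgraph copy $H'$ of $H$ in $G$ \emph{colourful} if its $k$ vertices receive $k$ pairwise distinct colours. For a fixed copy $H'$ the probability of being colourful is exactly $p := k!/k^k$, so the number $X$ of colourful copies in one random colouring satisfies $E[X] = p N$, and the rescaled quantity $Y := X/p$ is an unbiased estimator of $N$.

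Next, I would compute $X$ exactly by a standard bottom-up dynamic programme on $(T_H, \{B_t\})$: at each node $t$ the state records an injection $\phi : B_t \to V(G)$ whose image uses $|B_t|$ distinct colours, together with the set $C$ of colours already consumed in the subtree rooted at $t$; introduce, forget and join nodes admit routine update rules, and the total number of colourful injective homomorphisms from $H$ to $G$ is read off at the root and divided by $|\auts{H}|$ to obtain $X$. This runs in time $f(k,T)\cdot |V(G)|^{T+1}$, which for fixed $T$ is $f(k)\cdot |V(G)|^{O(1)}$, as required by \cref{def:FPTRAS}.

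The core estimate is a variance bound for $Y$ that depends only on $k$. For copies $H', H''$ of $H$ in $G$ sharing exactly $j$ vertices, a direct computation over the uniform random colouring yields $\pr{H'\text{ and }H''\text{ colourful}} = (k!/k^k)\cdot((k-j)!/k^{k-j})$; comparing with $p^2$ shows that the contributions for $j\in\{0,1\}$ cancel exactly, the diagonal ($j=k$, i.e., $H'=H''$) contributes $N(p-p^2)$, and each term with $2\le j\le k-1$ is bounded by $c(k)\cdot p^2$ per ordered pair. Since the total number of ordered pairs $(H',H'')$ is at most $N^2$, this gives $\mathrm{Var}(X)\le p N + c(k)p^2 N^2$, hence $\mathrm{Var}(Y)/E[Y]^2 \le 1/(pN)+c(k) \le 1/p + c(k) =: C(k)$ whenever $N\ge 1$. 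Averaging $Y$ over $M$ independent colourings and applying Chebyshev, $M = \lceil C(k)/(\varepsilon^2\delta)\rceil$ trials suffice; to upgrade the $1/\delta$-dependence to $\log(1/\delta)$ as required by \cref{def:FPTRAS}, I would apply the standard median-of-means boosting, running $O(\log(1/\delta))$ independent batches of $O(C(k)/\varepsilon^2)$ trials each and outputting the median of the batch averages. I expect the main obstacle to be verifying the cancellation at $j\in\{0,1\}$ and establishing the uniform-in-$G$ constant $c(k)$ for $j\ge 2$; once this variance estimate is in place the FPTRAS guarantee follows from textbook concentration inequalities, with the case $N=0$ handled trivially by observing that $X\equiv 0$ under any colouring.
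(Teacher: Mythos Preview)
The paper does not provide its own proof of this statement: it is imported verbatim from Arvind and Raman~\cite{ArvindR02} and marked with \verb|\lipicsEnd| precisely because no argument is given. So there is no ``paper's proof'' to compare against here; your task was to reprove an external theorem.

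Your sketch is essentially the Arvind--Raman argument and is correct. A couple of remarks on points you flagged or left implicit. First, the cancellation at $j\in\{0,1\}$ is immediate once you write $\pr{H',H''\text{ both colourful}}=p\cdot (k-j)!/k^{k-j}$: for $j=0$ this is $p\cdot k!/k^k=p^2$, and for $j=1$ it is $p\cdot (k-1)!/k^{k-1}=p\cdot k!/k^k=p^2$ again, so both covariances vanish exactly. For $2\le j\le k-1$ the ratio $\pr{\text{both}}/p^2=k^j/(k)_j$ is bounded by $k^{k-1}/k!$, which is your $c(k)$; this is indeed uniform in $G$, settling your ``main obstacle''. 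Second, your DP is slightly underspecified as written: tracking only an injection on the bag does not by itself enforce global injectivity. What makes it work is the colour set $C$: requiring an introduced vertex to have a colour not in $C$ (and distinct from the current bag colours) forces all $k$ image vertices to receive distinct colours, hence the homomorphism is automatically injective. With that clarified, dividing the count of colourful embeddings by $|\auts{H}|$ gives $X$ exactly, and the median-of-means amplification yields the $\log(1/\delta)$ dependence required by \cref{def:FPTRAS}.
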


\begin{lemma}\label{lem:tw_approx}
    Let $\Phi$ denote a computable graph property. If $\Phi$ has bounded treewidth, then $\#\edgesubsprob(\Phi)$ admits an FPTRAS.
\end{lemma}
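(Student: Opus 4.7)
The plan is to combine Equation~\eqref{eq:unlcolouredSubgraphCounts} with the FPTRAS for bounded-treewidth subgraph counting from \cref{thm:ArvindRFPTRAS}. Recall that
\[\#\edgesubs{\Phi,k}{G} = \sum_{H \in \Phi_k} \#\subs{H}{G},\]
where $\Phi_k$ is the set (of isomorphism classes) of graphs $H$ with $k$ edges, no isolated vertices, and $\Phi(H) = 1$. Since every such $H$ has at most $2k$ vertices, the size $|\Phi_k|$ is bounded by some computable function of $k$, and $\Phi_k$ can be enumerated in time depending only on $k$ (using that $\Phi$ is computable). Moreover, by assumption there is a constant $T$ such that every $H \in \Phi_k$ has treewidth at most $T$.

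My plan is then as follows. On input $(G, k, \varepsilon, \delta)$, first enumerate $\Phi_k$ in time $f(k)$. Then, for each $H \in \Phi_k$, invoke the FPTRAS of \cref{thm:ArvindRFPTRAS} for $\#\subsprob(T)$ on input $(H, G)$ with approximation parameter $\varepsilon$ and confidence parameter $\delta/|\Phi_k|$, obtaining an integer $z_H$. Finally, output $\sum_{H \in \Phi_k} z_H$. The running time of each invocation is bounded by $g(|H|) \cdot \mathsf{poly}(|G|, 1/\varepsilon, \log(|\Phi_k|/\delta))$, which, since $|H| \leq 2k + k = O(k)$ and $|\Phi_k| \leq f(k)$, fits into the FPTRAS running-time bound of \cref{def:FPTRAS}.

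For correctness, fix $H \in \Phi_k$. With probability at least $1 - \delta/|\Phi_k|$, the output $z_H$ satisfies $(1-\varepsilon) \#\subs{H}{G} \leq z_H \leq (1+\varepsilon) \#\subs{H}{G}$. By a union bound, with probability at least $1 - \delta$, all these bounds hold simultaneously, and in that case
\[(1-\varepsilon) \#\edgesubs{\Phi,k}{G} \leq \sum_{H \in \Phi_k} z_H \leq (1+\varepsilon) \#\edgesubs{\Phi,k}{G},\]
as required.

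There is essentially no obstacle here: the only points requiring mild care are that $|\Phi_k|$ is effectively bounded in $k$ (so that we can actually enumerate it and the union bound is affordable) and that \cref{thm:ArvindRFPTRAS} allows the confidence parameter to shrink polynomially, which is built into the FPTRAS definition since the running time is allowed to depend polynomially on $\log(1/\delta)$.
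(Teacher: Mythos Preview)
Your proof is correct and follows essentially the same approach as the paper: decompose via Equation~\eqref{eq:unlcolouredSubgraphCounts}, invoke the Arvind--Raman FPTRAS on each $H\in\Phi_k$ with boosted confidence $\delta/|\Phi_k|$, and sum. The only cosmetic difference is that you use a union bound to combine the confidence guarantees, whereas the paper uses independence together with Bernoulli's inequality; both are equally valid here.
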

\begin{proof}
    By assumption, there is a constant $T$ such that the treewidth of each graph $H$ with
    $\Phi(H)=1$ is at most $T$. Define $g(k):= |\Phi_k|$ and observe that $g$ is
    computable as $\Phi$ is.

    Recall from equation~\eqref{eq:unlcolouredSubgraphCounts} that for each $G$ and $k$ we have
    \[\#\edgesubs{\Phi,k}{G} = \sum_{H\in \Phi_k} \#\subs{H}{G} \]
	We thus just use the FPTRAS from \cref{thm:ArvindRFPTRAS} to approximate (with
    probability $1-\delta/g(k)$) each term $\#\subs{H}{G}$ with $H\in \Phi_k$ and output
    the sum given by the previous equation.

    Observe that approximating each term $\#\subs{H}{G}$ takes time at most \[
        f'(|H|)\cdot \mathsf{poly}(|G|,1/\varepsilon,\log(g(k)/\delta)),
    \] for some computable function $f'$.

    Since each $H\in \Phi_k$ has $k$ edges, the overall running time is thus clearly
    bounded by
    \[ f(k)\cdot\mathsf{poly}(|G|,1/\varepsilon,\log(\delta))  \]
    for some computable function $f$---note that $f$ depends on $\Phi$, $f'$ and $g$,
    but the latter three are independent of the input.
    Now let $r$ denote the output of our algorithm. It remains to show that
    \[ \pr{ (1-\varepsilon)\cdot \#\edgesubs{\Phi,k}{G} \leq r \leq (1+\varepsilon)\cdot
    \#\edgesubs{\Phi,k}{G} } \geq 1-\delta \,.\]
    Write $r_H$ for the output of the FPRAS from \cref{thm:ArvindRFPTRAS} on input $G$,
    $H$, $\varepsilon$, and $\delta/g(k)$. Then \[r=\sum_{H\in \Phi_k}r_H\,,\] and the
    following holds for each $H\in \Phi_k$
    \[ \pr{(1-\varepsilon)\cdot  \#\subs{H}{G} \leq r_H \leq (1+\varepsilon) \cdot \#\subs{H}{G} } \geq 1-\delta/g(k) \,.\]
    Since the outcomes $r_H$ are independent and $g(k)=|\Phi_k|$, we have
    \[ \pr{\forall H\in \Phi_H : (1-\varepsilon)\cdot \#\subs{H}{G} \leq r_H \leq
    (1+\varepsilon)\cdot \#\subs{H}{G}} \geq (1-\delta/g(k))^{g(k)}\,,\]
	which is at most $(1-\delta)$ by Bernoulli's inequality.\pagebreak

	\noindent Consequently, with probability at least $(1-\delta)$, we have that
    \begin{align*}
        (1-\varepsilon)\cdot \#\edgesubs{\Phi,k}{G} & = (1-\varepsilon) \sum_{H\in \Phi_k} \#\subs{H}{G}\\
        ~ & = \sum_{H\in \Phi_k} (1-\varepsilon)\cdot \#\subs{H}{G}\\
        ~ & \leq \sum_{H\in \Phi_k} r_H ~(= r)\\
        ~ & \leq \sum_{H\in \Phi_k} (1+\varepsilon)\cdot \#\subs{H}{G}\\
        ~ & = (1+\varepsilon)  \sum_{H\in \Phi_k} \#\subs{H}{G}\\
        ~ & = (1+\varepsilon)\cdot \#\edgesubs{\Phi,k}{G} \,,
    \end{align*}
    which concludes the proof.
\end{proof}

\begin{proof}[Proof of \cref{thm:approx_main_intro}]
    Holds by \cref{lem:MS_approx,lem:tw_approx}.
\end{proof}

\section{Detection of Small Subgraph Patterns}\label{sec:dec}
In this section, we study the complexity of the decision problem $\edgesubsprob(\Phi)$. As
a first observation we observe that $\edgesubsprob(\Phi)$ essentially subsumes the
(parameterized) subgraph isomorphism problem: consider for instance the property $\Phi$
defined as $\Phi(H)=1$ if and only if $H\cong K_{\ell,\ell}$ for some positive integer~$\ell$. Then $\edgesubsprob(\Phi)$ is equivalent to the problem $k$-$\textsc{BICLIQUE}$
which was only recently shown to be $\W{1}$-hard by the seminal result of Lin~\cite{Lin18}
after being unresolved for at least a decade.

More generally, let $\mathcal{H}$ denote a class of graphs and define $\embsprob(\mathcal{H})$
as the problem that asks, given a graph $H\in \mathcal{H}$ and an arbitrary graph $G$,
whether there is a subgraph embedding from $H$ to $G$; the parameterization is given
by $|H|$. Plehn and Voigt~\cite{PlehnV90} proved $\embsprob(\mathcal{H})$ to be
fixed-parameter tractable whenever the treewidth of graphs in~$\mathcal{H}$ is bounded by
a constant. On the other hand, the question whether $\embsprob(\mathcal{H})$ is
$\W{1}$-hard in all remaining cases is one of the ``most infamous''~\cite[Chapter
33.1]{DowneyF13} open problems in parameterized complexity. Since $\edgesubsprob(\Phi)$
subsumes\footnote{To be precise, $\edgesubsprob(\Phi)$ subsumes $\embsprob(\mathcal{H})$
whenever $\mathcal{H}$ does not contain two graphs with the same number of edges, which
is, however, true for most of the natural instances of the subgraph isomorphism problem
such as finding cliques, bicliques, cycles, paths and matchings, only to name a few.}
$\embsprob(\mathcal{H})$ as we have seen in case of $k$-$\textsc{BICLIQUE}$, a complete
classification of $\edgesubsprob(\Phi)$ seems to be elusive at the moment.

However, we identify the following tractable instances of $\edgesubsprob(\Phi)$, which significantly extends the case of bounded treewidth.

\decclass*

In case of $\Phi$ satisfying the matching or the star criterion, fixed-parameter
tractability is obtained by a surprisingly simple Win-Win approach relying on the
treewidth and the maximum degree of a graph. Assume, for example, that $\Phi$ is true for
all matchings. Now, given a graph $G$ and an integer $k$, we can easily verify whether $G$
contains a maximum matching of size at least $k$. If the latter is true, $G$ contains a
subgraph with $k$ edges that satisfies $\Phi$. More interestingly, if the matching number of $G$ is bounded by $k$, then its vertex-cover number (and thus its treewidth) is bounded by $2k$, and we can
efficiently use dynamic programming over a tree-decomposition of small width of~$G$ to
verify whether $\edgesubs{\Phi,k}{G}\neq \emptyset$. Formally, the latter can be
established by an easy application of Courcelle's Theorem~\cite{Courcelle90} as shown in
the following lemma:

\begin{lemma}
    Let $\Phi$ denote a computable graph property. There is a computable function $g$ and
    an algorithm~$\mathbb{A}$ that, given a graph $G$ and a positive integer $k$,
    correctly decides whether $\edgesubs{\Phi,k}{G}\neq \emptyset$ in time
    $g(\mathsf{tw}(G),k) \cdot |G|$.
\end{lemma}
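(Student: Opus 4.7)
The plan is to combine a brute-force enumeration of the ``witness graphs'' in $\Phi_k$ with an application of Courcelle's theorem for MSO model checking on graphs of bounded treewidth~\cite{Courcelle90}. Recall from Equation~\eqref{eq:unlcolouredSubgraphCounts} that $\edgesubs{\Phi,k}{G}\neq\emptyset$ if and only if $G$ contains some graph $H\in\Phi_k$ as a subgraph, where $\Phi_k$ is the (finite) set of graphs with $k$ edges and no isolated vertices that satisfy $\Phi$. Since $\Phi$ is computable and each $H\in\Phi_k$ has at most $2k$ vertices, the set $\Phi_k$ can be enumerated in time depending only on $k$; in particular, $|\Phi_k|\le h(k)$ for some computable $h$.

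First, the algorithm $\mathbb{A}$ enumerates $\Phi_k$. Then, for every $H\in\Phi_k$, it writes down the MSO$_2$ sentence $\varphi_H$ which asserts the existence of a subgraph copy of $H$ in the input graph: one existentially quantifies vertex variables $x_1,\dots,x_{|V(H)|}$ and an edge set variable, and states that the chosen vertices are pairwise distinct and that for each edge $\{i,j\}\in E(H)$ there is a chosen edge incident to exactly $x_i$ and $x_j$. The length of $\varphi_H$ is bounded by a computable function of $k$ only. The algorithm then invokes Courcelle's theorem to decide $G\models\varphi_H$ in time $g_H(\mathsf{tw}(G),|\varphi_H|)\cdot|G|\le g'(\mathsf{tw}(G),k)\cdot|G|$, for some computable $g'$. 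The algorithm returns \emph{yes} as soon as some $H\in\Phi_k$ is found to embed into $G$, and otherwise returns \emph{no}.

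Correctness is immediate from Equation~\eqref{eq:unlcolouredSubgraphCounts}. For the running time, the enumeration of $\Phi_k$ takes time depending only on $k$, and the Courcelle check is performed at most $|\Phi_k|\le h(k)$ times, each in time $g'(\mathsf{tw}(G),k)\cdot|G|$. Setting $g(t,k):=h(k)\cdot g'(t,k)$ yields the claimed bound $g(\mathsf{tw}(G),k)\cdot|G|$. The only delicate point is that we do not assume access to an optimal tree decomposition of $G$, but Bodlaender's algorithm computes one of width $\mathsf{tw}(G)$ in time $2^{O(\mathsf{tw}(G)^3)}\cdot|G|$, which is absorbed into~$g$.
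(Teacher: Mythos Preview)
Your proof is correct and takes essentially the same approach as the paper: enumerate $\Phi_k$, express ``$G$ contains some $H\in\Phi_k$ as a subgraph'' in MSO with formula length bounded in $k$, and apply Courcelle's theorem. The only cosmetic difference is that the paper takes the disjunction $\varphi:=\bigvee_{H\in\Phi_k}\varphi_H$ as a single formula rather than looping over the patterns, and it uses a purely first-order $\varphi_H$ (vertex quantifiers plus $E(x_i,x_j)$) rather than MSO$_2$.
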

\begin{proof}
    We use Courcelle's Theorem as stated in~\cite[Theorem~11.37]{FlumG06}. Thus it
    remains to provide an MSO-sentence\footnote{We refer the reader to e.g.\ Chapter~4
    in~\cite{FlumG06} for an introduction to Monadic Second Order (MSO) logic.} $\varphi$
    such that $G$ satisfies $\varphi$ if and only if $\edgesubs{\Phi,k}{G}\neq \emptyset$.
    To this end, let $H\in \Phi_k$ and assume that $V(H)=\{1,\dots,v_H\}$. Consider the
    following sentence
    \[ \varphi_H := \exists x_1,\dots, \exists x_{v_H} : \bigwedge_{i\neq j} x_i \neq x_j \wedge \bigwedge_{\{i,j\} \in E(H)} E(x_i,x_j) \,.\]
    Observe that $G$ satisfies $\varphi_H$ if and only if $H$ is a subgraph of $G$. Consequently, we set
    \[\varphi := \bigvee_{H\in \Phi_k} \varphi_H \,.\]
    Since the length of $\varphi$ only depends on $\Phi$ and $k$, the lemma holds by Courcelle's Theorem.
\end{proof}

We are now able to establish fixed-parameter tractability of $\edgesubsprob(\Phi)$ whenever $\Phi$ satisfies the matching criterion.

\begin{lemma}\label{lem:dec_matchings}
    Let $\Phi$ denote a computable graph property that satisfies the matching criterion.
    Then the problem $\edgesubsprob(\Phi)$ is fixed-parameter tractable.
\end{lemma}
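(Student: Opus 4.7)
The plan is to employ a Win--Win strategy based on the matching number of the input graph $G$. Since $\Phi$ satisfies the matching criterion, fix a constant $c$ with $\Phi(M_k) = 1$ for all $k \geq c$. Given an instance $(G,k)$, the algorithm first tests whether $k < c$; if so, it enumerates all $k$-edge subsets of $G$ and checks $\Phi$ on each, running in time $|E(G)|^{O(c)}$, which is polynomial since $c$ is a constant depending only on $\Phi$.

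For the remaining case $k \geq c$, I would compute a maximum matching $M$ of $G$ in polynomial time using Edmonds' blossom algorithm. If $|M| \geq k$, pick any $k$ edges $A \subseteq M$: then $G[A] \cong M_k$ (recall that $G[A]$ deletes isolated vertices), and by the matching criterion $\Phi(G[A]) = 1$, so the algorithm outputs YES. Otherwise $|M| < k$, and since the vertex cover number of any graph is at most twice its matching number, $G$ has vertex cover number less than $2k$ and hence treewidth less than $2k$ as well.

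In this second subcase, $G$ has treewidth bounded by a function of $k$, so I would invoke the Courcelle-based lemma proved immediately above to decide whether $\edgesubs{\Phi,k}{G} \neq \emptyset$ in time $g(\mathsf{tw}(G), k) \cdot |G| \leq g(2k, k) \cdot |G|$. Combining the two subcases, the total running time is bounded by $f(k) \cdot |G|^{O(1)}$ for a suitable computable function $f$, establishing fixed-parameter tractability.

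There is essentially no technical obstacle: correctness relies only on the elementary bound $\mathsf{vc}(G) \leq 2\,\nu(G)$ relating the vertex cover and matching numbers, together with the standard inequality $\mathsf{tw}(G) \leq \mathsf{vc}(G)$, and the rest is immediate from the preceding lemma. The only subtle point to verify is that the subgraph induced by $k$ chosen matching edges truly satisfies $\Phi$, which is guaranteed by the convention that $G[A]$ removes isolated vertices, so that $G[A]$ is isomorphic to $M_k$ and the matching criterion applies.
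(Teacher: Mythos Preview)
Your proof is correct and follows essentially the same approach as the paper: a Win--Win argument that either finds a large matching (yielding a YES instance via the matching criterion) or bounds the treewidth via the vertex cover number and then invokes the Courcelle-based lemma. The structure, the use of Edmonds' algorithm, the bound $\mathsf{tw}(G)\leq\mathsf{vc}(G)\leq 2\nu(G)$, and the brute-force handling of $k<c$ all match the paper's argument.
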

\begin{proof}
    Since $\Phi$ satisfies the matching criterion, there is a constant $c$ (only
    depending on $\Phi$) such that $\Phi(M_k)=1$ for all $k\geq c$. The FPT algorithm is
    constructed as follows:

    Given a graph $G$ and a positive integer $k$, we can assume that $k \geq c$, solving the case $k<c$ by brute force enumeration of all $k$-subsets of edges.
%    first check whether $k<c$. If this is
%    true, then we proceed by brute force, that is, for each subset $A$ of $k$ edges of
%    $G$, we verify whether $\Phi(G[A])=1$ and we output~$1$ if there is at least one $A$
%    for which $\Phi(G[A])=1$, and $0$ otherwise; note that the latter can be done in time~$f(k)$ for some computable function $f$ since $\Phi$ is computable. The overall
%    running time of the brute-force approach is hence bounded by $f(k)\cdot m^k \leq
%    m^{O(1)}$ since $k<c$.
    In the case $k\geq c$, we compute a maximum matching $M$ of $G$ in polynomial time by, e.g.,
    the Blossom Algorithm~\cite{Edmonds65}. If $|M|\geq k$, then we can output $1$, since
    any $k$-subset $A$ of $M$ satisfies that $\Phi(G[A])=1$ by assumption.

    In the remaining case, we can thus assume that the matching number of $G$ is bounded
    by $k$. Consequently, the vertex cover number of $G$ is bounded by $2k$. Since the
    treewidth of a graph is bounded by its vertex cover number, we conclude that
    $\mathsf{tw}(G)\leq 2k$. Invoking the algorithm from the previous lemma thus yields an
    overall running time bounded by
	\[  m^{O(1)} + g(2k,k)\cdot |G| \,,\]
	which proves fixed-parameter tractability.
\end{proof}

We continue with the case of $\Phi$ satisfying the star criterion. To this end, we
require the following result, which is implicitly implied by the counting version of the Frick-Grohe-Theorem~\cite{FrickG01}; we provide a proof based on the
bounded search-tree paradigm for completeness.

\begin{lemma}
    Let $\Phi$ denote a computable graph property. There is a computable function $g$ and
    an algorithm~$\mathbb{A}$ that, given a graph $G$ and a positive integer $k$,
    correctly decides whether $\edgesubs{\Phi,k}{G}\neq \emptyset$ in time
    $g(\mathsf{deg}(G),k) \cdot |G|$.
\end{lemma}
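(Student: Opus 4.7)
The plan is to mirror the proof of the preceding lemma, but to replace Courcelle's theorem with a model-checking meta-theorem tailored to graphs of bounded degree. Specifically, I would invoke the result of Seese (and later strengthened by Frick and Grohe) which asserts that first-order model checking on any class of bounded degree graphs can be carried out in time $f(|\varphi|,d)\cdot|G|$, where $d$ is the degree bound and $f$ is a computable function. Only linear-time data complexity is needed, and this fits the bound claimed in the lemma.

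Concretely, I would first observe that since $\Phi$ is computable and every $H \in \Phi_k$ has $k$ edges and no isolated vertices (hence at most $2k$ vertices), the set $\Phi_k$ is finite and effectively enumerable from $k$. For each $H \in \Phi_k$ with $V(H)=\{1,\dots,v_H\}$, define the same first-order sentence as in the preceding proof,
\[
\varphi_H := \exists x_1 \cdots \exists x_{v_H}\;\Bigl(\bigwedge_{i\neq j} x_i \neq x_j \wedge \bigwedge_{\{i,j\}\in E(H)} E(x_i,x_j)\Bigr),
\]
which is satisfied by a graph $G$ if and only if $H$ occurs as a subgraph of $G$. Then set
\[
\varphi := \bigvee_{H\in \Phi_k} \varphi_H,
\]
so that $G \models \varphi$ exactly when $\edgesubs{\Phi,k}{G}\neq \emptyset$. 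The length of $\varphi$ is bounded by a computable function of $\Phi$ and $k$ alone.

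Finally, I would apply the Frick--Grohe theorem: given $G$ of maximum degree at most $d = \mathsf{deg}(G)$ and the sentence $\varphi$, one can decide $G \models \varphi$ in time $f(|\varphi|,d)\cdot|G|$ for some computable $f$. Composing with the computable bound on $|\varphi|$ in terms of $\Phi$ and $k$ yields an overall running time of $g(\mathsf{deg}(G),k)\cdot |G|$ for a computable function $g$, as required.

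If one prefers to avoid the meta-theorem and give a self-contained argument, the alternative is a bounded-search-tree procedure: for each $H \in \Phi_k$ write $H$ as a union of its connected components $H_1,\dots,H_c$; for each component $H_j$, fix a root $r_j \in V(H_j)$ and a BFS-ordering of $V(H_j)$. Enumerate all choices of images in $G$ for the roots (at most $|V(G)|^c$ possibilities, with $c \le k$), and then extend along the BFS-ordering, where at each step the image of the next vertex of $H_j$ must be a neighbour in $G$ of an already-mapped vertex — hence at most $d=\mathsf{deg}(G)$ choices per step. This yields an algorithm in time $|V(G)|^c \cdot d^{O(k)} \cdot k^{O(1)}$; handling components one at a time (with the host-vertices used so far forbidden) replaces $|V(G)|^c$ by $|V(G)|\cdot d^{O(k)}$, giving the claimed $g(d,k)\cdot|G|$ bound. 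The only mild technical point here is to combine the per-component searches so that the host-vertices selected for different components are disjoint, which is handled by marking used vertices. I expect either route (meta-theorem or direct search tree) to give a short proof; the Frick--Grohe route is cleanest and mirrors the structure of the preceding lemma.
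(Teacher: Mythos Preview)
Your first route via the Frick--Grohe meta-theorem is correct, and the paper itself notes that the lemma is implicitly implied by Frick--Grohe; so on that route you and the paper agree. The paper, however, chooses to give a self-contained proof based on the bounded search-tree paradigm rather than invoking the meta-theorem, so it is worth comparing your second route to theirs.

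Your second route is fine for connected $H$ and matches the paper exactly there: guess a vertex of $G$ as the image of a fixed root and extend inside the radius-$k$ ball, for a cost of $|V(G)|\cdot \mathsf{deg}(G)^{O(k)}$ per pattern. The gap is in the disconnected case. Your claim that ``handling components one at a time (with the host-vertices used so far forbidden) replaces $|V(G)|^c$ by $|V(G)|\cdot d^{O(k)}$'' is not justified: with backtracking, the root of each of the $c$ components still ranges over essentially all of $V(G)$, so you are back to $|V(G)|^c$; and a purely greedy placement (fix the first valid embedding of $H_1$, forbid its vertices, then try $H_2$, \dots) is not correct, since an unlucky choice for an early component can block later ones even when a global disjoint embedding exists. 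This is precisely the point the paper flags as ``slightly more complicated'', and it resolves it not by marking but by \emph{colour-coding}: sample a random colouring $V(G)\to V(H)$, look only for colour-respecting embeddings (which automatically forces disjointness across components), run the connected search per component restricted to the correct colour classes, and then derandomise via perfect hash families \`a la Alon--Yuster--Zwick. That keeps the total time at $g(\mathsf{deg}(G),k)\cdot|G|$.

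In short: your meta-theorem argument is a clean and valid proof; your direct argument needs colour-coding (or an equivalent device) in place of the ``marking'' step to achieve the claimed linear dependence on $|G|$.
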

\begin{proof}
    We check for each $H\in \Phi_k$ whether $H$ is a subgraph of $G$ and output $1$ if
    (and only if) at least one of those checks is positive.

    Assume for a moment that $H$ is connected. In this case, the strategy is very simple:
    We guess a vertex~$v$ of~$G$ and search for a subgraph embedding of $H$ in $G$ that
    includes $v$. Since $H$ is connected and has $k$ edges, the image of the subgraph
    embedding can only contain vertices of distance at most $k$ from $v$. This allows us
    to search for a copy of $H$ in the $\leq k$ neighbourhood of $v$ by brute-force, since
    the latter contains at most $\mathsf{deg}(G)^k$ vertices. The overall running time of
    finding a subgraph isomorphic to $H$ in $G$ is thus bounded by $|V(G)|\cdot
    \mathsf{deg}(G)^k$.

    The situation becomes slightly more complicated if $H$ is not connected. We would like
    to perform the previous strategy for each connected component of $H$, adding an
    additional factor of $k$ in the worst case. However, since a subgraph embedding needs
    to be injective, we have to guarantee that we do not construct a solution that uses
    vertices of $G$ twice. This issue is solved by a standard application of
    colour-coding: We choose a function $\mathsf{col}:V(G)\rightarrow V(H)$ uniformly at
    random. If $G$ contains a subgraph isomorphic to $H$, then with probability at least
    $p(k)>0$ there is a subgraph embedding $\psi:V(H)\rightarrow V(G)$ such that
    additionally $\mathsf{col}(\psi(v))=v$ for each vertex $v\in V(H)$, and such a
    subgraph embedding can be found in time $O(k\cdot |V(G)|\cdot \mathsf{deg}(G)^k)$ by
    adapting the above strategy for every connected component $H$ accordingly.
    Finally, derandomization can be achieved by perfect hashing as shown
    in~\cite{AlonYZ95} (see also~\cite[Chapter 13.3]{FlumG06}).
\end{proof}

Let us now establish fixed-parameter tractability of $\edgesubsprob(\Phi)$ whenever $\Phi$ satisfies the star criterion.

\begin{lemma}\label{lem:dec_stars}
    Let $\Phi$ denote a computable graph property satisfying the star criterion. Then the problem $\edgesubsprob(\Phi)$ is fixed-parameter tractable.
\end{lemma}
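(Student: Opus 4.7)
The plan is to mirror the proof of \cref{lem:dec_matchings}, but replacing the matching/treewidth win-win by a star/degree win-win, using the preceding lemma that solves $\edgesubsprob(\Phi)$ in time $g(\mathsf{deg}(G),k)\cdot |G|$ whenever the maximum degree is bounded.

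Since $\Phi$ satisfies the star criterion, there is a constant $c$ depending only on $\Phi$ such that $\Phi(K_{1,k})=1$ for every $k\geq c$. Given an input $(G,k)$, if $k<c$, then we solve the problem by brute force enumeration of all $k$-subsets of edges of $G$; this runs in time $|G|^{O(c)}$, which is fixed-parameter tractable. For the main case $k\geq c$, I would scan the vertices of $G$ in linear time to check whether some vertex has degree at least $k$. If such a vertex $v$ exists, pick any $k$ edges incident to $v$; the resulting edge set $A$ satisfies $G[A]\cong K_{1,k}$ and hence $\Phi(G[A])=1$ by the choice of $c$, so we output $1$.

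Otherwise, every vertex of $G$ has degree strictly less than $k$, so $\mathsf{deg}(G)< k$. We then invoke the algorithm of the preceding lemma, which decides in time $g(\mathsf{deg}(G),k)\cdot |G|\leq g(k,k)\cdot |G|$ whether $\edgesubs{\Phi,k}{G}\neq\emptyset$. The overall running time is bounded by
\[
|G|^{O(c)} + |G|^{O(1)} + g(k,k)\cdot |G|,
\]
which is of the form $f(k)\cdot |G|^{O(1)}$ for a computable function $f$, establishing fixed-parameter tractability.

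There is no real obstacle here: the only ingredient we need beyond what is already available is the observation that bounding $\mathsf{deg}(G)$ by $k$ is exactly the regime in which the previous bounded-degree lemma becomes FPT in $k$. The case for bounded treewidth in \cref{thm:dec_classification_intro} is handled directly by the MSO-based lemma with $\mathsf{tw}(G)$ in place of $\mathsf{deg}(G)$, so combining \cref{lem:dec_matchings}, the present lemma, and that observation yields the full statement of the theorem.
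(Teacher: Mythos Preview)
Your proof is correct and follows essentially the same approach as the paper: the star criterion gives a constant $c$, the case $k<c$ is brute-forced, a vertex of degree $\geq k$ immediately yields a star witness, and otherwise the bounded-degree lemma applies with $\mathsf{deg}(G)\leq k$. The paper's argument is identical in structure and detail.
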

\begin{proof}
    Since $\Phi$ satisfies the star criterion, there is a constant $c$ (only depending on
    $\Phi$) such that $\Phi(K_{1,k})=1$ for all $k\geq c$. The FPT algorithm is
    constructed as follows:

    Given a graph $G$ and a positive integer $k$, we can again solve the case $k<c$ by brute force and thus assume $k \geq c$.
%    first check whether $k<c$. If this is
%    true, then we proceed by brute force, that is, for each subset $A$ of $k$ edges of
%    $G$, we verify whether $\Phi(G[A])=1$ and we output~$1$ if there is at least one $A$
%    for which $\Phi(G[A])=1$, and $0$ otherwise; note that the latter can be done in time~$f(k)$ for some computable function $f$ since $\Phi$ is computable. The overall
%    running time of the brute-force approach is hence bounded by $f(k)\cdot m^k \leq
%    m^{O(1)}$ since $k<c$.
    Then, we check whether $G$ contains a vertex $v$ of degree at least $k$, in
    which case we can output $1$, since any $k$-subset $A$ of the incident edges of $v$
    satisfies that $\Phi(G[A])=1$ by assumption.

    In the remaining case, we can thus assume that $\mathsf{deg}(G)\leq k$. Invoking the
    algorithm from the previous lemma thus yields an overall running time bounded by
    \[  m^{O(1)} + g(k,k)\cdot |G| \,,\]
    which proves fixed-parameter tractability.
\end{proof}

\begin{proof}[Proof of \cref{thm:dec_classification}]
    In case $\Phi$ satisfies the matching criterion or the star criterion, the claim holds
    by \cref{lem:dec_matchings} and \cref{lem:dec_stars}. If $\Phi$ has bounded treewidth,
    then, given $G$ and $k$, we can use the algorithm of Plehn and Voigt~\cite{PlehnV90}
    for each $H\in \Phi_k$. Since the size of $\Phi_k$ is bounded by a function in $k$,
    the overall running time still yields fixed-parameter tractability.
\end{proof}

Our main result regarding minor-closed properties is now obtained by the combination of our results in the realms of exact counting, approximate counting, as well as decision:

\minclosehard*
\begin{proof}
    Note that each minor-closed property is computable (even in polynomial time) by the
    Robertson-Seymour Theorem~\cite{RobertsonS04}.
    The classification of exact counting follows by
    \cref{thm:main_minor_exact_only}. For approximate counting and decision, we
    claim that each minor-closed property $\Phi$ either has bounded treewidth or satisfies
    both, the matching \emph{and} the star criterion. If the latter holds, then the
    existence of an FPTRAS for approximate counting follows by
    \cref{thm:approx_main_intro}, and the FPT algorithm for decision follows by
    \cref{thm:dec_classification}.

    To prove the claim, we assume that $\Phi$ has unbounded
    treewidth; otherwise we are done. In that case, by the
    Excluded-Grid-Theorem~\cite{RobertsonS86-ExGrid},
    $\Phi$ holds for a sequence of graphs containing arbitrarily large grids as minors.
    Since every planar graph (including matchings and stars) is a minor of a
    grid~\cite{RobertsonST94}, and $\Phi$ is minor-closed, we conclude that $\Phi$ holds
    for all matchings and all stars and thus satisfies both the matching and the star
    criterion.
\end{proof}

\subsection{Separating Approximate Counting and Decision}\label{sec:approx_dec}
Below we establish the existence of a (computable) graph property $\Psi$ such that
$\edgesubsprob(\Psi)$ is fixed-parameter tractable but $\#\edgesubsprob(\Psi)$ does not
admit an FPTRAS unless $\W{1}$ coincides with $\mathsf{FPT}$, under randomised
parameterized reductions.

\noindent We rely on the subgraph isomorphism problem restricted to \emph{grids}: given a positive
integer $k$, the $k$\emph{-grid}, denoted by $\boxplus_k$, has vertices $[k]\times[k]$ and
edges
\[ E(\boxplus_k):= \{ \{(i,j),(i',j')\} ~:~ |i-i'|+|j-j'|=1 \} \,.\]
In other words, $\boxplus_k$ is obtained from the torus $\torus_k$ by removing edges $\{(0,j),(k-1,j)\}$ and $\{(i,0),(i,k-1)\}$ for all $i,j \in [k]$. Hence observe that $\#E(\boxplus_k)=2k(k-1)$ and consider the definition of $\Psi$:

\begin{defn}[Property $\Psi$] \label{Def:propertyPsi}
    Let $H$ denote a graph. We set
    \begin{equation}
        \Psi(H)=1 :\Leftrightarrow
        \begin{cases}
            H \cong \boxplus_k + K_{1,k^2+2k} \vee H \cong M_{3k^2} & \exists k: \#E(H)=3k^2 \\
            H \cong M_{\#E(H)} &\text{otherwise}
        \end{cases}
    \end{equation}
    Here, $\boxplus_k + K_{1,k^2+2k}$ is the (disjoint) union of the $k$-grid and the star of size $k^2+2k$. In particular, $\boxplus_k + K_{1,k^2+2k}$ has precisely $2k(k-1)+k^2+2k=3k^2$ edges.
    \lipicsEnd
\end{defn}
Observe that $\Psi$ is clearly computable and satisfies the matching criterion (but not
the star criterion). Thus $\edgesubsprob(\Psi)$ is fixed-parameter tractable by
\cref{thm:dec_classification}.

Write $\boxplus$ for the set of all grids and recall that the problem
$\embsprob(\boxplus)$ asks, given as input a grid $\boxplus_k$ and a graph~$G$, to
correctly decide whether there is an embedding from $\boxplus_k$ to $G$. Chen, Grohe and
Lin~\cite{ChenGL17} proved that this problem is $\W{1}$-hard. The following result
establishes thus hardness of approximating $\#\edgesubsprob(\Psi)$.
\begin{theorem}\label{thm:decapproxsep}
    If $\#\edgesubsprob(\Psi)$ admits an FPTRAS, then there is a randomized decision procedure~$\mathbb{P}$ which, given a graph $G$, a positive integer $k$, and a rational number $0<\delta<1$, satisfies
    \[ \pr{\mathbb{P}(G,k,\delta)=1 \Leftrightarrow \embs{\boxplus_k}{G}\neq \emptyset}\geq (1-\delta) \,.\]
    Furthermore, the running time of $\mathbb{P}$ is bounded by
    $f(k)\cdot \mathsf{poly}(|G|,\log(1/\delta))$ for some computable function~$f$.
    \lipicsEnd
\end{theorem}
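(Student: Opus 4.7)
My plan is to combine Alon-Yuster-Zwick color-coding~\cite{AlonYZ95} with a small gadget construction to reduce $\embsprob(\boxplus)$ to the edge-colored (colorful) decision variant of $\#\edgesubsprob(\Psi)$, and then to invoke the Dell-Lapinskas-Meeks meta-theorem~\cite{DellLM20} to transfer the hypothesized FPTRAS for $\#\edgesubsprob(\Psi)$ into a fixed-parameter tractable algorithm for that colorful decision problem. First I would apply color-coding: given $(\boxplus_k, G, \delta)$, sample $T \in 2^{O(k^2)} \log(1/\delta)$ independent uniform vertex-colorings $c_V : V(G) \to V(\boxplus_k)$. Standard analysis shows that whenever $G$ admits a grid embedding, with probability at least $1 - \delta$ at least one of the trials produces a coloring under which some embedding $\phi$ becomes \emph{colorful}, that is, satisfies $c_V(\phi(v)) = v$ for every $v \in V(\boxplus_k)$.

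For each sampled coloring I would build an edge-colored graph $(G', c_E)$ serving as input to the colorful decision version of $\#\edgesubsprob(\Psi)$ with $n = 3k^2$. Let $G' := G \sqcup F$, where $F$ is a fresh star $K_{1,N}$ with $N := k^2 + 2k$, and let $c_E$ assign each of the $N$ star edges its own dedicated ``star color'', and assign each edge $\{x, y\} \in E(G)$ the ``grid color'' $\{c_V(x), c_V(y)\} \in E(\boxplus_k)$ whenever this pair is an edge of $\boxplus_k$, deleting the edge otherwise. The key structural claim is then that a colorful subset $A \subseteq E(G')$ of size $3k^2$ satisfies $\Psi(G'[A]) = 1$ if and only if $G$ admits a colorful grid embedding under $c_V$. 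Indeed, because each star color appears on a single edge of $F$, every colorful $A$ contains all $N$ star edges; these share the center of $F$, so $A$ can never be a matching, ruling out $G'[A] \cong M_{3k^2}$. Hence $\Psi(G'[A]) = 1$ forces $G'[A] \cong \boxplus_k + K_{1,N}$, which in turn forces the $2k(k-1)$ chosen $G$-edges (one per grid color) to form a copy of $\boxplus_k$ vertex-disjoint from $F$; a short propagation argument along the edges of $\boxplus_k$ then shows that the grid-color labelling forces the endpoints of the chosen edges to carry the intended colors in $V(\boxplus_k)$, yielding the desired colorful grid embedding (the converse is immediate).

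The procedure $\mathbb{P}$ then calls, for each of the $T$ trials, the FPT colorful-decision algorithm for $\#\edgesubsprob(\Psi)$ (provided by the meta-theorem applied to the assumed FPTRAS) on the constructed instance $(G', c_E)$, and outputs ``yes'' as soon as one trial returns positive. The total running time is $T$ times an FPT-in-$k$ factor, matching the claimed bound $f(k) \cdot \mathsf{poly}(|G|, \log(1/\delta))$. The main technical obstacle will be justifying the invocation of the Dell-Lapinskas-Meeks transfer in this precise setting: one needs to verify that $\#\edgesubsprob(\Psi)$ falls within the class of parameterized counting problems for which an FPTRAS implies fixed-parameter tractability of the canonical colorful-witness decision version, and that our edge-colored instances conform to this formulation. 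Should the direct transfer not apply out-of-the-box, an alternative is to implement the passage from FPTRAS to colorful decision by hand for our construction, exploiting the guarantee that no colorful matching can exist to simplify the inclusion-exclusion machinery that otherwise handles cancellations between sign-alternating approximate terms.
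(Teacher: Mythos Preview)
Your proposal has a genuine gap: you invoke the Dell--Lapinskas--Meeks transfer in the wrong direction. The result in~\cite{DellLM20} shows that fixed-parameter tractability of the edge-colourful decision version implies an FPTRAS for the counting version; you need the converse, namely that an FPTRAS implies FPT colourful decision, and this is not provided by that work (nor is it known in general). Your fallback---carrying out the inclusion-exclusion over colour subsets by hand---cannot be salvaged either. The alternating sum $\sum_{J}(-1)^{|J|}\#\edgesubs{\Psi,3k^2}{G'\setminus J}$ involves $2^{3k^2}$ terms, each of which counts (among other things) uncoloured $3k^2$-matchings and can be of order $|E(G)|^{3k^2}$. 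Your observation that no \emph{colourful} matching exists is irrelevant here, because these uncoloured terms do contain matchings. To recover a colourful count of $0$ versus $\geq 1$ from approximate values you would need $\varepsilon$ of order $2^{-3k^2}|E(G)|^{-3k^2}$, and since an FPTRAS runs in time polynomial in $1/\varepsilon$, this yields an $|G|^{\Omega(k^2)}$-time procedure, not an FPT one.

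The paper avoids this entire issue by a direct gap-amplification argument that never passes through colourful decision. It attaches to $G$ a star $K_{1,n^6}$ (with $n=|V(G)|$) and observes that the \emph{uncoloured} count $\#\edgesubs{\Psi,3k^2}{G'}$ exhibits a polynomial-in-$n$ gap: if $G$ contains a grid, the $\binom{n^6}{k^2+2k}$ choices for the star half make the count at least roughly $n^{6k^2+12k}$, whereas if no grid exists only matchings survive, and since a matching can use at most one star edge the count is at most roughly $n^{6k^2+4}$. A single call to the FPTRAS with $\varepsilon=1/2$ then suffices to separate the two regimes. The essential point you are missing is that one should engineer a multiplicative gap in a single uncoloured count rather than try to extract a tiny alternating sum from many large approximate terms.
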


The previous theorem is an easy consequence of the following lemma.
\begin{lemma}\label{lem:amplification_gadget}
    Let $G$ denote a graph with $n$ vertices, let $k$ denote a positive integer, and set $G'=G+K_{1,n^6}$. We have
    \[
        \#\edgesubs{\Psi,3k^2}{G'} \begin{cases}
            \geq (3k^2)^{-3k^2}n^{6k^2+12k} & \text{if }\embs{\boxplus_k}{G} \neq \emptyset \\
            \leq 2 n^{6k^2+4}&\text{otherwise}
        \end{cases}
    \]
\end{lemma}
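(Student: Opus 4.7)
The plan is to bound the two contributions to $\#\edgesubs{\Psi,3k^2}{G'}$ separately: subsets $A$ with $G'[A] \cong \boxplus_k + K_{1,k^2+2k}$ and subsets $A$ with $G'[A] \cong M_{3k^2}$. Throughout, I will write $c$ for the center of the added star $K_{1,n^6}$, and I assume $k \geq 2$ so that $\boxplus_k$ contains at least one cycle. For ease of calculation, note that $2k(k-1) + (k^2 + 2k) = 3k^2$, so both of these graphs really do have $3k^2$ edges.

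For the lower bound, assume $\embs{\boxplus_k}{G} \neq \emptyset$ and fix any subgraph copy of the grid inside $G$, with edge set $B \subseteq E(G)$ of cardinality $2k(k-1)$. For every choice of a subset $S \subseteq E(K_{1,n^6})$ with $|S| = k^2 + 2k$, the edge set $A := B \cup S$ lies in $E(G')$, has size $3k^2$, and satisfies $G'[A] \cong \boxplus_k + K_{1,k^2+2k}$ (since $B$ and $S$ live in the two disjoint components of $G'$). Different choices of $S$ yield different $A$, so
\[
    \#\edgesubs{\Psi,3k^2}{G'} \;\geq\; \binom{n^6}{k^2+2k} \;\geq\; \left(\frac{n^6}{k^2+2k}\right)^{k^2+2k} \;\geq\; \frac{n^{6k^2+12k}}{(3k^2)^{3k^2}},
\]
using the standard inequality $\binom{N}{r} \geq (N/r)^r$ and $(k^2+2k)^{k^2+2k} \leq (3k^2)^{3k^2}$.

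For the upper bound, suppose $\embs{\boxplus_k}{G} = \emptyset$. I split the count by the isomorphism type of $G'[A]$. If $G'[A] \cong \boxplus_k + K_{1,k^2+2k}$, then the grid component of $G'[A]$ is a connected subgraph of $G'$ containing a cycle. Since $G'$ is the disjoint union of $G$ and the tree $K_{1,n^6}$, every connected subgraph of $G'$ is contained entirely in $G$ or entirely in $K_{1,n^6}$; as $K_{1,n^6}$ is acyclic, the grid component must sit inside $G$, giving an embedding $\boxplus_k \hookrightarrow G$ — contradiction. Hence this type contributes $0$ to the count.

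It remains to bound the number of matching subsets $A$ of size $3k^2$ in $G'$. Since every edge of $K_{1,n^6}$ is incident to $c$, any matching contains at most one star edge. The matchings that avoid the star entirely number at most $\binom{|E(G)|}{3k^2} \leq n^{6k^2}$, and matchings using exactly one star edge number at most $n^6 \cdot \binom{|E(G)|}{3k^2 - 1} \leq n^{6k^2+4}$. Adding these yields at most $2n^{6k^2+4}$. The main (very mild) obstacle is just to verify the elementary binomial estimates in the final step and to treat the small-$k$ edge cases; everything else is structural.
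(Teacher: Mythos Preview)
Your proof is correct and follows essentially the same argument as the paper's: the lower bound fixes a grid copy in $G$ and counts star completions via $\binom{n^6}{k^2+2k} \geq (3k^2)^{-3k^2}n^{6k^2+12k}$, and the upper bound observes that the grid component cannot sit inside the acyclic star, then bounds the matchings by splitting on whether a star edge is used. If anything, you are slightly more explicit than the paper in justifying why the grid-plus-star case contributes nothing in the upper bound.
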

\begin{proof}
    We rely on the following well-known bounds on the binomial coefficient:
    \[\frac{n^k}{k^k} \leq \binom{n}{k} \leq n^k\,.\]
    Let us start with the lower bound; thus assume that $\embs{\boxplus_k}{G}\neq \emptyset$. Consequently, there is a subset $A$ of $2k(k-1)$ many edges in $G$ such that $G[A]\cong \boxplus_k$. Observe further that there are \[\binom{n^6}{k^2+2k}\geq (k^2+2k)^{-(k^2+2k)} n^{6(k^2+2k)} \geq (3k^2)^{-3k^2}n^{6k^2+12k} \]
    edge subsets $A'$ of $K_{1,n^6}$ that induce $K_{1,k^2+2k}$. Thus, for any such $A'$,
    we have that $G'[A\dot\cup A']\cong \boxplus_k + K_{1,k^2+2k}$. In particular, there
    are hence at least $(3k^2)^{-3k^2}n^{6k^2+12k}$ edge subsets of size $3k^2$ of $G'$
    that induce a graph satisfying~$\Psi$.

    \noindent For the second case, recall that we wish to upper bound the number of $3k^2$-edge
    subsets $A$ of $G'$ such that $\Psi(G[A])=1$. By definition of $\Psi$ and $G'$, and
    under the assumption that $\embs{\boxplus_k}{G}= \emptyset$, it remains to upper bound
    the number of $3k^2$-matchings of $G'$. Note that each matching of $G'$ can use at
    most one edge of the star $K_{1,n^6}$. In particular, this allows us to partition the
    $3k^2$-matchings in two groups: For the first one, every edge of the matching must be
    contained in $G$, and for the second one, precisely one edge is contained in
    $K_{1,n^6}$ (for which there are $n^6$ possibilities), and all remaining edges must be
    contained in $G$. Since $G$ has less than $n^2$ edges, we can generously bound the
    number of $3k^2$-matchings of $G'$ as follows:
    \[ \binom{n^2}{3k^2} + n^6\cdot \binom{n^2}{3k^2-1} \leq n^{6k^2} + n^6\cdot n^{6k^2-2}\leq 2n^{6k^2+4} \,.\]
    The proof is thus concluded.
\end{proof}

\begin{proof}[Proof of \cref{thm:decapproxsep}]
    Assume that ${\mathbb{A}}$ is an FPTRAS for $\#\edgesubsprob(\Psi)$. Given $G$ with
    $n$ vertices and~$\boxplus_k$ (for which we wish to decide whether
    $\embs{\boxplus_k}{G}\neq \emptyset$), we first check whether $n\leq 6(3k^2)^{3k^2}$.
    If this is the case, then $\mathbb{P}$ search for an embedding from $\boxplus_k$ to
    $G$ via brute-force, the running time of which is bounded by $g(k)$ for some
    computable function $g$ since the size of $G$ is bounded by a function in $k$.\pagebreak

    \noindent Thus assume that $n> 6(3k^2)^{3k^2}$. Then $\mathbb{P}$ constructs $G'$ as in
    \cref{lem:amplification_gadget} in time $n^6$ and simulates $\mathbb{A}$ on $G'$,
    $3k^2$, $\delta$ and $\varepsilon=1/2$. Finally, $\mathbb{P}$ outputs $0$ if the
    output of $\mathbb{A}$ is at most $3n^{6k^2+4}$, and $\mathbb{P}$ outputs $1$
    otherwise.
	Since $\mathbb{A}$ is an FPTRAS and $\varepsilon=1/2$, its running time is bounded by
    $f'(3k^2)\cdot \mathsf{poly}(|G'|,\log(1/\delta))$ for some computable function $f'$.
    Since $|G'|$ is bounded polynomial in $|G|$, we conclude that $\mathbb{P}$ has the
    desired running time.

    It remains to prove correctness. Given that $\mathbb{A}$ is an FPTRAS and
    $\varepsilon=1/2$, we note that, with probability at least $(1-\delta)$, the output
    $X$ of $\mathbb{A}$ satisfies
    \[ 1/2\cdot  \#\edgesubs{\Psi,3k^2}{G'} \leq X \leq 3/2\cdot
    \#\edgesubs{\Psi,3k^2}{G'},.\]
    Assume first that $\embs{\boxplus_k}{G}=\emptyset$. Then, by
    \cref{lem:amplification_gadget}, we have
    \[ X \leq 3/2\cdot  \#\edgesubs{\Psi,3k^2}{G'} \leq 3n^{6k^2+4} \,,\]
    and thus the output of $\mathbb{P}$ is correct.

    Now assume that $\embs{\boxplus_k}{G}\neq\emptyset$. We have to show that $X >
    3n^{6k^2+4}$ for $\mathbb{P}$ being correct. Using the assumption that $n>
    6(3k^2)^{3k^2}$, and relying on \cref{lem:amplification_gadget} once more, we have
    \[ X \geq  1/2\cdot  \#\edgesubs{\Psi,3k^2}{G'} \geq  1/2 \cdot
    (3k^2)^{-3k^2}n^{6k^2+12k} > 3 n^{6k^2+12k-1} > 3 n^{6k^2+4} \,, \]
    where the last inequality is trivial since $k>0$. This concludes the proof.
\end{proof}
Finally, since $\embsprob(\boxplus)$ is $\W{1}$-hard, the previous theorem yields that the
existence of an FPTRAS for $\#\edgesubsprob(\Phi)$ would imply that $\W{1}$ coincides with
$\ccFPT$ under randomized parameterized reductions, which proves \cref{thm:sep_approx_dec_intro}.

\section{A Parameterized Tutte Polynomial}\label{sec:tutte}
In the last part of the paper, we take a step back and revisit exact counting: Recall that
problem $\#\edgesubsprob(\Phi)$ can be interpreted as the problem of evaluating a linear
combination of subgraph counts, given by
\[ \#\edgesubs{\Phi,k}{\ast} = \sum_{H\in \Phi_k} \#\subs{H}{G} \,,\]
where $\Phi_k$ is the set of all $k$-edge graphs that satisfy $\Phi$. In particular, each
coefficient in this linear combination is $0$ or $1$. We have seen that the values of
$\Phi$ on the fixed-points of certain group actions on (fractures of) Cayley graphs can be
used to obtain explicit criteria for ($\#\W{1}$-)hardness of $\#\edgesubsprob(\Phi)$. In
the current section, we show that the aforementioned method applies to the significantly
more general problem of computing weighted linear combinations of $k$-edge subgraph
counts. More precisely, we consider a natural parameterized variant of the Tutte
polynomial and obtain an exhaustive classification for the complexity of evaluating it at
any rational coordinates.

Recall that the (classical) Tutte polynomial is defined as follows:
\[T_G(x,y) := \sum_{A \subseteq E(G)} (x-1)^{k(A)-k(E(G))} \cdot (y-1)^{k(A)+\#A-\#V(G)} \,, \]
where $k(S)$ is the number of connected components of the graph $(V(G),S)$.

\noindent In this work,
we consider the specialization of the Tutte polynomial to edge-subsets of size $k$,
which we call the \emph{parameterized Tutte polynomial}:
\[T^k_G(x,y) := \sum_{A \in \binom{E(G)}{k}} (x-1)^{k(A)-k(E(G))} \cdot (y-1)^{k(A)+k-\#V(G)} \,. \]
We emphasize that the parameterized Tutte polynomial is related to a generalisation of the bases generating function for matroids investigated by Anari et al.\ in their work on approximate counting (and sampling) via log-concave generating polynomials~\cite[Section~1.2]{AnariLGV19}.

Similarly to the classical counterpart due to Jaeger et al.\ \cite{JaegerVW90}, our goal
is to understand the parameterized complexity of evaluating $T^k_G(x,y)$ for any fixed
pair of coordinates $(x,y)$, when parameterized by $k$.
Note that at points $(x,y)$ with $x \neq 1, y \neq 1$ we can write the polynomial as
\[T^k_G(x,y) = (x-1)^{-k(E(G))} (y-1)^{k-\#V(G)} \sum_{A \in \binom{E(G)}{k}} ((x-1) \cdot (y-1))^{k(A)} \,. \]
So, up to the global factor $(x-1)^{-k(E(G))} (y-1)^{k-\#V(G)}$ (which can be computed in
linear time in the input size) in this region the polynomial is really just a polynomial
in the single variable $z=(x-1)(y-1)$. Still, we keep the variables $x,y$ separate in
the treatment below. On the one hand, this facilitates comparisons to the classical Tutte
polynomial. On the other hand, we see some interesting behaviour of $T^k_G(x,y)$ at
points with $x=1$ or $y=1$.
Indeed, let us start by investigating the expressibility of the parameterized Tutte polynomial in some individual points.

\subsection{Interpretation in Individual Points}\label{sec:indpoints}
Recall that, given a graph $G$ and a subset $A \subseteq E(G)$ of its edges, we write
$G(A) = (V(G), A)$ for the graph induced by $A$. We emphasize the difference from the
construction $G[A]$ we saw before: the graph~$G[A]$ is obtained from $G(A)$ by removing
all isolated vertices.

 The most immediate information encoded in the parameterized Tutte polynomial is the number of $k$-forests in a graph:

\begin{obs}
    The number of forests with $k$ edges in a graph $G$ is given by $T^k_G(2,1)$.
    \lipicsEnd
\end{obs}
In particular, evaluating $T^k_G(2,1)$ is equivalent to evaluation $\#\indsubs{\Phi,k}{G}$ for the (minor-closed) property of being acyclic.

For further individual points, it is convenient to consider the following modification.

\begin{definition}
    Define the modified Tutte polynomial of a graph $G$ as
    \[\widetilde T_G(x,y) := \sum_{A \subseteq E(G)} (x-1)^{k(A)} \cdot (y-1)^{k(A)+\#A} \,, \]
    so that $\widetilde T_G(x,y) = (x-1)^{k(E(G))} (y-1)^{\#V(G)} T_G(x,y)$. Similarly we define the parameterized version as
    \[\widetilde T^k_G(x,y) := \sum_{A \in \binom{E(G)}{k}} (x-1)^{k(A)} \cdot (y-1)^{k(A)+\#A} \,. \]
\end{definition}

As for its classical counter-part, we observe a deletion-contraction recurrence, which
enables us establish the properties at individual points.
Setting $\widetilde T^{-1}_G(x,y)=0$ we obtain:
\begin{lemma} \label{lem:Ttildedeletioncontraction}
    Given a graph $G$ and an edge $e \in E(G)$ we have
    \[
        \widetilde T^k_G(x,y) = \widetilde T^k_{G \setminus e}(x,y) + (y-1) \widetilde T^{k-1}_{G/e}(x,y) \, ,
    \]
    for any $k \geq 0$ and similarly
    \[
        \widetilde T_G(x,y) = \widetilde T_{G \setminus e}(x,y) + (y-1) \widetilde T_{G/e}(x,y) \, .
    \]
\end{lemma}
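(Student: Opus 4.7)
The plan is to apply the standard deletion-contraction argument from the classical Tutte polynomial theory: partition the sum defining $\widetilde T^k_G(x,y)$ according to whether or not the distinguished edge $e$ belongs to the subset $A$, and identify each part with one of the terms on the right-hand side.

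First I would write
\[
\widetilde T^k_G(x,y) = \!\!\sum_{\substack{A \in \binom{E(G)}{k} \\ e \notin A}}\!\! (x-1)^{k(A)} (y-1)^{k(A)+k} + \!\!\sum_{\substack{A \in \binom{E(G)}{k} \\ e \in A}}\!\! (x-1)^{k(A)} (y-1)^{k(A)+k}.
\]
For the first sum, subsets $A$ with $e \notin A$ are in bijection with $k$-element subsets of $E(G \setminus e)$. Since $V(G) = V(G \setminus e)$, the number of components $k_G(A)$ equals $k_{G \setminus e}(A)$. Thus this sum is exactly $\widetilde T^k_{G \setminus e}(x,y)$.

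For the second sum, I would write $A = \{e\} \cup A'$ with $A' \subseteq E(G) \setminus \{e\}$ and $|A'| = k-1$, and identify $A'$ with a $(k-1)$-subset of $E(G/e)$. The central identity is $k_G(A) = k_{G/e}(A')$: in $(V(G),A)$ the endpoints $u,v$ of $e$ are joined by $e$ and thus lie in one component, which is precisely the component of the merged vertex in $(V(G/e), A')$; all other components agree. Hence the second sum equals
\[
\sum_{A' \in \binom{E(G/e)}{k-1}} (x-1)^{k_{G/e}(A')} (y-1)^{k_{G/e}(A') + k} = (y-1)\, \widetilde T^{k-1}_{G/e}(x,y),
\]
where the extra factor $(y-1)$ comes from the exponent $k_{G/e}(A') + k = k_{G/e}(A') + (k-1) + 1$. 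Adding the two contributions yields the desired recurrence, and the classical statement then follows by the identical argument applied to the sum over all $A \subseteq E(G)$ (equivalently by summing the parameterized identity over $k \geq 0$).

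The main point requiring care is the bijection between $k$-subsets of $E(G)$ containing $e$ and $(k-1)$-subsets of $E(G/e)$, which must preserve both cardinalities and component counts. For this to go through cleanly one should interpret $G/e$ with its edge set inherited from $E(G) \setminus \{e\}$ under the induced incidence relation, so that neither parallel edges nor loops created by the contraction are silently deleted; equivalently, the proof works at the level of labeled edge sets rather than isomorphism classes. Under that reading, both deletion and contraction give the needed component-preserving bijections, and the recurrence falls out directly from the two-term split.
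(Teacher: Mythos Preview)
Your proof is correct and follows essentially the same approach as the paper: split the defining sum according to whether $e \in A$, identify the $e \notin A$ part with $\widetilde T^k_{G\setminus e}$, identify the $e \in A$ part via $A \mapsto A' = A \setminus \{e\}$ with $(y-1)\widetilde T^{k-1}_{G/e}$ using $k_G(A)=k_{G/e}(A')$, and obtain the unparameterized version by summing over $k$. Your explicit remark that $E(G/e)$ should be read as the labeled edge set $E(G)\setminus\{e\}$ (so that contraction does not silently delete parallel edges or loops) is a point the paper leaves implicit.
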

\begin{proof}
    In the definition of $\widetilde T^k_G$ we split the sum over $A \in \binom{E(G)}{k}$
    as
    \begin{equation} \label{eqn:Ttildesplit} \widetilde T^k_G(x,y) = \sum_{\substack{A \in \binom{E(G)}{k}\\e \not \in A}} (x-1)^{k(A)} \cdot (y-1)^{k(A)+\#A} + \sum_{\substack{A \in \binom{E(G)}{k}\\e \in A}} (x-1)^{k(A)} \cdot (y-1)^{k(A)+\#A}\,. \end{equation}
    The subsets $A \in \binom{E(G)}{k}$ with $e \not \in A$ are naturally identified with
    the subsets $A \in \binom{E(G \setminus e)}{k}$ and we have $G(A) = (G \setminus
    e)(A)$. Thus the first sum in \eqref{eqn:Ttildesplit} is equal to $\widetilde T^k_{G
    \setminus e}(x,y)$. On the other hand, the subsets $A \in \binom{E(G)}{k}$ with $e \in
    A$ are naturally identified with the subsets $A' \in \binom{E(G/e)}{k-1}$ by $A
    \mapsto A'=A \setminus \{e\}$ and we have $k(A) = k(A')$ (in their respective graphs
    $G$ and $G/e$). Thus the second summand in \eqref{eqn:Ttildesplit} equals $(y-1)
    \widetilde T^{k-1}_{G/e}(x,y)$, with the factor $(y-1)$ coming from the fact that $\#A
    = \#A'+1$ in the above correspondence. The deletion-contraction formula for the
    (unparameterized) modified Tutte polynomial is obtained by summing over all $k$.
\end{proof}

Using the previous recurrence, the following transformation encapsulates the relation
between the parameterized and the classical Tutte polynomial.
\begin{proposition}\label{prop:paramTutteinterpretation}
    Given a graph $G$ and $k \geq 0$ we have
    \begin{equation} \label{eqn:paramTutteinterpretation}
        \sum_{\ell = 0}^k \binom{\#E(G)-\ell}{k-\ell} \cdot \widetilde T_G^\ell(x,y) = \sum_{A \in \binom{E(G)}{k}} \widetilde T_{G(A)}(x,y)\,.
    \end{equation}
\end{proposition}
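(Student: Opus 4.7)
The plan is to prove the identity by expanding the right-hand side using the definition of $\widetilde T_{G(A)}(x,y)$, swapping the order of summation, and counting the number of $k$-edge subsets $A$ containing a given subset $B$.

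First, I would observe the following crucial fact: for any $A \in \binom{E(G)}{k}$ and any $B \subseteq A$, the number of connected components of the graph $(V(G), B)$ does \emph{not} depend on whether we view $B$ as a subset of $E(G)$ or as a subset of $E(G(A))$, since in both cases the vertex set is $V(G)$. Call this number $k(B)$ in either case. Thus
\[
\widetilde T_{G(A)}(x,y) = \sum_{B \subseteq A} (x-1)^{k(B)} \cdot (y-1)^{k(B)+\#B}.
\]

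Next, I would substitute this into the right-hand side of \eqref{eqn:paramTutteinterpretation} and swap the order of summation, indexing by $B$ first and by $A \supseteq B$ second:
\[
\sum_{A \in \binom{E(G)}{k}} \widetilde T_{G(A)}(x,y) = \sum_{B \subseteq E(G),\, \#B \leq k} (x-1)^{k(B)} (y-1)^{k(B)+\#B} \cdot \#\bigl\{A \in \tbinom{E(G)}{k} : B \subseteq A\bigr\}.
\]
For a fixed $B$ of cardinality $\ell \leq k$, the number of $k$-edge subsets $A$ of $E(G)$ that contain $B$ equals $\binom{\#E(G)-\ell}{k-\ell}$, since we simply need to choose the remaining $k-\ell$ edges from $E(G) \setminus B$. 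Grouping terms by $\ell = \#B$ then yields
\[
\sum_{\ell=0}^{k} \binom{\#E(G)-\ell}{k-\ell} \sum_{B \in \binom{E(G)}{\ell}} (x-1)^{k(B)} (y-1)^{k(B)+\#B} = \sum_{\ell=0}^{k} \binom{\#E(G)-\ell}{k-\ell} \cdot \widetilde T_G^\ell(x,y),
\]
by the very definition of $\widetilde T_G^\ell(x,y)$. This establishes the claimed equality.

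There is no real obstacle here; the proof is a standard double-counting exercise. The only point requiring a small sanity check is the observation that $k(B)$ is well-defined independently of the ambient graph (because both $G$ and $G(A)$ share the vertex set $V(G)$), so that no $(x-1)$-factor correction is needed when passing from $\widetilde T_{G(A)}$ to the formula in terms of subsets of $E(G)$.
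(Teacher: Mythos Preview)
Your proof is correct. It is a clean double-counting argument: expand each $\widetilde T_{G(A)}$ as a sum over subsets $B\subseteq A$, use that $G(A)$ and $G$ share the same vertex set so that $k(B)$ is unambiguous, swap the order of summation, and count the $k$-edge supersets of a fixed $B$.

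The paper takes a genuinely different route. It proceeds by induction on the number of edges of $G$, invoking the deletion--contraction recurrence for $\widetilde T^k_G$ (Lemma~\ref{lem:Ttildedeletioncontraction}) at a chosen edge $e$, together with Pascal's rule for binomial coefficients, and then matching terms against the split of the right-hand side into $A$'s containing $e$ and $A$'s avoiding $e$. Your approach is more elementary and self-contained: it does not need the deletion--contraction lemma at all, and the whole argument fits in a few lines. The paper's inductive route, while longer, has the (minor) advantage of exercising the deletion--contraction machinery that is in any case developed for other purposes in that section; but for this particular identity your direct double-counting is both shorter and more transparent.
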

\begin{proof}
    We prove the statement by induction on the number of edges. For $E(G) = \emptyset$ the
    two sides are zero for $k \neq 0$ and equal to $\widetilde T_G^0(x,y) = \widetilde
    T_G(x,y)$ for $k=0$.

    We show the induction step using the deletion-contraction relations above. Let $G$
    denote a graph with at least one edge $e$. Then we have
    \begin{align}
        ~& \sum_{\ell = 0}^k \binom{\#E(G)-\ell}{k-\ell} \cdot \widetilde T_G^\ell(x,y) \\
        =& \sum_{\ell = 0}^k \binom{\#E(G)-\ell}{k-\ell} \cdot \widetilde T_{G\setminus e}^\ell(x,y) + \binom{\#E(G)-\ell}{k-\ell} \cdot (y-1) \widetilde T_{G/ e}^{\ell-1}(x,y)\,. \label{eqn:Ttildrel1}
    \end{align}
    Furthermore, since $\#E(G) = \#E(G \setminus e) +1$, we can use the usual recursion of
    binomial coefficients to see
    \begin{align*}
        &\sum_{\ell = 0}^k \binom{\#E(G)-\ell}{k-\ell} \cdot \widetilde T_{G\setminus e}^\ell(x,y)\\
        =&\sum_{\ell = 0}^k \binom{\#E(G \setminus e)-\ell}{k-\ell} \cdot \widetilde T_{G\setminus e}^\ell(x,y) + \binom{\#E(G \setminus e)-\ell}{(k-1)-\ell} \cdot \widetilde T_{G\setminus e}^\ell(x,y)\\
        =&\sum_{A \in \binom{E(G \setminus e)}{k}} \widetilde T_{(G\setminus e)(A)}(x,y) + \sum_{A' \in \binom{E(G \setminus e)}{k-1}} \widetilde T_{(G\setminus e)(A')}(x,y)\,,
    \end{align*}
    where we have used the induction step. For the second summand in \eqref{eqn:Ttildrel1} we make the index shift $\ell' = \ell-1$ and obtain
    \begin{align*}
        &\sum_{\ell = 0}^k \binom{\#E(G)-\ell}{k-\ell} \cdot (y-1) \widetilde T_{G/ e}^{\ell-1}(x,y)\\
        =&\sum_{\ell' = 0}^{k-1} \binom{\#E(G /e)-\ell'}{(k-1)-\ell'} \cdot (y-1) \widetilde T_{G/ e}^{\ell'}(x,y)\\
        =&\sum_{A' \in \binom{E(G / e)}{k-1}} (y-1) \widetilde T_{(G/e)(A')}(x,y)\,.
    \end{align*}
    Combining the last two equations we can conclude using suitable identifications, for
    instance identifying the $A \in \binom{E(G)}{k}$ with $e \in A$ with $A' \in \binom{E(G
    \setminus e)}{k-1}$ via $A \mapsto A'= A \setminus \{e\}$ and using
    \[(G\setminus e)(A') = G(A) \setminus e \text{ and } (G/e)(A') = G(A)/e\,. \]
    Then we see that  \eqref{eqn:Ttildrel1} equals
    \begin{align*}
        &\sum_{A \in \binom{E(G \setminus e)}{k}} \widetilde T_{(G\setminus e)(A)}(x,y)  + \sum_{A' \in \binom{E(G \setminus e)}{k-1}} \widetilde T_{(G\setminus e)(A')}(x,y) +  (y-1)\widetilde T_{(G/e)(A')}(x,y)\\
        =&\sum_{\substack{A \in \binom{E(G)}{k}\\e \not \in A}} \widetilde T_{G(A)}(x,y) + \sum_{\substack{A \in \binom{E(G)}{k}\\e \in A}} \widetilde T_{G(A)}(x,y) \\
        =&\sum_{A \in \binom{E(G)}{k}} \widetilde T_{G(A)}(x,y)\,.
    \end{align*}
\end{proof}

Using \cref{prop:paramTutteinterpretation} we can now present combinatorial interpretations of the specialisation of $T^k_G(x,y)$ to some individual points.

\paragraph*{Chromatic Polynomial}
For $x=1-c, y=0$ the modified Tutte polynomial $\widetilde T_G(x,y)$ specializes to the
chromatic polynomial $\chi_G(c)$, so we see that the $\widetilde T_G^\ell(1-c,0)$ (for $0
\leq \ell \leq k$) contain the information of the number of pairs $(A,\sigma)$ with $A
\subseteq E(G)$ with $\#A=k$ and $\sigma$ a $c$-colouring on $G(A)$.
\[\sum_{\ell = 0}^k \binom{\#E(G)-\ell}{k-\ell} \cdot \widetilde T_G^\ell(1-c,0) = \# \left\{(A,\sigma) : A \in \binom{E(G)}{k}, \sigma \text{ $c$-coloring on }G(A) \right\}\,.\]

\paragraph*{Acyclic Orientations}
For $x=2, y=0$ the Tutte polynomial $T_G(x,y)$ specializes to the number of acyclic
orientations of $G$. We have $\widetilde T_G^k(2,0)=(-1)^{\#V(G)} T_G^k(2,0)$. Thus the
$\widetilde T_G^\ell(2,0)$ (for $0 \leq \ell \leq k$) contain the information of the
number of pairs $(A,\vec \eta)$ where  $A \subseteq E(G)$ with $\#A=k$ and $\vec \eta$ is
an acyclic orientation on $G(A)$. Indeed, multiplying \eqref{eqn:paramTutteinterpretation}
with $(-1)^{\#V(G)}$ we obtain
\[\sum_{\ell = 0}^k \binom{\#E(G)-\ell}{k-\ell} \cdot T_G^\ell(2,0) = \# \left\{(A,\vec \eta) : A \in \binom{E(G)}{k}, \vec \eta \text{ acyclic orientation on }G(A) \right\}\,.\]

\paragraph*{$k$-Edge Sets Inducing an Even Number of Components}
\begin{proposition}
    Given a graph $G$ and a positive integer $k$, we have
    \begin{equation*}
        \frac{1}{2}\left(\binom{\#E(G)}{k} + (-1)^{k(E(G))} T_G^k(0,2) \right) = \# \left\{A \subseteq E(G) : \#A = k \wedge k(A) = 0 \mod 2 \right\}\,.
    \end{equation*}
\end{proposition}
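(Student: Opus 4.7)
The plan is to verify the identity by direct computation, unwinding the definition of $T^k_G(x,y)$ at the specific point $(x,y)=(0,2)$ and then recognizing the sum as an indicator for the parity of $k(A)$.

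First I would substitute $(x-1) = -1$ and $(y-1) = 1$ in the definition
\[
T^k_G(x,y) = \sum_{A \in \binom{E(G)}{k}} (x-1)^{k(A)-k(E(G))} \cdot (y-1)^{k(A)+k-\#V(G)}
\]
to obtain
\[
T^k_G(0,2) = \sum_{A \in \binom{E(G)}{k}} (-1)^{k(A)-k(E(G))}.
\]
Multiplying by $(-1)^{k(E(G))}$ and using $(-1)^{2 k(E(G))} = 1$ then gives
\[
(-1)^{k(E(G))} T^k_G(0,2) = \sum_{A \in \binom{E(G)}{k}} (-1)^{k(A)}.
\]

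Next I would add $\binom{\#E(G)}{k} = \sum_{A \in \binom{E(G)}{k}} 1$ to both sides and combine the sums:
\[
\binom{\#E(G)}{k} + (-1)^{k(E(G))} T^k_G(0,2) = \sum_{A \in \binom{E(G)}{k}} \bigl(1 + (-1)^{k(A)}\bigr).
\]
The summand $1 + (-1)^{k(A)}$ equals $2$ if $k(A)$ is even and $0$ otherwise, so the right-hand side equals $2 \cdot \#\{A \in \binom{E(G)}{k} : k(A) \equiv 0 \bmod 2\}$. Dividing by $2$ yields the claim.

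There is no real obstacle here: the statement is essentially a one-line consequence of the definition combined with the standard parity-trick $\tfrac{1}{2}(1 + (-1)^{n})$. The only thing to be careful about is the sign bookkeeping with $k(E(G))$, which is cleanly absorbed by the factor $(-1)^{k(E(G))}$ placed in front of $T^k_G(0,2)$ in the statement.
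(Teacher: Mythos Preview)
Your proof is correct and essentially identical to the paper's: both substitute $(x,y)=(0,2)$, absorb the $(-1)^{k(E(G))}$ factor, rewrite $\binom{\#E(G)}{k}$ as $\sum_A 1$, and apply the parity identity $\tfrac{1}{2}(1+(-1)^n)$.
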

\begin{proof}
    Let $E=E(G)$. We have
    \begin{align*}
        \frac{1}{2}\left(\binom{\#E}{k} + (-1)^{k(E)} T_G^k(0,2) \right) &= \frac{1}{2}\left(\binom{\#E}{k} + (-1)^{k(E)} \sum_{A \in \binom{E}{k}} (-1)^{k(A)+k(E)}\right) \\
        &= \frac{1}{2}\left( \sum_{A \in \binom{E}{k}} 1 + (-1)^{k(A)}\right)\,.
    \end{align*}
    But observe that the summand above is $0$ for $k(A)$ odd and $2$ for $k(A)$ even. Thus after summing and dividing by~$2$ we count the subsets $A$ with the graph $G(A)$ having an even number of components.
\end{proof}

\paragraph*{$k$-Edge Sets of Even Betti Number}
The (first) \emph{Betti number}\footnote{The first Betti number is also called the circuit rank, cyclomatic number, cycle rank, or nullity.} of a graph is defined as $b_1(G)=k(E(G)) + \#E(G) - \#V(G)$ (cf.\ \cite[Chapt.\ 4]{bettinumber}).

\begin{proposition}
    Given a graph $G$ and a positive integer $k$, we have
    \begin{equation*}
        \frac{1}{2}\left(\binom{\#E(G)}{k} + T_G^k(2,0) \right) = \# \left\{A \subseteq E(G) : \#A = k \wedge b_1(G(A)) =0 \mod 2 \right\}\,.
    \end{equation*}
\end{proposition}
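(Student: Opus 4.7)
The plan is to proceed in complete analogy with the preceding proposition on $k$-edge sets inducing an even number of components, just replacing the exponent tracking components with the exponent tracking the first Betti number.

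First I would unfold the definition of $T^k_G(x,y)$ at the specific point $(x,y)=(2,0)$. Since $x-1=1$, the factor $(x-1)^{k(A)-k(E(G))}$ vanishes from the sum, and since $y-1=-1$, the exponent we are left with is $k(A)+k-\#V(G)$. Thus
\begin{equation*}
T_G^k(2,0) = \sum_{A \in \binom{E(G)}{k}} (-1)^{k(A)+k-\#V(G)}.
\end{equation*}

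The key observation is that for any $A \subseteq E(G)$ of size $k$, the graph $G(A)=(V(G),A)$ has $\#V(G(A))=\#V(G)$ vertices and $\#A = k$ edges, so its first Betti number is exactly
\begin{equation*}
b_1(G(A)) = k(A) + \#A - \#V(G(A)) = k(A) + k - \#V(G).
\end{equation*}
Substituting this identity into the previous display gives $T_G^k(2,0) = \sum_{A \in \binom{E(G)}{k}} (-1)^{b_1(G(A))}$. Adding $\binom{\#E(G)}{k} = \sum_{A \in \binom{E(G)}{k}} 1$ and dividing by two then yields
\begin{equation*}
\frac{1}{2}\left(\binom{\#E(G)}{k} + T_G^k(2,0)\right) = \sum_{A \in \binom{E(G)}{k}} \frac{1 + (-1)^{b_1(G(A))}}{2},
\end{equation*}
and since the summand is $1$ when $b_1(G(A))$ is even and $0$ when it is odd, the right-hand side is exactly the count of $k$-edge subsets $A$ with $b_1(G(A)) \equiv 0 \pmod 2$, as claimed.

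There is essentially no obstacle here: the only content is the identification $b_1(G(A)) = k(A) + k - \#V(G)$, which is immediate from the definition since $G(A)$ inherits the full vertex set of $G$. The calculation is a direct mirror of the preceding proof for the number of components, with $k(E(G))$ not appearing because the sign parity is the parity of $k(A) + k - \#V(G)$ rather than of $k(A)$ alone, and it is the former that matches $b_1$.
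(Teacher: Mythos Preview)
Your proof is correct and is essentially identical to the paper's own argument: evaluate $T^k_G(2,0)$ directly, identify the exponent $k(A)+k-\#V(G)$ with $b_1(G(A))$, and average with $\binom{\#E(G)}{k}$ to extract the even-Betti count.
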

\begin{proof}
    We have
    \begin{align*}
        \frac{1}{2}\left(\binom{\#E(G)}{k} + T_G^k(2,0) \right) &= \frac{1}{2}\left(\binom{\#E(G)}{k} + \sum_{A \in \binom{E(G)}{k}} (-1)^{k(A)+\#A-\#V(G)}\right) \\
                                                                &= \frac{1}{2}\left( \sum_{A \in \binom{E(G)}{k}} 1 + (-1)^{b_1(G(A))}\right)\,,
    \end{align*}
    where we use $b_1(G(A)) = k(A)+\#A-\#V(G)$. But observe that the summand above is $0$ for $b_1(G(A))$ odd and $2$ for $b_1(G(A))$ even. Thus after summing and dividing by $2$ we count the subsets $A$ with $G(A)$ having even Betti number.
\end{proof}

\subsection{Classification for Rational Coordinates}
We now classify the complexity of computing $T^k_G(x,y)$ for each pair of rational
coordinates $x$ and $y$. Formally, for each such pair, we consider the parameterized
problem which expects as input $G$ and $k$ and outputs the value $T^k_G(x,y)$; the
parameterization is given by $k$.
Let us start with the following easy fact:
\begin{lemma}\label{lem:TutteEasyLineOne}
    For any $y\in \mathbb{Q}$, the problem of computing $T^k_G(1,y)$ is fixed-parameter tractable.
\end{lemma}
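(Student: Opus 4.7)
The key observation for this lemma is that substituting $x=1$ collapses the sum drastically. Writing $m := k(E(G))$ for the number of connected components of $G$, note that for any $A \subseteq E(G)$ we have $k(A) \geq m$, with equality precisely when, within each connected component $C_i$ of $G$ (with vertex set $V(C_i)$ of size $n_i$), the edges $A \cap E(C_i)$ already span $V(C_i)$ as a connected subgraph. Under the convention $0^0 = 1$, the factor $(x-1)^{k(A) - m} = 0^{k(A)-m}$ vanishes unless $k(A) = m$, so
\[
T^k_G(1,y) = (y-1)^{m + k - \#V(G)} \cdot N(G,k)\,, \quad N(G,k) := \#\bigl\{ A \in \tbinom{E(G)}{k} : k(A) = m \bigr\}\,.
\]
The plan is to show that $N(G,k)$ is computable in time $f(k) \cdot |G|^{O(1)}$.

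The first step is the structural reduction. Since each non-trivial component $C_i$ (i.e.\ with $n_i \geq 2$) contributing to an $A$ counted by $N(G,k)$ must receive at least $n_i - 1$ edges (a spanning tree's worth), summing over $i$ gives $k \geq \#V(G) - m$. In particular, $n_i - 1 \leq k$ for every non-trivial component, so each such component has at most $k+1$ vertices; and the number of non-trivial components is also at most $k$. Thus the subgraph $G'$ of $G$ obtained by deleting all isolated vertices has at most $2k$ vertices (and hence at most $\binom{2k}{2}$ edges). If $\#V(G) - m > k$, output $0$ immediately. Otherwise, compute the connected components of $G$ in polynomial time and restrict attention to $G'$.

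The second step is the counting. For each non-trivial component $C_i$ and each $k_i \in \{n_i - 1, \dots, \#E(C_i)\}$, let $c_i(k_i)$ denote the number of $k_i$-edge subsets of $E(C_i)$ that make $V(C_i)$ connected. Since $|V(C_i)| \leq k+1$, the numbers $c_i(k_i)$ can be computed by brute-force enumeration of subsets of $E(C_i)$ in time $g(k)$ for some computable $g$. Then
\[
N(G,k) = [z^k] \prod_{i : n_i \geq 2} P_i(z)\,, \quad\text{where } P_i(z) = \sum_{k_i \geq n_i - 1} c_i(k_i)\, z^{k_i}\,.
\]
The product can be built up iteratively while truncating at degree $k$; since there are at most $k$ factors and each convolution step costs $O(k^2)$, this stage runs in time $h(k)$ for some computable $h$. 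Multiplying by the scalar $(y-1)^{m+k-\#V(G)}$ (interpreting $0^0 = 1$ when $y = 1$ and the exponent vanishes) gives $T^k_G(1,y)$.

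The only mildly subtle point is the bookkeeping around isolated vertices and the edge case $y=1$, which is handled by the $0^0=1$ convention already implicit in the definition of $T^k_G$. No further ingredients are needed: the entire argument is a direct exploitation of the fact that $x=1$ forces $A$ to be a spanning connected structure on each component, which immediately bounds the relevant part of the host graph by a function of $k$.
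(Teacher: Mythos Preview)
Your proposal is correct and follows essentially the same approach as the paper: both arguments exploit that setting $x=1$ forces the surviving edge-subsets $A$ to span every component of $G$, which bounds the non-trivial part of $G$ by a function of $k$ and reduces the problem to brute force. Your version is in fact slightly more careful than the paper's---you explicitly separate off isolated vertices (the paper tacitly assumes there are none, which is harmless since $T^k_G$ is invariant under adding or removing them) and you spell out the final count via per-component generating functions rather than a single global enumeration---but these are differences in presentation, not in the underlying idea.
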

\begin{proof}
    Observe that $T^k_G(1,y)=0$ unless there is $A\subseteq E(G)$ of size $k$ such that
    $k(A)=k(E(G))$. In other words, $G$ has a spanning subgraph of $k$ edges.
    Consequently, $G$ can have at most $2k$ vertices, implying that $G$ has at most
    $\binom{2k}{2}\leq 4k^2$ many edges. Therefore an algorithm for computing $T^k_G(1,y)$
    is obtained as follows: Given $G$ and $k$, first check whether $|V(G)|>2k$, and output
    $0$ in that case. Otherwise, obtain $T^k_G(x,y)$ by naively computing the sum, which
    takes time
    \[O\left(\binom{4k^2}{k} \cdot |G|\right)\,,\]
    concluding the proof.
\end{proof}

Next, similarly to the classical counter-part~\cite{JaegerVW90}, we obtain a trivial algorithm for coordinates $x$ and $y$ that lie on the hyperbola $(x-1)(y-1)=1$:

\begin{lemma}\label{lem:easyHyperbola}
    Let $x$ and $y$ denote rational numbers such that $(x-1)(y-1)=1$. Then the problem of
    computing $T^k_G(x,y)$ is solvable in polynomial time (and thus fixed-parameter
    tractable as well).
\end{lemma}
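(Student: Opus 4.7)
The plan is to exploit the observation that on the hyperbola $(x-1)(y-1)=1$, each summand in the definition of $T^k_G(x,y)$ collapses to an expression that no longer depends on the edge subset $A$, so the entire sum reduces to a binomial coefficient times a single power.

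First, I would note that $(x-1)(y-1)=1$ forces both $x \neq 1$ and $y \neq 1$ (otherwise the product would be $0$), so that we may write $(y-1) = (x-1)^{-1}$. Substituting this into the individual summand of
\[T^k_G(x,y) = \sum_{A \in \binom{E(G)}{k}} (x-1)^{k(A)-k(E(G))} \cdot (y-1)^{k(A)+k-\#V(G)}\]
gives
\[(x-1)^{k(A)-k(E(G))} \cdot (x-1)^{-(k(A)+k-\#V(G))} = (x-1)^{\#V(G)-k(E(G))-k},\]
which is independent of the chosen subset $A$.

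Consequently, I would conclude the closed-form identity
\[T^k_G(x,y) = \binom{\#E(G)}{k} \cdot (x-1)^{\#V(G)-k(E(G))-k}.\]
All quantities appearing on the right-hand side, namely $\#V(G)$, $\#E(G)$, and the number of connected components $k(E(G))$, are computable in polynomial time in $|G|$; the binomial coefficient and the rational power likewise admit polynomial-time evaluation. This gives a polynomial-time algorithm to output $T^k_G(x,y)$.

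There is no real obstacle here: the content of the lemma is the algebraic simplification above. The only minor subtlety is to handle the degenerate exponent $\#V(G)-k(E(G))-k$, which can be negative; since we have already guaranteed $x \neq 1$, the quantity $(x-1)^{\#V(G)-k(E(G))-k}$ is a well-defined rational number regardless of sign, so the formula remains valid for every input $(G,k)$.
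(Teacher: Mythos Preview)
Your proof is correct and follows essentially the same approach as the paper: both use $(x-1)(y-1)=1$ to collapse every summand to a quantity independent of $A$, yielding a closed-form expression given by a binomial coefficient times a single power of $x-1$ (equivalently, of $y-1$). Your write-up is in fact slightly more careful in noting that $x\neq 1$ and handling the possibly negative exponent.
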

\begin{proof}
    Observe that, given $(x-1)(y-1)=1$, and setting $V=V(G)$ and $E=E(G)$, we have
    \[T^k_G(x,y) = \sum_{A \in \binom{E}{k}} (x-1)^{k(A)-k(E)} \cdot (y-1)^{k(A)+k-\#V}= (x-1)^{-k(E)} (y-1)^{k+\#V}  \binom{\#E}{k} \,, \]
    which can be computed trivially.
\end{proof}

In what follows, we show that computing $T^k_G(x,y)$ is $\#\W{1}$-hard for \emph{all
remaining} rational coordinates~$x$ and~$y$.
First, it is convenient to rewrite the quantity $k(A)$ as follows: given an edge-subset
$A$ of a graph~$G$, recall that $G[A]$ is the graph obtained from $(V(G),A)$ by deleting
isolated vertices. Let us write $\mathsf{cc}(H)$ for the number of connected components of
a graph $H$.

\begin{fact}
    Let $G$ denote a graph and let $A$ denote a subset of edges of $G$. We have
    \[ k(A)= \mathsf{cc}(G[A]) + \#V(G)-\#V(G[A]) \,.\]
    \lipicsEnd
\end{fact}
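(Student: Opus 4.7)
The plan is to simply unpack the definitions. Recall that $k(A)$ is defined as the number of connected components of the graph $(V(G), A)$, and that $G[A]$ is obtained from $(V(G), A)$ by deleting precisely the isolated vertices.

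First, I would partition the connected components of $(V(G), A)$ into two classes: the \emph{trivial} components, consisting of a single isolated vertex with no incident edge in $A$, and the \emph{non-trivial} components, each containing at least one edge of $A$ and hence at least two vertices. Writing $t(A)$ for the number of trivial components and $n(A)$ for the number of non-trivial ones, we evidently have $k(A) = t(A) + n(A)$.

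Next I would identify each of these two summands. The trivial components correspond bijectively to the vertices of $G$ that are isolated in $(V(G), A)$, and these are exactly the vertices removed when passing to $G[A]$. Hence $t(A) = \#V(G) - \#V(G[A])$. On the other hand, deleting isolated vertices does not affect any non-trivial component, so the non-trivial components of $(V(G), A)$ are in bijection with the connected components of $G[A]$, giving $n(A) = \mathsf{cc}(G[A])$.

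Combining the two identities yields $k(A) = \mathsf{cc}(G[A]) + \#V(G) - \#V(G[A])$, as claimed. There is no real obstacle here; the only thing to be slightly careful about is ensuring that the non-trivial components really are preserved under the deletion of isolated vertices, which is immediate since the deletion affects no edges and no non-isolated vertices.
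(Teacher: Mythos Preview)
Your argument is correct. The paper does not actually give a proof of this fact; it states it with the \texttt{\textbackslash lipicsEnd} marker and moves on, treating it as immediate from the definitions. Your partition into trivial and non-trivial components is exactly the right unpacking, and there is nothing more to say.
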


Similarly as in case of $\#\edgesubsprob(\Phi)$, our goal is to reduce from a linear
combination of (colour-preserving) homomorphism counts. For this reason, we again
consider an easy modification by excluding the term $(x-1)^{\#V(G)-\mathsf{cc}(G)}$; more
precisely, consider
\[ \widehat{T}^k_G(x,y) := \sum_{A \subseteq \binom{E(G)}{k}} (x-1)^{\mathsf{cc}(G[A])-\#V(G[A])} \cdot (y-1)^{\mathsf{cc}(G[A])-\#V(G[A])+k}\,,\]
and observe that
    \[T^k_G(x,y) =(x-1)^{\#V(G)-\mathsf{cc}(G)} \cdot \widehat{T}^k_G(x,y) \,.\]
In particular $\widehat{T}^k_G(x,y)$ is trivially interreducible with $T^k_G(x,y)$ if
$x\neq 1$. Next we introduce an ($H$-)coloured version of the parameterized Tutte
polynomial; given an edge-subset $A$ of a $k$-edge-coloured graph, we write
$\mathsf{cful}(A)$ if $A$ contains each of the $k$ colours precisely once.
\begin{defn}[Colourful Parameterized Tutte Polynomial]
    Let $G$ denote a $k$-edge-coloured graph. We define
    \[\coltkg := \sum_{\substack{A \subseteq \binom{E(G)}{k}\\\mathsf{cful}(A)}} (x-1)^{\mathsf{cc}(G[A])-\#V(G[A])} \cdot (y-1)^{\mathsf{cc}(G[A])-\#V(G[A])+k}\]
    as the \emph{colourful Parameterized Tutte Polynomial}.
    \lipicsEnd
\end{defn}
The next lemma allows us to reduce the colourful version to the uncoloured version.
\begin{lemma}\label{lem:nocolsTutte}
    Let $G$ denote a $k$-edge-coloured graph and assume that the set of colours is $[k]$.
    For each pair $(x,y)$ we have
    \[ \coltkg(x,y) = \sum_{J\subseteq[k]} (-1)^{\#J}\cdot \tilde{T}^k_{G\setminus J}(x,y)\,,  \]
    where $G\setminus J$ is the graph obtained from $G$ by deleting all edges coloured with an element of $J$.
\end{lemma}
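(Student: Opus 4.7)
My plan is to mirror the inclusion-exclusion argument used in the proof of \cref{lem:nocolours}, adapting it from the unweighted counting setting to the weighted setting of the parameterized Tutte polynomial. The crucial observation is that, writing $w(A) := (x-1)^{k(A)} \cdot (y-1)^{k(A)+\#A}$ for the weight of an edge-subset $A$ appearing in $\widetilde{T}^k_G$, the value $w(A)$ depends only on the graph $(V(G), A)$. Since deletion of edges does not affect the vertex set, we have $V(G\setminus J) = V(G)$ and hence $k(A)$ computed in $G\setminus J$ equals $k(A)$ computed in $G$, provided $A \subseteq E(G\setminus J)$. Thus for every such $A$, the weight in $\widetilde{T}^k_{G\setminus J}$ coincides with its weight in $\widetilde{T}^k_G$. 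This weight-invariance under colour-deletion is what makes the standard inclusion-exclusion argument go through essentially unchanged.

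First, I will rewrite the colourfulness condition via inclusion-exclusion. Write $c \colon E(G) \to [k]$ for the edge-colouring and, for an edge-subset $A$, write $c(A) \subseteq [k]$ for the set of colours that appear on edges of $A$. For $\#A = k$, the condition $\mathsf{cful}(A)$ is equivalent to $c(A) = [k]$, that is $[k]\setminus c(A) = \emptyset$. Using the elementary identity $\sum_{J \subseteq U} (-1)^{\#J} = \mathbf{1}[U = \emptyset]$ applied to $U = [k]\setminus c(A)$, I obtain
\[
\mathbf{1}[\mathsf{cful}(A)] \;=\; \sum_{J \subseteq [k]} (-1)^{\#J} \cdot \mathbf{1}[J \cap c(A) = \emptyset] \;=\; \sum_{J \subseteq [k]} (-1)^{\#J} \cdot \mathbf{1}\bigl[A \subseteq E(G \setminus J)\bigr],
\]
where the second equality uses that $J \cap c(A) = \emptyset$ is equivalent to $A$ containing no edge coloured by an element of $J$, which in turn means $A \subseteq E(G\setminus J)$.

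Second, I will substitute this into the definition of $\coltkg(x,y)$ and swap the order of summation:
\[
\coltkg(x,y) \;=\; \sum_{A \in \binom{E(G)}{k}} w(A) \cdot \mathbf{1}[\mathsf{cful}(A)] \;=\; \sum_{J \subseteq [k]} (-1)^{\#J} \sum_{A \in \binom{E(G\setminus J)}{k}} w(A).
\]
By the weight-invariance observation from the opening paragraph, the inner sum is exactly $\widetilde{T}^k_{G\setminus J}(x,y)$, yielding the claim. The only potential subtlety is the convention $0^0 = 1$ needed when $x=1$ or $y=1$, but since $\widetilde{T}^k$ is a polynomial expression evaluated coordinatewise, this is consistent throughout. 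I do not expect any genuine obstacle, as the argument is a direct weighted analogue of a Möbius-style inclusion-exclusion that has already been carried out in \cref{lem:nocolours}.
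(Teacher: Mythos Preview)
Your approach is correct and is exactly the paper's: inclusion-exclusion together with the observation that the weight of $A$ is unaffected by deleting edges outside $A$. One small caveat: the summand you call $w(A)=(x-1)^{k(A)}(y-1)^{k(A)+\#A}$ is the $\widetilde{T}$-weight, whereas $\coltkg$ is defined with the $\widehat{T}$-weight $(x-1)^{\mathsf{cc}(G[A])-\#V(G[A])}(y-1)^{\mathsf{cc}(G[A])-\#V(G[A])+k}$, so your displayed equality $\coltkg(x,y)=\sum_A w(A)\mathbf{1}[\mathsf{cful}(A)]$ is off as written (the lemma's ``$\tilde{T}$'' should really read $\widehat{T}$); since the $\widehat{T}$-weight depends only on $G[A]$ and hence is likewise invariant under deleting edges not in $A$, your argument goes through verbatim once you swap in the correct weight.
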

\begin{proof}
    Follows by the inclusion-exclusion principle (similarly as in \cref{lem:nocolours})
    and the fact that, given a $k$-edge-subset $A$ of $G$, deleting edges in
    $E(G)\setminus A$ does not change the quantity
    \[(x-1)^{\mathsf{cc}(G[A])-\#V(G[A])} \cdot (y-1)^{\mathsf{cc}(G[A])-\#V(G[A])+k}\,.\]
\end{proof}

Next, we express $\coltkg$ as a linear combination of colour-preserving homomorphisms
counts. More precisely, given an $H$-coloured graph $G$ such that $H$ has $k$ edges, we
implicitly assume the $k$-edge-colouring of $G$ induced by its $H$-colouring. Further,
given a fracture $\rho$ of a graph $H$, we set
$r(\sigma):=\mathsf{cc}(\fracture{H}{\sigma})-\#V(\fracture{H}{\sigma})$.
\begin{lemma}\label{lem:col_lincomb_tutte}
    Let $H$ denote a graph with $k$ edges. For every $H$-coloured graph $G$, we have
    \[\#\coltkg(x,y) = \sum_{\sigma\in\mathcal{L}(H) } (x-1)^{r(\sigma)} \cdot (y-1)^{r(\sigma)+k} \cdot \sum_{{\rho} \geq {\sigma}} \mu({\sigma},{\rho}) \cdot \#\necphoms{\fracture{H}{\rho}}{H}{G} \,,\]
    where the relation $\leq$ and the M\"obius function $\mu$ are over the lattice of fractures $\mathcal{L}(H)$.
\end{lemma}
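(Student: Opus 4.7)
The plan is to mirror the proof of \cref{lem:col_lincomb} almost verbatim, the only new ingredient being the observation that the weight attached to each colourful $k$-edge subset depends only on the induced fracture. Concretely, recall from the paragraph preceding \cref{lem:col_lincomb} that every colourful subset $A \in \binom{E(G)}{k}$ with $\mathsf{cful}(A)$ induces a fracture $\sigma \in \mathcal{L}(H)$ such that $G[A] \cong_H \fracture{H}{\sigma}$. In particular, $\mathsf{cc}(G[A]) = \mathsf{cc}(\fracture{H}{\sigma})$ and $\#V(G[A]) = \#V(\fracture{H}{\sigma})$, so the weight
\[(x-1)^{\mathsf{cc}(G[A])-\#V(G[A])} \cdot (y-1)^{\mathsf{cc}(G[A])-\#V(G[A])+k}\]
equals $(x-1)^{r(\sigma)}(y-1)^{r(\sigma)+k}$ and is thus constant on the class of all colourful $A$ inducing $\sigma$.

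First, I would partition the sum defining $\coltkg(x,y)$ according to the induced fracture $\sigma$. Reusing the identification $\#\ncoledgesubs{\Phi}{H}{G}[\sigma] = \#\necpembs{\fracture{H}{\sigma}}{H}{G}$ established in the proof of \cref{lem:col_lincomb} (this step is purely combinatorial and does not involve $\Phi$), this immediately yields
\[\coltkg(x,y) \;=\; \sum_{\sigma \in \mathcal{L}(H)} (x-1)^{r(\sigma)}(y-1)^{r(\sigma)+k} \cdot \#\necpembs{\fracture{H}{\sigma}}{H}{G}.\]

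Second, I would invoke the colour-preserving zeta transformation proven inside \cref{lem:col_lincomb}, which gives $\#\necphoms{\fracture{H}{\sigma}}{H}{G} = \sum_{\rho \geq \sigma} \#\necpembs{\fracture{H}{\rho}}{H}{G}$ for every $\sigma \in \mathcal{L}(H)$. Applying M\"obius inversion (\cref{thm:mobiusinv}) on the fracture lattice $\mathcal{L}(H)$ then gives $\#\necpembs{\fracture{H}{\sigma}}{H}{G} = \sum_{\rho \geq \sigma} \mu(\sigma,\rho) \cdot \#\necphoms{\fracture{H}{\rho}}{H}{G}$. Substituting this into the previous display produces the claimed identity.

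There is essentially no technical obstacle here, because both the partitioning argument and the zeta transformation were already carried out for the $\Phi$-weighted setting and transfer unchanged to any weight function that depends on $A$ only through the induced fracture. The one point worth verifying carefully is the claim that $G[A] \cong_H \fracture{H}{\sigma}$ (rather than just a uncoloured isomorphism), which is exactly what guarantees that the quantities $\mathsf{cc}(G[A])$ and $\#V(G[A])$ are determined by $\sigma$; this has, however, already been noted in the text preceding \cref{lem:col_lincomb}.
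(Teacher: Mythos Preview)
Your proposal is correct and follows essentially the same route as the paper's own proof: partition the colourful $k$-edge subsets by their induced fracture, note that the Tutte weight depends only on this fracture via the isomorphism $G[A]\cong_H \fracture{H}{\sigma}$, and then invoke the embedding-to-homomorphism M\"obius inversion already established in \cref{lem:col_lincomb}. The paper compresses the two steps (counting $A$ inducing $\sigma$ and inverting) into a single reference back to \cref{lem:col_lincomb}, but the argument is identical.
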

\begin{proof}
    Every colourful $k$-edge-subset $A$ of $G$ induces a fracture of $H$, similarly as we
    have seen in \cref{sec:CountingWithColours}. In particular, if $A$ and $A'$ induce the
    same fracture $\sigma$, then $G[A]\cong G[A']\cong \fracture{H}{\sigma}$. Writing
    $[\sigma]$ for the equivalence class of the induced fracture $\sigma$, we obtain:
    \[\#\coltkg(x,y) = \sum_{\sigma\in\mathcal{L}(H) } (x-1)^{r(\sigma)} \cdot (y-1)^{r(\sigma)+k} \cdot \#[\sigma] \,.\]
    Next observe that $\#[\sigma]=\#\ncoledgesubs{\fracture{H}{\sigma}}{H}{G}$. Finally,
    we have already seen in (the proof of) \cref{lem:col_lincomb} that
    \[\#\ncoledgesubs{\fracture{H}{\sigma}}{H}{G} = \sum_{{\rho} \geq {\sigma}} \mu({\sigma},{\rho}) \cdot \#\necphoms{\fracture{H}{\rho}}{H}{G}\,, \]
    which concludes the proof.
\end{proof}

The following lemma establishes that the coefficient of the torus does not vanish apart
from a few exceptions; which eventually allows us to prove $\#\W{1}$-hardness.

\begin{lemma}
    Let $\ell>2$ denote a prime and let $(x,y)\in \mathbb{Q}^2$. There is a unique and
    computable function $a^\ell_{(x,y)}$ from fractures of $\torus_\ell$ to rational
    numbers such that
    \[ \coltlg = \sum_{\rho \in \mathcal{L}(\torus_\ell)} a^\ell_{(x,y)}(\rho) \cdot \#\necphoms{\fracture{\torus_\ell}{\rho}}{\torus_\ell}{\star}  \,. \]
    Moreover, if both the denominators of $x,y$ and the numerators\footnote{In both cases, we refer to denominators and numerators of the corresponding shortened fractions.} of $x-1$ and $(x-1)(y-1)-1$ are not divisible by $\ell$, then $a^\ell_{(x,y)}(\top) \neq 0$.
\end{lemma}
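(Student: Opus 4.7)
The plan is to apply \cref{lem:col_lincomb_tutte} for the first (existence/uniqueness) part, and then to analyse $a^\ell_{(x,y)}(\top)$ modulo $\ell$ via the same group-action trick used in the proof of \cref{lem:coef_special_case}.

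For the first part, \cref{lem:col_lincomb_tutte} directly gives, after swapping sums and collecting, that
\[ a^\ell_{(x,y)}(\rho) = \sum_{\sigma \leq \rho} (x-1)^{r(\sigma)}\,(y-1)^{r(\sigma)+2\ell^2}\,\mu(\sigma,\rho)\,. \]
These coefficients are rational and computable (since $\mu$ is computable), and uniqueness follows from the full-rank property of $\mathcal{M}_{\torus_\ell}$ in \cref{lem:fullrank}, which asserts that the functions $\#\necphoms{\fracture{\torus_\ell}{\rho}}{\torus_\ell}{\star}$ are linearly independent.

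For the second part, specialising to $\rho = \top$ and writing $\mu(\sigma,\top)$ via the product formula over the partition lattice (\cref{thm:mobiusexpl}) gives
\[ a^\ell_{(x,y)}(\top) = \sum_{\sigma \in \mathcal{L}(\torus_\ell)} (x-1)^{r(\sigma)}\,(y-1)^{r(\sigma)+2\ell^2}\prod_{v \in V(\torus_\ell)}(-1)^{|\sigma_v|-1}(|\sigma_v|-1)!\,. \]
Both $r(\sigma)$ (which depends only on the underlying graph of $\fracture{\torus_\ell}{\sigma}$, invariant under any automorphism of $\torus_\ell$) and the M\"obius product are invariant under the shift action $\Vdash$ of $\ztwol$ on $\mathcal{L}(\torus_\ell)$. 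Since $\#\ztwol = \ell^2$ is a power of the prime $\ell$, every non-trivial orbit has size divisible by $\ell$ and thus contributes $0$ modulo $\ell$; hence it suffices to evaluate the sum over the $15$ fixed points enumerated in \cref{obs:fixedPointsbasic}.

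Writing $\alpha = x-1$ and $\beta = y-1$, the hypothesis on the denominators of $x$ and $y$ guarantees that $\alpha,\beta$ are $\ell$-adic integers, so Fermat's little theorem gives $\alpha^{\ell} \equiv \alpha$ and $\beta^{\ell} \equiv \beta$ modulo $\ell$ (and hence $\alpha^{\ell^2} \equiv \alpha$, $\beta^{\ell^2} \equiv \beta$); the hypothesis that $\ell$ does not divide the numerator of $x-1$ makes $\alpha$ a unit modulo $\ell$. I then compute the seven contributions (one per isomorphism class of fixed point) using the same bookkeeping as in \cref{lem:coef_special_case}, tracking in addition the power $\alpha^{r(\sigma)}\beta^{r(\sigma)+2\ell^2}$. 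The main obstacle is the massive cancellation which must occur: I expect the identity
\[ a^\ell_{(x,y)}(\top) \equiv -6\alpha^{-2} + 8\alpha^{-2} + (4-2-4)\alpha^{-1}\beta - \beta^{2} + \beta^{2} = 2\alpha^{-2}\bigl(1 - \alpha\beta\bigr) \pmod{\ell}\,, \]
where the two $\beta^2$ contributions from the torus fixed point and from cycle-packing I cancel exactly, and the three $\alpha^{-1}\beta$ contributions from matching-plus-cycles, cycle-packing II, and sun-packing combine to $-2\alpha^{-1}\beta$. Once this identity is established, non-vanishing modulo $\ell$ is immediate from the three hypotheses: $\ell > 2$ keeps the prefactor $2$ nonzero, invertibility of $\alpha$ makes $\alpha^{-2}$ nonzero, and the numerator condition on $(x-1)(y-1)-1 = \alpha\beta - 1$ keeps the last factor nonzero.
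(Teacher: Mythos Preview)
Your proposal is correct and follows essentially the same route as the paper: express $a^\ell_{(x,y)}(\top)$ via \cref{lem:col_lincomb_tutte} and the M\"obius product formula, reduce modulo $\ell$ using the $\ztwol$-action to the fifteen fixed points of \cref{obs:fixedPointsbasic}, simplify exponents via Fermat, and arrive at $2\alpha^{-2}(1-\alpha\beta)$. Your use of $\ell$-adic integers is a cosmetic variant of the paper's localization $\mathbb{Z}[1/q]$ argument; both serve to make the reduction modulo $\ell$ of the rational expression well-defined once $\alpha$ is a unit, and your bookkeeping of the seven isomorphism types (including the exact cancellation of the two $\beta^2$ terms and the combination $4-2-4=-2$ for the $\alpha^{-1}\beta$ terms) matches the paper's computation.
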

\begin{proof}
    The first claim follows immediately from the previous lemma.
    For the second claim, we rely on the following fact from commutative algebra.
    \begin{fact} \label{fact:localization}
        Let $q \in \mathbb{Z}$ denote a nonzero integer, then the \emph{localization}
        $\mathbb{Z}[1/q]$ of $\mathbb{Z}$ at $q$ is the set
        \begin{equation*}
            \mathbb{Z}[1/q] = \left\{ u \in \mathbb{Q} : \exists v \in \mathbb{Z}, m \in \mathbb{N} \text{ with }u = \frac{v}{q^m} \right\}
        \end{equation*}
        of rational numbers which can be brought to a denominator which is a power of $q$.
        The subset  $\mathbb{Z}[1/q]$ of $\mathbb{Q}$ is closed under addition and
        multiplication. Let furthermore $\ell$ denote a prime not dividing $q$, implying that
        $q$ has an inverse $q^{-1}$ mod $\ell$. Then there is a well-defined map
        \begin{equation*}
            \mathbb{Z}[1/q] \to \mathbb{Z}_\ell, \frac{v}{q^m} \mapsto v \cdot (q^{-1})^m,
        \end{equation*}
        and this map is compatible with addition and multiplication.
        \lipicsEnd
    \end{fact}
    Let us now collect the coefficients of $\#\necphoms{{\torus_\ell}}{\torus_\ell}{\star}$ in the sum appearing in \cref{lem:col_lincomb_tutte}. Completely similar to \cref{cor:collect_coeffs} we obtain
    \begin{equation} \label{eqn:tuttepolytopcoeff}
        a^\ell_{(x,y)}(\top) = \sum_{\sigma \in \mathcal{L}(\torus_\ell)} (x-1)^{r(\sigma)} (y-1)^{r(\sigma)+2\ell^2} \prod_{v \in V(\torus_\ell)} (-1)^{|\sigma_v|-1}(|\sigma_v|-1)! \in \mathbb{Q}.
    \end{equation}
    Note that in this expression we have for the exponents of $x-1$ and $y-1$ that
    $r(\sigma)\leq 0$ but $r(\sigma)+k\geq 0$. Let $q$ denote the least common multiple of the
    denominators of $x,y$ and the numerator of $x-1$, then we have that $(x-1)^{\pm 1}$
    and $(y-1)$ are elements in $\mathbb{Z}[1/q]$. By the assumption that $\ell$ does not
    divide $q$ together with \cref{fact:localization} above, we can see these expressions
    (and thus the entire sum \eqref{eqn:tuttepolytopcoeff}) as an element of
    $\mathbb{Z}_\ell$. Now recall the 15 fixed-points of the action of $\ztwol$ on the
    fractures of $\torus_\ell$ as given by \cref{obs:fixedPointsbasic}. Counting modulo
    $\ell$ allows us to rely on the same analysis as presented in the proof of \cref{lem:coef_special_case}, which yields that $a^\ell_{(x,y)}(\top)$ is, modulo $\ell$, equal to
    \begin{align*}
	-6R(M_{2\ell^2}) +4 R(M_{\ell^2} + \ell C_\ell)+ 8R(\ell^2P_2)  - R(2\ell C_\ell) -2R(\ell C_{2\ell}) -4R(\ell S_\ell) + R(\torus_\ell)\,,
    \end{align*}
    where $R(H):= (x-1)^{\mathsf{cc}(H)-\#V(H) } \cdot (y-1)^{2\ell^2 + \mathsf{cc}(H)-\#V(H)}$.\pagebreak

    \noindent Consequently, we have
    \begin{align*}
        a^\ell_{(x,y)}(\top) =& -6 (x-1)^{ 2\ell^2-4\ell^2} \cdot (y-1)^{2\ell^2 + 2\ell^2-4\ell^2}\\
        ~&+ 4 (x-1)^{ \ell^2 + \ell -3\ell^2} \cdot (y-1)^{2\ell^2 + \ell^2 + \ell -3\ell^2}\\
        ~&+ 8 (x-1)^{ \ell^2 -3\ell^2} \cdot (y-1)^{2\ell^2 + \ell^2 -3\ell^2}\\
        ~&- 1 (x-1)^{2\ell -2\ell^2} \cdot (y-1)^{2\ell^2 + 2\ell -2\ell^2}\\
        ~&- 2 (x-1)^{ \ell -2\ell^2} \cdot (y-1)^{2\ell^2 + \ell -2\ell^2}\\
        ~&- 4 (x-1)^{ \ell -2\ell^2} \cdot (y-1)^{2\ell^2 + \ell -2\ell^2}\\
        ~&+ 1 (x-1)^{ 1 -\ell^2} \cdot (y-1)^{2\ell^2 + 1 -\ell^2} \mod \ell
    \end{align*}
    The first simplification is obtained by observing that the first and the third term, and the second, fifth and sixth term, respectively, contain the same monomial. Consequently, we have that (modulo $\ell$):
    \begin{small}
        \begin{align*}
            ~&a^\ell_{(x,y)}(\top) =\\
            ~&2(x-1)^{-2\ell^2} -2 (x-1)^{-2\ell^2 +\ell } \cdot (y-1)^\ell - (x-1)^{-2\ell^2 +2\ell} \cdot (y-1)^{2\ell} + (x-1)^{1-\ell^2}\cdot (y-1)^{\ell^2 +1}
        \end{align*}
    \end{small}
    Using Fermat's little theorem, we obtain
    \begin{small}
        \begin{align*}
            ~& a^\ell_{(x,y)}(\top) \\
            =&2(x-1)^{-2} -2 (x-1)^{-2} \cdot (y-1) -  (y-1)^{2} + (y-1)^{2} \mod \ell\\
            =& 2(x-1)^{-2}\cdot (1-(x-1)\cdot(y-1)) \mod \ell
        \end{align*}
    \end{small}
    The assumption that the denominator of $x$ (which is the numerator of  $(x-1)^{-2}$) and the numerator of $(x-1)(y-1)-1$ are not divisible by $\ell$ implies that each factor in the product $2(x-1)^{-2}\cdot (1-(x-1)\cdot(y-1))$ gives a nonzero residue class mod $\ell$. Since $\ell$ is a prime, their product is still nonzero in $\mathbb{Z}_\ell$, and thus the original rational number $ a^\ell_{(x,y)}(\top)$ is likewise nonzero, concluding the proof.
\end{proof}

 We are thus able to rely on Complexity Monotonicity to establish hardness as promised.
\begin{lemma}
    Let $(x,y)$ denote a pair of rational numbers such that $(x-1)(y-1)\neq 1$ and $x\neq 1$.
    Then the problem of computing $T^k_G(x,y)$ is $\#\W{1}$-hard.
\end{lemma}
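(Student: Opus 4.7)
The plan is to reduce from $\#\homsprob(\mathcal{H})$ for a suitable infinite class $\mathcal{H}$ of tori, using the coefficient nonvanishing result of the preceding lemma together with Complexity Monotonicity. First I would argue that under the hypotheses $x\neq 1$ and $(x-1)(y-1)\neq 1$ both $x-1$ and $(x-1)(y-1)-1$ are nonzero rationals, so their numerators are nonzero integers with only finitely many prime divisors; combined with the denominators of $x,y$, this gives a finite list of ``bad'' primes $\ell$ to be avoided. The preceding lemma then guarantees $a^\ell_{(x,y)}(\top)\neq 0$ for all primes $\ell$ outside this finite set. Letting $\mathcal{P}$ denote this infinite set of ``good'' primes and $\mathcal{H}:=\{\torus_\ell:\ell\in\mathcal{P}\}$, the class $\mathcal{H}$ has unbounded treewidth since $\mathsf{tw}(\torus_\ell)$ diverges with $\ell$, so $\#\homsprob(\mathcal{H})$ is $\#\W{1}$-hard by~\cite{DalmauJ04}.

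Next I would chain four reductions, mirroring the structure of \cref{lem:hardness_basis}. First, $\#\homsprob(\mathcal{H})\fptred\#\cphomsprob(\mathcal{H})$ by the standard reduction to the colour-preserving variant. Second, I would apply the Complexity Monotonicity argument of \cref{lem:newcomplexitymonotonicity} to the decomposition given by \cref{lem:col_lincomb_tutte}: given oracle access to $G\mapsto\coltkg(x,y)$ for $H=\torus_\ell$ and a $\torus_\ell$-coloured graph $G$, one queries the oracle at the tensor products $G\times_H\fracture{H}{\sigma}$ for all fractures $\sigma\in\mathcal{L}(H)$. By \cref{lem:tensorlinear} and \cref{lem:col_lincomb_tutte}, this produces a linear system in the unknowns $a^\ell_{(x,y)}(\rho)\cdot\#\necphoms{\fracture{H}{\rho}}{H}{G}$ whose matrix is precisely $\mathcal{M}_H$, which is upper triangular with ones on the diagonal by \cref{lem:fullrank} and hence invertible over $\mathbb{Q}$. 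Since $a^\ell_{(x,y)}(\top)\neq 0$, solving the system recovers $\#\necphoms{\torus_\ell}{\torus_\ell}{G}=\#\cphoms{\torus_\ell}{G}$, yielding $\#\cphomsprob(\mathcal{H})\fptred\coltkg(x,y)$.

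Third, \cref{lem:nocolsTutte} gives, by inclusion-exclusion over the colour classes, a reduction from $\coltkg(x,y)$ to $2^k$ evaluations of the uncoloured $\widetilde T^k_G(x,y)$; the same derivation applies verbatim to $\widehat T^k_G$ in place of $\widetilde T^k_G$. Fourth, since $x\neq 1$, the factor $(x-1)^{\#V(G)-\mathsf{cc}(G)}$ in the identity $T^k_G(x,y)=(x-1)^{\#V(G)-\mathsf{cc}(G)}\cdot\widehat T^k_G(x,y)$ is a nonzero rational computable in linear time, providing a polynomial-time equivalence between $\widehat T^k_G$ and $T^k_G$. Composing the four reductions yields $\#\W{1}$-hardness of computing $T^k_G(x,y)$. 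I do not anticipate a serious obstacle: the heavy technical content sits in the preceding lemma on the nonvanishing of $a^\ell_{(x,y)}(\top)$, and the only subtlety in the present proof is that the Complexity Monotonicity step is applied with rational rather than integer coefficients, which is harmless because the triangular system is invertible over $\mathbb{Q}$ and can be solved in time depending only on $|H|$.
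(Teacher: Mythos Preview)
Your proposal is correct and follows essentially the same route as the paper: identify a cofinite set of primes $\ell$ for which the preceding lemma yields $a^\ell_{(x,y)}(\top)\neq 0$, invoke Dalmau--Jonsson for the resulting class of tori, and then chain the reductions $\#\homsprob\to\#\cphomsprob\to\coltkg\to\widehat T^k_G\to T^k_G$ via Complexity Monotonicity, \cref{lem:nocolsTutte}, and the trivial interreducibility for $x\neq 1$. Your write-up is in fact more explicit than the paper's (which simply says ``completely analogously to the proof of \cref{lem:hardness_basis}''), and your remark that the inclusion--exclusion of \cref{lem:nocolsTutte} applies to $\widehat T^k_G$ as well as to $\widetilde T^k_G$ correctly patches a small notational mismatch in the paper.
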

\begin{proof}
    Let $\mathcal{H}[x,y]$ denote the set of all $\torus_\ell$ such that $\ell$ is prime and
    both the denominators of $x$ and $y$ as well as the numerators of $x-1$ and
    $(x-1)(y-1)-1$ are not divisible by $\ell$. Since $x$ and $y$ are fixed, the latter is
    true for infinitely many primes $\ell$ and thus $\mathcal{H}[x,y]$ contains tori of
    unbounded size. In particular, it contains graphs with arbitrary large grid minors and
    has thus unbounded treewidth~\cite{RobertsonS86-ExGrid}, and hence, the problem
    $\#\homsprob(\mathcal{H}[x,y])$ is $\#\W{1}$-hard by the classification of Dalmau and
    Jonsson~\cite{DalmauJ04}.

    Completely analogously to the proof of \cref{lem:hardness_basis}, the problem
    $\#\homsprob(\mathcal{H}[x,y])$ reduces to computing $\coltkg(x,y)$ via Complexity
    Monotonicity (\cref{lem:newcomplexitymonotonicity}), since the coefficients of the tori do not vanish by the previous lemma.

    Next, reducing to the uncoloured version $\widehat{T}^k_G(x,y)$ can be done via
    \cref{lem:nocolsTutte}, and, finally, $\widehat{T}^k_G(x,y)$ is trivially
    interreducible with $T^k_G(x,y)$ whenever $x\neq 1$.
\end{proof}

At last, we are able to prove this section's dichotomy theorem.
\begin{theorem}
	Let $(x,y)$ denote a pair of rational numbers. The problem of computing $T^k_G(x,y)$ is
	fixed-parameter tractable if $x=1$ or $(x-1)(y-1)=1$, and $\#\W{1}$-hard otherwise.
\end{theorem}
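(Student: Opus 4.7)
The plan is to assemble the dichotomy directly from the three preceding lemmas, which together already cover every rational point $(x,y)$.

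First I would handle the tractable side of the classification. If $x=1$, then Lemma~\ref{lem:TutteEasyLineOne} gives fixed-parameter tractability by the trivial argument that $T^k_G(1,y)$ vanishes unless $G$ admits a spanning $k$-edge subgraph, so $|V(G)|\le 2k$ and $G$ has at most $4k^2$ edges, making the defining sum computable in time depending only on $k$. If instead $(x-1)(y-1)=1$, Lemma~\ref{lem:easyHyperbola} collapses the weighting in the definition of $T^k_G(x,y)$ to a single monomial, leaving just the scalar $\binom{\#E(G)}{k}$ (up to a factor depending on $x,y,\#V(G)$ and $k(E(G))$), which is polynomial-time computable.

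Next I would dispose of the hardness side. Suppose $x\neq 1$ and $(x-1)(y-1)\neq 1$. Then the immediately preceding lemma already establishes $\#\W{1}$-hardness in exactly this regime: it exhibits infinitely many primes $\ell$ for which the coefficient $a^\ell_{(x,y)}(\top)$ of the homomorphism count $\#\necphoms{\torus_\ell}{\torus_\ell}{\star}$ in the colourful expansion of $\coltlg$ is nonzero, chains Complexity Monotonicity (Lemma~\ref{lem:newcomplexitymonotonicity}) with the standard $\#\homsprob(\mathcal{H})\fptred \#\cphomsprob(\mathcal{H})$ reduction from a class of tori of unbounded treewidth, applies the inclusion-exclusion step Lemma~\ref{lem:nocolsTutte} to pass from the colourful polynomial $\coltkg$ to the uncoloured $\widehat T^k_G(x,y)$, and finally uses $x\neq 1$ to invert the trivial identity $T^k_G(x,y)=(x-1)^{\#V(G)-\mathsf{cc}(G)}\cdot \widehat T^k_G(x,y)$ in both directions.

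So the proof is essentially a case split: the cases $x=1$ and $(x-1)(y-1)=1$ are covered by Lemmas~\ref{lem:TutteEasyLineOne} and~\ref{lem:easyHyperbola}, while the complementary case $x\neq 1\wedge (x-1)(y-1)\neq 1$ is precisely the hypothesis of the preceding hardness lemma. No new technical obstacle arises at this stage; the real work has been done in establishing nonvanishing of $a^\ell_{(x,y)}(\top)$ modulo $\ell$ for infinitely many primes $\ell$, which is the content of the lemma feeding into the hardness step and whose proof relies on the fixed-point analysis of the $\mathbb{Z}_\ell^2$-action on fractures of $\torus_\ell$ together with Fermat's little theorem to reduce the coefficient to $2(x-1)^{-2}(1-(x-1)(y-1))\bmod \ell$.
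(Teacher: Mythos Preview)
Your proposal is correct and follows essentially the same approach as the paper: the paper's own proof is a two-line case split citing Lemmas~\ref{lem:TutteEasyLineOne} and~\ref{lem:easyHyperbola} for tractability and the immediately preceding lemma for hardness, which is exactly what you do (with some added commentary on the internals of those lemmas).
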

\begin{proof}
    The fixed-parameter tractable cases follow from
    \cref{lem:TutteEasyLineOne,lem:easyHyperbola}, and the $\#\W{1}$-hard cases follow
    from the previous lemma.
\end{proof}
As an immediate consequence, the computation of each individual point considered in
\cref{sec:indpoints} is $\#\W{1}$-hard. Moreover, observe that the transformation
\[ 	\sum_{\ell = 0}^k \binom{\#E(G)-\ell}{k-\ell} \cdot \widetilde T_G^\ell(x,y) = \sum_{A \in \binom{E(G)}{k}} \widetilde T_{G(A)}(x,y)\,,\]
given by \cref{prop:paramTutteinterpretation}, is invertible in the sense that the numbers
\[\sum_{A \in \binom{E(G)}{\ell}} \widetilde T_{G(A)}(x,y)\]
for $\ell=0,\dots,k$ reveal $\widetilde T_G^k(x,y)$. Consequently, we obtain $\#\W{1}$-hardness of the information encoded in all considered individual points as well:
\begin{corollary}
    The following problems are $\#\W{1}$-hard when parameterized by $k$:
    \begin{itemize}
        \item Given $G$ and $k$, compute the number of $k$-edge subsets $A$ of $G$ such that $G(A)$ has an even number of components.
        \item Given $G$ and $k$, compute the number of pairs $(A,\sigma)$ such that $A$ is a $k$-edge subset of $G$ and $\sigma$ is a $c$-colouring of $G(A)$. Here $c\geq 2$ is a fixed integer.
        \item Given $G$ and $k$, compute the number of pairs $(A,\vec{\eta})$ such that $A$ is a $k$-edge subset of $G$ and $\vec{\eta}$ is an acyclic orientation of $G(A)$.
        \item Given $G$ and $k$, compute the number of $k$-edge subsets $A$ of $G$ such that $G(A)$ has even Betti number.
            \lipicsEnd
    \end{itemize}
\end{corollary}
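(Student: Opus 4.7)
The strategy is to reduce, for each of the four counting problems, from the evaluation of $T^k_G(x,y)$ at a specific point $(x,y) \in \mathbb{Q}^2$ with $x \neq 1$ and $(x-1)(y-1) \neq 1$, which lies in the $\#\W{1}$-hard regime identified by~\cref{thm:tutte_main_param}.

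For the ``even number of components'' and ``even Betti number'' problems, the reduction will be immediate from the individual-point formulas established in~\cref{sec:indpoints}: the two counts equal $\frac{1}{2}(\binom{\#E(G)}{k} + (-1)^{k(E(G))} T^k_G(0,2))$ and $\frac{1}{2}(\binom{\#E(G)}{k} + T^k_G(2,0))$, respectively. Since $\binom{\#E(G)}{k}$ and the parity of $k(E(G))$ are computable in polynomial time, a single oracle call suffices to isolate $T^k_G(0,2)$ (respectively $T^k_G(2,0)$). Both target points satisfy $x \neq 1$ and $(x-1)(y-1)=-1 \neq 1$, so~\cref{thm:tutte_main_param} delivers the desired hardness.

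For the $c$-colouring problem (evaluated at $(1-c,0)$) and the acyclic orientation problem (at $(2,0)$), only a \emph{sum} of values $u_\ell = \widetilde{T}^\ell_G(1-c,0)$ or $u_\ell = T^\ell_G(2,0)$ for $\ell \leq k$ is directly accessible through a single oracle call. The plan is to invert the relation of~\cref{prop:paramTutteinterpretation}: querying the oracle for each $k' \in \{0,1,\dots,k\}$ produces the numbers $S_G(k') = \sum_{\ell=0}^{k'} \binom{\#E(G)-\ell}{k'-\ell}\cdot u_\ell$. The $(k+1)\times (k+1)$ coefficient matrix $(\binom{\#E(G)-\ell}{k'-\ell})_{0\leq k',\ell\leq k}$ is triangular with unit diagonal (since $\binom{m}{0}=1$ and $\binom{m}{j}=0$ for $j<0$) and hence invertible over $\mathbb{Q}$; back-substitution in polynomial time then recovers $u_k$. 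In the $c$-colouring case a final multiplication by the nonzero prefactor $(x-1)^{-k(E(G))}(y-1)^{-\#V(G)}$ converts $\widetilde{T}^k_G(1-c,0)$ into $T^k_G(1-c,0)$. Since the chosen points satisfy $x \neq 1$ and $(x-1)(y-1) \in \{-1,c\}$ with $c \geq 2$, \cref{thm:tutte_main_param} concludes the argument.

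The only subtlety to verify is that the resulting reductions are genuine parameterized Turing reductions. This amounts to checking that the parameter of every oracle query is bounded in $k$, which is immediate since we only vary the edge-count argument $k' \leq k$, and that the prefactors we divide by are nonzero at our chosen points, which we have observed above. I expect no further obstacle beyond these routine verifications.
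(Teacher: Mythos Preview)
Your proposal is correct and follows essentially the same approach as the paper: use the individual-point formulas from \cref{sec:indpoints} together with the dichotomy \cref{thm:tutte_main_param}, and for the colouring and acyclic-orientation items invert the triangular system of \cref{prop:paramTutteinterpretation} by querying the oracle at all $k'\le k$. Your write-up is in fact more explicit than the paper's, which simply asserts invertibility of the transformation without spelling out the unit-triangular structure or the bookkeeping about the prefactor $(x-1)^{k(E(G))}(y-1)^{\#V(G)}$ being nonzero at $(1-c,0)$.
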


\paragraph*{Comparison to the Classical Dichotomy and Real FPT Cases}
In this section, we ask which of the fixed-parameter tractable cases allow for a
polynomial-time algorithm. We can answer this question under the assumption $\ccP \neq
\#\ccP$ by considering the classical dichotomy of Jaeger, Vertigan and Welsh:\footnote{We
    state their classification only for rational numbers, but point out that the full
dichotomy includes all complex pairs.}
\begin{theorem}[\cite{JaegerVW90}]\label{thm:classicalTutte}
    Given a pair $(x,y)$ of rational numbers, computing $T_G(x,y)$ is solvable in
    polynomial time if $(x,y)\in \{(1,1),(-1,-1),(0,-1),(-1,0)\}$ or if $(x-1)(y-1)=1$. In
    all other cases the problem is $\#\ccP$-hard.
    \lipicsEnd
\end{theorem}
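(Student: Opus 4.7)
The plan is to split the proof into the tractable side and the $\#\mathsf{P}$-hardness side, following the classical strategy via graph operations that move evaluations along hyperbolas.

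For the tractable cases I would handle each family directly. On the hyperbola $(x-1)(y-1)=1$, substitute $z=y-1$, so that $x-1=1/z$ and every summand of $T_G(x,y)$ collapses to $z^{k(E)-\#V+\#A}$; summing gives $T_G(x,y)=z^{k(E)-\#V}(1+z)^{\#E}$, which is computable in linear time. At $(1,1)$, invoke Kirchhoff's matrix-tree theorem: $T_G(1,1)$ is the number of spanning trees (or spanning forests, once one accounts for components), computable as a determinant. The three remaining points $(-1,-1)$, $(0,-1)$, $(-1,0)$ admit interpretations in terms of $\mathbb{F}_2$-linear algebra on the signed incidence matrix of $G$: up to explicit sign and power-of-two factors they are determined by the dimension of the bicycle space (the intersection of the cycle and cocycle spaces) and by the dimensions of the cycle/cocycle spaces themselves, all of which reduce to Gaussian elimination over $\mathbb{F}_2$.

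For the hardness side, the seed is the chromatic polynomial: the identity $\chi_G(q)=(-1)^{r(E)} q^{c(G)} T_G(1-q,0)$ together with the classical $\#\mathsf{P}$-hardness of evaluating $\chi_G$ at any integer $q\ge 3$ gives $\#\mathsf{P}$-hardness of $T_G$ at the isolated points $(1-q,0)$. I would then introduce the two standard edge operations: the $n$-thickening $G^{[n]}$ replaces each edge by $n$ parallel edges, and the $n$-stretching $G_{(n)}$ replaces each edge by a path of $n$ edges. A short computation from the rank-nullity expansion of $T_G$ yields transformation formulas expressing $T_{G^{[n]}}(x,y)$ and $T_{G_{(n)}}(x,y)$ as $T_G$ evaluated at shifted coordinates $(x',y')$; the crucial feature is that each such operation preserves the product $(x-1)(y-1)$, so it slides evaluations along a single hyperbola $H_q=\{(x-1)(y-1)=q\}$.

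Given this, I would use polynomial interpolation along each hyperbola. For fixed $G$, the dependence of $T_{G^{[n]}}(x,y)$ (respectively $T_{G_{(n)}}(x,y)$) on $n$ is polynomial of degree $O(\#E)$ in a suitable auxiliary variable, so an oracle that evaluates $T$ at a single target point on $H_q$ can, via polynomially many thickenings/stretchings and a Lagrange inversion, recover the evaluation at any other point of $H_q$. Combining this with the chromatic seed on $H_q$ for $q\ge 2$ yields hardness at every point of every hyperbola $H_q$ with $q\in\mathbb{Z}_{\ge 2}$; an entirely symmetric argument using the flow polynomial at $(0,1-q)$ provides an alternative seed when the chromatic route degenerates. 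The remaining rational points, which do not lie on any hyperbola reachable from an integer seed, are handled by ad hoc $2$-sum gadgets that reduce them to already-hard points.

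The main obstacle will be the last step: exhaustively checking that the interpolation/gadget machinery actually covers every rational $(x,y)$ outside the stated tractable set, and in particular that the four sporadic easy points $(1,1),(-1,-1),(0,-1),(-1,0)$ are precisely the places where every reduction attempt collapses (the thickening/stretching maps become constant, or the hyperbola passes through forbidden coordinates) while a direct $\mathbb{F}_2$-linear algorithm is still available. Pinning down the exact matching between the failure locus of the reduction scheme and the success locus of the linear-algebraic algorithms is the real content of the theorem.
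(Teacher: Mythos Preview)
This theorem is not proved in the paper; it is quoted from Jaeger, Vertigan and Welsh~\cite{JaegerVW90} and used as a black box (note the citation in the theorem header and the \texttt{\textbackslash lipicsEnd} with no proof environment following). So there is no ``paper's own proof'' to compare against.

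That said, your sketch is a reasonable high-level outline of the classical JVW argument (seed hardness via the chromatic/flow polynomials, move along hyperbolas $H_q$ via thickening/stretching, then interpolate), and you correctly identify the delicate part as accounting for exactly the four sporadic easy points where all reductions degenerate. If you actually want to write out a proof, be aware that the interpolation step you describe does not by itself cover all rational $q$: thickening and stretching keep you on a fixed hyperbola, so a seed on $H_q$ for integer $q\ge 2$ only gives hardness on those hyperbolas. Reaching an arbitrary rational hyperbola requires the additional tensor-product (or $2$-sum) constructions from JVW, and the case $q=2$ (the Ising hyperbola) and the degenerate hyperbolas through the axes need separate treatment. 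Your last paragraph gestures at this, but the phrase ``ad hoc $2$-sum gadgets'' is hiding most of the work.
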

First, we observe that the parameterized dichotomy coincides with the classical
dichotomy, except for the three points $(-1,-1)$, $(0,-1)$, and $(-1,0)$, in which the
parameterized Tutte polynomial is ($\#\W{1}$-)hard to compute, but the non-parameterized
one is polynomial-time solvable. The latter indicates that taking the sum only over the
$k$-edge subsets can, in fact, make the problem harder.

However, the non-parameterized Tutte polynomial always reduces to the
parameterized Tutte polynomial \emph{via polynomial-time Turing reductions}, since we can
compute $T^0_G(x,y) + \dots + T^{\#E(G)}_G(x,y)$ which is equal to $T_G(x,y)$. Thus any
point $(x,y)$ in which the non-parameterized Tutte polynomial is $\#\ccP$ hard and in
which the parameterized Tutte polynomial is fixed-parameter tractable, constitutes a
``real'' FPT case. In particular, the latter shows that each point on the line $x=1$
yields a real FPT case, except for the point $(1,1)$, which needs special treatment.
More precisely, we have to determine whether computing $T^k_G(1,1)$ is not only
fixed-parameter tractable (see \cref{lem:TutteEasyLineOne}), but also polynomial-time solvable. To this end, observe that
\[T^{k}_G(1,1) = \begin{cases} T_G(1,1) & ~~~k = \#V(G)-k(E(G))\\
	0 & ~~~k \neq \#V(G)-k(E(G))
	\end{cases}  \,,\]
since, for $x=y=1$, we have \[(x-1)^{k(A)-k(E(G))}(y-1)^{k(A)+\#A-\#V(G)} = 0\,,\] unless $\#A
= \#V(G)-k(E(G))$. Thus, in point $(1,1)$, the parameterized Tutte polynomial
can be computed in polynomial time by relying on the algorithm given by
\cref{thm:classicalTutte} in case $k= \#V(G)-k(E(G))$, and outputting $0$, otherwise.

Finally, recall that by \cref{lem:easyHyperbola}, the case $(x-1)(y-1)=1$ allows for a polynomial-time algorithm. The complete picture is hence given by the following refined classification; consider \cref{fig:tutteeasy_intro} for a depiction of the tractable cases.
\tuttemain*
%\begin{corollary}\label{cor:comparison}
%    Let $(x,y)$ denote a pair of rational numbers. The problem of computing $T^k_G(x,y)$ is
%    solvable in polynomial-time if $x=y=1$ or $(x-1)(y-1)=1$, fixed-parameter tractable, but $\#\ccP$-hard, if $x=1$ and $y\neq 1$, and $\#\W{1}$-hard otherwise.
%    \lipicsEnd
%\end{corollary}

\subsection{Approximating the Parameterized Tutte Polynomial}
In the very last part of this paper, we identify rational points $(x,y)$ for which
$T^k_G(x,y)$ can be approximated efficiently. Recall from \cref{def:FPTRAS},
that an FPTRAS for a parameterized counting problem $(P,\kappa)$ is a (randomized)
algorithm $\mathbb{A}$ which, on input $I$, $\varepsilon$, and $\delta$, outputs a value
$z$ with probability
\[ \pr{(1-\varepsilon) P(I) \leq z \leq (1+\varepsilon) P(I)} \geq 1-\delta \,,\]
in time $f(\kappa(I))\cdot \mathsf{poly}(|I|,\varepsilon^{-1},\log(1/\delta))$ for some
computable function $f$. If $f$ is a polynomial as well, then $\mathbb{A}$ is called an
\emph{FPRAS} ``\textbf{f}ully \textbf{p}olynomial-time \textbf{r}andomized
\textbf{a}pproximation \textbf{s}cheme'' (cf.~\cite[Def.\ 11.2]{MitzenmacherU17}).

We have to be careful when speaking about approximating $T^k_G(x,y)$ since the latter can
have negative values. One way of dealing with negative valued functions is to require that
an FPTRAS/FPRAS outputs a pair $(z,s)$, such that $z$ is an approximation of the absolute
value and $s$ is the sign, that is, we require that with probability at least $(1-\delta)$,
we have
\[ (1-\varepsilon) |T^k_G(x,y)| \leq z \leq (1+\varepsilon) |T^k_G(x,y)| \text{ and } s=\mathsf{sign}(T^k_G(x,y)) \,.\]

We are now able to establish a region of rational points for which the parameterized Tutte
Polynomial admits an FPTRAS or even an FPRAS; the proof is a simple consequence of the
work of Anari et al.\ on approximate counting via log-concave
polynomials~\cite{AnariLGV19}.
\tutteapprox*
\begin{proof}
    The case $x=1$ is a trivial consequence of \cref{thm:tutte_main_param_intro}. If $x=1$,
    then exact counting is fixed-parameter tractable and thus there is an FPTRAS. We
    consider two cases for the remaining points.

    \noindent First, consider $y=1$. If $x=1$ as well, then we obtain by \cref{thm:tutte_main_param_intro}
    a polynomial-time algorithm for exact counting, and thus an FPRAS. Otherwise, we have
    \[ T^k_G(x,1) = \sum_{A \in \binom{E(G)}{k}} (x-1)^{k(A)-k(E(G))} \cdot 0^{k(A)+k-\#V(G)} = \sum_{\substack{A \in \binom{E(G)}{k}\\G(A) \text{ acyclic}}}(x-1)^{k(A)-k(E(G))} \,,\]
    since $k(A)+k-\#V(G)=0$ if and only if $G(A)$ is acyclic. Recall that $k(A)$ is the
    number of connected components of $G(A)(=(V(G),A))$ and observe that for acyclic sets
    of edges $A$ with $|A|=k$, we have $k(A)=\#V(G)-k$. Consequently,
    \[T^k_G(x,1)= (x-1)^{\#V(G)-k-k(E(G))}\cdot\#\{A\subseteq E(G)~|~\#A=k ~\wedge~ G(A) \text{ acyclic}\}\,.\]
    Since computing the number of acyclic edge-subsets of size $k$ admits an
    FPRAS~\cite{AnariLGV19,AnariD20}, we can conclude this case.

    In the remaining case, we have $x\neq 1$ and $y\neq 1$ (and thus $0< (x-1)(y-1) \leq
    1$). Let $q=(x-1)(y-1)$ and let $G$ denote a graph with edges $E=[m]$, that is, $G$ has
    $m$ edges labelled with $1,\dots,m$. Consider the polynomial
    \[f_{G,k,q}(x_1,\dots,x_m):=\sum_{A \in \binom{E}{k}} q^{-\mathsf{rk}(A)} \prod_{i\in A}x_i\,,\]
    where $\mathsf{rk}(A):=\#V(G) - k(A)$ is the rank of $A$ with respect to the graphic
    matroid of $G$. Anari et al.\ have established the existence of an FPRAS for
    evaluating $f_{G,k,q}(1_m)$ whenever $0<q\leq 1$~\cite[Section~1.2]{AnariLGV19}.
    Now observe that
    \[T^k_G(x,y) =  (x-1)^{\#V(G)-k(E(G))}(y-1)^k \cdot f_{G,k,q}(1_m)\,.\]
    Since neither $x=1$ nor $y=1$, we conclude that an FPRAS for $f_{G,k,q}(1_m)$ yields
    the desired FPRAS for $T^k_G(x,y)$.
\end{proof}

\bibliography{conference}
\end{document}